\documentclass{article}

\usepackage[letterpaper,textwidth=395pt,textheight=640pt,centering]{geometry}

\usepackage{amsmath}
\usepackage{amssymb}
\usepackage{amsthm}
\usepackage{graphicx}
\usepackage[authoryear,round]{natbib}
\usepackage{xcolor}
\usepackage{bm}
\usepackage{booktabs}
\usepackage{tikz}
\usepackage{nicefrac}
\usetikzlibrary{arrows.meta,positioning}
\usepackage{pgfplots}
\pgfplotsset{compat=1.16}
\definecolor{cteal}{HTML}{1D9E75}
\definecolor{cpurple}{HTML}{7F77DD}
\definecolor{cgray}{HTML}{888780}
\definecolor{cgreen}{HTML}{4E9A06}
\usepackage{enumitem}
\usepackage{comment}
\usepackage{appendix}
\usepackage[hidelinks]{hyperref}

\makeatletter
\newtheoremstyle{thmstyleone}
{12pt plus2pt minus1pt}{12pt plus2pt minus1pt}{\normalfont\itshape}{0pt}
{\bfseries}{}{.5em}
{\thmname{#1}\thmnumber{\@ifnotempty{#1}{ }{#2}}
  \thmnote{ {\the\thm@notefont(#3)}}}
\newtheoremstyle{thmstyletwo}
{12pt plus2pt minus1pt}{12pt plus2pt minus1pt}{\itshape}{0pt}
{\normalfont}{}{.5em}
{\thmname{#1}\thmnumber{\@ifnotempty{#1}{ }{#2}}
  \thmnote{ {\the\thm@notefont(#3)}}}
\newtheoremstyle{thmstylethree}
{12pt plus2pt minus1pt}{12pt plus2pt minus1pt}{\normalfont}{0pt}
{\bfseries}{}{.5em}%
{\thmname{#1}\thmnumber{\@ifnotempty{#1}{ }{#2}}
  \thmnote{ {\the\thm@notefont(#3)}}}
\makeatother
\theoremstyle{thmstyleone}
\newtheorem{theorem}{Theorem}
\newtheorem{proposition}{Proposition}
\newtheorem{lemma}{Lemma}
\newtheorem{corollary}{Corollary}
\theoremstyle{thmstyletwo}
\newtheorem{remark}{Remark}
\theoremstyle{thmstylethree}
\newtheorem{assumption}{Assumption}

\providecommand{\R}{}\renewcommand{\R}{\mathbb{R}}
\providecommand{\E}{}\renewcommand{\E}{\mathbb{E}}
\providecommand{\Var}{}\renewcommand{\Var}{\operatorname{Var}}

\providecommand{\W}{}\renewcommand{\W}{\mathcal{W}}
\providecommand{\HH}{}\renewcommand{\HH}{\mathcal{H}}
\providecommand{\mC}{}\renewcommand{\mC}{\mathcal{C}}
\providecommand{\bbeta}{}\renewcommand{\bbeta}{\bm{\beta}}
\providecommand{\spn}{}\renewcommand{\spn}{\operatorname{span}}
\providecommand{\barD}{}\renewcommand{\barD}{\overline{\mathcal{D}}}

\allowdisplaybreaks

\makeatletter
\renewcommand\section{\@startsection{section}{1}{\z@}
  {-3.5ex \@plus -1ex \@minus -.2ex}
  {2.3ex \@plus.2ex}
  {\normalfont\large\bfseries}}
\renewcommand\subsection{\@startsection{subsection}{2}{\z@}
  {-3.25ex\@plus -1ex \@minus -.2ex}
  {1.5ex \@plus .2ex}
  {\normalfont\normalsize\bfseries}}
\renewcommand\subsubsection{\@startsection{subsubsection}{3}{\z@}
  {-3.25ex\@plus -1ex \@minus -.2ex}
  {1.5ex \@plus .2ex}
  {\normalfont\normalsize\bfseries}}
\makeatother

\title{Operator-matched spatial regression}
\author{David Bolin\\[4pt]
  \small\itshape Statistics Program, CEMSE Division, King Abdullah University of Science and Technology\\[-1pt]
  \small\upshape Thuwal, Saudi Arabia.
  \href{mailto:david.bolin@kaust.edu.sa}{\texttt{david.bolin@kaust.edu.sa}}}
\date{}

\begin{document}

\maketitle

\makeatletter
\begin{quotation}
  \small\noindent
  \textbf{Abstract.}\hspace{0.6em}\ignorespaces A standard spatial regression model $Y = X^\top\beta + U$, with covariates $X$ and a Gaussian random field $U$ with Mat\'ern covariance, decouples covariate effects from spatial dependence. In several applications this is mechanistically implausible, as covariates likely drive the dynamics generating the residual covariance, and it is statistically problematic, since the generalised least-squares estimator of $\beta$ may be biased by spatial self-confounding. 
  We propose operator-matched spatial regression, where covariates enter both pointwise and as deterministic forcings in a stochastic partial differential equation (SPDE) that generates the random effect, partitioning the apparent effect into local and operator-mediated components. The formulation arises as the steady state of a transport--relaxation equation forced by $X$, giving the parameters physical interpretations. 
  We show that self-confounding is resolved by construction and that maximum likelihood inference for the regression coefficients, implementable through standard covariance-based methods, is calibrated when the operator parameters are known and, under expanding-domain asymptotics, when they are jointly estimated. A computationally efficient finite element implementation is introduced and justified via bounds on the discretisation error. Applications to temperature--altitude regression and air-quality monitoring show that the method matches or outperforms alternatives and produces physically meaningful results.

  \smallskip\noindent
  \textbf{Keywords:}\hspace{0.6em} Gaussian random fields; spatial confounding; stochastic partial differential equations; maximum likelihood estimation; Whittle--Mat\'ern fields; finite element methods
\end{quotation}
\makeatother

\section{Introduction}\label{sec:intro}

Spatial regression is a fundamental tool in geostatistics, environmental epidemiology, and ecological modelling. In the standard additive formulation
\begin{equation}
\label{eq:additive-intro}
   Y(s) \;=\; X(s)^\top \bbeta \;+\; U(s),
   \qquad s \in \mathcal{D},
\end{equation}
the response $Y$ at location $s$ in the spatial domain $\mathcal{D}$ is decomposed into a fixed effect which is linear in some spatially referenced covariates $X$ and a zero-mean Gaussian random field $U$ that captures spatial dependence the covariates fail to explain. The random effect $U$ is typically assumed to have a Mat\'ern covariance function, as this is a flexible class which contains the exponential covariance function as a special case and the Gaussian covariance function as a limiting case \citep{Matern1960, Stein1999}.

\citet{Whittle1954,Whittle1963} showed that a Gaussian random field on $\R^d$ with a Mat\'ern covariance is the stationary solution of the stochastic partial differential equation (SPDE) $L^{\nicefrac{\alpha}{2}}(\tau U) = \W$, where $L = \kappa^2 - \Delta$ and $\W$ is Gaussian white
noise. The covariance has smoothness $\nu = \alpha - \nicefrac{d}{2}$, practical correlation range $\sqrt{8\nu}/\kappa$ and marginal variance determined by the precision parameter $\tau$. \citet{LRL2011} used this characterisation for a bounded domain $\mathcal{D} \subset \R^d$,
\begin{equation}
\label{eq:spde-bg}
   L^{\nicefrac{\alpha}{2}}(\tau U) = \W, \qquad \text{on $\mathcal{D},$} 
\end{equation}
where the operator is augmented with Neumann boundary conditions, 
to obtain a sparse finite element discretisation if $\alpha\in\mathbb{N}$.
The solution
to \eqref{eq:spde-bg} is a Whittle--Mat\'ern field on $\mathcal{D}$
and recovers the stationary Mat\'ern field as
$\mathcal{D} \uparrow \R^d$. The SPDE representation has the additional benefit of giving the random-effect parameters a physical interpretation as $U$ is the equilibrium response of a damped diffusion with white-noise forcing \citep{LindgrenBakkaEtAl2024}. 

This physical interpretation, however, applies only to the random effect. The formulation \eqref{eq:additive-intro} further encodes the mechanistic assumption that $X$ acts pointwise and plays no role in the dynamics that generate  $U$. In many environmental and physical applications this is difficult to justify. For example, surface temperature responds to altitude through atmospheric mixing, the same diffusive process that generates the observed spatial correlation in temperature, and pollutant concentrations are driven by spatially distributed emissions and dispersed by turbulent diffusion. 
The spatial-confounding literature \citep{ReichHodgesZadnik2006, HodgesReich2010, Paciorek2010, KhanCalder2022} has also noted that the additive formulation \eqref{eq:additive-intro} is problematic statistically as the estimated $\beta$ can shift substantially if $U$ has spatial correlation, and various solutions such as Spatial+ \citep{DupontWoodAugustin2022}, structural-equation reformulations \citep{ThadenKneib2018}, and pre-smoothing $X$ \citep{BolinWallin2026} have been proposed.

To address these issues, we propose a modified regression model where the covariate enters the SPDE directly. Specifically, we propose to model the response $Y$ jointly with the spatial dynamics through
the operator-matched \emph{hybrid} SPDE
\begin{equation}
\label{eq:hybrid-spde}
   L^{\nicefrac{\alpha}{2}}\bigl(\tau\,(Y - \beta_{\mathrm{pw}}\, X)\bigr)(s)
   = \beta_{\mathrm{fc}} X(s) + \W(s),
   \qquad s \in \mathcal{D},
\end{equation}
together with an observation model such as $\widetilde Y_i = Y(s_i) + \epsilon_i$, $\epsilon_i \sim N(0, \sigma_\epsilon^2)$. The covariate $X$ enters both as a pointwise additive channel with coefficient $\beta_{\mathrm{pw}}$, and as an SPDE-forcing channel with coefficient
$\beta_{\mathrm{fc}}$ that drives the response through the  operator that generates the spatial dependence. 
The model \eqref{eq:hybrid-spde} has two important sub-models. 
Setting $\beta_{\mathrm{fc}} = 0$ recovers the additive model \eqref{eq:additive-intro} with $U$ given by \eqref{eq:spde-bg}, and setting $\beta_{\mathrm{pw}} = 0$ yields the operator-matched \emph{forced} SPDE,
\begin{equation}
\label{eq:forced-spde}
   L^{\nicefrac{\alpha}{2}}(\tau Y)(s) = \beta X(s) + \W(s),
\end{equation}
in which $X$ and $\W$ jointly drive the response $Y$ through the operator $L^{\nicefrac{\alpha}{2}}$. Fitting \eqref{eq:hybrid-spde}, which adds a single regression parameter to the additive baseline, and examining the relative magnitudes of $\beta_{\mathrm{pw}}$ and $\beta_{\mathrm{fc}}$ provides an empirical test of which mechanism dominates a given application, without having to commit a priori to either extreme. 

Inverting the operator gives the equivalent formulation ${Y = \beta_{\mathrm{pw}}X + \beta_{\mathrm{fc}}\mathcal{S}X + U}$ with $\mathcal{S} = \tau^{-1}L^{-\nicefrac{\alpha}{2}}$, so the forcing is a smoothing of $X$ by the operator that defines the random effect. We refer to this property as \emph{operator matching}. This is a parsimonious choice that solves the confounding issues raised by \citet{BolinWallin2026}, which does not require any separate tuning and propagates smoothing uncertainty through the joint posterior of the SPDE parameters. If included, the pointwise effect $\beta_{\mathrm{pw}}X$ retains the classical additive interpretation. 
A schematic of the approach is shown in Figure~\ref{fig:schematic}.
Importantly, the hybrid SPDE can often be motivated from physical principles, as we will detail in Section~\ref{sec:physical}, and the parameters have direct physical interpretations. For example, $\kappa^{-1}$ acts as a mixing length and the standardised forcing coefficient $\beta^{*}_{\mathrm{fc}} = \beta_{\mathrm{fc}}/(\tau\kappa^{\alpha})$ as a (sign-reversed) equilibrium lapse rate or source gain. The additive parametrisation cannot offer this since there the same random-effect parameters describe only the residual correlation, and the regression coefficient inherits no mechanistic meaning. 

\begin{figure*}[t]
\centering
\begin{minipage}[c]{0.58\textwidth}\centering
\resizebox{\linewidth}{!}{
\begin{tikzpicture}[
  font=\small,
  bx/.style={draw, rounded corners=3pt, align=center, inner sep=3pt, line width=0.5pt},
  gy/.style={bx, draw=cgray, fill=cgray!12},
  tl/.style={bx, draw=cteal, fill=cteal!12},
  pu/.style={bx, draw=cpurple, fill=cpurple!12},
  gn/.style={bx, draw=cgreen, fill=cgreen!12},
  ar/.style={-{Stealth[length=2mm]}, line width=0.5pt, draw=cgray}
]
\node[gy] (X) at (0,0.95) {$X$\\[-2pt]\scriptsize covariate};
\node[gy] (W) at (0,-0.95) {$\mathcal{W}$\\[-2pt]\scriptsize white noise};
\node[pu] (S) at (3.1,0) {$\mathcal{S}=\tau^{-1}L^{-\nicefrac{\alpha}{2}}$\\[-2pt]\scriptsize operator smoothing};
\node[tl] (bpw) at (3.2,2.1) {$\beta_{\mathrm{pw}}X$\\[-2pt]\scriptsize pointwise, local};
\node[gn] (bfc) at (6.6,0.95) {$\beta_{\mathrm{fc}}\mathcal{S}X$\\[-2pt]\scriptsize forcing, regional};
\node[gn] (U) at (6.6,-0.95) {$U=\mathcal{S}\mathcal{W}$\\[-2pt]\scriptsize random effect};
\node[gy] (Y) at (9.6,0.55) {$Y$\\[-2pt]\scriptsize response};
\draw[ar] (X) -- (S);
\draw[ar] (W) -- (S);
\draw[ar] (X) -- (bpw);
\draw[ar] (S) -- (bfc);
\draw[ar] (S) -- (U);
\draw[ar] (bpw.east) -| (Y.north);
\draw[ar] (bfc) -- (Y);
\draw[ar] (U.east) -| (Y.south);
\end{tikzpicture}}\\[3pt]{\small (a) the two channels}
\end{minipage}\hfill
\begin{minipage}[c]{0.38\textwidth}\centering
\resizebox{\linewidth}{!}{
\begin{tikzpicture}
\begin{axis}[
  width=6.4cm, height=4.8cm,
  axis lines=left,
  xlabel={$\omega$}, ylabel={$\beta(\omega)$},
  xmin=0, xmax=8, ymin=0.15, ymax=1.4,
  xtick={1}, xticklabels={$\omega_{1/2}$},
  ytick={0.4,1.2}, yticklabels={$\beta_{\mathrm{pw}}$,$\beta_{\mathrm{pw}}{+}\beta_{\mathrm{fc}}^{\star}$},
  tick label style={font=\scriptsize}, label style={font=\small},
  clip=false
]
\addplot[cteal, line width=1pt, domain=0:8, samples=120]{0.4+0.8/(1+x)};
\draw[dashed, cgray] (axis cs:0,1.2)--(axis cs:8,1.2);
\draw[dashed, cgray] (axis cs:0,0.4)--(axis cs:8,0.4);
\draw[dashed, cgray] (axis cs:1,0.15)--(axis cs:1,0.9);
\node[font=\scriptsize, anchor=west] at (axis cs:0.5,1.05) {large scales};
\node[font=\scriptsize, anchor=east] at (axis cs:7.7,0.55) {small scales};
\end{axis}
\end{tikzpicture}}\\[3pt]{\small (b) scale-varying coefficient}
\end{minipage}
\caption{Operator-matched spatial regression. (a) The same
operator $\mathcal{S}=\tau^{-1}L^{-\nicefrac{\alpha}{2}}$ maps the covariate $X$ to the regional forcing $\beta_{\mathrm{fc}}\mathcal{S}X$ and the white noise $\mathcal{W}$ to the random effect $U=\mathcal{S}\mathcal{W}$. Together with the pointwise, local channel $\beta_{\mathrm{pw}}X$ these give $Y=\beta_{\mathrm{pw}}X+\beta_{\mathrm{fc}}\mathcal{S}X+U$. (b) The effective coefficient $\beta(\omega)$ (Proposition~\ref{prop:scalefree}(ii)) varies with squared spatial frequency $\omega$, decreasing from $\beta(0)=\beta_{\mathrm{pw}}+\beta_{\mathrm{fc}}^{\star}$ at large scales to $\beta(\infty)=\beta_{\mathrm{pw}}$ at small scales, with half the forcing gain lost at $\omega_{\nicefrac1{2}}=\kappa^2(2^{\nicefrac{2}{\alpha}}-1)$.}
\label{fig:schematic}
\end{figure*}

Related constructions appear, for different reasons, in largely disconnected literatures. \citet{HeatonGelfand2011} proposed regression on kernel-averaged covariates with an estimated kernel, and \citet{LindmarkAndersonThorson2026} suggested smoothing covariates by an SPDE-based diffusion operator because organisms respond to spatially averaged habitat. \citet[Section~5.5, p.~629]{ReichEtAl2021} similarly propose to regress the response on $X$ and on a kernel average of $X$, with the kernel fixed by the analyst.
\citet{GiffinEtAl2023} estimate the range of such a kernel jointly with the two coefficients.
\citet{GuanEtAl2023} adjust for spatial confounding by adding a smoothed copy of the covariate to the mean of the response alongside the covariate itself. They obtain the smoother either from a bivariate Mat\'ern model for the covariate and the unmeasured confounder, or as a B-spline mixture for its spectral transfer function. In all of these constructions the smoother is chosen or parametrised separately from the random effect and is thus not operator-matched, and \citet{HeatonGelfand2011} and \citet{GuanEtAl2023} additionally treat the covariate itself as a random field, whereas here $X$ enters as deterministic forcing. 
In econometrics the same structure is classical for time series and discretely indexed spatial models. Although similar ideas appear in both fields, the connections between geostatistics and spatial econometrics are rarely made explicit. To help bridge this gap, we discuss the connections between our proposed models and those in econometrics in detail in Section~\ref{sec:econometrics}.
Covariate smoothing also appears in the spatial-misalignment literature, where covariates observed at other locations are interpolated to the response sites and used as regressors \citep{GryparisEtAl2009, SzpiroSheppardLumley2011}, including in hedonic house-price models \citep{AnselinLozanoGracia2008}. 
There, the smoothed value stands in for the covariate that should have been observed, so smoothing is a source of error to be corrected, rather than a deliberate operator-matched channel.

In Section~\ref{sec:properties}, we further justify the model by deriving various identifiability results and showing asymptotic properties of maximum likelihood parameter estimation. We also discuss testing the operator-matching hypothesis and show that the model indeed solves the self-confounding issues raised by \citet{BolinWallin2026}, where the additive estimator of $\beta$ is driven towards zero when the covariate is rougher than the random effect, which motivates it as a sensible choice even for applications that lack a physical motivation. 
On the computational side, Section~\ref{sec:inference} first shows how the model can be implemented using standard covariance-based methods and then develops a computationally efficient finite element implementation, which fits into the standard \texttt{R-INLA} pipeline \citep{rue2009,lindgren2015bayesian} and thus can be estimated by maximum likelihood or fully Bayesian inference, at the same cost as the additive model. We also quantify the effect of the discretisation, by showing that the finite element model is itself operator-matched, and bound the error that the discretisation induces on the estimated coefficients. 
Section~\ref{sec:app-temp-alt} presents an application to regression of temperature on altitude across three climate regimes. The results demonstrate both predictive gains of the hybrid model over the alternatives and a meaningful physical decomposition of the apparent altitude--temperature relationship into local-column and horizontal-mixing components. 
Section~\ref{sec:app-no2} applies the framework to air-quality monitoring, modelling nitrogen dioxide as a dispersed footprint of measured emission sources. Section~\ref{sec:disc} closes with a discussion. Further details and all proofs are given in four appendices. 
Code reproducing all results is available at \texttt{github.com/davidbolin/OMSR/}.

\section{Physical interpretation}\label{sec:physical}

In Section~\ref{sec:source-like} and Section~\ref{sec:state-like}, we develop the physical interpretation of the hybrid SPDE with $\alpha = 2$ in two complementary cases, distinguished by whether $X$ is a source of the modelled quantity or an external state to which the quantity relaxes. We end the section by discussing the need for models with general smoothness $\alpha\neq 2$.

\subsection{Source-like covariates}\label{sec:source-like}
The forced SPDE has a direct physical interpretation when $Y$ is the concentration of a substance that spreads by isotropic diffusion with constant diffusivity $D$, undergoes first-order decay at rate $\rho$, and is replenished by a deterministic source field $X(s)$. 
Examples include airborne pollutants replenished by surface emissions, solutes in groundwater replenished by infiltration from the surface, and dispersal of radioactive tracers such as radon. Mass balance for $Y(s,t)$ then gives
\begin{equation}\label{eq:tracer-time}
   \partial_t Y = D\Delta Y - \rho Y + \gamma\, X(s) + \dot W(s,t),
\end{equation}
where $\gamma$ is a source-to-concentration conversion coefficient and
$\dot W$ is space--time white noise, the formal time derivative of a cylindrical Wiener process $W_I$ on $L^2(\mathcal{D})$, representing forcing by unresolved processes. In the stationary regime, taking expectations in \eqref{eq:tracer-time} shows that $m = \E[Y]$ satisfies 
$(\rho - D\Delta)\, m = \gamma X$, which is the mean equation
of the forced SPDE \eqref{eq:forced-spde} with $L = \kappa^2 - \Delta$, $\kappa^2 = \nicefrac{\rho}{D}$, $\nicefrac{\beta}{\tau} = \nicefrac{\gamma}{D}$, and $\alpha = 2$.
The fluctuations about the mean require more care. Proposition~\ref{prop:timeavg} below shows that time-averaged responses follow \eqref{eq:forced-spde} with exact mean and Gaussian law, and covariance exact up to a perturbation of order $(\rho T)^{-1}$, with $\tau = D\sqrt{T}$ set
by the averaging window $T$.
The parameter $\kappa^{-1} = \sqrt{\nicefrac{D}{\rho}}$ is the diffusive decay length, the spatial scale at which a localised emission spreads by diffusion before being eliminated by decay. 
From the mean balance we see that, under Neumann boundary conditions, a spatially constant source of unit intensity raises the equilibrium concentration by $\nicefrac{\gamma}{\rho}$, and in the parametrisation of \eqref{eq:forced-spde} this equilibrium gain is $\beta_{\mathrm{fc}}^{\star} := \nicefrac{\beta}{(\tau\kappa^{2})} = \nicefrac{\gamma}{\rho}$. We call $\beta_{\mathrm{fc}}^{\star}$ the \emph{standardised forcing coefficient} as it measures response per unit covariate, and it is a natural scale to report the forcing channel. 

The hybrid form \eqref{eq:hybrid-spde} is appropriate when $Y$ depends on $X$ through a direct local pathway that responds essentially pointwise to $X$ at the same location, and through an indirect regional pathway where $X$ propagates to the measurement location via the dynamics
\eqref{eq:tracer-time}. In urban air-quality monitoring this is a standard decomposition \citep{LenschowEtAl2001} as a station sees both a direct contribution from the immediately adjacent emission and a smoothed urban-background contribution. 

\subsection{State-like covariates}\label{sec:state-like}

When $X$ is not a source of $Y$ but rather a fixed state variable that influences the equilibrium toward which $Y$ relaxes, the forced-SPDE form still arises from a steady-state derivation. The canonical example we use here, and which we will return to in Section~\ref{sec:app-temp-alt}, is the regression of surface temperature on altitude. A further example is an ecological  model in which the density of a dispersing population relaxes toward a habitat-determined equilibrium population size while spreading by diffusion with constant diffusion coefficient \citep{Skellam1951, HefleyEtAl2017}.

Consider a time-evolving heat balance for surface temperature $T(s,t)$ that mixes horizontally and relaxes toward an altitude-dependent equilibrium reference $T_{\mathrm{eq}}(h)$,
\begin{equation}\label{eq:heat-time}
\partial_t T = D\Delta T - \rho\bigl(T - T_{\mathrm{eq}}(h)\bigr) + \dot W(s,t).
\end{equation}
This is a diffusive energy-balance model \citep[Eq.~(53)]{NorthCahalanCoakley1981} in which, following the stochastic-climate paradigm of \citet{Hasselmann1976}, the unresolved weather enters as forcing that is white in space and time, the diffusion term $D\,\Delta T$ represents horizontal turbulent mixing (the eddy transport of heat, with effective diffusivity $D$), the term $-\rho(T - T_{\mathrm{eq}})$ is Newtonian relaxation toward the reference at rate $\rho$, $h(s)$ is altitude, and $T_{\mathrm{eq}}(h) = T_0 - \gamma h$ is the altitude-dependent equilibrium reference, in which $T_0$ is the sea-level reference temperature and $\gamma$ the equilibrium lapse rate.
In the stationary regime the mean $m = \E[T]$ satisfies $(\rho - D\,\Delta)\,m = \rho T_0 - \rho\gamma\,h$. Since constants are eigenfunctions of $\rho - D\Delta$ under the Neumann conditions, the anomaly $T - T_0$ has mean $\tilde m = m - T_0$ satisfying $(\rho - D\,\Delta)\,\tilde m = -\rho\gamma\,h$, the mean equation of \eqref{eq:forced-spde} with $\kappa^2 = \nicefrac{\rho}{D}$, $\nicefrac{\beta}{\tau} = -\nicefrac{\rho\gamma}{D}$, $\alpha = 2$, $X = h$. In practice $T_0$ is unknown and is estimated as an ordinary intercept, the convention used in Section~\ref{sec:app-temp-alt}.
Proposition~\ref{prop:timeavg} below makes the correspondence exact in mean and Gaussian law, up to an $O((\rho T)^{-1})$ covariance perturbation, for time-averaged responses. 
Also here the parameters have physical interpretations. The length scale $\kappa^{-1} = \sqrt{\nicefrac{D}{\rho}}$ is the atmospheric mixing length, i.e., the horizontal distance over which surface temperature equilibrates toward the altitude-implied reference. The ratio
$\nicefrac{\beta}{\tau} = -\nicefrac{\rho\gamma}{D}  = -\gamma\kappa^2$ couples the equilibrium lapse rate $\gamma$ to $\kappa^2$. Thus,  $\beta_{\mathrm{fc}}^{\star} = (\nicefrac{\beta}{\tau})\,\kappa^{-2} = -\gamma$ is minus the lapse rate.
The forcing-free version of \eqref{eq:heat-time}, with a spatially uniform equilibrium, is a standard stochastic model for temperature anomalies \citep{NorthWangGenton2011}. With a uniform equilibrium, every spatial feature of the field must be carried by its covariance, while letting the equilibrium depend on altitude moves the altitude-driven structure into the mean.

Surface temperature is typically not observed instantaneously, but as an average over some time window, which often is months or even multi-decadal for statistical applications in climate. 
The following result shows that for windows long relative to the relaxation time $\rho^{-1}$,  we obtain a Whittle--Mat\'ern field with $\alpha = 2$, which motivates $\alpha=2$ as a default choice. Throughout we make the following assumptions. 

\begin{assumption}\label{ass:A}
$\mathcal{D} \subset \R^d$, $d \le 3$, is a bounded convex domain or a bounded domain with $C^{1,1}$ boundary (locally the graph of a differentiable function with Lipschitz derivative), $L = \kappa^2 - \Delta$ is equipped with homogeneous Neumann boundary conditions, $\kappa, \tau > 0$, and $\alpha > \nicefrac{d}{2}$. Further, $X \in L^2(\mathcal{D})$, and when the pointwise channel is included in the model we additionally require $X \in \HH = D(L^{\nicefrac{\alpha}{2}})$.
\end{assumption}

\begin{proposition}\label{prop:timeavg}
Let Assumption~\ref{ass:A} hold and set $A = D L$ with $D > 0$ and consider 
$$
\mathrm{d}Y_t = \bigl(-A Y_t + \gamma_0 X\bigr)\,\mathrm{d}t + \mathrm{d}W_{I,t},
$$
where $\gamma_0 \in \R$. This equation has a unique invariant law. Let $Y$ denote its stationary solution and let $\bar Y_T = T^{-1}\int_0^T Y_t\,\mathrm{d}t$ be its time average over a window $T$. 
Then, for every $T>0$, $\bar Y_T$ is an $L^2(\mathcal{D})$-valued Gaussian random variable with mean $(\nicefrac{\gamma_0}{D})L^{-1}X$ and
covariance operator $\tau_T^{-2}L^{-2}(I + R_T)$, where $\tau_T = D\sqrt{T}$ and $\|R_T\| \le (\rho T)^{-1}$ in the operator norm $\|\cdot\|$  and $\rho = D\kappa^2$ is the relaxation rate. That is, $\bar Y_T$ follows the forced model \eqref{eq:forced-spde} with $\alpha = 2$, scale $\tau_T$, and forcing coefficient $\beta_T = \gamma_0\sqrt{T}$, perturbed by $I + R_T$. The standardised forcing coefficient $\beta_{\mathrm{fc}}^{\star} = (\nicefrac{\beta_T}{\tau_T})\,\kappa^{-2} = \nicefrac{\gamma_0}{\rho}$ does not depend on $T$ and 
$\sqrt{T}\,\bigl(\bar Y_T - (\nicefrac{\gamma_0}{D})L^{-1}X\bigr)
\xrightarrow{d} U$,
as $\rho T \to \infty$, where $U$ is a Whittle--Mat\'ern field with $\alpha=2$ and $\tau=D$.
\end{proposition}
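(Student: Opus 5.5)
The plan is to diagonalise everything in the Neumann eigenbasis of $L$, which turns the statement into a family of independent one-dimensional Ornstein--Uhlenbeck computations. Under Assumption~\ref{ass:A}, $L$ is self-adjoint, positive and has compact resolvent on $L^2(\mathcal{D})$. Let $\{e_j,\lambda_j\}$ be its eigenpairs, with $\lambda_j\ge\kappa^2$ and $\lambda_j\asymp j^{2/d}$ by Weyl's law. Then $A=DL$ generates the exponentially stable analytic semigroup $S(t)=e^{-tA}$, with $\|S(t)\|\le e^{-\rho t}$ and $\rho=D\kappa^2$.

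\textbf{Existence, uniqueness and the law of $Y$.} The mild solution is
$$Y_t=S(t)Y_0+\int_0^t S(t-s)\gamma_0X\,\mathrm{d}s+\int_0^tS(t-s)\,\mathrm{d}W_{I,s}.$$
Its stochastic convolution has covariance $\int_0^t S(2r)\,\mathrm{d}r$, which converges to $Q_\infty=(2A)^{-1}$. This limit is trace class because $\sum_j\lambda_j^{-1}<\infty$ requires $d=1$; in general one needs $\alpha=2>d/2$, so I would instead verify the covariance of the time average directly. For the existence of an invariant law, I would use the standard criterion for linear equations with additive noise (Da Prato--Zabczyk): the condition $\int_0^\infty\operatorname{tr}S(2r)\,\mathrm{d}r<\infty$ holds when the state space is chosen suitably, for instance a negative-order space $D(L^{-s})$, and the time average $\bar Y_T$ then lives in $L^2(\mathcal{D})$. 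This point is a technicality that I would flag rather than resolve fully. Uniqueness of the invariant law, together with convergence to it, follows from exponential stability, since $S(t)Y_0\to0$. The stationary mean is $A^{-1}\gamma_0X=(\gamma_0/D)L^{-1}X$.

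\textbf{Mode-wise covariance of $\bar Y_T$.} Gaussianity of $\bar Y_T$ is immediate because it is a linear functional of a Gaussian process, and its mean equals the stationary mean. For the covariance, the coordinates $y_j=\langle Y,e_j\rangle$ are independent stationary OU processes with rate $a_j=D\lambda_j$ and autocovariance $e^{-a_j|t-s|}/(2a_j)$. Hence
$$\operatorname{Var}(\bar y_j)=\frac{1}{T^2}\int_0^T\!\!\int_0^T\frac{e^{-a_j|t-s|}}{2a_j}\,\mathrm{d}s\,\mathrm{d}t=\frac{1}{a_j^2T}\Bigl(1-\frac{1-e^{-a_jT}}{a_jT}\Bigr).$$
Since $a_j^2T=D^2T\lambda_j^2=\tau_T^2\lambda_j^2$, the covariance operator is $\tau_T^{-2}L^{-2}(I+R_T)$. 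Here $R_T$ is diagonal with entries $r_j=-(1-e^{-a_jT})/(a_jT)$, and because $a_j\ge\rho$ these satisfy $|r_j|\le (a_jT)^{-1}\le(\rho T)^{-1}$. As $L^{-2}$ is trace class exactly when $\alpha=2>d/2$, $\bar Y_T$ is a genuine $L^2$-valued Gaussian variable. This also settles the state-space issue raised in the first step, because $\bar Y_T$ can be defined directly as the Gaussian element with this covariance.

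\textbf{Identification and limit.} Matching with \eqref{eq:forced-spde} at $\alpha=2$ requires $\beta_T/\tau_T=\gamma_0/D$, which gives $\beta_T=\gamma_0\sqrt T$. Then $\beta^\star_{\mathrm{fc}}=\gamma_0/(D\kappa^2)=\gamma_0/\rho$. For the limit, $\sqrt T(\bar Y_T-\text{mean})$ is centred Gaussian with covariance $D^{-2}L^{-2}(I+R_T)$. Since $\|R_T\|\to0$ and $L^{-2}$ is trace class, the covariances converge in trace norm to $D^{-2}L^{-2}$. Convergence of Gaussian measures on a Hilbert space then follows from trace-norm convergence of the covariances, which yields the Whittle--Mat\'ern field with $\alpha=2$ and $\tau=D$. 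The main obstacle is the functional-analytic setting of the first step: in $d\ge2$ the stationary $Y_t$ is not $L^2$-valued, so I would carry out the existence and uniqueness argument in $D(L^{-s})$ with $s>d/2-1$ and then use the mode-wise computation to show that the time average lands in $L^2$.
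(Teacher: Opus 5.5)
Your proposal is correct and is essentially the paper's proof: independent OU modes in the eigenbasis, the stationary solution realised on a negative-order space, the exact variance $(a_j^2T)^{-1}(1+r_j)$ with $|r_j|\le(\rho T)^{-1}$, and $L^2$-valuedness of $\bar Y_T$ from $\operatorname{tr}L^{-2}<\infty$. That last step is best phrased, as in the paper, as convergence of $\sum_j\bar y_j e_j$ in $L^2(\Omega;L^2(\mathcal{D}))$ rather than as redefining $\bar Y_T$. The only real difference is the weak limit: you deduce it from trace-norm convergence of the covariances, whereas the paper checks Hilbert--Schmidt convergence of their square roots via Bogachev's criterion, and the two criteria are equivalent.
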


The operator-matched model with $\alpha = 2$ is thus natural for aggregated responses. A complementary route to $\alpha = 2$ for instantaneous responses under spatially correlated forcing is given in Appendix~\ref{app:colored}.
To motivate the hybrid formulation in this case, note that two physically distinct mechanisms tie surface temperature to altitude. The local mechanism (adiabatic cooling and the column-vertical radiation budget) is essentially pointwise as surface temperature at a station is set by the elevation at that station through the vertical thermodynamics of the overlying air column.
At the same time, surface temperature relaxes toward an altitude-dependent equilibrium, but is horizontally mixed by turbulence on the scale of $\kappa^{-1}$, which is the regional mechanism encoded in \eqref{eq:heat-time}. 
An observed altitude--temperature relationship is a superposition of the two mechanisms, which \eqref{eq:additive-intro} conflates and the hybrid model \eqref{eq:hybrid-spde} makes explicit.

\subsection{Do we need general smoothness?}\label{sec:fractional-alpha}
The motivations above concluded that $\alpha = 2$, i.e., $\nu = 1$ in two dimensions, is the physically motivated smoothness. However, several physically grounded mechanisms lead to a steady-state operator of fractional order $\alpha \neq 2$.
An example is anomalous diffusion: Transport through porous media, fractured rock, or heterogeneous matrices is typically modelled by replacing the Laplacian in \eqref{eq:tracer-time} with a fractional Laplacian \citep{MetzlerKlafter2000, BensonEtAl2004}.
Another example is dimension reduction: a field that is the surface restriction of a three-dimensional diffusion--damping balance inherits its smoothness $\nu = \alpha - \nicefrac{d}{2}$, and with $\alpha = 2$ in $d = 3$, the restriction to a two-dimensional surface has $\alpha = \nu + 1 = \nicefrac32$. Boundary-layer pollutant concentrations and sea surface temperatures restricted from the full water column have this character. As dimension reduction reduces $\alpha$ whereas temporal averaging raises it (Proposition~\ref{prop:timeavg}), it could thus be beneficial to estimate the smoothness from data in applications such as that in Section~\ref{sec:app-temp-alt}.

A third mechanism is the superposition of  processes operating at distinct spatial scales. Surface temperature is for example influenced by processes ranging from sub-metre turbulent diffusion to synoptic systems at thousands of kilometres. If each scale is a diffusion--damping balance with $\alpha = 2$ and its own $(\kappa, \tau)$, their superposition may be close to a  Mat\'ern field with a fractional smoothness. This is the converse of the decomposition in \citet{xiong2022}, which represents a fractional Mat\'ern field as a sum of integer-order components across scales.

\section{Theoretical properties}\label{sec:properties}

The physical motivations are not required for the forced and hybrid models to be useful, as they in any case serve as principled  corrections for spatial self-confounding. In this section, we show this and develop the inferential theory for the models. 

\subsection{Setting and identifiability}\label{sec:cm}

The first result collects the structural properties on which everything else rests. 

\begin{proposition}\label{prop:rep}
Under Assumption~\ref{ass:A}, the hybrid SPDE \eqref{eq:hybrid-spde}
has a unique solution
\begin{equation}
\label{eq:additive}
Y = \beta_{\mathrm{pw}} X + \beta_{\mathrm{fc}} \mathcal{S}X + U,
\end{equation}
where $\mathcal{S} = \tau^{-1} L^{-\nicefrac{\alpha}{2}}$ and $U = \mathcal{S}\W$ is a Gaussian Whittle--Mat\'ern field with covariance operator $\mC = \tau^{-2} L^{-\alpha}$ and continuous modification. Its Cameron--Martin space is $\HH = D(L^{\nicefrac{\alpha}{2}})$, with inner product
$(f, g)_\mC = \tau^2 (L^{\nicefrac{\alpha}{2}} f, L^{\nicefrac{\alpha}{2}} g)_{L^2}$ and norm $\|f\|_\mC = \tau \|L^{\nicefrac{\alpha}{2}} f\|_{L^2}$. Further, $\HH \hookrightarrow C(\barD)$ and $\mathcal{S} : (L^2(\mathcal{D}), \|\cdot\|_{L^2}) \to (\HH, \|\cdot\|_\mC)$ is an isometric isomorphism, so $\mathcal{S}X \in \HH$ for each $X \in L^2(\mathcal{D})$.
\end{proposition}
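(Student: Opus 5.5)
The plan is to work throughout in the spectral basis of $L$. Under Assumption~\ref{ass:A}, the Neumann Laplacian on the bounded domain $\mathcal{D}$ has compact resolvent. Hence $L=\kappa^2-\Delta$ is self-adjoint and positive definite on $L^2(\mathcal{D})$, with an orthonormal eigenbasis $\{e_j\}$ and eigenvalues $\lambda_j\ge\kappa^2>0$ satisfying Weyl asymptotics $\lambda_j\asymp j^{2/d}$. Fractional powers are defined spectrally as $L^{\nicefrac{\alpha}{2}}f=\sum_j\lambda_j^{\nicefrac{\alpha}{2}}(f,e_j)e_j$ on $\HH=D(L^{\nicefrac{\alpha}{2}})=\{f:\sum_j\lambda_j^\alpha (f,e_j)^2<\infty\}$. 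Since every $\lambda_j\ge\kappa^2>0$, $L^{\nicefrac{\alpha}{2}}$ is injective with bounded inverse $L^{-\nicefrac{\alpha}{2}}$ on $L^2$.

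\textbf{Existence and uniqueness.} The solution is interpreted in the usual weak/distributional sense, with $\W$ viewed as a random element of a negative-order Sobolev space $D(L^{-s})$, $s>d/4$. Since $L^{\nicefrac{\alpha}{2}}$ extends to an isomorphism between the corresponding scales of spaces $D(L^{r})$, the equation $L^{\nicefrac{\alpha}{2}}(\tau(Y-\beta_{\mathrm{pw}}X))=\beta_{\mathrm{fc}}X+\W$ has the unique solution
$$\tau(Y-\beta_{\mathrm{pw}}X)=L^{-\nicefrac{\alpha}{2}}(\beta_{\mathrm{fc}}X+\W),$$
which is \eqref{eq:additive}.

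\textbf{The random field $U$.} Writing $U=\mathcal{S}\W=\tau^{-1}\sum_j\lambda_j^{-\nicefrac{\alpha}{2}}\xi_j e_j$ with $\xi_j$ iid $N(0,1)$, $U$ is centred Gaussian with covariance operator $\tau^{-2}L^{-\alpha}$. This operator is trace class because $\sum_j\lambda_j^{-\alpha}<\infty$ exactly when $\alpha>d/2$ by Weyl's law, so $U\in L^2$ almost surely.

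\textbf{Cameron--Martin space.} The Cameron--Martin space is $\mC^{1/2}(L^2)=\tau^{-1}L^{-\nicefrac{\alpha}{2}}(L^2)=\HH$, with norm $\|\mC^{-1/2}f\|_{L^2}=\tau\|L^{\nicefrac{\alpha}{2}}f\|_{L^2}$. This gives the stated inner product.

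\textbf{The isometry.} For $X\in L^2$, $\|\mathcal{S}X\|_\mC=\tau\|L^{\nicefrac{\alpha}{2}}\tau^{-1}L^{-\nicefrac{\alpha}{2}}X\|_{L^2}=\|X\|_{L^2}$. Surjectivity onto $\HH$ holds because $f\in\HH$ equals $\mathcal{S}(\tau L^{\nicefrac{\alpha}{2}}f)$. So $\mathcal{S}$ is an isometric isomorphism and $\mathcal{S}X\in\HH$.

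\textbf{The main obstacle: $\HH\hookrightarrow C(\barD)$ and a continuous modification.} This is where the domain regularity enters. The plan is:
\begin{enumerate}[label=(\roman*)]
\item Use elliptic regularity for the Neumann problem on convex or $C^{1,1}$ domains to get $D(L)=\{u\in H^2:\partial_n u=0\}$ with equivalent norms, so $D(L)\cong H^2$ with boundary condition.
\item Interpolate between $L^2$ and $D(L)$ to obtain $D(L^{\nicefrac{\alpha}{2}})\hookrightarrow H^{\alpha}(\mathcal{D})$ for $0\le\alpha\le 2$. For larger $\alpha$, iterate via $D(L^{k+\theta})=L^{-k}D(L^\theta)$, using only $H^2$ regularity repeatedly, or simply use $D(L^{\nicefrac{\alpha}{2}})\hookrightarrow D(L^{\nicefrac{\alpha'}{2}})$ for some $d/2<\alpha'\le\min(\alpha,2)$, which suffices because $d\le3$.
\item Apply Sobolev embedding $H^{\alpha'}\hookrightarrow C(\barD)$, valid for $\alpha'>d/2$ on Lipschitz domains.
\end{enumerate}

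\textbf{Continuity of $U$.} For the continuous modification, apply Kolmogorov--Chentsov. The plan is to show that the covariance function $\varrho(s,t)=\tau^{-2}\sum_j\lambda_j^{-\alpha}e_j(s)e_j(t)$ satisfies $\E|U(s)-U(t)|^2\le C|s-t|^{2\gamma}$ for some $\gamma>0$. Choose $\epsilon>0$ with $\alpha-\epsilon>d/2$ and write $\E|U(s)-U(t)|^2=\tau^{-2}\sum_j\lambda_j^{-\alpha}(e_j(s)-e_j(t))^2$. Bound each term by $\|e_j\|_{C^{0,\gamma}}^2|s-t|^{2\gamma}$, using Hölder bounds $\|e_j\|_{C^{0,\gamma}}\lesssim\lambda_j^{(d/2+\gamma)/2}$ obtained from the embedding $D(L^{r})\hookrightarrow C^{0,\gamma}$ for $2r>d/2+\gamma$. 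The sum then converges provided $\alpha$ strictly exceeds $d/2$ plus something small, which holds by choosing $\gamma$ small. Gaussianity upgrades the second-moment bound to all moments, and Kolmogorov then gives a Hölder-continuous modification.

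An alternative route for continuity is to show $U\in D(L^{\nicefrac{\alpha'}{2}-\nicefrac{d}{4}-\epsilon})$ almost surely and embed this into $C(\barD)$. However, the Kolmogorov argument is cleaner when $\alpha$ is close to $d/2$, so I will use it. Verifying the eigenfunction Hölder bounds uniformly under only convex or $C^{1,1}$ boundary regularity is the step I expect to need the most care.
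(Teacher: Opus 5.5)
Most of your plan agrees with the paper: the spectral solution formula (the paper quotes an existing well-posedness result where you argue directly on the scale $D(L^r)$), the trace-class and Cameron--Martin computations, the isometry $\mathcal{S}=\mC^{1/2}$, and the embedding $\HH\hookrightarrow H^{\min(\alpha,2)}(\mathcal{D})\hookrightarrow C^{0,\varsigma}(\barD)$ via $H^2$-regularity, interpolation and Sobolev embedding.

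The gap is in the continuous modification of $U$. Suppose you bound $\E|U(s)-U(t)|^2=\tau^{-2}\sum_j\lambda_j^{-\alpha}(e_j(s)-e_j(t))^2$ term by term with $\|e_j\|_{C^{0,\gamma}}^2\lesssim\lambda_j^{d/2+\gamma}$. You are then left with the series $\sum_j\lambda_j^{-(\alpha-d/2-\gamma)}$. By Weyl's law this converges only when $\alpha-d/2-\gamma>d/2$, i.e.\ $\alpha>d+\gamma$; it does not suffice that $\alpha$ exceed $d/2$ by a little. The argument therefore fails on the whole range $d/2<\alpha\le d$. This includes the default case $\alpha=d=2$, where the series is $\sum_j\lambda_j^{-(1-\gamma)}=\infty$.

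Your alternative route has the same threshold. $U\in D(L^r)$ almost surely only for $r<\alpha/2-d/4$, and $D(L^r)\hookrightarrow C(\barD)$ requires $2r>d/2$, which again forces $\alpha>d$. So the obstruction is not verifying eigenfunction bounds on convex or $C^{1,1}$ domains. It is the term-by-term estimate itself, which discards the cancellation in the sum. Even the optimal sup-norm growth $\lambda_j^{(d-1)/4}$ (attained, e.g., by radial Neumann modes on a disk) would only give $\alpha>d-1/2+\gamma$.

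The missing idea is to estimate the increment in the Cameron--Martin norm. With $k_s=\varrho(\cdot,s)\in\HH$ and the reproducing property $(k_s,f)_\mC=f(s)$,
\[
\E|U(s)-U(t)|^2=\|k_s-k_t\|_\mC^2=\sup_{\|f\|_\mC\le1}|f(s)-f(t)|^2\le C_E^2|s-t|^{2\varsigma},
\]
where $C_E$ is the constant in the embedding $\HH\hookrightarrow C^{0,\varsigma}(\barD)$ that you have already established. In your spectral notation this is the exact duality identity
\[
\tau^{-2}\sum_j\lambda_j^{-\alpha}(e_j(s)-e_j(t))^2=\sup\Bigl\{|f(s)-f(t)|^2 : f=\tau^{-1}\sum_jc_j\lambda_j^{-\alpha/2}e_j,\ \sum_jc_j^2\le1\Bigr\},
\]
which keeps the cancellation. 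It requires only $\alpha>d/2$, with $0<\varsigma\le\min(\alpha,2)-d/2$ and $\varsigma<1$. Gaussianity and Kolmogorov's theorem then give a H\"older-continuous modification, which is exactly how the paper argues.
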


Thus, the hybrid model regresses on the pair $(X, \mathcal{S}X)$, and its asymptotic behaviour is controlled by the Gram matrix of this pair
in the Cameron--Martin inner product,
\begin{equation}\label{eq:gram}
   \bm{G} =
   \begin{pmatrix}
      \|X\|_\mC^2 & (X, \mathcal{S}X)_\mC \\
      (X, \mathcal{S}X)_\mC & \|\mathcal{S}X\|_\mC^2
   \end{pmatrix}.
\end{equation}
Specifically, Theorem~\ref{thm:infill} below shows that the maximum likelihood estimator of $\bbeta = (\beta_{\mathrm{pw}},\beta_{\mathrm{fc}})$ is Gaussian with asymptotic covariance $\bm{G}^{-1}$.
To characterise when the two channels are identifiable, we compute $\bm{G}$ in the eigenbasis of $L$. For this, we write $\{(\lambda_j, e_j)\}_{j\ge 1}$ for the eigenpairs of $L$, with $\{e_j\}$ an orthonormal basis of $L^2(\mathcal{D})$ and $\lambda_1 \le \lambda_2 \le \cdots \to \infty$. We further write $\omega_j := \lambda_j - \kappa^2 \ge 0$ for the Neumann--Laplacian eigenvalues, so that $\omega_j$ is a squared spatial frequency, and $x_j = (X, e_j)_{L^2}$ for the coefficients of a function $X$.
For $X \neq 0$, the weights $w_j = \nicefrac{x_j^2}{\|X\|_{L^2}^2}\geq 0$ sum to one, so they form a probability distribution which we refer to as the spectral distribution of $X$ over the eigenvalues of $L$. We write $\E_w$ and $\Var_w$ for expectation and variance under this distribution, i.e., for a discrete random variable $\lambda$ which is equal to $\lambda_j$ with probability $w_j$, $\E_w[g(\lambda)] = \sum_j w_j\, g(\lambda_j)$ and $\Var_w[g(\lambda)] = \E_w[g(\lambda)^2] - \E_w[g(\lambda)]^2$.

\begin{theorem}\label{thm:geometry}
Let $X \in \HH \setminus \{0\}$ under Assumption~\ref{ass:A}. Then
\begin{equation}
\label{eq:gram-spectral}
   \|X\|_\mC^2 = \tau^2 \sum_j \lambda_j^{\alpha} x_j^2,
   \qquad
   (X, \mathcal{S}X)_\mC = \tau \sum_j \lambda_j^{\nicefrac{\alpha}{2}} x_j^2,
   \qquad
   \|\mathcal{S}X\|_\mC^2 = \|X\|_{L^2}^2,
\end{equation}
and the channel-separation index
\begin{equation}\label{eq:separation}
   \vartheta(X)
   :=
   1 - \frac{(X,\mathcal{S}X)_\mC^2}{\|X\|_\mC^2\, \|\mathcal{S}X\|_\mC^2}
   =
   \frac{\Var_w\bigl(\lambda^{\nicefrac{\alpha}{2}}\bigr)}{\E_w\bigl[\lambda^{\alpha}\bigr]}
   \in [0, 1),
\end{equation}
the squared sine of the Cameron--Martin angle between $X$ and $\mathcal{S}X$, satisfies: (i) $\bm{G}$ is non-singular if and only if $\vartheta(X) > 0$, which holds if and only if $X$ is not an eigenfunction of $L$; and (ii) $\det \bm{G} = \|X\|_\mC^2\, \|X\|_{L^2}^2\, \vartheta(X)$, and
\begin{equation}\label{eq:varinflation}
   (\bm{G}^{-1})_{11} = \frac{1}{\|X\|_\mC^2\, \vartheta(X)},
   \qquad
   (\bm{G}^{-1})_{22} = \frac{1}{\|X\|_{L^2}^2\, \vartheta(X)}.
\end{equation}
\end{theorem}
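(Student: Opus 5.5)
The plan is to work entirely in the eigenbasis $\{e_j\}$ of $L$, where every object in the statement becomes a weighted sum over $j$.

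\textbf{Step 1: the three entries of $\bm{G}$.} For $X \in \HH$ we have $L^{\nicefrac{\alpha}{2}}X = \sum_j \lambda_j^{\nicefrac{\alpha}{2}} x_j e_j$ in $L^2(\mathcal{D})$. Since $\mathcal{S}X = \tau^{-1}\sum_j \lambda_j^{-\nicefrac{\alpha}{2}} x_j e_j$, we get $L^{\nicefrac{\alpha}{2}}\mathcal{S}X = \tau^{-1}X$. Inserting these into $(f,g)_\mC = \tau^2(L^{\nicefrac{\alpha}{2}}f, L^{\nicefrac{\alpha}{2}}g)_{L^2}$ from Proposition~\ref{prop:rep} and applying Parseval gives the three identities in \eqref{eq:gram-spectral}. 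The last one, $\|\mathcal{S}X\|_\mC^2=\|X\|_{L^2}^2$, is just the isometry in Proposition~\ref{prop:rep}. All sums converge because $X\in\HH$ means $\sum_j\lambda_j^\alpha x_j^2<\infty$.

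\textbf{Step 2: the formula for $\vartheta$.} Divide each sum by $\|X\|_{L^2}^2$ to rewrite them as $\E_w$-moments:
\[
\|X\|_\mC^2=\tau^2\|X\|_{L^2}^2\,\E_w[\lambda^\alpha], \qquad (X,\mathcal{S}X)_\mC=\tau\|X\|_{L^2}^2\,\E_w[\lambda^{\nicefrac{\alpha}{2}}].
\]
Then $1-\vartheta(X) = \E_w[\lambda^{\nicefrac{\alpha}{2}}]^2/\E_w[\lambda^\alpha]$, because the factors $\tau^2\|X\|_{L^2}^4$ cancel. This gives $\vartheta = \Var_w(\lambda^{\nicefrac{\alpha}{2}})/\E_w[\lambda^\alpha]$.

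For the range, $\vartheta\ge0$ since a variance is nonnegative. Also $\vartheta<1$ because $\E_w[\lambda^{\nicefrac{\alpha}{2}}]\ge\lambda_1^{\nicefrac{\alpha}{2}}\ge\kappa^\alpha>0$, using $\lambda_j \ge \kappa^2$. The interpretation as a squared sine is simply Cauchy--Schwarz in $(\cdot,\cdot)_\mC$.

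\textbf{Step 3: the determinant and inverse, parts (ii) and (i).} Directly,
\[
\det\bm{G}=\|X\|_\mC^2\|\mathcal{S}X\|_\mC^2-(X,\mathcal{S}X)_\mC^2=\|X\|_\mC^2\|X\|_{L^2}^2\,\vartheta(X),
\]
by the definition of $\vartheta$ and Step 1. The $2\times2$ inverse gives $(\bm{G}^{-1})_{11}=\|\mathcal{S}X\|_\mC^2/\det\bm{G}$ and $(\bm{G}^{-1})_{22}=\|X\|_\mC^2/\det\bm{G}$, which simplify to \eqref{eq:varinflation}.

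For (i), note that $\|X\|_\mC$ and $\|X\|_{L^2}$ are positive for $X\neq0$, so $\det\bm G\neq0$ if and only if $\vartheta>0$.

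\textbf{Main obstacle: the eigenfunction characterisation.} The variance $\Var_w(\lambda^{\nicefrac{\alpha}{2}})$ vanishes if and only if $\lambda^{\nicefrac{\alpha}{2}}$ is $w$-a.s.\ constant. Since $t\mapsto t^{\nicefrac{\alpha}{2}}$ is injective on $(0,\infty)$, this means all $j$ with $x_j\neq0$ share a single eigenvalue $\lambda^*$. That in turn means $X$ lies in the eigenspace of $\lambda^*$, i.e.\ $LX=\lambda^*X$.

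Conversely, if $X$ is an eigenfunction, then only indices with $\lambda_j$ equal to its eigenvalue carry weight, so the variance is zero. The one point needing care is repeated eigenvalues: we must phrase "eigenfunction" as membership of an eigenspace rather than as equality to a single $e_j$. Once that is done, the equivalence is immediate.
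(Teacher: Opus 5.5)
Your proposal is correct and follows the paper's proof essentially step for step. Both use the same eigenbasis computation via $L^{\nicefrac{\alpha}{2}}\mathcal{S}X=\tau^{-1}X$ and the isometry, the same rewriting of the squared cosine of the Cameron--Martin angle as $\E_w$-moments, the same bound $\E_w[\lambda^{\nicefrac{\alpha}{2}}]\ge\kappa^{\alpha}>0$ for $\vartheta<1$, and the same eigenspace characterisation of $\Var_w(\lambda^{\nicefrac{\alpha}{2}})=0$. The only cosmetic difference is in (i): you read off non-singularity from the determinant formula of (ii), whereas the paper argues that a Gram matrix is singular exactly when its vectors are linearly dependent; both routes are immediate.
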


The index $\vartheta(X)$ is invariant to rescaling of $X$, is independent of $\tau$, and has a clear interpretation in that the two channels are separable exactly to the extent that the spectral energy of $X$ is spread across the spectrum of $L$. A covariate concentrated near a single spatial scale (e.g., a smooth large-scale trend surface) has $\vartheta(X) \approx 0$ and $\mathcal{S}X$ is then nearly proportional to $X$. By \eqref{eq:varinflation}, both asymptotic channel variances are then inflated by the factor $1/\vartheta(X)$ relative to
the single-channel fits. 

\begin{remark}\label{rem:multivariate}
With $p$ covariates the same geometry holds with the $2p \times 2p$ Cameron--Martin Gram matrix $\bm{G}_p$ of $(X_1,\dots,X_p,\mathcal{S}X_1,\dots,\mathcal{S}X_p)$, whose entries are the cross-covariate analogues of \eqref{eq:gram-spectral}. The channels are jointly identifiable if and only if $\bm{G}_p$ is non-singular, and the results below, including the limit law of Theorem~\ref{thm:infill}, hold verbatim with $\bm{G}_p$ in place of $\bm{G}$. In particular, the asymptotic covariance of $\hat\bbeta$ is then $\bm{G}_p^{-1}$ (justified in Appendix~\ref{app:proofs}). 
Writing $\bm{R}_p$ for the correlation matrix obtained by scaling $\bm{G}_p$ by its diagonal, the $k$th coefficient has variance-inflation factor $(\bm{R}_p^{-1})_{kk}\ge 1$ and \emph{generalised channel-separation index} $\vartheta_k := (\bm{R}_p^{-1})_{kk}^{-1} \in (0,1]$.
\end{remark}

We now return to the standardised forcing coefficient from Section~\ref{sec:source-like}, which extends to any $\alpha$. The constant function is an eigenfunction of $L$ with eigenvalue $\kappa^2$, so $L^{-\nicefrac{\alpha}{2}}\mathbf{1} = \kappa^{-\alpha}\mathbf{1}$. Therefore, writing the implicit smoother in unit-gain form, $\widetilde{\mathcal{S}} := \kappa^{\alpha} L^{-\nicefrac{\alpha}{2}}$, we have $\widetilde{\mathcal{S}}\mathbf{1} = \mathbf{1}$, and the hybrid mean is $\beta_{\mathrm{pw}}X + \beta_{\mathrm{fc}}^{\star}\widetilde{\mathcal{S}}X$ with $\beta_{\mathrm{fc}}^{\star} = \beta_{\mathrm{fc}}/(\tau\kappa^{\alpha})$ in the same units as $\beta_{\mathrm{pw}}$.

\begin{proposition}\label{prop:scalefree}
Under Assumption~\ref{ass:A} we have:
\begin{enumerate}[label=(\roman*),leftmargin=*,itemsep=1pt]
\item  with $\beta_{\mathrm{fc}}^{\star} = \beta_{\mathrm{fc}}/(\tau\kappa^{\alpha})$, the hybrid model \eqref{eq:additive} is
$Y = \beta_{\mathrm{pw}} X + \beta_{\mathrm{fc}}^{\star}\, \widetilde{\mathcal{S}}X + U$. The map $(\beta_{\mathrm{pw}}, \beta_{\mathrm{fc}}) \mapsto (\beta_{\mathrm{pw}}, \beta_{\mathrm{fc}}^{\star})$ is a bijection that leaves the likelihood, the predictions, and the likelihood-ratio statistics unchanged;
\item  the mean $m = \beta_{\mathrm{pw}}X + \beta_{\mathrm{fc}}^{\star}\widetilde{\mathcal{S}}X$ has coefficients $m_j = (m, e_j)_{L^2} = \beta(\omega_j)\, x_j$ with $\beta(\omega) = \beta_{\mathrm{pw}} + \beta_{\mathrm{fc}}^{\star} \bigl(1 + \omega/\kappa^2\bigr)^{-\nicefrac{\alpha}{2}}$.
Thus, the hybrid model is a linear regression whose coefficient varies monotonically across spatial scales, from $\beta(0) = \beta_{\mathrm{pw}} + \beta_{\mathrm{fc}}^{\star}$ for purely large-scale variation to $\beta(\infty) = \beta_{\mathrm{pw}}$ for purely small-scale variation, with half of the forcing gain lost at $\omega_{1/2} = \kappa^2(2^{\nicefrac{2}{\alpha}} - 1)$;
\item $\|\widetilde{\mathcal{S}}X\|_{L^2} = g_X \|X\|_{L^2}$ with $g_X = (\E_w\bigl[(1 + \nicefrac{\omega}{\kappa^2})^{-\alpha} \bigr])^{\nicefrac1{2}} \in (0, 1]$, and ${g_X = 1}$ if and only if $X$ is constant. 
\end{enumerate}
\end{proposition}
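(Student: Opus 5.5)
The argument rests on the spectral calculus of $L$ in the Neumann eigenbasis $\{e_j\}$, together with the representation \eqref{eq:additive} from Proposition~\ref{prop:rep}.

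\emph{Part (i).} Since $\mathcal{S} = \tau^{-1}L^{-\nicefrac{\alpha}{2}}$, we can write
\[
\beta_{\mathrm{fc}}\mathcal{S}X = \beta_{\mathrm{fc}}\tau^{-1}\kappa^{-\alpha}\,\kappa^{\alpha}L^{-\nicefrac{\alpha}{2}}X = \beta_{\mathrm{fc}}^{\star}\widetilde{\mathcal{S}}X .
\]
Substituting this into \eqref{eq:additive} gives the stated form. For fixed $(\kappa,\tau)$ the map $(\beta_{\mathrm{pw}},\beta_{\mathrm{fc}})\mapsto(\beta_{\mathrm{pw}},\beta_{\mathrm{fc}}^{\star})$ is linear and diagonal with nonzero entries $1$ and $(\tau\kappa^{\alpha})^{-1}$, so it is a bijection. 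Jointly in $(\kappa,\tau,\bbeta)$ it is a smooth bijection with smooth inverse $\beta_{\mathrm{fc}}=\tau\kappa^\alpha\beta_{\mathrm{fc}}^{\star}$. Corresponding parameter values define the same Gaussian law for $Y$ (same mean, same covariance $\mC$), and hence the same law for the observations. It follows that likelihood values agree pointwise, maxima and predictions are transported, and likelihood-ratio statistics, being differences of maximised log-likelihoods over corresponding (sub)sets, are unchanged. The one care point is that a null hypothesis such as $\beta_{\mathrm{fc}}=0$ maps to $\beta_{\mathrm{fc}}^{\star}=0$, so the restricted sets also correspond.

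\emph{Part (ii).} Since $L e_j=\lambda_j e_j$ with $\lambda_j=\kappa^2+\omega_j$, the functional calculus gives $(\widetilde{\mathcal{S}}X,e_j)=\kappa^\alpha\lambda_j^{-\nicefrac{\alpha}{2}}x_j=(1+\omega_j/\kappa^2)^{-\nicefrac{\alpha}{2}}x_j$. Adding the pointwise coefficient $\beta_{\mathrm{pw}}x_j$ yields $m_j=\beta(\omega_j)x_j$. The function $\omega\mapsto(1+\omega/\kappa^2)^{-\nicefrac{\alpha}{2}}$ is strictly decreasing from $1$ to $0$, so $\beta(\omega)$ is monotone, with the limits $\beta(0)=\beta_{\mathrm{pw}}+\beta_{\mathrm{fc}}^{\star}$ and $\beta(\infty)=\beta_{\mathrm{pw}}$. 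Solving $(1+\omega/\kappa^2)^{-\nicefrac{\alpha}{2}}=\tfrac12$ gives $\omega_{1/2}=\kappa^2(2^{\nicefrac{2}{\alpha}}-1)$.

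\emph{Part (iii).} By Parseval,
\[
\|\widetilde{\mathcal{S}}X\|_{L^2}^2=\sum_j(1+\omega_j/\kappa^2)^{-\alpha}x_j^2=\|X\|_{L^2}^2\,\E_w\bigl[(1+\omega/\kappa^2)^{-\alpha}\bigr],
\]
which identifies $g_X$. Each term of the expectation lies in $(0,1]$ and the weights sum to one, so $g_X\in(0,1]$. Moreover $g_X=1$ if and only if $w_j=0$ whenever $\omega_j>0$, i.e.\ $X$ lies in the kernel of the Neumann Laplacian. The only step needing a genuine fact rather than computation is identifying this kernel with the constants. It holds because $\mathcal{D}$ is connected (a domain): if $-\Delta f=0$ with Neumann conditions, then $\|\nabla f\|_{L^2}^2=0$ in weak form, so $f$ is constant. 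I expect this to be the only mild obstacle, and it is handled by the standard variational argument.
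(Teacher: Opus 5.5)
Your proposal is correct and follows essentially the same route as the paper: the rescaling $\beta_{\mathrm{fc}}\mathcal{S}X=\beta_{\mathrm{fc}}^{\star}\widetilde{\mathcal{S}}X$ with a linear bijection for (i), the eigen-coordinate identity $(\widetilde{\mathcal{S}}X,e_j)_{L^2}=(1+\omega_j/\kappa^2)^{-\nicefrac{\alpha}{2}}x_j$ for (ii), and, for (iii), Parseval plus identifying the $\omega=0$ eigenspace with the constants via connectedness of $\mathcal{D}$. Your extra remarks are harmless refinements of the paper's argument. These are that the reparametrisation is also smooth jointly in $(\kappa,\tau,\bbeta)$, and that null sets such as $\beta_{\mathrm{fc}}=0$ correspond under the map.
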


We thus have that $\beta_{\mathrm{fc}}^{\star}$ places the channel on the same \emph{units} as $\beta_{\mathrm{pw}}$ whereas the variance-matched $\beta_{\mathrm{fc}}^{\star}g_X$ matches its realised \emph{magnitude} for the covariate at hand, as the forcing contribution $\beta_{\mathrm{fc}}^{\star}\widetilde{\mathcal{S}}X$ has the same $L^2(\mathcal{D})$ norm as a pointwise term with coefficient $\beta_{\mathrm{fc}}^{\star}g_X$. Since $g_X < 1$ whenever $X$ has energy at scales below $\kappa^{-1}$, quoting $\beta_{\mathrm{fc}}^{\star}$ alone overstates the channel's realised contribution for such covariates.

The second property is visualised in Figure~\ref{fig:schematic}. Because of this property, one could view this as a parametric, operator-matched counterpart of the spectral confounding adjustment of \citet{GuanEtAl2023}, in which the regression coefficient is allowed
to vary with spatial frequency. 
An advantage of the operator-matched version is that $\mathcal{S}$ is fixed by the random effect, which removes the non-identifiability that an unrestricted smoother allows, where the smoothed copy may equal $X$ itself \citep[Section~2.3]{GuanEtAl2023}, and leaves identification as
the condition $\vartheta(X) > 0$.

\subsection{Infill asymptotics}\label{sec:wellspec}

We now consider asymptotic properties of the model in the infill regime, where the domain is fixed. The observations and the sampling design are specified as follows.

\begin{assumption}\label{ass:design}
The observations are $\widetilde Y_i = Y(s_i) + \epsilon_i$ with $\epsilon_i \stackrel{\mathrm{iid}}{\sim} N(0, \sigma_\epsilon^2)$,
$\sigma_\epsilon^2 > 0$, independent of $\W$. The design $\{s_i\}_{i \ge 1} \subset \barD$ is a fixed space-filling sequence satisfying $\#\{i \le n : s_i \in B\} \to \infty$ as $n \to \infty$ for every open ball $B$ with $B \cap \mathcal{D} \neq \emptyset$.
\end{assumption}

The first result delimits what any estimator can achieve in this regime. 

\begin{proposition}\label{prop:frontier}
Under Assumption~\ref{ass:A}, the Gaussian law of the field $Y$ in \eqref{eq:additive} with parameters $(\bbeta, \kappa, \tau, \alpha)$,  $\bbeta = (\beta_{\mathrm{pw}}, \beta_{\mathrm{fc}})^\top$, and the law with parameters $(\bbeta', \kappa', \tau', \alpha')$ are equivalent if $(\tau, \alpha) = (\tau', \alpha')$ and mutually singular if $\tau \neq \tau'$ or $\alpha \neq \alpha'$.
\end{proposition}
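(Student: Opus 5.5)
The law of $Y$ is Gaussian on $L^2(\mathcal{D})$ (or on $C(\barD)$) with mean $m = \beta_{\mathrm{pw}}X + \beta_{\mathrm{fc}}\mathcal{S}X$ and covariance $\mC = \tau^{-2}L^{-\alpha}$. I will compare two such laws with the Feldman--Hájek theorem. Equivalence of $N(m,\mC)$ and $N(m',\mC')$ requires three things: the Cameron--Martin spaces coincide, the means differ by an element of that space, and $\mC^{-1/2}\mC'\mC^{-1/2} - I$ is Hilbert--Schmidt. Otherwise the laws are singular. All operators are functions of the Neumann Laplacian, so I will work in the common eigenbasis $\{e_j\}$. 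Here $\lambda_j = \kappa^2 + \omega_j$, and $\lambda_j' = \kappa'^2 + \omega_j$ for the second parameter set.

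\textbf{Equivalence when $(\tau,\alpha) = (\tau',\alpha')$.} The first step is the covariance parts. The operator $\mC^{-1/2}\mC'\mC^{-1/2}$ is diagonal with entries $r_j = (\lambda_j/\lambda_j')^{\alpha}$, and
\[
r_j - 1 = O\bigl(|\kappa^2-\kappa'^2|\,\lambda_j^{-1}\bigr).
\]
The Hilbert--Schmidt condition is therefore $\sum_j \lambda_j^{-2} < \infty$. Weyl's law for the Neumann Laplacian on the domains of Assumption~\ref{ass:A} gives $\omega_j \asymp j^{2/d}$, so the condition holds exactly when $d < 4$, which is covered by $d \le 3$. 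The Cameron--Martin spaces are both $D(L^{\nicefrac{\alpha}{2}}) = D((\kappa'^2-\Delta)^{\nicefrac{\alpha}{2}})$, with equivalent norms since $\lambda_j/\lambda_j'$ is bounded above and below.

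The second step is the means. By Proposition~\ref{prop:rep}, $\mathcal{S}X \in \HH$ and $\mathcal{S}'X \in \HH$, because the range of $L'^{-\nicefrac{\alpha}{2}}$ is the same space. When the pointwise channel is present, $X \in \HH$ by Assumption~\ref{ass:A}. Hence $m - m' \in \HH$, and Feldman--Hájek gives equivalence for any $\bbeta,\bbeta',\kappa,\kappa'$.

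\textbf{Singularity when $\tau \ne \tau'$ or $\alpha \ne \alpha'$.} The mean difference is irrelevant here, because singularity of the centred laws implies singularity of the shifted ones. The reason is that Feldman--Hájek requires both conditions, or, more directly, that the Hilbert--Schmidt condition depends only on covariances. So it suffices to show that $r_j - 1$ is not square summable.

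If $\alpha = \alpha'$ and $\tau \ne \tau'$, then
\[
r_j = \frac{\tau^2}{\tau'^2}\Bigl(\frac{\lambda_j}{\lambda_j'}\Bigr)^{\alpha} \to \frac{\tau^2}{\tau'^2} \ne 1,
\]
so $\sum_j (r_j-1)^2 = \infty$. If $\alpha \ne \alpha'$, then $r_j \asymp \lambda_j^{\alpha-\alpha'}$, which tends to $0$ or $\infty$. In either case the Hilbert--Schmidt condition fails and the laws are mutually singular.

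\textbf{Main obstacle.} The main care point is justifying the Weyl-type asymptotics $\omega_j \asymp j^{2/d}$ for the Neumann problem on convex or $C^{1,1}$ domains. I will cite the standard Neumann Weyl law for Lipschitz domains. I also need to check that Feldman--Hájek on $L^2(\mathcal{D})$ transfers to the laws on path space, which follows because the continuous modifications are determined by their $L^2$ laws.
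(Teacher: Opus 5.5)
Your proof is correct and follows the paper's argument. The paper obtains the result by citing Bolin and Kirchner's Corollary~3.3, noting that its proof uses only the shared eigenbasis of $L$ and $L'$ and the Weyl asymptotics $\omega_j \asymp j^{2/d}$. Your direct Feldman--H\'ajek computation in the eigenbasis unpacks exactly that corollary, and you discharge the mean condition through Proposition~\ref{prop:rep} and Assumption~\ref{ass:A} just as the paper does.

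Your remark on passing from $L^2(\mathcal{D})$ to path space is something the paper leaves implicit, although it is needed to feed Lemma~\ref{lem:equiv}. It holds because the inclusion $C(\barD)\hookrightarrow L^2(\mathcal{D})$ is a continuous injection between Polish spaces. By the Lusin--Souslin theorem it therefore identifies the Borel $\sigma$-algebras, so equivalence and singularity transfer in both directions.
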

As equivalent laws cannot be distinguished consistently, and equivalence carries over to the noisy observations (Lemma~\ref{lem:equiv}), neither $\bbeta$ nor $\kappa$ is consistently estimable under infill asymptotics. 
This is the operator-matched analogue of the classical result that $\kappa$ cannot be estimated consistently in Mat\'ern fields \citep{Zhang2004}, and the deterministic forcing does not enlarge the set of consistently estimable parameters. Consistency of $\hat\bbeta_n$ is thus the wrong target. The operator-matched estimator instead has asymptotic unbiasedness and calibrated uncertainty, with a very simple limit law:

\begin{theorem}\label{thm:infill}
Suppose the data are generated by the model \eqref{eq:additive} with coefficients $\bbeta_0 = (\beta_{\mathrm{pw},0}, \beta_{\mathrm{fc},0})$ and known $(\kappa, \tau, \alpha, \sigma_\epsilon^2)$, let Assumptions~\ref{ass:A} and~\ref{ass:design} hold, and assume that $\vartheta(X) > 0$. Let $\hat\bbeta_n$ be the maximum likelihood estimator based on $(\widetilde Y_1, \ldots, \widetilde Y_n)$, and let $\bm{M}_n = \bm{D}_n^\top \bm{\Sigma}_n^{-1} \bm{D}_n$ denote its information matrix, where $\bm{D}_n = [X(s_i), (\mathcal{S}X)(s_i)]_{i \le n}$ and
$\bm{\Sigma}_n$ is the covariance matrix of the observations. Then:
\begin{enumerate}[label=(\roman*)]
\item $\hat\bbeta_n \sim N(\bbeta_0,\, \bm{M}_n^{-1})$, for every $n$ at which $\bm{M}_n$ is non-singular, which by (ii) is the case for all $n$ large enough;
\item $\bm{M}_n \to \bm{G}$ as $n \to \infty$, where $\bm{G}$ is given by \eqref{eq:gram}; consequently $\hat\bbeta_n \xrightarrow{d} N\bigl(\bbeta_0,\; \bm{G}^{-1}\bigr)$.
\end{enumerate}
For the forced model ($\beta_{\mathrm{pw}} \equiv 0$), (i) and (ii) hold for any $X \in L^2(\mathcal{D}) \setminus \{0\}$, with scalar
information $\bm{M}_n \to \|\mathcal{S}X\|_\mC^2 = \|X\|_{L^2}^2$ and limit law $\hat\beta_{\mathrm{fc},n} \xrightarrow{d} N\bigl(\beta_0, \|X\|_{L^2}^{-2}\bigr)$.
\end{theorem}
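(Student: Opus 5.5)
Part (i) is the standard Gaussian linear-model computation. With $(\kappa,\tau,\alpha,\sigma_\epsilon^2)$ known, the observation vector is $\widetilde{\bm Y}_n = \bm D_n\bbeta_0 + \bm U_n + \bm\epsilon_n$. By Proposition~\ref{prop:rep}, $U$ has a continuous modification, so point evaluation is meaningful. Hence $\widetilde{\bm Y}_n\sim N(\bm D_n\bbeta_0,\bm\Sigma_n)$ with $\bm\Sigma_n = [C(s_i,s_k)]+\sigma_\epsilon^2 I$, which is positive definite. The log-likelihood in $\bbeta$ is quadratic, so the MLE is the GLS estimator $\hat\bbeta_n=\bm M_n^{-1}\bm D_n^\top\bm\Sigma_n^{-1}\widetilde{\bm Y}_n$ whenever $\bm M_n$ is invertible. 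It is linear in a Gaussian vector, has mean $\bbeta_0$, and has covariance $\bm M_n^{-1}\bm D_n^\top\bm\Sigma_n^{-1}\bm\Sigma_n\bm\Sigma_n^{-1}\bm D_n\bm M_n^{-1}=\bm M_n^{-1}$. The claim that $\bm M_n$ is eventually non-singular follows from (ii) once $\bm G$ is non-singular, which Theorem~\ref{thm:geometry}(i) gives under $\vartheta(X)>0$: the determinant is continuous.

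Part (ii) is the real content. I would interpret $\bm M_n$ as the Fisher information of the mean-shift family $\{N(\bm D_n\bbeta,\bm\Sigma_n)\}$, equivalently as the RKHS inner products of the mean functions $X$ and $\mathcal SX$ restricted to the observation functionals. For $f,g\in\HH$, let $\bm f_n=(f(s_i))_{i\le n}$ and define $\langle f,g\rangle_n:=\bm f_n^\top\bm\Sigma_n^{-1}\bm g_n$. The claim is that $\langle f,g\rangle_n\to(f,g)_\mC$ for all $f,g\in\HH$, and the case $(f,g)\in\{X,\mathcal SX\}^2$ gives $\bm M_n\to\bm G$. By polarisation it suffices to treat $f=g$.

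\emph{Variational characterisation.} The key tool is
\[
\bm f_n^\top\bm\Sigma_n^{-1}\bm f_n=\min_{h\in\HH}\Bigl\{\|h\|_\mC^2+\sigma_\epsilon^{-2}\textstyle\sum_{i\le n}(f(s_i)-h(s_i))^2\Bigr\}.
\]
This is the representer/kriging identity for the RKHS of $C$ plus nugget: the RKHS of $\widetilde Y$ restricted to the design points is $\HH\oplus$ the nugget space. The right-hand side is nondecreasing in $n$, and taking $h=f$ bounds it above by $\|f\|_\mC^2$, so the limit exists and is at most $\|f\|_\mC^2$.

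\emph{Lower bound.} Let $h_n$ be the minimisers. They are bounded in $\HH$, so a subsequence converges weakly to some $h$. Since $\HH\hookrightarrow C(\barD)$ compactly (Proposition~\ref{prop:rep} together with $\alpha>d/2$ and Sobolev embedding on the regular domain), $h_n\to h$ uniformly along the subsequence. The penalty term stays bounded by $\|f\|_\mC^2$. If $f\neq h$ somewhere, then by continuity $|f-h_n|>\delta$ on a ball for large $n$, and Assumption~\ref{ass:design} puts infinitely many design points in that ball, so the penalty would diverge. Hence $h=f$, and weak lower semicontinuity of the norm gives $\liminf\ge\|f\|_\mC^2$. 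This gives (ii), and the limit law follows from (i) with $\bm M_n^{-1}\to\bm G^{-1}$.

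For the forced model the same argument applies to the scalar $\langle\mathcal SX,\mathcal SX\rangle_n$. This only needs $\mathcal SX\in\HH$, which holds for all $X\in L^2$ by Proposition~\ref{prop:rep}. The limit is $\|\mathcal SX\|_\mC^2=\|X\|_{L^2}^2>0$ by the isometry, so $\bm M_n>0$ eventually.

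I expect the main obstacle to be justifying the variational identity and the compact embedding carefully. In particular, the design may have repeated or boundary points, and the uniform convergence must hold on $\barD$.
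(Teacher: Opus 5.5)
Your proposal is correct and follows the paper's proof essentially step for step. Part (i) is the same GLS computation. For (ii) you use the same kernel-ridge variational identity for $\bm f_n^\top\bm\Sigma_n^{-1}\bm f_n$, monotonicity in $n$ with the upper bound obtained from $h=f$, identification of a weak subsequential limit of the minimisers with $f$ via the space-filling design, weak lower semicontinuity, and polarisation. The only cosmetic difference is how you identify the limit. You get uniform convergence of the minimisers from the compact embedding $\HH\hookrightarrow C(\barD)$, whereas the paper uses only pointwise convergence through the reproducing property at finitely many design sites. Either route suffices.
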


Since (i) is an exact Gaussian statement, Wald confidence regions built from $\bm{M}_n$ attain nominal coverage at every $n$ at which $\bm{M}_n$ is non-singular, and by (ii) they remain valid in the limit.
The limiting precision of $\hat\beta_{\mathrm{fc},n}$ is simply the squared $L^2$ norm of the covariate, which is completely insensitive to the operator parameters, and $\sigma_\epsilon^2$ affects only the speed of the convergence $\bm{M}_n \to \bm{G}$, but not the limit.
The scale-free parametrisation $\hat\beta_{\mathrm{fc}}^{\star} = \hat\beta_{\mathrm{fc}}/(\tau\kappa^{\alpha})$ of Section~\ref{sec:source-like} is a fixed linear transformation of $\hat\bbeta_n$, so under the conditions of Theorem~\ref{thm:infill}, $\hat\beta_{\mathrm{fc}}^{\star}$ inherits the exact normality and coverage, with limiting law $N\bigl(\beta_{\mathrm{fc},0}^{\star}, (\tau\kappa^{\alpha})^{-2}\|X\|_{L^2}^{-2}\bigr)$ in the forced model.

The requirement $X \in \HH$ in Theorem~\ref{thm:infill} ensures that $X$ can be evaluated at the observation locations and keeps the statement
simple. When $X$ is rough enough to fall outside $\HH$ but still has pointwise meaning, the pointwise channel behaves qualitatively differently, and in fact more favourably.

\begin{proposition}\label{prop:rough}
Let the observations follow Assumption~\ref{ass:design} under the hybrid model \eqref{eq:additive} with  coefficients $\bbeta_0 = (\beta_{\mathrm{pw},0}, \beta_{\mathrm{fc},0})$ and known $(\kappa, \tau, \alpha, \sigma_\epsilon^2)$, and suppose $X \in C(\barD)$ with $X \notin \HH = D(L^{\nicefrac{\alpha}{2}})$ (for $\alpha = d = 2$ this holds for every continuous $X \notin H^{2}(\mathcal{D})$). Then:
\begin{enumerate}[label=(\roman*),leftmargin=*,itemsep=1pt]
\item if $\E_w(\lambda^{\nicefrac{\alpha}{2}}) < \infty$, i.e., $X \in D(L^{\nicefrac{\alpha}{4}})$ (for $\alpha = 2$, $X \in H^1(\mathcal{D})$), the two channels decouple, $\vartheta(X) = 1$;
\item $\hat\beta_{\mathrm{fc}}$ retains the calibrated limit law of
      Theorem~\ref{thm:infill}, with limiting variance $\|X\|_{L^2}^{-2}$;
\item $\hat\beta_{\mathrm{pw}} \to \beta_{\mathrm{pw},0}$ in probability,
      so the pointwise coefficient is consistently estimable.
\end{enumerate}
\end{proposition}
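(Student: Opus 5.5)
Part (i) is a direct computation in the eigenbasis of $L$. Since $X \notin \HH$, we have $\|X\|_\mC^2 = \tau^2\sum_j \lambda_j^{\alpha}x_j^2 = \infty$. The cross term $(X,\mathcal{S}X)_\mC = \tau\sum_j\lambda_j^{\nicefrac{\alpha}{2}}x_j^2 = \tau\|X\|_{L^2}^2\,\E_w(\lambda^{\nicefrac{\alpha}{2}})$ stays finite under the hypothesis $X\in D(L^{\nicefrac{\alpha}{4}})$. It is well defined as the pairing $\tau^2(L^{\nicefrac{\alpha}{4}}X, L^{\nicefrac{\alpha}{4}}L^{-\nicefrac{\alpha}{2}}L^{\nicefrac{\alpha}{2}}\cdot)$, i.e.\ as $\tau(X, L^{\nicefrac{\alpha}{2}}X)$ in the duality sense. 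Also $\|\mathcal{S}X\|_\mC^2=\|X\|_{L^2}^2$ is finite.

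To make sense of $\vartheta(X)$, I would define it as the limit of $\vartheta(X_N)$ along the spectral truncations $X_N=\sum_{j\le N}x_je_j$, using formula \eqref{eq:separation} from Theorem~\ref{thm:geometry}. Then $\vartheta(X_N)= 1-(\E_{w_N}\lambda^{\nicefrac{\alpha}{2}})^2/\E_{w_N}\lambda^{\alpha}$. The numerator converges to a finite limit while $\E_{w_N}\lambda^\alpha\to\infty$, so $\vartheta(X_N)\to1$.

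For (ii) and (iii) I would work with the finite-$n$ information matrix $\bm{M}_n=\bm{D}_n^\top\bm{\Sigma}_n^{-1}\bm{D}_n$, which is well defined because $X\in C(\barD)$ and $\mathcal{S}X\in\HH\subset C(\barD)$ by Proposition~\ref{prop:rep}. As in Theorem~\ref{thm:infill}(i), the MLE is exactly $\hat\bbeta_n\sim N(\bbeta_0,\bm{M}_n^{-1})$ once $\bm M_n$ is non-singular, so everything reduces to the asymptotics of the entries of $\bm{M}_n$. I would reuse the ingredients of the proof of Theorem~\ref{thm:infill}(ii), namely that for $f,g\in\HH$,
\[
f_n^\top\bm{\Sigma}_n^{-1}g_n\to(f,g)_\mC ,
\]
where $f_n=(f(s_i))_{i\le n}$. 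These quadratic forms are the squared norms of the orthogonal projections onto the observation span inside the RKHS of the noisy process. For $f\notin\HH$ continuous, the same argument gives $f_n^\top\bm\Sigma_n^{-1}f_n\to\infty$, because monotonicity in $n$ holds (nested designs) and a finite limit would yield an element of $\HH$ agreeing with $f$ on the dense design, hence equal to $f$ by continuity. So $(\bm M_n)_{11}\to\infty$ and $(\bm M_n)_{22}\to\|X\|_{L^2}^2$.

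The main obstacle is the off-diagonal entry. I need $(\bm M_n)_{12}^2/\bigl((\bm M_n)_{11}(\bm M_n)_{22}\bigr)\to0$, i.e.\ the finite-sample analogue of $\vartheta\to1$. I would bound it via Cauchy--Schwarz after splitting $X=X_N+(X-X_N)$. First, $(X_N)_n^\top\bm\Sigma_n^{-1}(\mathcal{S}X)_n\to (X_N,\mathcal{S}X)_\mC$, which is bounded uniformly in $N$ by $\tau\sum_j\lambda_j^{\nicefrac{\alpha}{2}}x_j^2<\infty$. Second, the remainder's cross term is at most $\bigl((X-X_N)_n^\top\bm\Sigma_n^{-1}(X-X_N)_n\bigr)^{1/2}\|X\|_{L^2}$. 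Together these give $(\bm M_n)_{12}=O(1)+o\bigl((\bm M_n)_{11}^{1/2}\bigr)$, and hence the ratio vanishes. If uniformity in $N$ proves delicate (pointwise evaluation of $X-X_N$ need not be controlled), in case (i) fails I would only claim (ii)--(iii) via the weaker bound $|(\bm M_n)_{12}|\le\bigl((\bm M_n)_{11}(\bm M_n)_{22}\bigr)^{1/2}$ plus the Schur complement.

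Given these asymptotics, $(\bm M_n^{-1})_{22}=\bigl((\bm M_n)_{22}-(\bm M_n)_{12}^2/(\bm M_n)_{11}\bigr)^{-1}\to\|X\|_{L^2}^{-2}$, which gives (ii) by exact normality. Also $(\bm M_n^{-1})_{11}=\bigl((\bm M_n)_{11}(1-r_n^2)\bigr)^{-1}\to0$, where $r_n$ is the correlation entry, so $\hat\beta_{\mathrm{pw}}$ has vanishing variance, and (iii) follows by Chebyshev.
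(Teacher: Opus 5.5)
Your part (i) is fine. It is essentially the paper's computation: the paper evaluates \eqref{eq:separation} with $\E_w(\lambda^{\alpha})=\infty$, and your limit along spectral truncations is an equivalent way to give that expression meaning. Your reduction of (ii)--(iii) to the Schur complements of $\bm M_n$ is also sound, as are $Q_n(X,X)\to\infty$ and $Q_n(\mathcal{S}X,\mathcal{S}X)\to\|X\|_{L^2}^2$, which match Lemma~\ref{lem:qform-rough}(i).

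The gap is at the step you flagged: you never show that $r_n=Q_n(X,\mathcal{S}X)/\{Q_n(X,X)\,Q_n(\mathcal{S}X,\mathcal{S}X)\}^{1/2}\to 0$. Splitting $X=X_N+(X-X_N)$ cannot give $o\bigl((\bm M_n)_{11}^{1/2}\bigr)$ for the remainder. Since $Q_n(X_N,X_N)\le\|X_N\|_\mC^2$ stays bounded while $Q_n(X,X)\to\infty$, the triangle inequality for the seminorm $Q_n^{1/2}$ gives $Q_n(X-X_N,X-X_N)/Q_n(X,X)\to 1$ for every fixed $N$. The rough part of $X$ sits entirely in the remainder. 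So your bound on the remainder's cross term is $\|X\|_{L^2}(\bm M_n)_{11}^{1/2}(1+o(1))$, which only reproduces $|r_n|\le 1+o(1)$. The fallback does no better. Cauchy--Schwarz alone gives $|r_n|\le 1$, hence only $0\le(\bm M_n)_{22}(1-r_n^2)\le(\bm M_n)_{22}$. That yields neither the limit $\|X\|_{L^2}^2$ of the Schur complement needed in (ii) nor the divergence of $(\bm M_n)_{11}(1-r_n^2)$ needed in (iii).

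The missing idea is the monotone variational structure behind Lemma~\ref{lem:qform}. By \eqref{eq:variational-app}, for each fixed $a$ the quadratic $F_n(a)=Q_n(\mathcal{S}X-aX,\mathcal{S}X-aX)$ is nondecreasing in $n$.
\begin{itemize}
\item Its minimiser $a_n=Q_n(X,\mathcal{S}X)/Q_n(X,X)$ satisfies $|a_n|\le\{Q_n(\mathcal{S}X,\mathcal{S}X)/Q_n(X,X)\}^{1/2}\to 0$, by Cauchy--Schwarz alone.
\item For fixed $n_0\le n$, $m_n:=F_n(a_n)\ge F_{n_0}(a_n)\to F_{n_0}(0)=Q_{n_0}(\mathcal{S}X,\mathcal{S}X)$, so $\liminf_n m_n\ge\|X\|_{L^2}^2$.
\item Together with $m_n\le F_n(0)=Q_n(\mathcal{S}X,\mathcal{S}X)$, this gives $m_n\to\|X\|_{L^2}^2$, which is (ii).
\item It also gives $(\bm M_n)_{11}(1-r_n^2)=m_n(\bm M_n)_{11}/(\bm M_n)_{22}\to\infty$, which is (iii).
\end{itemize}
This argument uses only $X\in C(\barD)\setminus\HH$. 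Parts (ii)--(iii) therefore do not depend on the hypothesis of (i). Your proposal instead tried to route the off-diagonal control through the finiteness of $\tau\sum_j\lambda_j^{\alpha/2}x_j^2$, and had no working argument when that sum diverges.
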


Thus, when $m_0 = \beta_0 X$ with a rough $X$, the hybrid recovers the effect through (iii), and by (ii) keeps $\hat\beta_{\mathrm{fc}}$ centred at zero with calibrated variance.

\subsection{Robustness to operator misspecification}\label{sec:misspec}

An important difference between the operator-matched and additive estimators is their sensitivity to model misspecification. The next result covers an arbitrary data-generating mean in $\mathcal{H}$, which has a misspecified smoother as a  special case.

\begin{theorem}\label{thm:misspec}
Suppose the data are generated by $Y = m_0 + U$ with  $m_0 \in \HH$, and the hybrid model \eqref{eq:additive} is fitted by generalised least-squares (GLS) with known $(\kappa, \tau, \alpha, \sigma_\epsilon^2)$ and $\vartheta(X) > 0$. Then, under Assumptions~\ref{ass:A} and~\ref{ass:design}, $\hat\bbeta_n \xrightarrow{d} N\bigl(\bbeta_\infty,\; \bm{G}^{-1}\bigr)$, with $\bbeta_\infty =  \bm{G}^{-1} \bigl((X, m_0)_\mC, (\mathcal{S}X, m_0)_\mC\bigr)^\top$, where both components of $\bbeta_\infty$ are finite. In particular, for
$m_0 = \beta_0 \mathcal{S}_0 X$ with $\mathcal{S}_0 X \in \HH$:
if $\mathcal{S}_0 = \mathcal{I}$ then $\bbeta_\infty = (\beta_0, 0)^\top$, and if $\mathcal{S}_0 = \mathcal{S}$ then $\bbeta_\infty = (0, \beta_0)^\top$. For the forced model the limit is
$\beta_{\mathrm{fc},\infty} = (\mathcal{S}X, m_0)_\mC / \|X\|_{L^2}^2$,
which is finite for every $m_0 \in \HH$ and every
$X \in L^2(\mathcal{D}) \setminus \{0\}$.
\end{theorem}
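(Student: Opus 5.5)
The approach mirrors the proof of Theorem~\ref{thm:infill}. The only change is in the mean: the GLS estimator is an explicit linear functional of the data, so it is exactly Gaussian for every $n$. Write $\widetilde{\bm Y}_n = \bm m_{0,n} + \bm U_n + \bm\epsilon_n$, where $\bm m_{0,n} = (m_0(s_i))_{i\le n}$, which is well defined since $m_0\in\HH\hookrightarrow C(\barD)$ by Proposition~\ref{prop:rep}. Then
\begin{equation*}
\hat\bbeta_n = \bm M_n^{-1}\bm D_n^\top\bm\Sigma_n^{-1}\widetilde{\bm Y}_n \sim N\bigl(\bm M_n^{-1}\bm b_n,\ \bm M_n^{-1}\bigr),
\qquad \bm b_n := \bm D_n^\top\bm\Sigma_n^{-1}\bm m_{0,n}.
\end{equation*}
The covariance is $\bm M_n^{-1}$ because the noise structure $\bm\Sigma_n$ is correctly specified. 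Theorem~\ref{thm:infill}(ii) gives $\bm M_n\to\bm G$, and $\bm G$ is invertible by Theorem~\ref{thm:geometry}(i) since $\vartheta(X)>0$. The claimed limit law therefore follows once we show $\bm b_n \to \bigl((X,m_0)_\mC, (\mathcal{S}X,m_0)_\mC\bigr)^\top$, using Slutsky or simply the convergence of Gaussian parameters.

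The key step is to prove, for arbitrary $f,g\in\HH$, that the bilinear form $\bm f_n^\top\bm\Sigma_n^{-1}\bm g_n$ converges to $(f,g)_\mC$. Here $\bm f_n = (f(s_i))_{i\le n}$. The natural route is polarisation: $4\,\bm f_n^\top\bm\Sigma_n^{-1}\bm g_n = q_n(f+g)-q_n(f-g)$ with $q_n(h)=\bm h_n^\top\bm\Sigma_n^{-1}\bm h_n$. This reduces the claim to convergence of the quadratic form $q_n(h)\to\|h\|_\mC^2$ for every $h\in\HH$. That quadratic statement is exactly what underlies $\bm M_n\to\bm G$ in Theorem~\ref{thm:infill}(ii), applied to $h = aX + b\mathcal{S}X$. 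If that proof only used functions in $\spn\{X,\mathcal{S}X\}$, I would reprove it for general $h\in\HH$ in two parts. First, $q_n(h)$ is the squared Cameron--Martin norm of the minimum-norm interpolant of noisy evaluations. Equivalently, $q_n(h)$ equals the supremum over $c$ of $(2c^\top\bm h_n - c^\top\bm\Sigma_n c)$. This is monotone nondecreasing in $n$ and bounded above by $\|h\|_\mC^2$, since $\bm\Sigma_n \succeq \bm K_n$, the noise-free kernel matrix. Second, the limit is not smaller. For this, choose test vectors $c$ that approximate the functional $\ell = \mC^{-1}h$ by averaging over many design points in small balls. The space-filling condition in Assumption~\ref{ass:design} makes the noise term $\sigma_\epsilon^2\|c\|^2$ vanish, and continuity of elements of $\HH$ controls the approximation error.

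Given the bilinear convergence, substituting $f=X$ and $f=\mathcal{S}X$ (both in $\HH$) with $g=m_0$ yields $\bm b_n\to((X,m_0)_\mC,(\mathcal{S}X,m_0)_\mC)^\top$. Hence $\bbeta_\infty=\bm G^{-1}(\cdot)$, which is finite by Cauchy--Schwarz in $\HH$. The special cases are then projection identities. If $m_0=\beta_0X$, the right-hand vector is $\beta_0$ times the first column of $\bm G$, giving $(\beta_0,0)^\top$. If $m_0=\beta_0\mathcal{S}X$, it is $\beta_0$ times the second column, giving $(0,\beta_0)^\top$. In words, $\bbeta_\infty$ is the coefficient vector of the $\mC$-orthogonal projection of $m_0$ onto $\spn\{X,\mathcal{S}X\}$.

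For the forced model the design is the single column $\mathcal{S}X$, which lies in $\HH$ for any $X\in L^2(\mathcal{D})$ by Proposition~\ref{prop:rep}. The same argument then gives the limit $(\mathcal{S}X,m_0)_\mC/\|\mathcal{S}X\|_\mC^2$, and $\|\mathcal{S}X\|_\mC^2=\|X\|_{L^2}^2$ by the isometry. Using $(\mathcal{S}X,m_0)_\mC=\tau(X,L^{\nicefrac{\alpha}{2}}m_0)_{L^2}$ confirms finiteness for every $m_0\in\HH$.

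The main obstacle is the lower bound in the key step: $\liminf q_n(h)\ge\|h\|_\mC^2$ for general $h\in\HH$ in the presence of measurement noise. If Theorem~\ref{thm:infill}(ii) was proved via an argument that already handles arbitrary Cameron--Martin elements, I would simply cite it; otherwise this step carries the technical weight.
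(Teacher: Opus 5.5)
Your proposal is correct and follows the paper's proof essentially step for step. The steps are: the GLS estimator is exactly Gaussian with mean $\bm{M}_n^{-1}\bm{v}_n$ and covariance $\bm{M}_n^{-1}$; $\bm{v}_n$ and $\bm{M}_n$ converge by polarisation of the forms $Q_n$; and the special cases and the forced model follow from projection identities. The step you flag as the main obstacle is exactly Lemma~\ref{lem:qform}, which the paper proves for arbitrary $f,g\in\HH$ (primal variational identity, monotonicity, and weak compactness for the lower bound), so you can cite it directly. Your dual test-vector route would also work, but only after first approximating $h$ in $\HH$ by finite combinations $\sum_j a_j\varrho(\cdot,t_j)$, since $\mC^{-1}h$ is in general a distribution rather than a function you can average over design points.
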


\begin{corollary}\label{cor:confounding}
In the setting of Theorem~\ref{thm:misspec}, suppose the data are generated by $Y = \beta_0 X + Z + U$, where $Z \in \HH$ is an unmeasured confounder. Then
$$
\bbeta_\infty = (\beta_0,\, 0)^\top + \bm{G}^{-1}
\begin{pmatrix}
   (X, Z)_\mC & (\mathcal{S}X, Z)_\mC
\end{pmatrix}^\top,
$$
so the asymptotic bias of $\hat\bbeta_n$ is the Cameron--Martin projection of the confounder onto the two channels, and the calibrated limit law $\hat\bbeta_n \xrightarrow{d} N(\bbeta_\infty, \bm{G}^{-1})$ is unchanged.
\end{corollary}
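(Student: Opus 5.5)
The corollary is a direct specialisation of Theorem~\ref{thm:misspec}. I would take the data-generating mean to be $m_0 = \beta_0 X + Z$, check that it satisfies the hypotheses of that theorem, and then use linearity of the Cameron--Martin inner product.

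First I verify $m_0 \in \HH$. When the pointwise channel is included, Assumption~\ref{ass:A} requires $X \in \HH$, and $Z \in \HH$ by hypothesis. Since $\HH$ is a linear space, $m_0 \in \HH$. Theorem~\ref{thm:misspec} then gives $\hat\bbeta_n \xrightarrow{d} N(\bbeta_\infty, \bm{G}^{-1})$ with
\[
\bbeta_\infty = \bm{G}^{-1}\bigl((X, m_0)_\mC,\ (\mathcal{S}X, m_0)_\mC\bigr)^\top .
\]
This already shows that the limiting covariance $\bm{G}^{-1}$ is unchanged, because it depends only on $X$ and the operator, not on $m_0$.

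Next I split the right-hand vector by bilinearity:
\[
\bigl((X, m_0)_\mC, (\mathcal{S}X, m_0)_\mC\bigr)^\top = \beta_0\bigl(\|X\|_\mC^2, (\mathcal{S}X, X)_\mC\bigr)^\top + \bigl((X,Z)_\mC, (\mathcal{S}X,Z)_\mC\bigr)^\top .
\]
The first vector is $\beta_0$ times the first column of $\bm{G}$ in \eqref{eq:gram}, so $\bm{G}^{-1}$ maps it to $\beta_0 (1,0)^\top$. This is the $\mathcal{S}_0=\mathcal{I}$ case already stated in Theorem~\ref{thm:misspec}. Adding $\bm{G}^{-1}$ applied to the confounder vector gives the displayed formula.

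For the interpretation as a projection, I note that $\bm{G}^{-1}$ times the vector of inner products with the spanning pair $(X,\mathcal{S}X)$ is exactly the coefficient vector of the $(\cdot,\cdot)_\mC$-orthogonal projection of $Z$ onto $\spn\{X,\mathcal{S}X\}$. These are the normal equations, which are solvable because $\vartheta(X)>0$ makes $\bm{G}$ non-singular by Theorem~\ref{thm:geometry}.

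The only step needing care is the membership $X \in \HH$. It is what allows the pointwise part of $m_0$ to be handled inside Theorem~\ref{thm:misspec}, and it is guaranteed by Assumption~\ref{ass:A}. Everything else is linear algebra.
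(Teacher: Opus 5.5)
Your proposal is correct and follows essentially the same route as the paper's proof: take $m_0 = \beta_0 X + Z \in \HH$ (using $X \in \HH$ from Assumption~\ref{ass:A}), invoke Theorem~\ref{thm:misspec}, and split the inner-product vector by linearity, noting that the $X$-part is $\beta_0$ times the first column of $\bm{G}$. Your remark that the limiting covariance $\bm{G}^{-1}$ does not depend on $m_0$, and your normal-equations reading of the projection, are also what the paper relies on.
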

In particular, a confounder aligned with the smoothed covariate, $Z = c\,\mathcal{S}X$, shifts only the forcing coefficient and leaves the pointwise channel unbiased.

\begin{corollary}\label{cor:signs}
Let $m_0 = \beta_0\, \sigma_0(L) X$, where $\sigma_0 : [\lambda_1, \infty) \to [0, \infty)$ is a bounded function, so that $\sigma_0(L)$ is a bounded non-negative smoother, with $\sigma_0(\lambda_j) > 0$ for at least one $j$ with $x_j \neq 0$, and assume $\beta_0 \neq 0$ and $X \in \HH$, so that $m_0 \in \HH$. Then the forced limit satisfies $\beta_{\mathrm{fc},\infty} = \beta_0 \tau \E_w\bigl[\lambda^{\nicefrac{\alpha}{2}} \sigma_0(\lambda)\bigr] \neq 0$ with $\operatorname{sign}(\beta_{\mathrm{fc},\infty}) = \operatorname{sign}(\beta_0)$. 
\end{corollary}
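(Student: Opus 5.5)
The plan is to read off the forced-model limit from Theorem~\ref{thm:misspec}, $\beta_{\mathrm{fc},\infty} = (\mathcal{S}X, m_0)_\mC/\|X\|_{L^2}^2$, and evaluate the numerator in the eigenbasis $\{e_j\}$ of $L$, in the same way as the computations behind \eqref{eq:gram-spectral} in Theorem~\ref{thm:geometry}.

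First I will check that $m_0 \in \HH$, so that Theorem~\ref{thm:misspec} applies. By spectral calculus, $\sigma_0(L)X$ has coefficients $\sigma_0(\lambda_j)x_j$. Then
\[
\tau^2\sum_j \lambda_j^{\alpha}\,\sigma_0(\lambda_j)^2 x_j^2 \le \bigl(\sup\sigma_0\bigr)^2 \|X\|_\mC^2 < \infty,
\]
because $X\in\HH$. Hence $m_0\in\HH$, and $L^{\nicefrac{\alpha}{2}}m_0$ has coefficients $\beta_0\lambda_j^{\nicefrac{\alpha}{2}}\sigma_0(\lambda_j)x_j$.

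Next I will use the definition of the Cameron--Martin inner product from Proposition~\ref{prop:rep}, together with $L^{\nicefrac{\alpha}{2}}\mathcal{S}X = \tau^{-1}X$. This gives
\[
(\mathcal{S}X, m_0)_\mC = \tau^2\bigl(L^{\nicefrac{\alpha}{2}}\mathcal{S}X, L^{\nicefrac{\alpha}{2}}m_0\bigr)_{L^2} = \tau\,(X, L^{\nicefrac{\alpha}{2}}m_0)_{L^2} = \tau\beta_0\sum_j \lambda_j^{\nicefrac{\alpha}{2}}\sigma_0(\lambda_j)\,x_j^2 .
\]
Dividing by $\|X\|_{L^2}^2$ turns the weights $x_j^2$ into $w_j$. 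This yields $\beta_{\mathrm{fc},\infty} = \beta_0\tau\,\E_w[\lambda^{\nicefrac{\alpha}{2}}\sigma_0(\lambda)]$.

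The series converges absolutely by Cauchy--Schwarz:
\[
\sum_j \lambda_j^{\nicefrac{\alpha}{2}}\sigma_0(\lambda_j)x_j^2 \le \sup\sigma_0\,\Bigl(\sum_j\lambda_j^{\alpha}x_j^2\Bigr)^{\nicefrac12}\|X\|_{L^2} < \infty .
\]
Absolute convergence also justifies expanding the $L^2$ inner product term by term via Parseval.

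For the sign, every summand $w_j\lambda_j^{\nicefrac{\alpha}{2}}\sigma_0(\lambda_j)$ is non-negative, since $\lambda_j \ge \kappa^2 > 0$ and $\sigma_0 \ge 0$. By hypothesis some index $j$ has $x_j \neq 0$ (so $w_j > 0$) and $\sigma_0(\lambda_j) > 0$, so that summand is strictly positive. The expectation is therefore strictly positive and finite. Because $\tau > 0$, $\beta_{\mathrm{fc},\infty}$ is nonzero and has the sign of $\beta_0$.

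I expect no real obstacle. The only care needed is the justification that $L^{\nicefrac{\alpha}{2}}$ commutes with $\sigma_0(L)$ on $\HH$, together with the term-by-term expansion; both follow from the functional calculus for the self-adjoint operator $L$ with discrete spectrum and the bounds above.
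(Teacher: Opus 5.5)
Your proposal is correct and follows essentially the same route as the paper: take the forced limit $(\mathcal{S}X, m_0)_\mC/\|X\|_{L^2}^2$ from Theorem~\ref{thm:misspec}, expand it in the eigenbasis, justify convergence by Cauchy--Schwarz, and get strict positivity from non-negative terms at least one of which is positive. The differences are cosmetic. You verify $m_0 \in \HH$ explicitly from $\sup\sigma_0 < \infty$, whereas the paper takes it as given, and you bound the series via $\sup\sigma_0\,\|X\|_\mC$ rather than via the Cameron--Martin norm of $m_0$.
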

Thus, a wrong smoother rescales the forcing coefficient but cannot reverse its sign or drive it to zero or infinity.

\begin{remark}\label{rem:plugin}
In practice $(\kappa, \tau, \sigma_\epsilon)$ are estimated. By Proposition~\ref{prop:frontier}, under infill any estimator can at best converge to some $(\kappa^*, \tau_0, \sigma_{\epsilon,0})$ with $\kappa^*$ not necessarily $\kappa_0$.
GLS with the limiting plugged-in operator thus has a misspecified but equivalent covariance. As the bias computation in Theorem~\ref{thm:misspec} uses only the mean and the GLS weights, the projection-type limit for the asymptotic bias holds with $\mathcal{S}_0 = \mathcal{S}_{\kappa_0,\tau_0}$, $\mathcal{S} = \mathcal{S}_{\kappa^*,\tau_0}$ so the operator estimation error is a bounded, sign-preserving rescaling of $\hat\bbeta$. A coverage guarantee with \emph{jointly estimated} operator parameters is available under the expanding-domain asymptotics of Section~\ref{sec:incdom}, and Appendix~\ref{app:sim} suggests that empirical coverage is retained under infill in well-specified scenarios.
\end{remark}

\subsection{Expanding-domain asymptotics}\label{sec:incdom}

In contrast to the infill case, the coefficients and the operator parameters $(\kappa, \tau, \sigma_\epsilon^2)$ are jointly consistently estimable under expanding-domain asymptotics, at a fixed $\alpha$. We state the result for the stationary version of the model on growing observation windows. Let $\psi = (\beta_{\mathrm{pw}}, \beta_{\mathrm{fc}}, \kappa, \tau, \sigma_\epsilon^2)$ with true value $\psi_0$ in the interior of a compact parameter set $\Psi \subset \R^2 \times (0,\infty)^3$, let $\alpha > d/2$ be fixed and known, and let the model be \eqref{eq:additive} on $\R^d$ with $U$ the stationary Mat\'ern field and $\mathcal{S} = \tau^{-1}(\kappa^2 - \Delta)^{-\nicefrac{\alpha}{2}}$ on $\R^d$.

\begin{assumption}\label{ass:B}
The observation design is the lattice $\delta \mathbb{Z}^d \cap \mathcal{D}_n$ with fixed spacing $\delta > 0$ and $\mathcal{D}_n = [0, a_n]^d$,
$a_n \to \infty$, so that $n \asymp (a_n/\delta)^d$. The covariate $X$ is a bounded measurable function on $\R^d$ satisfying for every $h \in \R^d$ and $u \in [0,\delta)^d$,
$$
\hat r_n(u, h) := \frac{1}{n}\sum_{i: s_i + u, s_i + u + h \in \mathcal{D}_n} X(s_i + u)\,X(s_i + u + h) \longrightarrow R_X(h),
$$
where the limit does not depend on $u$ and is continuous in $h$. By Bochner's theorem, $R_X(h) = \int_{\R^d} e^{\mathrm{i}\xi^\top h}\, dW_X(\xi)$ for a finite spectral measure $W_X$ on $\R^d$, which we assume is nonzero and not concentrated on a single modulus $\|\xi\|$. Equivalently, the expanding-domain channel-separation index of Remark~\ref{rem:vartheta-infty} satisfies $\vartheta_\infty(X) > 0$.
\end{assumption}

Assumption~\ref{ass:B} admits a rich class of covariates (Remark~\ref{rem:ass-examples}, Appendix~\ref{app:proofs}) and is a spatial analogue of classical time-series conditions on regressors \citep{Grenander1954, GrenanderRosenblatt1957}. Its non-degeneracy requirement is the expanding-domain form of the channel-separation condition of Theorem~\ref{thm:geometry}:

\begin{remark}\label{rem:vartheta-infty}
Let $\E_w$ and $\Var_w$ denote mean and variance under $w := W_X/W_X(\R^d)$, the spectral measure of $X$ normalised to a probability measure. Define the index
$$
\vartheta_\infty(X) = \frac{\Var_w\bigl(\lambda^{\nicefrac{\alpha}{2}}\bigr)}{\E_w(\lambda^{\alpha})}
= 1 - \frac{\bigl[\E_w\bigl(\lambda^{\nicefrac{\alpha}{2}}\bigr)\bigr]^2}{\E_w(\lambda^{\alpha})} \in [0,1],
$$
where $\lambda(\xi) = \kappa^2 + \|\xi\|^2$, and $\vartheta_\infty(X) = 1$ when $\E_w(\lambda^{\alpha}) = \infty$. 
Here $\vartheta_\infty(X) > 0$ if and only if $W_X$ is not concentrated on a single sphere $\|\xi\| = r$, which is a version of the rank condition on regressors of \citet{Grenander1954} for $(X, \mathcal{S}X)$.
\end{remark}

\begin{theorem}\label{thm:incdom}
For the stationary model specified above and under Assumption~\ref{ass:B}, the maximum likelihood estimator $\hat\psi_n$ satisfies
$$
   \hat\psi_n \xrightarrow{p} \psi_0,
   \qquad
   \sqrt{n}\,(\hat\psi_n - \psi_0)
   \xrightarrow{d}
   N\bigl(0, \bm{I}(\psi_0)^{-1}\bigr),
$$
where $\bm{I}(\psi) = \lim_{n\to\infty} n^{-1} \bm{I}_n(\psi)$ is the limiting normalised Fisher information, which is positive definite, and $\bm{I}_n(\psi)$ is the Fisher information of $(\widetilde Y_1, \ldots, \widetilde Y_n)$. 
In particular the regression coefficients $(\hat\beta_{\mathrm{pw}}, \hat\beta_{\mathrm{fc}})$ are $\sqrt{n}$-consistent and asymptotically normal jointly with the operator parameters $(\hat\kappa, \hat\tau)$ and the nugget $\hat\sigma_\epsilon^2$. 
\end{theorem}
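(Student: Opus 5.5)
Our strategy is to treat the observations as a Gaussian vector $\widetilde{\bm{Y}}_n \sim N(\bm{D}_n(\kappa,\tau)\bbeta, \bm{\Sigma}_n(\kappa,\tau,\sigma_\epsilon^2))$. Here $\bm{\Sigma}_n$ is the covariance of a stationary Mat\'ern field plus nugget, sampled on the lattice $\delta\mathbb{Z}^d\cap\mathcal{D}_n$, and $\bm{D}_n = [X(s_i),(\mathcal{S}X)(s_i)]$. We then invoke the general theory for increasing-domain ML in Gaussian regression models with parametric mean and covariance, in the style of Mardia--Marshall and Sweeting. That theory asks for three ingredients: (a) uniformly bounded spectra of $\bm{\Sigma}_n$ and of its parameter derivatives; (b) convergence of the normalised information $n^{-1}\bm{I}_n(\psi)$, uniformly on $\Psi$, to a limit $\bm{I}(\psi)$; and (c) positive definiteness of $\bm{I}(\psi_0)$ together with an identifiability condition for consistency. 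Because the mean parameters and the covariance parameters enter the Gaussian log-likelihood separately, $\bm{I}_n$ has the form
\[
(\bm{I}_n)_{ab} = \partial_a\mu_n^\top\bm{\Sigma}_n^{-1}\partial_b\mu_n + \tfrac12\operatorname{tr}\bigl(\bm{\Sigma}_n^{-1}\partial_a\bm{\Sigma}_n\bm{\Sigma}_n^{-1}\partial_b\bm{\Sigma}_n\bigr).
\]
The mean vector $\mu_n = \bm{D}_n\bbeta$ depends on $\kappa,\tau$ through $\mathcal{S}$, so the mean block does couple to the operator parameters. Since $\partial_\kappa\bm{\Sigma}_n$ and $\partial_\tau\bm{\Sigma}_n$ are not zero, the information matrix is not block diagonal, and we must handle the full matrix.

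First, spectral bounds. The lattice-sampled Mat\'ern covariance is the Toeplitz matrix of a symbol given by the aliased spectral density $f_\psi(\omega)=\sum_{k}\tau^{-2}(\kappa^2+\|\omega+2\pi k/\delta\|^2)^{-\alpha}$. This symbol is continuous and bounded above and below on the torus, uniformly over compact $\Psi$, and the nugget keeps it bounded away from $0$. Its $\kappa,\tau$-derivatives are also bounded symbols. Toeplitz theory then gives $\|\bm{\Sigma}_n\|,\|\bm{\Sigma}_n^{-1}\|,\|\partial\bm{\Sigma}_n\|,\|\partial^2\bm{\Sigma}_n\|=O(1)$ uniformly, and Szeg\H{o}-type limits give $n^{-1}\operatorname{tr}(\cdots)\to(2\pi)^{-d}\int \partial_a\log f\,\partial_b\log f/2$.

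For the mean block we need $n^{-1}\partial_a\mu_n^\top\bm{\Sigma}_n^{-1}\partial_b\mu_n$. The columns $\partial_a\mu_n$ are lattice samples of $X$, of $\mathcal{S}X$, and of $\beta_{\mathrm{fc}}\partial_{\kappa}\mathcal{S}X$ and $\beta_{\mathrm{fc}}\partial_{\tau}\mathcal{S}X$. Each is a convolution of $X$ with an integrable kernel. Assumption~\ref{ass:B} and a Grenander-type argument then show that the empirical cross-correlations of these filtered sequences converge to cross-covariances of the form $\int h_a\bar h_b\,dW_X$. Here the averaging over $u\in[0,\delta)^d$ handles the aliasing of the continuous spectral measure onto the lattice. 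Approximating $\bm{\Sigma}_n^{-1}$ by the Toeplitz matrix of $1/f$, which carries an $o(n)$ boundary error because $X$ is bounded, we obtain the limit
\[
(2\pi)^{-d}\int h_a(\xi)\overline{h_b(\xi)}f_\psi^{-1}\,dW_X^{\delta}(\xi)
\]
with the appropriate aliased measure.

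Second, positive definiteness, which we expect to be the main obstacle. Suppose $c^\top\bm{I}(\psi_0)c=0$. The trace part is a quadratic form in $(c_\kappa\partial_\kappa+c_\tau\partial_\tau+c_\sigma\partial_\sigma)\log f$. Since $\tau^{-2}(\kappa^2+r)^{-\alpha}$ and the constant nugget have linearly independent derivatives as functions of frequency, this forces $c_\kappa=c_\tau=c_\sigma=0$; a short computation with the functions $(\kappa^2+r)^{-\alpha-1}$, $(\kappa^2+r)^{-\alpha}$ and $1$ shows this. What remains is the mean block restricted to $(\beta_{\mathrm{pw}},\beta_{\mathrm{fc}})$, namely $\int |c_1+c_2\tau^{-1}\lambda(\xi)^{-\alpha/2}|^2 f^{-1}\,dW_X$. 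This vanishes only if $W_X$ charges only the set where $\lambda^{-\alpha/2}$ equals a constant, i.e.\ a single sphere. That case is excluded by $\vartheta_\infty(X)>0$ (Remark~\ref{rem:vartheta-infty}), the analogue of Theorem~\ref{thm:geometry}(i). The delicate points are the aliasing on the lattice, where distinct moduli could fold together, and making the uniform-in-$\psi$ convergence rigorous. We would first try to show that the lattice-folded measure still separates $\lambda^{-\alpha/2}$ using the $u$-averaging in Assumption~\ref{ass:B}.

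Third, consistency and normality. The same argument applied to $\psi\neq\psi_0$ gives identifiability: the normalised Kullback--Leibler divergence $n^{-1}\mathrm{KL}(\psi_0,\psi)$ converges to a limit that is positive unless $f_\psi=f_{\psi_0}$ and the means agree $W_X$-a.e. Together with uniform convergence of $n^{-1}\ell_n$, which follows from variance bounds $O(n)$ on quadratic forms with bounded matrices, this yields $\hat\psi_n\to\psi_0$. Finally, a Taylor expansion of the score proves the normal limit. The score has variance $\bm{I}_n$ and is a linear-plus-quadratic form in Gaussians with bounded spectral norms, so it is asymptotically normal by the Lindeberg/quadratic-form CLT. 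The observed information divided by $n$ converges uniformly to $\bm{I}(\psi_0)$ near $\psi_0$. These two facts give $\sqrt n(\hat\psi_n-\psi_0)\to N(0,\bm{I}(\psi_0)^{-1})$.
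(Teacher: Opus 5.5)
Your outline has the same structure as the paper's proof. Both use uniform spectral bounds and a limit $\bm{I}(\psi)=\bm{M}(\psi)+\bm{C}(\psi)$ whose mean block comes from Assumption~\ref{ass:B}, with $\bm{\Sigma}_n^{-1}$ replaced by the infinite-lattice inverse. Both get positive definiteness by first eliminating the covariance coordinates and then using the non-single-sphere condition, and both use a Kullback--Leibler contrast for consistency and a score expansion for normality. However, two steps do not go through as you have written them. The paper closes both by arguing in the lattice domain, through the covariances $c(z)$, rather than through the symbol.

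\emph{Covariance block.} The lattice symbol $f_\psi$ is a periodisation of the Mat\'ern spectral density (plus $\sigma_\epsilon^2$), and periodisation is not injective. So linear independence of $(\kappa^2+r)^{-\alpha-1}$, $(\kappa^2+r)^{-\alpha}$ and $1$ on $\R^d$ does not imply linear independence of $\partial_\kappa f_\psi$, $\partial_\tau f_\psi$, $\partial_{\sigma_\epsilon^2}f_\psi$ on the torus. Passing to Fourier coefficients, a null vector gives $v_\kappa\,\partial_\kappa\varrho(\delta z)=(2v_\tau/\tau)\,\varrho(\delta z)$ for all $z\neq 0$. The paper concludes $v_\kappa=v_\tau=0$ because $\partial_\kappa\varrho(r)/\varrho(r)=-rK_{\nu+1}(\kappa r)/K_\nu(\kappa r)\le -r$ is unbounded along the lattice; the equation at $z=0$ then gives $v_{\sigma_\epsilon^2}=0$.

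\emph{Identifiability for consistency.} This is not ``the same argument''. Nondegeneracy of the derivatives at $\psi_0$ is a local statement and does not show that $f_\psi=f_{\psi_0}$ forces $(\kappa,\tau,\sigma_\epsilon^2)=(\kappa_0,\tau_0,\sigma_{\epsilon,0}^2)$. The paper bounds the covariance part of the contrast below by $c\,n^{-1}\|\bm{\Sigma}_n(\psi_0)-\bm{\Sigma}_n(\psi)\|_F^2$. Its $\liminf$ is at least $\sum_z\Delta c(z)^2$, where $\Delta c=c(\cdot\,;\psi_0)-c(\cdot\,;\psi)$. It then uses $\varrho_\psi(r)=C(\kappa,\tau)\,r^{\nu-1/2}e^{-\kappa r}(1+o(1))$ along the lattice: the decay rate identifies $\kappa$, the prefactor identifies $\tau$, and the value at $z=0$ identifies the nugget. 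The mean part is then only needed when the covariance parameters coincide, where it equals $\tfrac12\bm{b}^\top\bm{G}_n\bm{b}$ with $\bm{G}_n$ the $\bbeta$-block of $n^{-1}\bm{I}_n(\psi_0)$.

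Conversely, the aliasing issue you single out as the main obstacle does not arise in the $(\beta_{\mathrm{pw}},\beta_{\mathrm{fc}})$ block. Write $\partial_k\bm{\mu}=g_k*X$ and use that Assumption~\ref{ass:B} controls the design averages at every shift $u$. The paper then gets $\bm{M}(\psi)_{k\ell}=\int_{\R^d}\hat g_k\,\overline{\hat g_\ell}\,\tilde f_\psi^{-1}\,dW_X$, with only $f_\psi^{-1}$ periodised. Thus $s_\psi$ is evaluated at the true $\|\xi\|$, no moduli fold together, and your integral $\int|c_1+c_2 s_\psi|^2\,\tilde f_\psi^{-1}\,dW_X$ is already the correct limit.

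Two further steps are asserted rather than argued.
\begin{enumerate}
\item Replacing $\bm{\Sigma}_n^{-1}$ by the section $\bm{P}_n\bm{T}^{-1}\bm{P}_n$ of the infinite-lattice inverse costs only $o(n)$ in the mean block if $\sum_{i,j}|(\bm{\Sigma}_n^{-1}-\bm{P}_n\bm{T}^{-1}\bm{P}_n)_{ij}|=o(n)$. Boundedness of $X$ helps only once this localisation near $\partial\mathcal{D}_n$ is known; a trace-norm bound would not suffice for bilinear forms in vectors of norm $\sqrt n$. The paper proves it with a Combes--Thomas bound on finite-section inverses plus a chain argument.
\item Pointwise $O(n^{-1})$ variance bounds on $n^{-1}\ell_n$ do not give a uniform law of large numbers. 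You also need stochastic equicontinuity, which the paper obtains from $\sup_\psi\|\nabla n^{-1}\ell_n(\psi)\|\le C\bigl(1+n^{-1}\|\widetilde{\bm{Y}}_n-\bm{\mu}_n(\psi_0)\|^2\bigr)$ and Newey's uniform convergence theorem.
\end{enumerate}
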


With $\bbeta = 0$, Theorem~\ref{thm:incdom} reduces to a zero-mean Mat\'ern field observed with measurement error and yields $\sqrt n$-consistency and joint asymptotic normality of $(\kappa, \tau, \sigma_\epsilon^2)$ of the type established by \citet{Bachoc2014} for noise-free observations. Further, $\beta_{\mathrm{fc}} = 0$ gives the same for an additive model \eqref{eq:additive-intro}.
This also complements the fixed-domain study of \citet{TangZhangBanerjee2021}, in which the nugget is recovered at the $\sqrt n$ rate and the microergodic parameter at rate $n^{1/(2 + 4\nu/d)}$.

\subsection{Testing operator matching and two-scale models}\label{sec:matching-test}

The hybrid model \eqref{eq:additive} has two nested submodels of interest. Setting $\beta_{\mathrm{fc}} = 0$ returns the additive model, and setting $\beta_{\mathrm{pw}} = 0$ returns the forced. The hybrid is also contained in a larger family, obtained by replacing $\mathcal{S}$ by a smoother with range $\kappa_\mu$,
\begin{equation}\label{eq:twoscale}
   Y = \beta_{\mathrm{pw}}X + \beta_{\mathrm{fc}}\,\mathcal{S}_{\kappa_\mu}X + U,
\end{equation}
which we refer to as a two-scale model, and where operator matching is the restriction $\kappa_\mu = \kappa$. This gives three likelihood-ratio tests: hybrid-versus-additive by \eqref{eq:additive-intro} against \eqref{eq:additive}, hybrid-versus-forced by \eqref{eq:forced-spde} against \eqref{eq:additive} and operator matching by \eqref{eq:twoscale} against \eqref{eq:additive}. 
Under the expanding-domain asymptotics of Section~\ref{sec:incdom}, Theorem~\ref{thm:incdom} gives the hybrid-versus-additive and hybrid-versus-forced tests  the usual $\chi^2_1$ calibration, and its argument extends to the two-scale model with free $\kappa_\mu$ provided $\beta_{\mathrm{fc},0} \neq 0$ (which is required for $\kappa_\mu$ to be identified under the null), so the matching test is calibrated in that regime as well. Under infill asymptotics no such calibration is available, as the null involves non-microergodic parameters (Proposition~\ref{prop:frontier}) and the mean is nonlinear in $\kappa_\mu$. Which regime better describes an application depends on the ratio of the domain extent to the fitted ranges and when it is small we consider the $p$-values as descriptive. Appendix~\ref{app:sim} reports an empirical size check of both the matching and the hybrid-versus-additive tests. 

The two-scale model also recovers, as special cases, the diffusion smoother of \citet{LindmarkAndersonThorson2026}, the mean structure of the kernel averaged predictors of \citet{HeatonGelfand2011}, and the fixed buffer-averaging of covariates standard in land-use regression \citep{Hoek2008, Briggs1997}. 

The results of this section resolve the collapse-to-zero failure noted by \citet{BolinWallin2026} for the additive estimator, where a regularity mismatch between $X$ and $\HH$ drives the GLS coefficient to zero under infill. That paper proposes pre-smoothing $X$ by an operator mapping into $\HH$ but leaves the choice of operator open, and shows that a poorly chosen smoother can make the coefficient diverge. Operator matching answers this by construction, with no tuning and no new parameters.
Two-stage remedies such as pre-smoothing or Spatial+ condition on a separately estimated adjustment of $X$, so the uncertainty of that step is not propagated by default, whereas here it enters through the joint estimation of Theorem~\ref{thm:incdom}.
Moreover, unlike an identifying assumption, matching is a restriction that can be tested, as above.

\section{Discrete space and the relation to spatial econometrics}\label{sec:econometrics}

The structure of the operator-matched model is classical in econometrics. 
The autoregressive-distributed-lag model $y_t = \beta_1 y_{t-1} + \gamma_0 x_t + \gamma_1 x_{t-1} + \varepsilon_t$ can be written as $y_t = a\,x_t + b\,\mathcal{S}x_t + \mathcal{S}\varepsilon_t$ with $\mathcal{S} = (1 - \beta_1 B)^{-1}$ for the lag operator $Bx_t = x_{t-1}$, $a = -\gamma_1/\beta_1$ and $b = \gamma_0 + \gamma_1/\beta_1$. This has the same form as \eqref{eq:additive} with the one-sided lag polynomial in place of the symmetric $L^{-\nicefrac{\alpha}{2}}$.
The restriction $b=0$ is the \emph{common factor restriction} which returns the additive counterpart, and it is classically tested rather than assumed \citep{Sargan1980, HendryMizon1978}.
The complementary restriction $\gamma_1 = 0$ returns the corresponding forced model, so the first two tests of Section~\ref{sec:matching-test} have  antecedents there. The operator matching test itself does not, since the two-scale model \eqref{eq:twoscale} lies outside the autoregressive-distributed-lag family.

The spatial versions of these models, in which the lag operator is replaced by a matrix of spatial weights, are the subject of spatial
econometrics, where a regression whose covariate enters both directly and through the operator that generates the dependence has been standard since
\citet{Burridge1981} and \citet{Anselin1988}. 
The rest of this section  makes the connection explicit and discusses what our continuum formulation adds besides being an extension from discrete to continuous space.

\subsection{The lattice family and the correspondence}\label{sec:econ-family}

Spatial econometrics works with $N$ areal units, on which the response is a vector $\bm{y} \in \R^N$ and the covariates an $N \times K$ matrix
$\bm{X}$. Spatial structure enters through an $N \times N$ matrix $\bm{W}$ of non-negative weights with zero diagonal, usually
row-standardised, which encodes which units are neighbours. 
A standard specification is the spatial error model (SEM) $\bm{y} = \bm{X}\bm{b} + \bm{u}$ with $\bm{u} = \lambda \bm{W}\bm{u} + \bm{\varepsilon}$, $\bm{b}$ are regression coefficients, $\lambda$ sets the strength of dependence, and $\bm{\varepsilon}$ has independent Gaussian components.   
Another is the spatial lag or spatial autoregressive model (SAR) $\bm{y} = \rho\bm{W}\bm{y} + \bm{X}\bm{b} + \bm{\varepsilon}$, where 
$\rho$ sets the strength of dependence. A third is the spatial Durbin model (SDM)  
\begin{equation}\label{eq:sdm}
\bm{y} = \rho \bm{W}\bm{y} + \bm{X}\bm{b} + \bm{W}\bm{X}\bm{\theta} + \bm{\varepsilon}.
\end{equation}
See \citet{Anselin1988} or \citet{LeSagePace2009} for details of these and other variants.

With $\bm{S}_\rho := (\bm{I} - \rho\bm{W})^{-1}$, \eqref{eq:sdm} is equivalently $\bm{y} = -\rho^{-1}\bm{X}\bm{\theta} + \bm{S}_\rho \bm{X}\bigl(\bm{b} + \rho^{-1}\bm{\theta}\bigr) + \bm{S}_\rho\bm{\varepsilon}$, which is the representation \eqref{eq:additive} of the hybrid model with $\mathcal{S}$ replaced by $\bm{S}_\rho$, with $\beta_{\mathrm{pw}} = -\rho^{-1}\bm{\theta}$ and $\beta_{\mathrm{fc}} = \bm{b} + \rho^{-1}\bm{\theta}$.
For a row-standardised nearest-neighbour $\bm{W}$ on a regular lattice this is more than an analogy, since $\bm{I} - \rho\bm{W} =
(1-\rho)\bm{I} + \rho(\bm{I} - \bm{W})$ with $\bm{I} - \bm{W}$ a graph Laplacian, so $\bm{S}_\rho$ discretises $L^{-1}$ and the SDM is a discrete hybrid model with $\alpha = 2$.
Further, SEM corresponds to the additive model and SAR to the forced.
The restriction $\bm{\theta} = -\rho\bm{b}$ is the common factor restriction corresponding to $\beta_{\mathrm{fc}} = 0$, and $\bm{\theta} = \bm{0}$ is $\beta_{\mathrm{pw}} = 0$.  Our  hybrid-versus-additive test of Section~\ref{sec:matching-test} is thus the continuous-space form of the common factor test of \citet{Burridge1981}.

Usually three summaries of the impact matrix $\bm{S}_\rho(\bm{I}b_r + \bm{W}\theta_r)$ of the $r$th covariate are reported: the
\emph{direct} impact, defined as the average of its diagonal; the \emph{indirect} impact or spillover, defined as the average row sum of its off-diagonal part; and the \emph{total} impact, which is their sum \citep[Section~2.7]{LeSagePace2009}. 
In our continuous-space formulation $\beta_{\mathrm{pw}}$ corresponds to the direct impact, the standardised forcing coefficient $\beta^{\star}_{\mathrm{fc}} = \beta_{\mathrm{fc}}/(\tau\kappa^{\alpha})$ to the indirect impact and the total impact is their sum.
By Proposition~\ref{prop:scalefree} the direct and total impacts are the two endpoints of the scale-varying coefficient, which gives the scale decomposition an impact interpretation.

\citet[Theorems~3.1--3.2]{Lee2004} provides an expanding domain asymptotic theory for discrete models, for an a priori specified weight matrix and a response observed without measurement error, resting on a non-multicollinearity condition on the covariate and its spatial lag. In Theorem~\ref{thm:geometry} we characterised this by showing that the channels are jointly identified if and only if $\vartheta(X) > 0$. This sharpens the recurring objection that the specifications are hard to distinguish \citep[p.~177]{GibbonsOverman2012}. 
Under infill \citet{Lee2004} finds no general results available and gives examples of an inconsistent estimator, which Proposition~\ref{prop:frontier} shows to be unavoidable for any estimator, while Theorem~\ref{thm:infill} shows what is achievable.

\subsection{What the operator formulation adds}\label{sec:econ-operator}

In the lattice family $\bm{W}$ is exogenous input, and that literature has long identified the choice as a weakness \citep[Section~3.1.4]{Anselin1988}. 
\citet{Wall2004} showed that the covariance implied by a chosen matrix is opaque, and therefore recommended modelling the covariance directly, as in geostatistical models. The matrix does not decide that covariance either, as there is no one-to-one correspondence between the weights and the implied covariance \citep[Section~4.2]{Anselin2002}.
Further, the matrix is not derived from an underlying mechanism, as it is here, and ``research containing a theoretical derivation of the spatial weight matrix is scarce'' \citep[p.~50]{TanKesinaElhorst2025}. 
Also, the implied spillovers need not decay with distance, as the first-order multiplier $\rho\bm{b} + \bm{\theta}$ of the SDM can exceed $\bm{b}$ in absolute value, so decay requires a constraint on the coefficients that is seldom checked \citep{AnselinSereniniAmaral2026}.
Operator matching addresses all these issues: 
The operator is derived from a transport--relaxation balance rather than selected, and the resulting kernel is translation-invariant with a known Whittle--Mat\'ern form. 
As the smoother is the square root of the covariance operator of the random effect, the ``weights'' and the covariance determine each other, and $\mathcal{S}$ is an isometry of $L^2(\mathcal{D})$ onto the Cameron--Martin space (Proposition~\ref{prop:rep}). Further, the spillovers decay with distance on $\mathbb{R}^d$ as the Green's function of $L^{\nicefrac{\alpha}{2}}$ is strictly positive and decreasing in distance.

Another issue is that on a lattice the weights must be aggregated along with the data, and it is well known that the fitted spatial structure depends on the zoning \citep[Section~5.2]{Anselin2002}. Working in continuous space removes the problem, since averaging \eqref{eq:additive} over any set of regions leaves $\beta_{\mathrm{pw}}$, $\beta_{\mathrm{fc}}$ and the operator unchanged. 
Discretising the operator also provides a reasonable default in discrete space: for lattice data, the finite element discretisation below, and for areal data the counterpart $\bm{K} = \kappa^2\bm{C} + \bm{S}$, which takes the place of $\bm{I} - \rho\bm{W}$, with $\bm{C}$ a diagonal matrix of region areas and $\bm{S}$ the graph Laplacian whose weights are the length of the boundary shared by two regions divided by the distance between their centroids.
Because $\bm{C}$ and $\bm{S}$ discretise fixed continuum objects, subdividing or merging regions changes the matrices but leaves $\kappa$ and the coefficients estimating the same quantities, whereas $\rho$ is defined relative to a chosen $\bm{W}$ and is not comparable across partitions.

Estimating rather than fixing the weights has also been investigated. Several studies parametrise $\bm{W}$ by a distance decay, most often  negative-exponential or inverse-distance, and estimate it jointly with
the response parameters \citep[][and the references therein]{TanKesinaElhorst2025}. The kernel's shape is typically kept fixed, which essentially corresponds to fixing $\alpha=2$ for us. For non-lattice data row-standardisation makes each weight depend on the unit's whole neighbour configuration, so the fitted decay is not a translation-invariant kernel and the implied covariance remains opaque.
Further, as the decay is assumed for $\bm{W}$ rather than for $\bm{S}_\rho$, which determines the spillovers, the estimated range is not that of the spillovers. Estimating $\kappa$ in $\bm{K}$ above avoids this, as the fitted range is then that of the spillovers and of the implied covariance by construction. 
The lattice family also allows for non-geographic or asymmetric weights, which we have not addressed.

\section{Inference and computation}\label{sec:inference}

In this section, we introduce two ways of fitting the hybrid model to data. The first combines a numerical approximation of $\mathcal{S}X$ with a
standard covariance-based representation of $U$. The second uses a finite element approximation of the entire model, which reduces computational costs and inherits the entire computational machinery of the standard \texttt{R-INLA} pipeline for additive spatial regression, so the parameters can be estimated by maximum likelihood, as in Section~\ref{sec:app-temp-alt}, or by fully Bayesian inference as in Section~\ref{sec:app-no2}. 

\subsection{A covariance-based approach}\label{sec:covariance-impl}

The hybrid model can be fitted with general-purpose Gaussian-process software whenever the stationary version of Section~\ref{sec:incdom} is used, where $U$ is given a Mat\'ern covariance function $\varrho$. 
Because $L$ is self-adjoint, $\mathcal{S}$ is the operator square root of $U$'s covariance operator, $\mathcal{S} = \mC^{1/2}$.
Equivalently, $\mathcal{S}$ is convolution with a kernel $g$ whose Fourier transform is proportional to $(\kappa^2+\|\xi\|^2)^{-\nicefrac{\alpha}{2}}$, while the Mat\'ern spectral density is proportional to $(\kappa^2+\|\xi\|^2)^{-\alpha}$. With $\nu_g = \nicefrac{(\alpha-d)}{2}$ and $\nu = \alpha - \nicefrac{d}{2}$, 
\begin{equation}\label{eq:kernels-a2}
g(r) = \frac{2^{1-\nu_g}\,(\kappa r)^{\nu_g} K_{\nu_g}(\kappa r)}
      {(4\pi)^{\nicefrac{d}{2}} \Gamma(\nicefrac{\alpha}{2})
      \kappa^{2\nu_g} \tau},
\qquad
\varrho(r) = \frac{2^{1-\nu}\,(\kappa r)^{\nu} K_{\nu}(\kappa r)}
      {(4\pi)^{\nicefrac{d}{2}}\,\Gamma(\alpha) \kappa^{2\nu} \tau^{2}},
\end{equation}
where $K_\nu$ is the modified Bessel function of the second kind and $K_{-\nu} = K_{\nu}$.

Given $X$ on a grid $\{u_k\}$ with quadrature weights $\{\Delta_k\}$, the forced regressor at the observation sites is the Riemann sum
$(\mathcal{S}X)(s_i) = \sum_k g(\|s_i - u_k\|) X(u_k) \Delta_k$. For $\alpha \le d$ the kernel is unbounded at the origin. However, 
as the singularity is integrable, the Riemann sum is valid if evaluation inside the nearest cell is avoided, e.g., by flooring $\|s_i - u_k\|$ at half the grid spacing or by integrating $g$ over the nearest cell. The additive representation \eqref{eq:additive} is then the Gaussian linear mixed model
\begin{equation}\label{eq:cov-lmm}
\widetilde{\bm{Y}} \mid \bbeta \sim N\big(\beta_{\mathrm{pw}}\bm{X} + \beta_{\mathrm{fc}}\,\mathcal{S}\bm{X}, \bm{\Sigma} + \sigma_\epsilon^2 \bm{I}\big), \quad \bm{\Sigma} = [\varrho(\|s_i - s_j\|)]_{i,j=1}^n
\end{equation}
in which, $\mathcal{S}\bm{X}$ is fixed for fixed $(\kappa,\alpha)$, $\bbeta = (\beta_{\mathrm{pw}},\beta_{\mathrm{fc}})^\top$ is obtained by GLS and profiled out, and $(\kappa,\alpha,\tau,\sigma_\epsilon^2)$ are estimated by maximum likelihood or REML. The two-scale model can naturally be fitted using this approach by giving $g$  its own range.

As $X$ is typically not available continuously, one might treat this method as exact given the covariate resolution. However, one could characterise the effect of the quadrature error  similarly to the results of Section~\ref{sec:fem-error}. We leave this for future work as the finite element implementation below is generally preferable.

\subsection{Finite element approximation}
As the covariance-based implementation requires an $O(n^3)$ dense factorisation of $\bm{\Sigma} + \sigma_\epsilon^2\bm{I}$ per likelihood evaluation, we now introduce a more computationally efficient finite element alternative.
Suppose that $\widetilde{Y}_i = Y(s_i) + \epsilon_i$ for $s_i\in\mathcal{D}, \epsilon_i \sim \mathrm{N}(0,\sigma_{\epsilon}^2)$ and $i=1,\ldots,n$, where $Y$ follows \eqref{eq:hybrid-spde} with $\alpha=2$. By \eqref{eq:additive}, we can write this as 
\begin{equation}\label{eq:system}
\widetilde{Y}_i = \beta_{\mathrm{pw}}X(s_i) + Z(s_i) + \epsilon_i 
\end{equation}
where $Z = \beta_{\mathrm{fc}}\mathcal{S}X + U$ is the forced field solving \eqref{eq:forced-spde} with $\beta = \beta_{\mathrm{fc}}$, so that $Y = \beta_{\mathrm{pw}}X + Z$. Let $\{\psi_j\}_{j=1}^m$ be the piecewise-linear basis on a triangulation $\mathcal{T}$ of $\mathcal{D}$, which we take to be polygonal throughout Section~\ref{sec:inference} so that $\mathcal{T}$ covers $\mathcal{D}$ exactly. 
This is for simplicity only, to avoid the additional boundary-approximation terms that a curved $\partial\mathcal{D}$ would contribute \citep[Ch.~10]{BrennerScott2008}. We let $\bm{C}$ and $\bm{S}$ be the mass and stiffness matrices with elements $C_{ij} = (\psi_i, \psi_j)_{L^2}$ and $S_{ij} = (\nabla \psi_i, \nabla \psi_j)_{L^2}$. Then, Galerkin discretisation of \eqref{eq:forced-spde} yields the sparse linear system $\bm{K} (\tau \bm{z}) = \beta_{\mathrm{fc}}\bm{C}\bm{x} + \bm{w}$, where $\bm{K} = \kappa^2 \bm{C} + \bm{S}$, $\bm{z}, \bm{x}$ are the finite element coefficient vectors for $Z$ and $X$ and $\bm{w} \sim N(0, \bm{C})$ is the projection of $\W$ onto the basis. Introducing the observation matrix $\bm{A}$ with elements $A_{ij} = \psi_j(s_i)$ and the vectors $\widetilde{\bm{Y}}$ and $\bm{X}$ with the observations and covariate evaluated at the observation locations, we obtain $\widetilde{\bm{Y}} | \bm{z} \sim \mathrm{N}(\beta_{\mathrm{pw}}\bm{X} + \bm{A}\bm{z}, \sigma_{\epsilon}^2\bm{I})$, and $\bm{z} \sim \mathrm{N}(\beta_{\mathrm{fc}}\mathcal{S}_{\mathrm{FEM}}\bm{x}, \bm{Q}^{-1})$ as the approximation of \eqref{eq:system}, where $\bm{Q} = \tau^2 \bm{K}^{\top}\bm{C}^{-1}\bm{K}$ and $\mathcal{S}_{\mathrm{FEM}} = \tau^{-1}\bm{K}^{-1}\bm{C}$.
This is a standard latent Gaussian model and relative to the additive \texttt{SPDE} model ($\beta_{\mathrm{fc}} = 0$), the single algorithmic change is the term $\beta_{\mathrm{fc}}\mathcal{S}_{\mathrm{FEM}}\bm{x}$ in the latent mean, evaluated by one sparse solve with $\bm{K}$. The likelihood thus has the same computational scaling as the additive finite element model. See \citet{LBR2022} for details on likelihood computations and computational scalings for SPDE models like these. 

For a general $\alpha>d/2$, the latent model in \eqref{eq:system} is $Z = \beta_{\mathrm{fc}}\tau^{-1} L^{-\nicefrac{\alpha}{2}}X + U$
where $U$ is a centred Whittle--Mat\'ern field with covariance operator $\tau^{-2}L^{-\alpha}$. 
To discretise this model, the finite element discretisation is combined with a rational approximation of the fractional powers of $L$ \citep{BolinKirchner2020}, which replaces the operator $L^{\nicefrac{\alpha}{2}}$ by a low-order rational approximation $\hat r(L) = p(L) q(L)^{-1}$ with $p, q$ polynomials in $L$. 
We apply it to the mean term and use a separate rational approximation for the covariance operator of $U$ as in \citet{xiong2022}, which avoids an unnecessary approximation of $U$ in cases such as $\alpha = 1$. 
The resulting method is implemented in the \texttt{rSPDE} package \citep{rSPDE} and integrates with the standard \texttt{R-INLA} pipeline, with the mean term added in the latent field as for $\alpha = 2$, and facilitates joint likelihood-based inference of $\alpha$. See the references above for details on the likelihood computation.

\subsection{Discretisation error}\label{sec:fem-error}

The results of Section~\ref{sec:properties} concern the continuous model, whereas the numerical implementation above fits its Galerkin discretisation. This subsection states the consequence of this for the estimator, which we establish for $\alpha = 2$ to keep the proofs brief. 
Throughout, $\{V_h\}_{h > 0}$ denotes the continuous piecewise-linear finite element spaces on a quasi-uniform family of triangulations of
$\mathcal{D}$ with mesh width $h$, $P_h : L^2(\mathcal{D}) \to V_h$ is the $L^2$-projection, and $L_h : V_h \to V_h$ is the Galerkin
operator defined by $(L_h v, w)_{L^2} = \kappa^2 (v, w)_{L^2} + (\nabla v, \nabla w)_{L^2}$ for $v, w \in V_h$. For $\alpha = 2$ the discrete model replaces $\mathcal{S}$ by $\mathcal{S}_h = \tau^{-1} L_h^{-1} P_h$, with matrix representation $\mathcal{S}_{\mathrm{FEM}}$, and $U$ by
$U_h = \tau^{-1} L_h^{-1} P_h \W$, whose covariance kernel we denote by $\varrho_h$.

Note that $\mathcal{S}_h$ and $U_h$ are built from the \emph{same} discrete operator $L_h$, so the fitted model is itself an operator-matched
hybrid model on $V_h$: the discrete analogues of Proposition~\ref{prop:rep} (with the isometry $\|\mathcal{S}_h X\|_{\mC_h} = \|P_h X\|_{L^2}$ for the discrete Cameron--Martin norm $\|v\|_{\mC_h} = \tau\|L_h v\|_{L^2}$), of Theorem~\ref{thm:geometry}, and of Theorem~\ref{thm:infill} hold verbatim on $V_h$ for the covariate represented in $V_h$, with the eigenpairs of $L_h$ replacing $(\lambda_j, e_j)$. Within the
fitted model the failure modes of \citet{BolinWallin2026} therefore cannot occur at any mesh width. The effect of fitting data from the continuous model is instead a quantitative shift of the estimates, which Theorem~\ref{thm:fem-estimator} bounds.

\begin{theorem}\label{thm:fem-estimator}
Let the data be generated by the continuous hybrid model of Theorem~\ref{thm:infill} with $\alpha = 2$, and let $\hat\bbeta_{n,h}$ be the estimator of Theorem~\ref{thm:infill} for the discretised model, with $\mathcal{S}$ and $\varrho$ replaced by $\mathcal{S}_h$ and $\varrho_h$. 
Set $\varepsilon_h = \|\mathcal{S}X - \mathcal{S}_hX\|_{L^\infty}$, $\delta_h = \|\varrho - \varrho_h\|_{L^\infty(\barD^2)}$, $\eta_{n,h} = n\,\varepsilon_h + n^2 \delta_h$. Then $\varepsilon_h = O(h^{2 - \nicefrac{d}{2}})$ and $\delta_h = O(h^{2 - \nicefrac{d}{2}})$ and constants $C, \eta_0 > 0$ exist such that for all pairs $(n, h)$ with $\eta_{n,h} \le \eta_0$ and $n$ large enough:
\begin{enumerate}[label=(\roman*),leftmargin=*,itemsep=1pt]
\item $\hat\bbeta_{n,h} \sim N\bigl(\bbeta_0 + b_{n,h}, \bm{V}_{n,h}\bigr)$, with $\|b_{n,h}\| \le C \sqrt{n}\,\varepsilon_h \le C\,\eta_0/\sqrt{n}$ and $\|\bm{V}_{n,h} - \bm{M}_n^{-1}\| \le C\, \eta_{n,h}$, where $\bm{M}_n$ is the information matrix of the continuous model;
\item if $(h_n)$ satisfies $\eta_{n, h_n} \to 0$, then $\hat\bbeta_{n, h_n} \xrightarrow{d} N(\bbeta_0, \bm{G}^{-1})$, and the Wald intervals built from the discrete information $\bm{M}_{n,h_n}$ are asymptotically exact.
\end{enumerate}
The same statements hold for the forced model with the single
regressor $\mathcal{S}_h X$, for any $X \in L^2(\mathcal{D}) \setminus \{0\}$ and without the condition $\vartheta(X) > 0$.
\end{theorem}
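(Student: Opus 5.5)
The plan has three parts: first derive the two FEM rates, then use the fact that the discrete estimator is an exact GLS estimator to get its exact law, and finally control the perturbation with matrix bounds calibrated by Theorem~\ref{thm:infill}.

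\emph{Step 1: the rates $\varepsilon_h$ and $\delta_h$.} Write $f=\tau X\in L^2$. Then $\mathcal{S}X-\mathcal{S}_hX = \tau^{-1}(L^{-1}-L_h^{-1}P_h)X$ is the Galerkin error for the Neumann problem $Lu=X$. Assumption~\ref{ass:A} (convex or $C^{1,1}$ domain, polygonal here) gives elliptic regularity $u\in H^2$ with $\|u\|_{H^2}\lesssim\|X\|_{L^2}$. The standard duality argument gives $\|u-u_h\|_{L^2}\lesssim h^2\|X\|_{L^2}$ and $\|u-u_h\|_{H^1}\lesssim h\|X\|_{L^2}$. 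For the $L^\infty$ bound, insert the interpolant: $\|u-I_hu\|_{L^\infty}\lesssim h^{2-d/2}\|u\|_{H^2}$. An inverse inequality on the quasi-uniform mesh gives $\|I_hu-u_h\|_{L^\infty}\lesssim h^{-d/2}\|I_hu-u_h\|_{L^2}\lesssim h^{2-d/2}$, so $\varepsilon_h=O(h^{2-d/2})$.

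For $\delta_h$, write $\varrho(s,t)=\tau^{-2}\sum_j\lambda_j^{-2}e_j(s)e_j(t)$ and $\varrho_h(s,t)=\tau^{-2}(L_h^{-1}P_hL_h^{-1}P_h\delta_t)(s)$ formally. I would instead bound $\varrho-\varrho_h$ as the kernel of $L^{-2}-L_h^{-1}P_hL_h^{-1}P_h = (L^{-1}-L_h^{-1}P_h)L^{-1} + L_h^{-1}P_h(L^{-1}-L_h^{-1}P_h)$. Each factor $L^{-1}$ maps $L^2\to H^2\hookrightarrow L^\infty$ and is therefore bounded as an $L^1$-to-$L^\infty$ kernel once $d\le3$. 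The Galerkin error operator maps $L^2\to L^\infty$ with norm $O(h^{2-d/2})$ by the same argument as above. Composing these, using the symmetry of $L^{-1}$ as a map $L^1\to L^2$, gives a sup-norm kernel bound of $O(h^{2-d/2})$.

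\emph{Step 2: the exact law.} With $\bm{D}_{n,h}=[X(s_i),(\mathcal{S}_hX)(s_i)]$ and $\bm\Sigma_{n,h}=[\varrho_h(s_i,s_j)]+\sigma_\epsilon^2\bm I$, the estimator $\hat\bbeta_{n,h}=\bm M_{n,h}^{-1}\bm D_{n,h}^\top\bm\Sigma_{n,h}^{-1}\widetilde{\bm Y}$ is linear in Gaussian data. Under the true model $\widetilde{\bm Y}\sim N(\bm D_n\bbeta_0,\bm\Sigma_n)$, so the law is exactly Gaussian. Its mean is $\bbeta_0+b_{n,h}$ with $b_{n,h}=\bm M_{n,h}^{-1}\bm D_{n,h}^\top\bm\Sigma_{n,h}^{-1}(\bm D_n-\bm D_{n,h})\bbeta_0$, and its covariance is the sandwich $\bm V_{n,h}=\bm M_{n,h}^{-1}\bm D_{n,h}^\top\bm\Sigma_{n,h}^{-1}\bm\Sigma_n\bm\Sigma_{n,h}^{-1}\bm D_{n,h}\bm M_{n,h}^{-1}$.

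\emph{Step 3: perturbation bounds.} This is the main obstacle, because none of the bounds may depend on $n$ through the conditioning of $\bm\Sigma_n$. The useful facts are the following.
\begin{itemize}
\item $\bm\Sigma_n\succeq\sigma_\epsilon^2\bm I$, so $\|\bm\Sigma_n^{-1}\|\le\sigma_\epsilon^{-2}$.
\item $\|\bm\Sigma_n-\bm\Sigma_{n,h}\|\le n\delta_h$.
\item $\|\bm D_n-\bm D_{n,h}\|\le\sqrt n\,\varepsilon_h$.
\item $\|\bm\Sigma_n^{-1/2}\bm D_n\|^2=\|\bm M_n\|$ is bounded, since $\bm M_n\to\bm G$ by Theorem~\ref{thm:infill}(ii), and $\bm M_n\succeq c\bm I$ for large $n$ because $\vartheta(X)>0$.
\end{itemize}
When $n\delta_h$ is small, $\bm\Sigma_{n,h}^{-1}=\bm\Sigma_n^{-1/2}(\bm I+\bm E)^{-1}\bm\Sigma_n^{-1/2}$ with $\|\bm E\|\le\sigma_\epsilon^{-2}n\delta_h$. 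Expanding then gives $\|\bm M_{n,h}-\bm M_n\|\lesssim\|\bm M_n\|^{1/2}\sigma_\epsilon^{-1}\sqrt n\varepsilon_h+\sigma_\epsilon^{-2}n\varepsilon_h^2+\|\bm M_n\|n\delta_h\lesssim\eta_{n,h}$, using $\sqrt n\varepsilon_h\le n\varepsilon_h$. Hence $\bm M_{n,h}$ is invertible with uniformly bounded inverse once $\eta_{n,h}\le\eta_0$.

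The bias then satisfies $\|b_{n,h}\|\le\|\bm M_{n,h}^{-1}\|\,\|\bm\Sigma_{n,h}^{-1/2}\bm D_{n,h}\|\,\|\bm\Sigma_{n,h}^{-1/2}\|\sqrt n\varepsilon_h\|\bbeta_0\|\le C\sqrt n\varepsilon_h$. The sandwich differs from $\bm M_{n,h}^{-1}$ by a term of order $\|\bm E\|$, which together with $\|\bm M_{n,h}^{-1}-\bm M_n^{-1}\|\lesssim\eta_{n,h}$ gives (i).

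Part (ii) follows from (i), $\bm M_n\to\bm G$, and $b_{n,h_n}\to0$, where $\sqrt n\varepsilon_h\le\eta_{n,h}$. For the forced model the same argument applies with the single column $\mathcal{S}_hX$; there $\bm M_n\to\|X\|_{L^2}^2>0$, so $\vartheta(X)>0$ is not needed, and only $\varepsilon_h$ enters.
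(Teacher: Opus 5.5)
Your proposal is correct and follows the paper's architecture. Step 1 uses the same $L^\infty$ Galerkin bound (interpolant plus inverse inequality). Your covariance split, with $E_h=L^{-1}-L_h^{-1}P_h$ and $L^{-2}-(L_h^{-1}P_h)^2=E_hL^{-1}+L_h^{-1}P_hE_h$, is exactly the operator form of the paper's Riesz-representer decomposition $(g_s-g^h_s,g_t)+(g^h_s,g_t-g^h_t)$. Step 2, the exact Gaussian law of the discrete GLS estimator, has the same bias and sandwich formulas as the paper.

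Two repairs are needed in Step 1. First, $L^{-1}$ alone is not bounded $L^1\to L^\infty$ for $d=2,3$, since its Green's function is singular on the diagonal. What your composition actually uses is $L^{-1}:L^2\to L^\infty$ and, by duality, $L^1\to L^2$. Second, the term $L_h^{-1}P_hE_h$ needs the self-adjointness of $E_h$ (equivalently of $L_h^{-1}P_h$ on $L^2$) together with $\|L_h^{-1}P_h\|_{L^2\to L^\infty}\le C$, not the symmetry of $L^{-1}$.

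The genuine difference is in Step 3. The paper bounds $\|\bm M_{n,h}-\bm M_n\|$ crudely, via $\|\bm D_n\|,\|\bm D_{n,h}\|\le C\sqrt n$ and $\|\bm\Sigma_n^{-1}-\bm\Sigma_{n,h}^{-1}\|\le\sigma_\epsilon^{-4}n\delta_h$. That is exactly where the terms $n\varepsilon_h$ and $n^2\delta_h$ in $\eta_{n,h}$ come from. You instead whiten with $\bm\Sigma_n^{-1/2}$ and write $\bm D_{n,h}=\bm D_n+\bm\Delta$. The large design then enters only through the continuous-model quantity $\|\bm\Sigma_n^{-1/2}\bm D_n\|^2=\|\bm M_n\|=O(1)$, and you get $\|\bm M_{n,h}-\bm M_n\|\lesssim\sqrt n\varepsilon_h+n\varepsilon_h^2+n\delta_h$.

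In both arguments the bias and sandwich corrections are already $O(\sqrt n\varepsilon_h)$ and $O(n\delta_h)$. Your route therefore yields part (ii) under the weaker requirement $\sqrt n\varepsilon_h+n\delta_h\to0$, i.e.\ $h_n=o(n^{-2/(4-d)})$ with the present rates. This is the relaxation the paper only expects from sharper covariance estimates (Remark~\ref{rem:fem-fractional}). The paper's version is more elementary, needing only sup-norm bounds on the columns and a lower eigenvalue bound on $\bm M_n$, but it loses factors $\sqrt n$ and $n$.

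Two minor points. In the forced model $\delta_h$ still enters through $\bm\Sigma_{n,h}$, so ``only $\varepsilon_h$ enters'' is not right. The Wald claim in (ii) should also be argued explicitly; it follows from your bound $\bm M_{n,h_n}\to\bm G$ via the pivot $\bm M_{n,h}^{1/2}(\hat\bbeta_{n,h}-\bbeta_0)$, as in the paper.
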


The condition $\eta_{n, h_n} \to 0$ holds if $h_n = o(n^{-4/(4-d)})$, i.e., $h_n = o(n^{-2})$ in $d = 2$. Sharper rates are likely available, which would relax this to $h_n = o(n^{-2/(4-d)})$ (Remark~\ref{rem:fem-fractional} in Appendix~\ref{app:proofs}). By (i), the bias requires only $\sqrt{n}\,\varepsilon_h$ to be small relative to the limiting $O(1)$ standard deviations, suggesting the milder scaling $h \propto n^{-1/(4-d)}$. As $\varepsilon_h$ is computable for a given mesh, this can be checked, and the adequacy of the meshes used below is assessed by comparison with the covariance-based method.

\section{Temperature and altitude across climate regimes}
\label{sec:app-temp-alt}
As a first application we consider regression of temperature on altitude. This is, statistically, a spatial-confounding problem as the regression slope of temperature on elevation is contaminated by a smooth spatial field. Standard methods such as restricted spatial regression or Spatial+ return an adjusted slope. The operator-matched model instead resolves the slope into a local thermodynamic lapse rate ($\hat\beta_{\mathrm{pw}}$) and a horizontal-mixing contribution ($\hat\beta_{\mathrm{fc}}^{\star}$) acting on a scale $\kappa^{-1}$. 
What makes the example a genuine test of the method, rather than just a check for predictive quality, is that the local lapse rate and the mixing scale have been estimated independently in the atmospheric-science literature, so the method's output can be checked against independent physical estimates. 
We arrange three regimes along a gradient of increasing horizontal mixing: dry-continental Colorado, maritime Norway, and the Norwegian winter. The effective forcing $\hat\beta_{\mathrm{fc}}^{\star}\hat g_X$ should then be small in Colorado, where spring surface lapse rates track the free-air rate \citep{PepinLosleben2002}, and grow across the gradient, as maritime and especially polar-winter temperatures are set increasingly by advected air \citep{PepinNorris2005}.
A supporting simulation study is presented in Appendix~\ref{app:sim}. 

\subsection{Data and model comparison}
Figure~\ref{fig:data-mesh} shows the data, which for Colorado are $n=213$ observations from NOAA Cooperative Observer stations of mean
March--May climatology over 1960--1990, distributed with the \texttt{fields} package \citep{fields}. The data for Norway are $n=64$ GHCN-M v4 stations, restricted to below $65^\circ\mathrm{N}$ to remove the latitudinal trend, and either the  $1991$--$2020$ annual means or the December--February means. 

\begin{figure}[t]
\centering
\includegraphics[width=\textwidth]{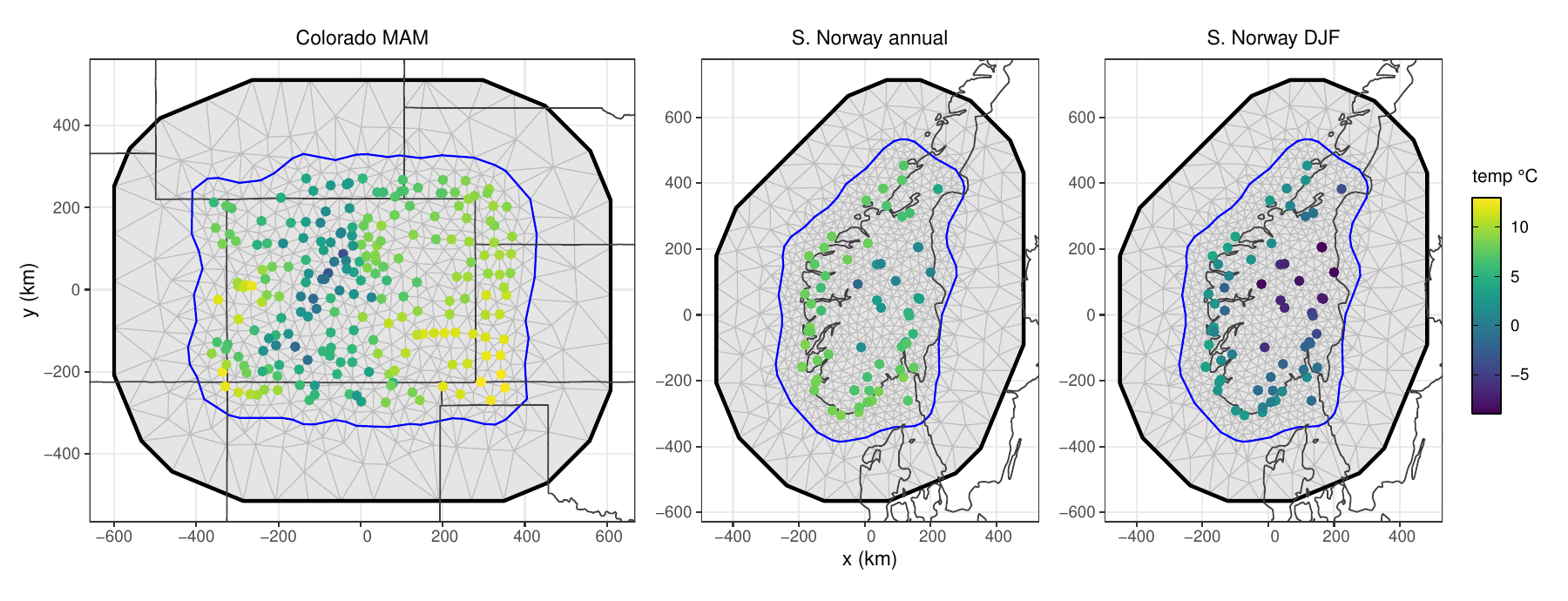}
\caption{Data and triangulations for the three temperature regimes.}
\label{fig:data-mesh}
\end{figure}

The additive, hybrid, and forced models are compared against ordinary least squares (OLS), a fixed-bandwidth version of the empirical-smoother approach of \citet{BolinWallin2026} (BW; elevation pre-smoothed with a Gaussian kernel with $20\,$km bandwidth), and Spatial+  (a thin-plate spline is fitted to elevation through REML via \texttt{mgcv} \citep{Wood2011} and the residual is used as covariate). We additionally fit the two-scale model \eqref{eq:twoscale} to test the operator matching assumption, and for Norway we also fit a `coast' model, which is the hybrid augmented with a pointwise distance-to-coast covariate, to test whether the steep winter forcing coefficient merely stands in for an unmeasured continentality gradient. Each model is fitted by maximum likelihood using both the covariance-based implementation and the finite element implementation on a common triangulation per dataset (Figure~\ref{fig:data-mesh}). Elevations over the meshes are taken from a digital elevation model, a $205\times119$ grid for Colorado \citep{fields} and a $30$~arc-second SRTM-derived model for Norway \citep{FarrEtAl2007, geodata}.  The smoothness $\alpha$ is either fixed at $2$ or estimated jointly. All models include an intercept as an ordinary fixed effect, which is not included in the forcing channel as $\vartheta(\mathbf{1}) = 0$ by Theorem~\ref{thm:geometry}(i). The same convention is used in Section~\ref{sec:app-no2}. We assess predictive performance by a buffered leave-group-out cross-validation (CV) \citep{LiuVanNiekerkRue2025} following \citet{xiong2022}, where each station is predicted after deleting every training station within a radius $d$, so that increasing $d$ checks progressively longer-range extrapolation. We report RMSE and the logarithmic score \citep{GneitingRaftery2007} as functions of $d$, both negatively oriented, with $d=0$ the ordinary leave-one-out (LOO) score.

\subsection{Results and physical interpretation}
The estimated smoothness parameters and a comparison of $10$-fold CV performance for the general smoothness and $\alpha=2$ models are shown in Table~\ref{tab:tempalt-hyper} of Appendix~\ref{app:details}. As fixing $\alpha=2$ is physically motivated, and as the general-smoothness models have similar performance to the $\alpha=2$ models, we focus on $\alpha=2$ below. For $\alpha=2$, the hypothesis of operator matching in the two-scale model is not rejected in either Colorado ($p$ value $0.07$) or Norway~DJF ($p=0.48$). It is weakly rejected ($p=0.03$) in Norway annual; however, the two-scale model provides no cross-validated gain in any of the regimes (Table~\ref{tab:tempalt-hyper}), so the parsimonious matched hybrid is preferred throughout. The fitted ranges are not small relative to the domain extents here, so the application is closer to the infill setting. The empirical size check in Appendix~\ref{app:sim} nevertheless shows that the test performs as it should at this sample size.

The CV results are shown in Figure~\ref{fig:temp-cvcurve}.
At $d=0$ (LOO) the additive baseline, hybrid, and the coast-augmented hybrid models are statistically indistinguishable in every regime (paired Diebold--Mariano tests give $p>0.3$), which is expected as each held-out station has near neighbours, so the task is essentially short-range interpolation. 
The models separate as the prediction range grows in Norway, where the coastal hybrid performs best.
Table~\ref{tab:lapse-decomp} gives the coefficients for the OLS, additive, and hybrid models. For the hybrid the apparent altitude effect splits into a local pointwise component $\hat\beta_{\mathrm{pw}}$ and a horizontal-mixing channel summarised by $\hat\beta_{\mathrm{fc}}^{\star}$ (the sign-reversed equilibrium lapse rate $-\gamma$ of Section~\ref{sec:state-like}). 
As the standard errors in Table~\ref{tab:lapse-decomp} show, $\hat\beta_{\mathrm{fc}}^{\star}$ is only weakly identified (its standard error is large as $\kappa$ is not consistently estimable under infill), so we read the channel through the variance-matched $\hat\beta_{\mathrm{fc}}^{\star}\hat g_X$, which is the fitted magnitude of the forcing contribution, 
$\|\hat\beta_{\mathrm{fc}}\mathcal{S}X\|_{L^2}/\|X\|_{L^2}$, and thus is less sensitive to $\kappa$.
\begin{figure}[t]
\centering
\includegraphics[width=\textwidth]{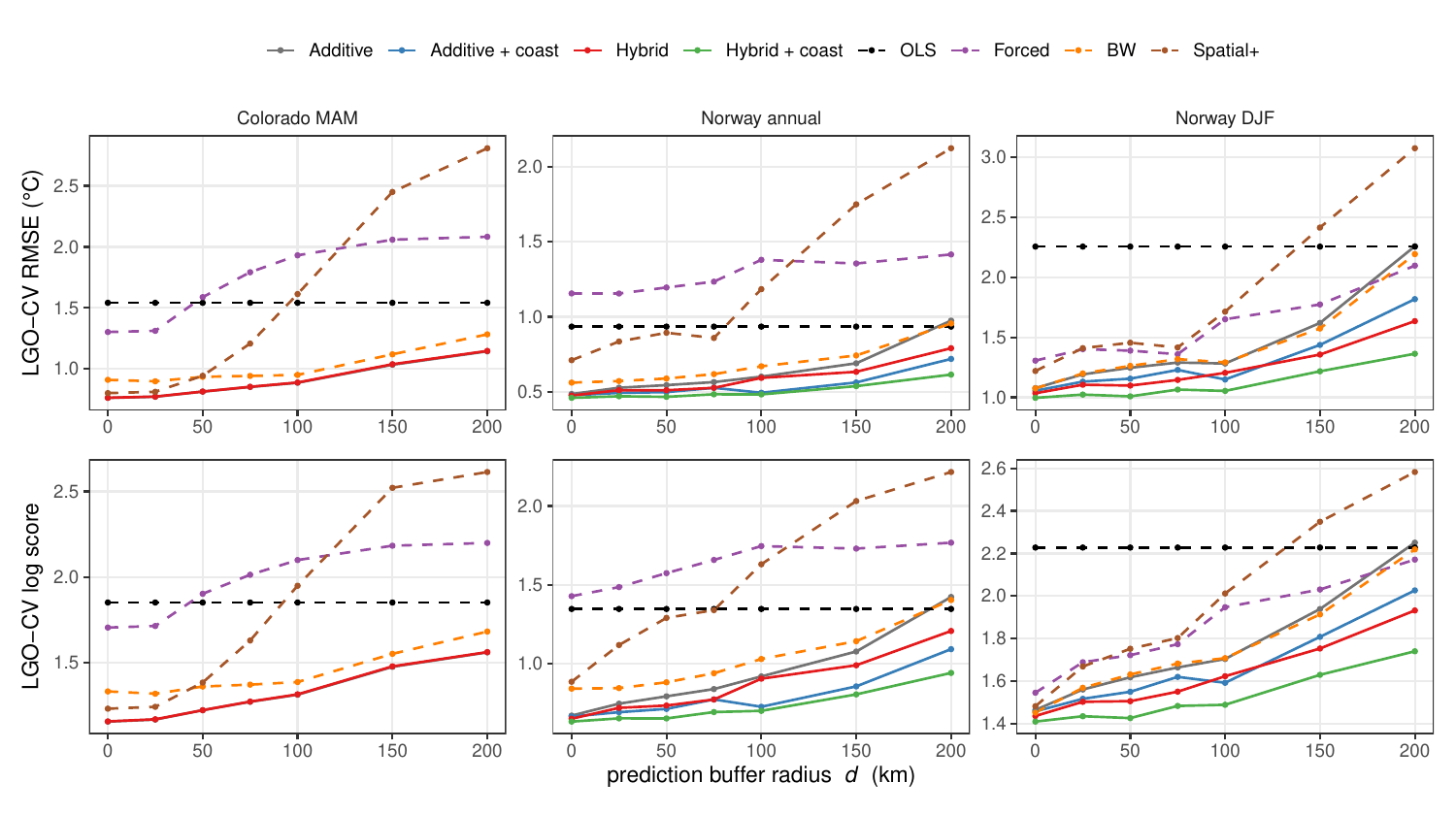}
\caption{Buffered leave-group-out CV across the three regimes.}
\label{fig:temp-cvcurve}
\end{figure}

\begin{table}[t]
\begin{center}
\setlength{\tabcolsep}{4pt}
\resizebox{\textwidth}{!}{%
\begin{tabular}{lrrrrrrrr}
\toprule
Regime & $\hat\beta_{\mathrm{OLS}}$ & $\hat\beta_{\mathrm{add}}$
       & $\hat\beta_{\mathrm{pw}}$ & $\hat\beta_{\mathrm{fc}}^{\star}$
       & $\hat g_X$ & $\hat\beta_{\mathrm{fc}}^{\star}\hat g_X$
       & $\hat\vartheta(X)$ & $2\Delta\ell$\\
\midrule
Colorado MAM & $-5.43$ & $-6.63$ & $-6.69$ {\footnotesize$(0.24)$} & $2.06$ {\footnotesize$(2.58)$} & $0.48$ & $0.98$ {\footnotesize$(1.22)$} & $0.89$ & $0.6$\\
Norway annual & $-7.86$ & $-5.99$ & $-5.65$ {\footnotesize$(0.34)$} & $-7.76$ {\footnotesize$(3.97)$} & $0.40$ & $-3.07$ {\footnotesize$(0.96)$} & $0.90$ & $8.9$\\
Norway DJF & $-10.54$ & $-4.78$ & $-4.10$ {\footnotesize$(0.75)$} & $-29.65$ {\footnotesize$(15.84)$} & $0.35$ & $-10.28$ {\footnotesize$(2.37)$} & $0.91$ & $16.5$\\
\bottomrule
\end{tabular}}\end{center}
\caption{Parameters for the OLS, additive, and hybrid models with $\alpha=2$. Here, $\hat\beta_{\mathrm{OLS}}$ and $\hat\beta_{\mathrm{add}}$ are the OLS and additive-GLS slopes of $T$ on $h$, $\hat\beta_{\mathrm{pw}}$ is the pointwise lapse rate, $\hat\beta_{\mathrm{fc}}^{\star}$ is the scale-free forcing coefficient, $\hat g_X\in(0,1]$ is the effective gain, $\hat\beta_{\mathrm{fc}}^{\star}\hat g_X$ is the variance-matched effective forcing, $\hat\vartheta(X)$ is the channel-separation index, and $2\Delta\ell$ is the likelihood-ratio statistic for the hybrid against the additive baseline. Parenthetical values are delta-method standard errors. 
The coefficients have units  $^\circ\mathrm{C\,km^{-1}}$.}
\label{tab:lapse-decomp}
\end{table}

Colorado is the regime in which local column thermodynamics should dominate, and the fits show this. The additive and hybrid models tie for best, while the forced model and Spatial+ lose by smoothing away the sharp altitude contrast. The decomposition explains the tie as $\hat\beta_{\mathrm{add}}$ and $\hat\beta_{\mathrm{pw}}$ are similar and the forcing channel is not statistically significant. The fitted $\hat\beta_{\mathrm{pw}}\approx -6.7\,^\circ\mathrm{C\,km^{-1}}$ is close to the standard environmental lapse rate of $-6.5\,^\circ\mathrm{C\,km^{-1}}$, and to the March--May value implied by the Colorado Front Range temperature normals of \citet{PepinLosleben2002}.
For Norway annual, the hybrid and additive models have similar short range predictive performance, but the additive increasingly falls behind the hybrid model at longer ranges. The hybrid splits the signal into a local lapse rate and a non-negligible mixing channel. The variance-matched $\hat\beta_{\mathrm{fc}}^{\star}\hat g_X$ ($95\%$ CI $[-5.0,-1.2]$) is about half the local term, and the split is well identified ($\hat\vartheta(X)=0.90$) and significant ($2\Delta\ell=8.9$, $p<0.005$). Adding a distance-to-coast covariate leaves the channel essentially intact ($\hat\beta_{\mathrm{fc}}^{\star}\hat g_X=-2.5$, $[-4.0,-0.9]$), so it is not merely a proxy for continentality.

Norway DJF is the regime where the forced mechanism is strongest as expected. The hybrid and coast models degrade far more slowly than the additive baseline as the prediction range grows, and in winter the forcing channel is the larger of the two long-range effects, outweighing continentality. The pure forced model, which has no pointwise lapse term, is not competitive.
The decomposition here is the most striking of the three: the OLS slope of $-10.5\,^\circ\mathrm{C\,km^{-1}}$ exceeds any free-atmosphere lapse rate, reflecting a relationship set by spatial structure rather than by local thermodynamics. The additive GLS attenuates this to $-4.78$, and the hybrid partitions it into a modest local lapse rate ($\hat\beta_{\mathrm{pw}}=-4.1$) and a dominant mixing channel with 
$\hat\beta_{\mathrm{fc}}^{\star}\hat g_X=-10.3$ ($[-14.9,-5.6]$). Controlling for distance to coast leaves it largely unchanged ($-9.3$, $[-13.5,-5.0]$). The small pointwise slope is what one expects in winter, when temperature at a station is only loosely tied to how temperature falls with height in the atmosphere above it, and is set largely by which air has been carried to the station from elsewhere \citep{PepinNorris2005}. 

By Section~\ref{sec:physical}, $\kappa^{-1}=\sqrt{D/\rho}$, where values of $D$ determined from atmospheric diffusion data lie between $1.5\times10^{4}$ and $1.2\times10^{5}\,\mathrm{m^{2}\,s^{-1}}$ \citep{Gifford1982}, and near-surface temperature anomalies relax on a timescale
of two to five days \citep{Hasselmann1976, MeyerKantz2019}, placing $\kappa^{-1}$ between $50$ and $230\,$km, a range that contains the fitted values for both Norway DJF and annual ($205$ and $179\,$km). For Colorado, the forcing channel is not active so this $\kappa^{-1}$ interpretation does not hold.

The results above are based on the finite element approach. 
Table~\ref{tab:temp-impl} of Appendix~\ref{app:details} shows that these agree on every well-identified quantity with those from the covariance-based approach of Section~\ref{sec:covariance-impl}. 
Colorado has the largest differences, as the forcing channel is not significant, which makes $\kappa$ more difficult to identify.
As the datasets are small, the two approaches are also comparable in cost per likelihood evaluation except in Colorado where the finite element likelihood is cheaper.

In summary, the effective forcing shifts across regimes as the physical motivation of Section~\ref{sec:state-like} predicts and agrees with the finding that surface lapse rates are shallowest in winter \citep{Rolland2003} and that maritime ranges have shallower rates \citep{MinderMoteLundquist2010}.
The predictive gain shows the forcing channel is needed, and the agreement of $\hat\beta_{\mathrm{pw}}$, and of $\kappa^{-1}$ where the channel is identified, with independent physical estimates supports its interpretation as a real effect, which is statistically identified.

\section{Air-quality monitoring}\label{sec:app-no2}

The previous application is an example of the state-like mechanism of Section~\ref{sec:state-like}. As a second application we consider air-quality monitoring, where a pollutant is emitted by spatially distributed sources, which is an example of the source-like mechanism of Section~\ref{sec:source-like}. The operator-matched forcing is then the steady-state dispersion of the emission field, and the hybrid model's two channels are the near-source and urban-background contributions of \citet{LenschowEtAl2001}. As the temperature application was done in a frequentist setting, we now instead consider fully Bayesian inference.

\subsection{Data and model comparison}
The response is the calibrated annual $\mathrm{NO}_2$ concentration at
$n=17{,}886$ sites of the CurieuzeNeuzen Vlaanderen citizen-science
campaign \citep{CurieuzeNeuzen}, in which volunteers across Flanders measured $\mathrm{NO}_2$ with duplicate Palmes diffusion tubes
in May~2018, calibrated against the Flemish reference network. 
A standard approach to mapping air pollution from such data is land-use regression (LUR) \citep{Briggs1997, Hoek2008}. A LUR model regresses the measured concentration, by OLS, on predictors derived from emission inventories and surrounding land use, population density, and urban and natural land-cover fractions. Because the spatial scale on which each predictor acts is not known a priori, each variable is evaluated over a range of circular buffers and the most predictive buffer sizes are retained by supervised forward selection \citep{Beelen2013}.

As emission sources, we use road traffic, taken as the OpenStreetMap \citep{OpenStreetMap} major-road network with each road class weighted by an approximate annual average daily traffic volume (of order $70{,}000$ vehicles\,day$^{-1}$ for motorways down to $4{,}000$ for tertiary roads), and the reported $\mathrm{NO}_x$ releases of the $56$ facilities in the domain \citep{EPRTR}. 
The road-class weights are nominal figures reflecting the typical traffic hierarchy and as the covariate is standardised, only their relative magnitudes matter.
Each source is aggregated to an emission-density field on the analysis triangulation, where the value at each node is the
average over its dual cell, giving forcing covariates $X_{\mathrm{traf}}$ and $X_{\mathrm{ind}}$. 
We also include gridded residential data \citep{WorldPop} as a proxy for diffuse low-level emissions. Figure~\ref{fig:no2-data} shows the data.

\begin{figure}[t]
\centering
\includegraphics[width=\textwidth]{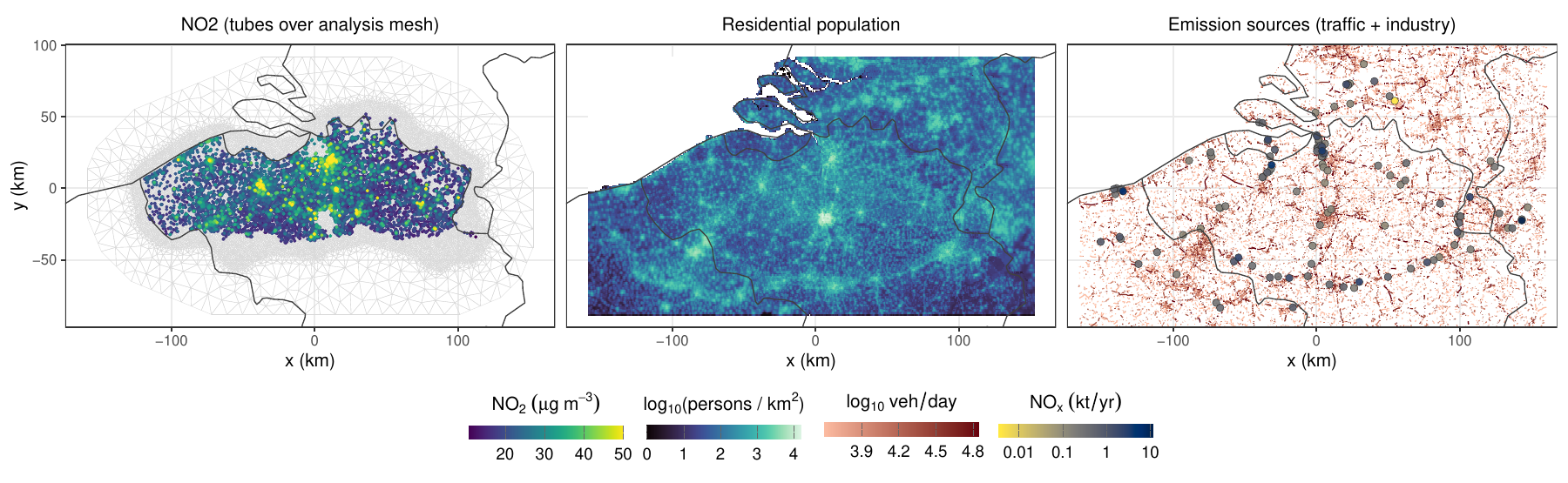}
\caption{The measurements coloured by annual $\mathrm{NO}_2$ (capped from $75.3\,\mu\mathrm{g\,m^{-3}}$ to $50$, which $59$ tubes exceed), mesh, population density, road network shaded by traffic intensity, and E-PRTR industrial facilities coloured by reported $\mathrm{NO}_x$ release.}
\label{fig:no2-data}
\end{figure}

There are a few reasonable choices for operator-matched models in this case. The simplest are forced or hybrid models that take all three covariates as forcings. 
This is, however, difficult to justify physically as it would assume that the pollution from each source disperses at the same rate, which is unlikely as industrial pollutants are released from tall chimneys and traffic pollution at ground level. A more realistic model has one hybrid field for industry and a second that takes population and traffic as forcings, with their sum giving the total pollution. We refer to this as an industry-split two-field model.
A more complex model has separate hybrid models for all three pollution sources, which we refer to as the three-field model. In these multi-field specifications operator matching is imposed per component, and we note that the theory of Section~\ref{sec:properties} covers the common-operator case only (see Section~\ref{sec:disc}).

We fit the forced, hybrid, three-field and industry-split two-field models with $\alpha=2$ on a common mesh, all by Bayesian inference through \texttt{R-INLA}. The finite element approach is used here as the size of the dataset makes the covariance-based approach too computationally heavy.
We also fit four reference models. The first is a ``field only'' model, which is a  Whittle--Mat\'ern field with an intercept. The second is a LUR model with emission and land-use buffer covariates. We include 22 potential covariates, formed from the data in Figure~\ref{fig:no2-data} as pointwise or buffered traffic, industry and population variables together with land-use fractions and the distance to the coast. The full list is given in Appendix~\ref{app:no2}. The covariates that are actually used are selected through supervised forward selection. We also include an additive model \eqref{eq:additive-intro} with the same covariates as the LUR model (LUR+field), and a two-scale model \eqref{eq:twoscale} to test the operator matching assumption. 

All models are fitted with weakly informative priors, specified in Appendix~\ref{app:no2}. Models are again compared through grouped CV, where each observation is predicted based on all data except all observations within a radius of $d$~km. The CV is done based on fixed parameters from the fits, and the forward selection of buffer covariates for the LUR and LUR+field models is only done once on the full data.

\subsection{Results}

\begin{table}[t]
\centering\small
\setlength{\tabcolsep}{3pt}
\begin{tabular}{lccccc}
\toprule
Model & \shortstack{Pre-set\\candidates} & \shortstack{Parameters\\(cov.)} & \shortstack{CV\\log-score} & \shortstack{CV\\RMSE} & \shortstack{$\Delta$ log-score\\vs.\ best (SE)}\\
\midrule
Field only (kriging)  & $0$ & $4\,(0)$  & $2.924$ & $4.46$ & $0.0227\,(0.0021)$\\
Standard LUR & $22\,(5)$ & $7\,(5)$  & $2.979$ & $4.76$ & $0.0779\,(0.0072)$\\
LUR ${}+{}$ field        & $22\,(5)$ & $9\,(5)$  & $2.906$ & $4.40$ & $0.0046\,(0.0018)$\\
Operator-matched forced & $0$ & $7\,(3)$  & $2.910$ & $4.41$ & $0.0085\,(0.0011)$\\
Operator-matched hybrid & $0$ & $10\,(3)$ & $2.909$ & $4.41$ & $0.0077\,(0.0009)$\\
Two-scale forcing       & $0$ & $11\,(3)$ & $2.908$ & $4.41$ & $0.0067\,(0.0010)$\\
Three-field (per source) & $0$ & $14\,(3)$ & $2.902$ & $4.38$ & $0.0004\,(0.0002)$\\
Two-field (industry split) & $0$ & $12\,(3)$ & $2.902$ & $4.38$ & ---\\
\bottomrule
\end{tabular}
\caption{Model complexity and predictive skill for the $\mathrm{NO}_2$
analysis. The first column is the number of buffer and land-use covariates in the LUR forward selection, with the number selected in parentheses. The second is the number of parameters, including the number of covariates (in parentheses) in the models. The LOO CV log-score and RMSE are both negatively oriented. The final column shows the paired difference in mean log-score from the best model (industry-split two-field), with a standard error in parentheses from a spatial block bootstrap ($12\times12$ blocks, $2000$ resamples) that accounts for spatial dependence between the scores. Each difference is at least two SE, so all differences are statistically significant.}
\label{tab:no2-models}
\end{table}

Table~\ref{tab:no2-models} shows LOO CV results together with the number of parameters for each model. The results of the full CV are shown in Figure~\ref{fig:no2-cvcurve}.
The addition of a spatial field is important for short-range prediction, as the standard LUR model falls well behind the other models. Further, the field-only model deteriorates at long-range prediction, and the forced, hybrid, LUR+field, and two-scale forcing models have  similar predictive quality. 
In contrast to the temperature application, the fitted ranges are here small relative to the domain, so this application is effectively in the expanding-domain setting of Section~\ref{sec:incdom}. The scale-free decomposition (Table~\ref{tab:no2-decomp}) shows that all three forcing coefficients are positive and well separated from zero, and traffic and population act through both a near-source and a dispersed channel, whereas industry's near-source coefficient is indistinguishable from zero, so industry enters only through dispersion. Allowing the forcing its own range  does not improve the fit and the operator matching assumption is not rejected, which supports the matched specification and its physical interpretation. 

\begin{table}[t]
\centering\small
\begin{tabular}{lcccccc}
\toprule
Source & $\hat\beta_{\mathrm{pw}}$ & $\hat\beta_{\mathrm{fc}}^{\star}$
       & $\hat g_X$ & $\hat\beta_{\mathrm{fc}}^{\star}\hat g_X$
       & $\hat\vartheta(X)$ & $\hat\vartheta_m$\\
\midrule
Traffic     & $0.60\,(0.50,\,0.70)$ & $3.07\,(2.33,\,3.86)$ & $0.67$ & $2.07\,(1.57,\,2.60)$ & $0.48$ & $0.34$\\
Industry    & $0.01\,(-0.07,\,0.09)$ & $3.10\,(1.50,\,4.90)$ & $0.34$ & $1.04\,(0.50,\,1.64)$ & $0.23$ & $0.23$\\
Population  & $0.54\,(0.40,\,0.67)$ & $2.16\,(1.36,\,3.06)$ & $0.77$ & $1.66\,(1.04,\,2.36)$ & $0.52$ & $0.37$\\
\bottomrule
\end{tabular}
\caption{Scale-free decomposition of the three forcing channels in the operator-matched hybrid model, with $95\%$ posterior credible intervals in parentheses, the marginal channel-separation index $\hat\vartheta(X)$ and its joint counterpart $\hat\vartheta_m$ from Remark~\ref{rem:multivariate}.}
\label{tab:no2-decomp}
\end{table}

\begin{figure}[t]
\centering
\includegraphics[width=\textwidth]{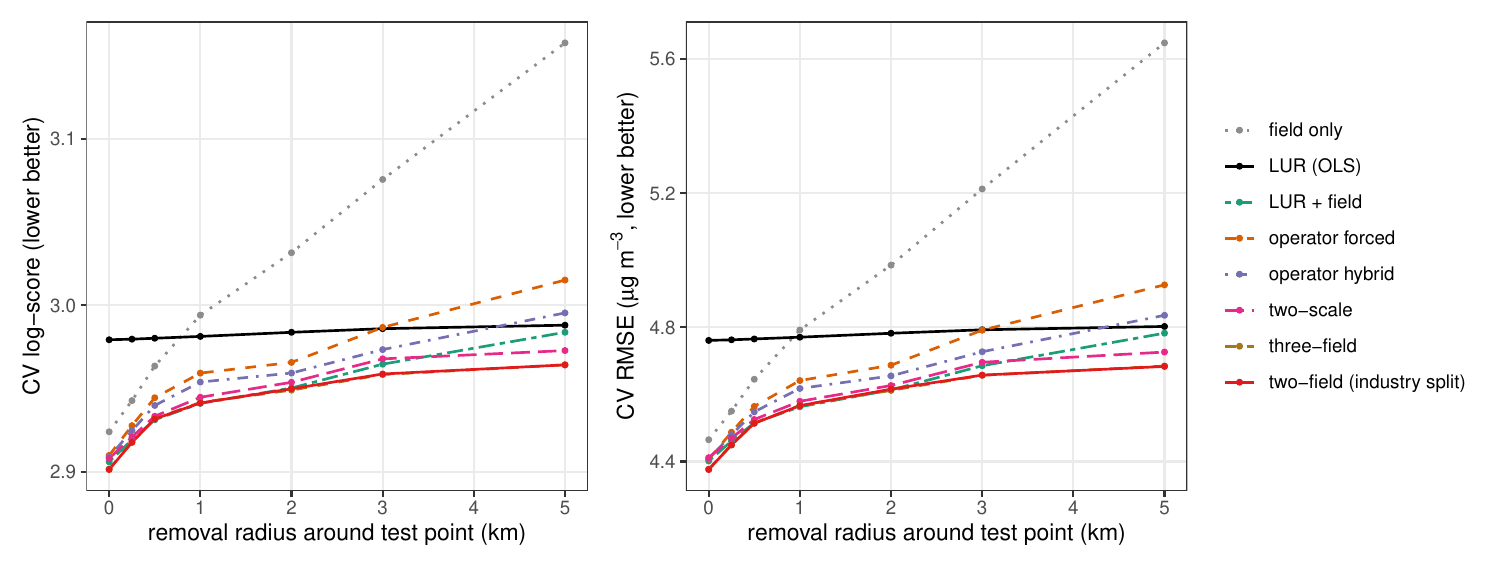}
\caption{Buffered leave-group-out CV results for the $\mathrm{NO}_2$ application.}
\label{fig:no2-cvcurve}
\end{figure}
The three-field and two-field models have essentially identical performance, outperform all other models, and are the only two that clearly beat standard LUR at long-range prediction.
The three-field model estimates $\kappa^{-1}\approx 32\,$km for the industrial field while traffic and population scales are both $\approx 1.5\,$km, confirming that the elevated industrial sources disperse an order of magnitude further than the ground-level sources, and the pooled $4.3\,$km of the hybrid is a compromise that no individual process occupies.
The traffic and population forcing channels are the most collinear pair in the Cameron--Martin geometry (cross-channel correlation $0.47$,
Remark~\ref{rem:multivariate}).
This affects identification rather than predictive performance, but it motivates the two-field model, which merges the two redundant local scales and reproduces the three-field's predictive skill with two fewer parameters, fitting a local field at $\kappa^{-1}\approx 1.4\,$km and an industrial field at $32\,$km.

We can thus conclude that the industry-split two-field model is the best choice, outperforming both LUR and LUR+field, without a need to prespecify buffer ranges or do manual forward selection. 
It has been noted that buffer sizes in LUR models should ideally be chosen to reflect known dispersion patterns \citep{Hoek2008}, and our models take this a step further as the smoother is the dispersion operator itself, and its scale is estimated jointly rather than selected from a candidate list.

\section{Discussion}\label{sec:disc}

We have introduced operator-matched spatial regression as a new tool for spatial data analysis, implemented in the \texttt{rSPDE} package, in which $X$ acts through the same dynamics that generate the spatial dependence in $U$. 

A growing literature treats spatial confounding explicitly as a problem of causal identification with unmeasured spatial confounders 
\citep{SchnellPapadogeorgou2020}.
There, identification is typically obtained by restricting the joint structure of the exposure and the unmeasured confounder. Examples include that confounding dissipates at small scales \citep{GuanEtAl2023}, that the exposure has variation not shared by the confounder \citep{GilbertEtAl2024, GilbertOgburnDatta2025}, that the covariate and the spatial effect are explicitly correlated random fields \citep{MarquesKneibKlein2022}, or that the adjustment acts at a selected spatial scale \citep{KellerSzpiro2020}.
Operator matching is a structural assumption of this type, but one derived from a transport--relaxation mechanism rather than imposed for identification, and it specifies through the fitted operator how the influence of $X$ is distributed across scales (Proposition~\ref{prop:scalefree}). Unlike an untestable identifying assumption, however, matching can be tested against a two-scale model.
By Corollary~\ref{cor:confounding}, the asymptotic bias of each channel under an unmeasured confounder $Z$ is the corresponding coordinate of the Cameron--Martin projection of $Z$ onto $(X, \mathcal{S}X)$. That up-weights small scales, so a smooth $Z$ enters mainly through the forcing coordinate. The hybrid therefore localises where confounding bias concentrates.
A full characterisation of the causal estimands the decomposition recovers, and of what the forcing channel identifies when the confounder shares the
operator with $U$, remains open.

Further open problems include joint estimation of $(\alpha, \zeta)$ in the two-exponent extension of Appendix~\ref{app:colored}; a complete theory for the multi-field models of Section~\ref{sec:app-no2}, where each covariate is matched to the operator of its own random-effect component; and coverage guarantees under infill asymptotics when the operator parameters are estimated.
Several extensions could also be considered. For example, in advection-dominated transport or in problems with  non-stationarities, a modified operator (e.g.\ adding an advection term to $L$, having a function-valued $\kappa$, or replacing $\Delta$ by $\nabla\cdot(\mathbf{H}(s)\nabla)$ for a matrix-valued function $\mathbf{H}$) may be preferable. Ecological diffusion, in which $\Delta$ is replaced by $\nabla^2(\mu(s)\,\cdot\,)$ with a habitat-dependent motility $\mu$, is particularly interesting as this would extend the approach to species-distribution and disease-spread models \citep{HefleyEtAl2017}.
One may also consider non-Gaussian versions by replacing the driving noise with non-Gaussian noise such as type-G L\'evy noise \citep{bw20}, and the approach naturally extends to non-Euclidean domains such as metric graphs \citep{BolinSimasWallin2026} and manifolds \citep{LRL2011} as Whittle--Mat\'ern fields may also be defined there.
The construction is likewise not tied to Gaussian, point-referenced responses. Because the forcing enters the model as the mean of the Gaussian field, it can be used directly in the latent-Gaussian-model machinery of \texttt{R-INLA}, and using a Poisson or binomial likelihood yields operator-matched regression for count data at no additional modelling cost.

Finally, in the introduction we listed several other papers that considered smoothed covariates, which, like our proposed models, can be viewed as continuous versions of specifications in the lattice family in Section~\ref{sec:econometrics}. 
Those connections are rarely drawn, even when the lattice family is implemented \texttt{R-INLA}, which also supports SPDE models \citep{GomezRubioBivandRue2021}. 
\citet[Section~3]{PimPratesCarvalho2026} also observe that the econometrics literature is largely absent from work on spatial confounding, even though
the SDM has been derived from an omitted-variable argument \citep[Section~2.2]{LeSagePace2009}.
Our discussion in Section~\ref{sec:econometrics} aims to bring these fields closer, and econometrics-oriented extensions (e.g., continuous-space SLX or SDEM models) are another interesting topic for future work.

\begin{appendices}

\section{Coloured driving noise and a two-exponent extension}\label{app:colored}

In Section~\ref{sec:physical} we assumed that the unresolved forcing was spatially white. However, unresolved processes may have spatial correlation of their own. An extension of the forced model is thus one where the driving noise has a spatial covariance operator proportional to $L^{-\zeta}$, $\zeta \ge 0$ (the white-noise case being $\zeta = 0$). The colour of the noise raises the smoothness of the random effect while leaving the forcing channel unchanged, decoupling the two: 

\begin{proposition}\label{prop:colored}
In the setting of Proposition~\ref{prop:timeavg}, let the driving
noise be $W^{\zeta}$ with spatial covariance operator $L^{-\zeta}$,
$\zeta \ge 0$. Then:
\begin{enumerate}[label=(\roman*),leftmargin=*,itemsep=1pt]
\item $\E[\bar Y_T] = \gamma_0 A^{-1}X = \gamma_0 D^{-1}L^{-1}X$ for every $T$ and every $\zeta$;
\item the instantaneous stationary field has covariance operator
      $(2D)^{-1}L^{-(1+\zeta)}$, and is thus a Whittle--Mat\'ern field with
      $\alpha = 1 + \zeta$ when $\zeta > \nicefrac{d}{2} - 1$;
\item the covariance operator of $\sqrt{T}(\bar Y_T - \gamma_0 A^{-1}X)$
      equals $D^{-2}L^{-(2+\zeta)}(I + R_T)$ with $\|R_T\| \le (\rho T)^{-1}$ in operator norm, so the time average is asymptotically Whittle--Mat\'ern with $\alpha = 2 + \zeta$ and $\tau = D$;
\item $\beta_{\mathrm{fc}}^{\star} = (\beta_T/\tau_T)\kappa^{-2} = \gamma_0/\rho$, and the mean $\gamma_0 A^{-1}X$ lies in the Cameron--Martin space of the limit field in (iii) if and only if $X \in D(L^{\zeta/2})$.
\end{enumerate}
\end{proposition}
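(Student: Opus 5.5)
The plan is to diagonalise everything in the eigenbasis $\{e_j\}$ of $L$, which is also an eigenbasis of $A = DL$ (eigenvalues $D\lambda_j$) and of the noise covariance $L^{-\zeta}$ (eigenvalues $\lambda_j^{-\zeta}$). The SDE then decouples into independent scalar Ornstein--Uhlenbeck equations
\[
\mathrm{d}y_{j,t} = \bigl(-D\lambda_j y_{j,t} + \gamma_0 x_j\bigr)\,\mathrm{d}t + \lambda_j^{-\zeta/2}\,\mathrm{d}b_{j,t},
\]
with independent Brownian motions $b_j$. Existence and uniqueness of the invariant law follow as in Proposition~\ref{prop:timeavg}. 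The only added requirement is that the stationary covariance be trace class. We check this by noting $\sum_j \lambda_j^{-(1+\zeta)} < \infty$, which holds whenever $1+\zeta > d/2$ by Weyl asymptotics $\lambda_j \asymp j^{2/d}$. In the remaining case we interpret the law in a larger Hilbert space, exactly as the white-noise case must for $d\ge 2$. Each scalar mode has stationary mean $\gamma_0 x_j/(D\lambda_j)$ and stationary variance $\lambda_j^{-\zeta}/(2D\lambda_j)$.

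For (i), linearity of expectation and stationarity give $\E[\bar Y_T] = \E[Y_0] = \gamma_0 A^{-1}X$, which is independent of $T$ and $\zeta$, since the noise has mean zero. For (ii), summing the modal variances yields covariance $(2D)^{-1}L^{-(1+\zeta)}$, and comparing with $\mC = \tau^{-2}L^{-\alpha}$ identifies $\alpha = 1+\zeta$ with $\tau^2 = 2D$. The condition $\alpha > d/2$ from Assumption~\ref{ass:A} is precisely $\zeta > d/2 - 1$.

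For (iii), we use the standard formula for the variance of the time average of a stationary OU process with rate $a = D\lambda_j$ and stationary variance $v_j$:
\[
\Var\Bigl(T^{-1}\!\int_0^T y_t\,\mathrm{d}t\Bigr) = \frac{2v_j}{aT}\Bigl(1 - \frac{1-e^{-aT}}{aT}\Bigr).
\]
Multiplying by $T$ and inserting $v_j = \lambda_j^{-\zeta}/(2D\lambda_j)$ gives $D^{-2}\lambda_j^{-(2+\zeta)}(1+r_j)$, with $r_j = -(1-e^{-aT})/(aT)$. Hence $|r_j| \le 1/(D\lambda_j T) \le 1/(\rho T)$, since $\lambda_j \ge \kappa^2$. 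The factor $\lambda_j^{-\zeta}$ simply rides along from Proposition~\ref{prop:timeavg}, so $R_T = \mathrm{diag}(r_j)$ commutes with $L$ and satisfies the stated bound. The asymptotic Whittle--Mat\'ern claim then follows as before: Gaussianity is preserved, and the covariances converge in trace norm by dominated convergence.

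For (iv), the mean equals $(\gamma_0/D)L^{-1}X$, which matches the forced model with $\alpha=2+\zeta$, $\tau_T = D\sqrt{T}$ and $\beta_T = \gamma_0\sqrt T$ in the mean-equation sense. The ratio $\beta_T/\tau_T = \gamma_0/D$ is unchanged, so $\beta_{\mathrm{fc}}^\star = \gamma_0/(D\kappa^2) = \gamma_0/\rho$. For the Cameron--Martin claim we apply Proposition~\ref{prop:rep} with $\alpha = 2+\zeta$: the space is $D(L^{(2+\zeta)/2})$, and $L^{-1}X$ belongs to it if and only if $L^{\zeta/2}X \in L^2$, i.e.\ $X\in D(L^{\zeta/2})$. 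The main obstacle is bookkeeping rather than conceptual, namely making the infinite-dimensional invariant-law statement rigorous when $L^{-(1+\zeta)}$ is not trace class on $L^2$. I would handle this by working mode-wise and on a Sobolev space of negative order, as in the proof of Proposition~\ref{prop:timeavg}.
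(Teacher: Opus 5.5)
Your proposal is correct and follows the paper's proof essentially step by step. Both arguments use the same ingredients: mode-wise Ornstein--Uhlenbeck processes with noise intensity $\lambda_j^{-\zeta}$, the factor $1+r_j$ carried over unchanged from Proposition~\ref{prop:timeavg}, the trace-class weak-convergence step, and the coefficient check $\sum_j \lambda_j^{\zeta} x_j^2 < \infty$ for Cameron--Martin membership.

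The only slip is in wording. For $\zeta>0$, the mean $(\gamma_0/D)L^{-1}X$ is not the mean of the forced model with $\alpha = 2+\zeta$, whose mean operator would be $L^{-(2+\zeta)/2}$. It is the mean of a two-exponent model with mean operator $L^{-1}$, and this is why $\beta_{\mathrm{fc}}^{\star}$ is normalised by $\kappa^{-2}$. Your value $\gamma_0/\rho$ is unaffected.
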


Non-integer $\zeta$ thus produces fractional $\alpha$ directly. Therefore, for $\zeta > 0$ the time-averaged response follows a hybrid model in which the forcing channel lies in the Cameron--Martin space $D(L^{1+\zeta/2})$ of
the random effect when $X \in D(L^{\zeta/2})$, which is a weaker requirement than the $X \in D(L^{1+\zeta/2})$ that the additive formulation would need, but no longer automatic. Fitting the single-$\alpha$ operator-matched model to such data is an instance of the operator misspecification covered by Theorem~\ref{thm:misspec}.

The coloured-noise model also returns $\alpha = 2$ itself, for instantaneous responses, through a specific and physically meaningful choice of $\zeta$. 
By Proposition~\ref{prop:colored}(ii) with $\zeta = 0$, the equilibrated white noise-driven balance has spatial covariance proportional to $L^{-1}$. If the unresolved forcing is itself such an equilibrated field on a finer timescale, refreshed at an event-driven rate so that it acts as temporally white noise carrying its equilibrium spatial covariance, then $\zeta = 1$ and the instantaneous response has $\alpha = 2$. This procedure can be iterated. 

\begin{corollary}\label{cor:cascade}
For $m \ge 1$ define a hierarchy of stationary fields as follows:
level $1$ is the balance of Proposition~\ref{prop:timeavg} with spatially white driving noise, and for $m \ge 2$, level $m$ solves the same balance driven by temporally white noise whose spatial covariance is proportional to that of the equilibrium of level $m - 1$. Then the instantaneous field at level $m$ is a Whittle--Mat\'ern field with $\alpha = m$ whenever $m > \nicefrac{d}{2}$, and its time average over windows long relative to the relaxation time is asymptotically Whittle--Mat\'ern with $\alpha = m + 1$. 
In particular, $\alpha = 2$ arises either as the time average when $m=1$ or as the instantaneous response when $m = 2$.
\end{corollary}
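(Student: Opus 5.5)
The plan is an induction on the level $m$, with Proposition~\ref{prop:colored} supplying each inductive step. Write $A = DL$ for the generator of level $m$ (same $D$ and $\kappa$ at every level, as in the statement) and $\mathcal{C}_m$ for the covariance operator of the instantaneous stationary field at level $m$. The inductive claim is
\[
  \mathcal{C}_m = c_m L^{-m}, \qquad c_m > 0 .
\]
For $m=1$ the driving noise is spatially white, which is Proposition~\ref{prop:colored}(ii) with $\zeta = 0$; this gives $\mathcal{C}_1 = (2D)^{-1}L^{-1}$, so $c_1 = (2D)^{-1}$. (The Lyapunov computation can also be checked directly: since $A$ is self-adjoint, $\mathcal{C} = \int_0^\infty e^{-tA}Qe^{-tA}\,\mathrm{d}t = \tfrac12 A^{-1}Q$ whenever $Q$ commutes with $L$.)

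For the step $m-1 \to m$, level $m$ is driven by temporally white noise whose spatial covariance is proportional to $\mathcal{C}_{m-1} = c_{m-1}L^{-(m-1)}$. The constant of proportionality can be absorbed into the noise amplitude. Doing so only rescales the covariance by a positive factor, i.e.\ it changes $\tau$, and leaves the operator structure unchanged. Applying Proposition~\ref{prop:colored}(ii) with $\zeta = m-1$ then gives $\mathcal{C}_m \propto L^{-(1+\zeta)} = L^{-m}$, which closes the induction.

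The instantaneous field at level $m$ is a centred Gaussian field with covariance $c_m L^{-m} = \tau^{-2}L^{-m}$, where $\tau = c_m^{-1/2}$. By the definition in Proposition~\ref{prop:rep}, this is a Whittle--Mat\'ern field with $\alpha = m$. That identification requires $\alpha > d/2$ under Assumption~\ref{ass:A}, and $m > d/2$ is exactly the condition $\zeta = m-1 > d/2 - 1$ in Proposition~\ref{prop:colored}(ii). For the time average, apply Proposition~\ref{prop:colored}(iii) with the same $\zeta = m-1$. The centred, $\sqrt{T}$-scaled average then has covariance proportional to $L^{-(2+\zeta)}(I+R_T) = L^{-(m+1)}(I+R_T)$ with $\|R_T\| \le (\rho T)^{-1}$. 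Gaussianity is preserved throughout, so convergence of the covariance operators in operator norm gives convergence in law to the Whittle--Mat\'ern field with $\alpha = m+1$, as in Proposition~\ref{prop:timeavg}. The final sentence of the corollary follows by taking $m=1$ for the time average and $m=2$ for the instantaneous field.

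The main obstacle is making sure Proposition~\ref{prop:colored} genuinely applies at every level. For a noise with covariance $L^{-\zeta}$, $\zeta \ge 0$, the operator $L^{-\zeta}$ is bounded and commutes with $L$, so the stochastic convolution and its invariant law are well defined even when $L^{-\zeta}$ is not trace class. This holds because trace-class-ness is only needed for the output $\mathcal{C}_m$, which is trace class precisely when $m > d/2$ by Weyl asymptotics ($\lambda_j \asymp j^{2/d}$). At intermediate levels with $m \le d/2$ (relevant only for $d=3$, $m=1$), the field is a generalised field rather than function-valued. Its covariance operator $c_m L^{-m}$ is nonetheless well defined, and that covariance operator is the only object passed to the next level. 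So the induction should be phrased at the level of covariance operators, and the Whittle--Mat\'ern conclusion drawn only once $m > d/2$.
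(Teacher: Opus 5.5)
Your proposal is correct and is essentially the paper's proof: an induction on $m$ carried out at the level of covariance operators, so that levels whose field is not function-valued cause no trouble, with Proposition~\ref{prop:colored}(ii) and (iii) applied at $\zeta = m-1$. Two small slips. First, the case $m \le d/2$ arises for $m = 1$ in both $d = 2$ and $d = 3$, not only $d = 3$, since $1 > 1$ fails. Second, operator-norm convergence of covariance operators does not by itself give weak convergence of Gaussian measures in infinite dimensions. What you need, and what the argument of Proposition~\ref{prop:timeavg} supplies, is Hilbert--Schmidt convergence of the square roots, which holds here because $L^{-(m+1)}$ is trace class and $\|R_T\| \le (\rho T)^{-1}$.
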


Thus, $\alpha$ counts the number of passes the innovations make through the operator, a structure that fits several geophysical applications with processes occurring at different temporal scales. For example boundary-layer turbulence equilibrates in hours, mesoscale fields in days, the synoptic environment in weeks, and the forcing felt by each level is refreshed by events (cloud fields, precipitation, synoptic disturbances) at a rate that is roughly scale-independent.

\section{Simulation study}\label{app:sim}

This appendix reports a simulation study with  three goals:
(i) demonstrate unbiased and calibrated estimation of $\beta$ under the well-specified forced-SPDE model in agreement with Theorem~\ref{thm:infill};
(ii) quantify performance under operator misspecification, showing a bounded, sign-preserving rescaling rather than collapse as Theorem~\ref{thm:misspec} and Corollary~\ref{cor:signs} predict;
(iii) compare against the additive model and existing remedies for self-confounding.

\subsection{Setup}
We fix the spatial domain to $\mathcal{D} = [0, 10]^2$ and have $n \in \{50, 100, 250, 500, 1000\}$ observation locations sampled uniformly in the domain. The observations are $\widetilde Y_i = Y(s_i) + \epsilon_i$ with $\epsilon_i \sim N(0, \sigma_\epsilon^2)$ and $\sigma_\epsilon = 0.1$, where $Y$ is generated in four different scenarios using $\beta=1$ and a covariate $X$ that is a realisation of a unit-variance Mat\'ern field with practical range $2.5$ and a scenario-specific smoothness $\nu_X$. In all scenarios, we have $\alpha=2$ and $\kappa$ is chosen so the practical range of the random effect $U$ is $2.5$, with $\tau$ chosen so that marginal variance is $1$. The scenarios are 
(S1) $Y$ generated by \eqref{eq:forced-spde} with $\nu_X = 1$; 
(S2) $Y = X\beta + U$ with $\nu_X = 1/2$; 
(S3) $Y = X\beta + U$ with $\nu_X = 2$; and  
(S4) $Y = \beta \mathcal{S}_0 X + U$ where
$\mathcal{S}_0$ is a Gaussian kernel smoother and $\nu_X = 1$.
Thus, for (S1) the forced model is well-specified, for (S2), the covariate is rougher than $U$, for (S3) it is smoother than $U$, and in (S4) we have a mismatched smoothing operator. The data are generated from the finite element discretisation of Section~\ref{sec:inference} on the same mesh used for estimation. The data-generating model is then exactly operator-matched at the discrete level (see Section~\ref{sec:fem-error}), so the study isolates the statistical behaviour of the estimators from discretisation error.

We compare seven different models for the data, with the goal of estimating $\beta$. In all models, the random effect has a fixed smoothness $\alpha=2$: The additive model \eqref{eq:additive-intro} (Additive-GLS); an additive model with $X$  pre-smoothed by a Gaussian kernel of fixed bandwidth equal to the practical range, matching the smoother used in Section~\ref{sec:app-temp-alt} (BW-empirical); Spatial+; RSR \citep{ReichHodgesZadnik2006}; the forced model with the operator fixed at the true $(\kappa,\tau)$, an oracle that isolates the cost of estimating the operator (Forced-fixed); the forced model with $(\kappa,\tau)$ and $\beta_{\mathrm{fc}}$ estimated (Forced), and the hybrid model \eqref{eq:hybrid-spde} with $(\kappa,\tau)$ and $(\beta_{\mathrm{pw}}, \beta_{\mathrm{fc}})$ estimated. 

The additive, BW-empirical, Spatial+, Forced and Hybrid fits are obtained through maximum likelihood using the \texttt{rSPDE} package. Forced-fixed is fitted through a profile GLS fit at the true operator, and RSR is a restricted spatial regression. We report $\hat\beta$ (for the hybrid $\hat\beta_{\mathrm{pw}}$ and $\hat\beta_{\mathrm{fc}}$ separately) with truths $(\beta_{\mathrm{pw}}, \beta_{\mathrm{fc}}) = (0, 1)$ under S1 and S4, and $(1, 0)$ under S2 and S3, bias, RMSE and empirical coverage of nominal $95\%$ intervals. The two hybrid coefficients are compared to these truths componentwise. Every other estimator reports a single coefficient, and is compared to $\beta = 1$, the size with which $X$ enters the data in all four scenarios. Standard errors are obtained from the numerical Hessian of the negative log-likelihood at the MLE, except for RSR which uses its model-based $\widehat\sigma_\epsilon^2 (X^\top X)^{-1}$.

The channel-separation indices of Theorem~\ref{thm:geometry} are
$\vartheta(X) = 0.45, 0.51, 0.35, 0.46$ for S1--S4, so the pointwise and forcing channels are well separated in every scenario and the hybrid is
well posed. Note that these indices are computed from the discrete operator and can differ from their continuum counterparts. In S2 the covariate has $\nu_X = \nicefrac12 < \nu$, so in the continuum $X \notin \HH$ and the two channels decouple asymptotically (Proposition~\ref{prop:rough}(ii)--(iii)), whereas on the mesh all functions are finite-dimensional and $\vartheta(X) = 0.51$.

\subsection{Results}\label{sec:sim-results}

\begin{figure}[t]
\centering
\includegraphics[width=\textwidth]{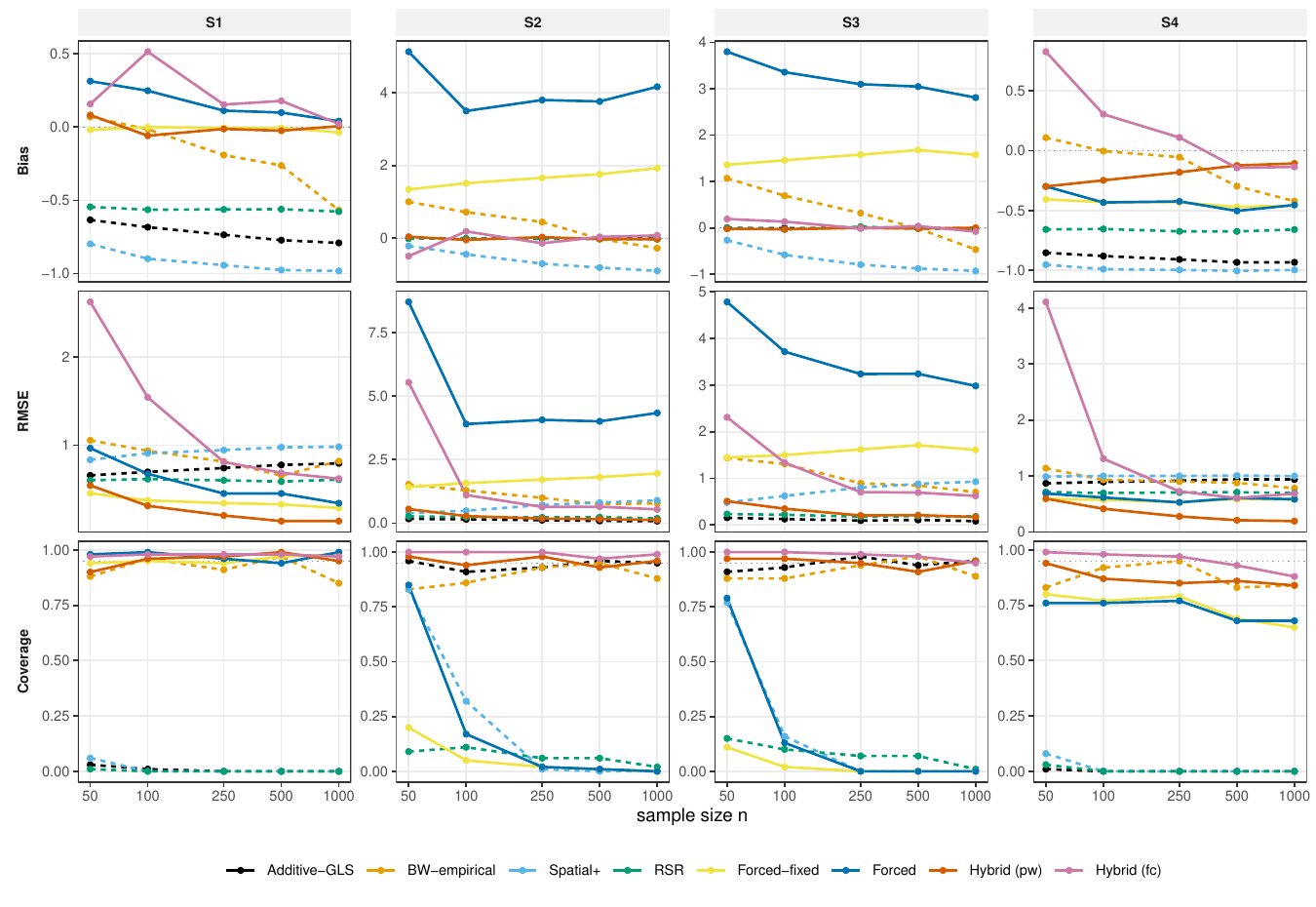}
\caption{Simulation results as functions of the sample size $n$. The rows show bias, RMSE, and empirical $95\%$ coverage of $\hat\beta$ and the columns show the four scenarios. Competitors (Additive-GLS, BW-empirical,
Spatial+, RSR) are dashed whereas the operator-matched Forced-fixed, Forced and the two hybrid coefficients are solid.}
\label{fig:sim-curves}
\end{figure}

Figure~\ref{fig:sim-curves} reports the results, which are obtained by running $100$ Monte Carlo replicates per (scenario, $n$) pair.  
As the spatial domain is fixed, increasing $n$ puts us in the infill asymptotics regime, where Proposition~\ref{prop:frontier} shows that the regression coefficients and the range $\kappa$ are not consistently estimable. Therefore, the right target is not vanishing RMSE but asymptotic unbiasedness and calibration, and the RMSE of a well-specified estimator decreases toward the positive limits $(\bm{G}^{-1})_{jj}^{\nicefrac1{2}}$ implied by Theorem~\ref{thm:infill}(ii) rather than to zero. This can for example be seen for the $\hat\beta_{\mathrm{pw}}$ curve in S1, which plateaus near $0.14$ from $n=500$.

Under (S1) the forced and hybrid models are well specified. The Forced-fixed and hybrid coefficients (known operator) are unbiased with coverage at the nominal $0.95$ for every $n$, as predicted by Theorem~\ref{thm:infill}(i). The Forced fit, which estimates $(\kappa,\tau)$, carries only a small additional bias that decays with $n$ (the  operator-estimation effect of Remark~\ref{rem:plugin}) and its empirical coverage remains close to the nominal $0.95$ throughout. 
The Additive-GLS, Spatial+ and RSR estimators instead are attenuated towards zero, with coverage collapsing to zero and no improvement as $n$ grows: entering the covariate pointwise when it acts on the response through the smoothing operator lets the random effect absorb the smooth signal. This is the spatial self-confounding analysed by \citet{BolinWallin2026}, which operator matching removes.

In (S2) and (S3) the additive model is correctly specified. The Additive-GLS and the hybrid pointwise coefficient are unbiased and calibrated at all $n$, and the hybrid assigns the effect to the pointwise channel ($\hat\beta_{\mathrm{pw}} \to 1$, $\hat\beta_{\mathrm{fc}} \to 0$). The forced-only estimators instead carry a positive bias that does not vanish. RSR keeps the OLS point estimate and is therefore unbiased here, but its reported uncertainty is  much smaller than the residual variation and the resulting intervals are anti-conservative, so the coverage collapses towards zero even where the point estimate is unbiased. This behaviour of RSR was documented by \citet{KhanCalder2022}.

Under operator misspecification (S4) the forced estimators remain bounded, with a stable bias near $-0.45$ and an RMSE flat in $n$. This is the bounded, sign-preserving attenuation guaranteed by Theorem~\ref{thm:misspec} and Corollary~\ref{cor:signs}.
Additive-GLS, Spatial+ and RSR again attenuate to zero coverage, while the hybrid isolates the forcing channel, with $\hat\beta_{\mathrm{pw}} \to 0$ and $\hat\beta_{\mathrm{fc}}$ close to the truth.

Across all scenarios the hybrid is the only method simultaneously unbiased and calibrated whenever the truth lies in its span. The BW-empirical smoother is applied to the observed covariate, so the effective amount of smoothing depends on the sampling density and its bias drifts with $n$ and attenuates at the larger sample sizes. This is expected for a fixed smoother and would be mitigated by estimating the smoothing range jointly, as in the two-scale model.

\subsection{Size of the likelihood-ratio tests}\label{sec:sim-lrt}
\begin{table}[t]
\centering
\caption{Empirical size of the two likelihood-ratio tests against the
$\chi^2_1$ reference, over $200$ replicates per cell (Monte Carlo
standard error $\approx 0.015$ at the $5\%$ level).}
\label{tab:lrt-size}
\begin{tabular}{llcccccc}
\toprule
 & & \multicolumn{3}{c}{nominal $5\%$} & \multicolumn{3}{c}{nominal $1\%$}\\
\cmidrule(lr){3-5}\cmidrule(lr){6-8}
Test & Truth & $n=64$ & $100$ & $250$ & $n=64$ & $100$ & $250$\\
\midrule
Matching ($\kappa_\mu=\kappa$) & S1 & $0.045$ & $0.055$ & $0.040$ & $0.005$ & $0.005$ & $0.010$\\
Hybrid vs.\ additive & S2 & $0.030$ & $0.075$ & $0.100$ & $0.005$ & $0.030$ & $0.035$\\
Hybrid vs.\ additive & S3 & $0.060$ & $0.055$ & $0.090$ & $0.020$ & $0.020$ & $0.020$\\
\bottomrule
\end{tabular}
\end{table}

The applications report likelihood-ratio statistics for two nested comparisons: the hybrid model against the additive baseline ($\beta_{\mathrm{fc}} = 0$), and the two-scale model against the matched hybrid ($\kappa_\mu = \kappa$). The fitted ranges are not small relative to the domain in the temperature application, so the expanding-domain calibration of Theorem~\ref{thm:incdom} cannot be taken for granted there. We therefore check the finite-sample size of both tests empirically, in the setting of this appendix. For the matching test, data are generated under the matched truth (S1) and the matched hybrid is tested against the two-scale model. For the hybrid-versus-additive test, data are generated under the additive truths (S2 and S3) and the additive model is tested against the hybrid. All fits estimate $(\kappa, \tau, \sigma_\epsilon^2)$ by maximum likelihood with $\alpha = 2$ fixed, and the sample sizes $n \in \{64, 100, 250\}$ cover those of Section~\ref{sec:app-temp-alt}. Table~\ref{tab:lrt-size} reports the empirical rejection rates at nominal levels $5\%$ and $1\%$. The operator-matching test is calibrated at all sample sizes considered.
The hybrid-versus-additive test is calibrated at $n = 64$, the sample size of the temperature application, and becomes mildly liberal as $n$
grows, reaching $0.10$ at the nominal $5\%$ level under S2 at $n = 250$. The effect is modest and immaterial for the strongly significant statistic reported in Section~\ref{sec:app-temp-alt}.

\section{Additional details for the applications}\label{app:details}

\subsection{Temperature application}

This section collects two supporting tables for the temperature application of Section~\ref{sec:app-temp-alt}: the fitted hyperparameters together with the effect of estimating the smoothness (Table~\ref{tab:tempalt-hyper}), and a comparison of the finite-element and covariance-based implementations of Section~\ref{sec:inference}
(Table~\ref{tab:temp-impl}).

\begin{table}[!ht]
\centering
\small
\begin{tabular}{llrrrrrr}
\toprule
 & & & & & & \multicolumn{2}{c}{CV RMSE}\\
\cmidrule(lr){7-8}
Regime & Model & $\hat\alpha$ & $\hat\kappa^{-1}$ (km) & range (km) & $\hat\sigma_\epsilon$ ($^\circ$C) & $\alpha{=}2$ & $\hat\alpha$\\
\midrule
\textit{Colorado} &
Additive & $1.89$ & 279 & 790 & 0.70 & 0.765 & 0.767\\
& Forced   & $1.37$ & 28 & 79 & 0.86 & 1.365 & 1.370\\
& Hybrid   & $2.01$ & 259 & 732 & 0.70 & 0.767 & 0.772\\
& Two-scale & -- & 324 & 916 & 0.70 & 0.763 & --\\
\midrule
\textit{Norway annual} & 
Additive & $2.41$ & 234 & 662 & 0.36 & 0.497 & 0.489\\
& Forced   & $1.88$ &  56 & 158 & 0.51 & 1.303 & 1.217\\
& Hybrid   & $2.91$ & 179 & 505 & 0.36 & 0.495 & 0.497\\
& Two-scale & -- & 79 & 225 & 0.34 & 0.502 & --\\
& Coast    & $3.03$ & 216 & 610 & 0.40 & 0.512 & 0.598\\
\midrule
\textit{Norway DJF} &
Additive & $1.92$ & 264 & 746 & 0.75 & 1.120 & 1.129\\
& Forced   & $1.29$ &  93 & 263 & 0.54 & 1.390 & 1.319\\
& Hybrid   & $1.83$ & 205 & 580 & 0.81 & 1.110 & 1.089\\
& Two-scale & -- & 169 & 478 & 0.79 & 1.118 & --\\
& Coast    & $1.32$ & 242 & 685 & 0.86 & 1.082 & 1.091\\
\bottomrule
\end{tabular}
\caption{Fitted hyperparameters and the effect of estimating the smoothness. Here $\hat\alpha$ is the jointly estimated smoothness, while $\hat\kappa^{-1}$, the practical range, and $\hat\sigma_\epsilon$ are from the $\alpha=2$ fit. The last two columns give the $10$-fold CV RMSE at $\alpha=2$ and at $\hat\alpha$. The two-scale model is fitted with $\alpha = 2$ only; its $\hat\kappa^{-1}$ is the range of the random field, and its separately estimated mean-channel ranges $\hat\kappa_\mu^{-1}$ are $20$, $585$, and $273$~km for the three regimes, respectively.}
\label{tab:tempalt-hyper}
\end{table}

\begin{table}[!ht]
\centering
\small
\begin{tabular}{llrrrrrr}
\toprule
Regime & Method & $\hat\kappa^{-1}$ (km) & $\hat\beta_{\mathrm{pw}}$ & $\hat\beta_{\mathrm{fc}}^{\star}\hat g_X$ & $\hat\sigma_\epsilon$ & CV RMSE & eval (ms)\\
\midrule
\textit{Colorado MAM} &
FEM  & 259 & $-6.69$ & $0.98$ & 0.70 & 0.767 & 27\\
($n=213$) & covariance & 387 & $-6.66$ & $0.59$ & 0.71 & 0.765 & 175\\
\midrule
\textit{Norway annual} & 
FEM  & 179 & $-5.65$ & $-3.07$ & 0.36 & 0.495 & 34\\
($n=64$) & covariance & 191 & $-5.73$ & $-2.27$ & 0.37 & 0.509 & 40\\
\midrule
\textit{Norway DJF}  &
FEM  & 205 & $-4.10$ & $-10.28$ & 0.81 & 1.110 & 36\\
($n=64$) & covariance & 211 & $-4.16$ & $-9.09$ & 0.81 & 1.145 & 40\\
\bottomrule
\end{tabular}
\caption{Parameters of the hybrid model with $\alpha=2$ fitted through the finite-element and covariance-based approaches of Section~\ref{sec:inference}. The  RMSE values are based on a $10$-fold CV, and eval is the median wall-clock of a single log-likelihood evaluation.}
\label{tab:temp-impl}
\end{table}

\subsection{Air quality application}
\label{app:no2}

The LUR models of Section~\ref{sec:app-no2} use 22 candidate covariates. The first 15 are constructed from the data in Figure~\ref{fig:no2-data} as
the pointwise traffic variable or buffered values with radii $0.1/0.3/0.5/1/2$ km, buffered industry values with radii $1/5/10/25$ km, the distance to the nearest industry, the number of industries within a $10$ km radius, and population with radii $1/5/10$ km. The remaining seven are six land-use covariates, namely urban and natural land-use fractions with radii $1/5/10$ km each, together with the distance to the coast.

All models are fitted with weakly informative priors. The regression coefficients have independent $N(0,10^{5})$ priors, with a flat intercept, and the observation precision has the default \texttt{R-INLA} log-Gamma prior. The field-only and LUR${}+{}$field models use the default \texttt{rSPDE} priors, i.e., independent Gaussian priors on the log standard deviation and log range with precision $0.1$. The operator-matched fields have Gaussian priors on the log range and log precision, centred at the values of a preliminary field-only fit and with precision $5$, and  each $\beta_{\mathrm{fc}}$ is $N(0,10^{3})$.

\section{Proofs}\label{app:proofs}

Throughout the proofs we use the following notation. 
For positive sequences, $a_j \asymp b_j$ means that constants $0 < c \le C < \infty$ exist with $c b_j \le a_j \le C b_j$ for all large $j$.
For two normed spaces, $V \hookrightarrow W$ denotes a continuous embedding, i.e., $V \subseteq W$ and the inclusion is bounded,
$\|v\|_W \le C\|v\|_V$ for all $v \in V$, so that convergence in $V$ implies convergence in $W$. We let $H^{s}(\mathcal{D})$ be the $L^2$-based Sobolev space of order $s$ (for integer $s$, the functions whose weak derivatives up to order $s$ are square-integrable, extended to non-integer $s$ by interpolation) and $C^{0,\varsigma}(\barD)$ the space of $\varsigma$-H\"older-continuous functions. 
The embedding $H^{s}(\mathcal{D}) \hookrightarrow C^{0,\varsigma}(\barD)$ holds for $0 < \varsigma \le s - \nicefrac{d}{2}$, $\varsigma < 1$.

We write $\omega_j$ for the eigenvalues of the negative Neumann Laplacian $-\Delta$ on $\mathcal{D}$, ordered increasingly and repeated by multiplicity, with orthonormal eigenfunctions $\{e_j\}$, so that $L e_j = \lambda_j e_j$ with $\lambda_j = \kappa^2 + \omega_j$ and $0 = \omega_1 \le \omega_2 \le \cdots \to \infty$, hence $\lambda_1 = \kappa^2 > 0$. 
The coefficients of $f \in L^2(\mathcal{D})$ are $f_j = (f, e_j)_{L^2}$.
The fractional power $L^{s}$ is defined by the spectral calculus, acting as multiplication by $\lambda_j^{s}$ on the $j$th eigencoordinate
$f_j = (f, e_j)_{L^2}$, with domain $D(L^{s}) = \{ f \in L^2(\mathcal{D}) : \sum_j \lambda_j^{2s} f_j^2 < \infty \}$. 
A bounded self-adjoint operator $T$ with eigenvalues $\mu_j$ has operator (spectral) norm $\|T\| = \sup_j |\mu_j|$. If moreover $T \succeq 0$ (so $\mu_j \ge 0$) it is trace class when $\operatorname{tr} T = \sum_j \mu_j < \infty$, and Hilbert--Schmidt, with Hilbert--Schmidt norm $\|T\|_{\mathrm{HS}} = (\sum_j \mu_j^2)^{1/2}$, when $\sum_j \mu_j^2 < \infty$. A centred Gaussian field takes values in $L^2(\mathcal{D})$ when its covariance operator is trace class. 
The Loewner order $S \preceq T$ means $T - S \succeq 0$, i.e., $(Sg, g)_{L^2} \le (Tg, g)_{L^2}$ for all $g$. Finally, we write $\spn$ for linear span and $V^{\perp}$ for orthogonal complement.

\subsection{Proofs for Section~\ref{sec:physical}}

Recall from Proposition~\ref{prop:rep} that $\HH = D(L^{\nicefrac{\alpha}{2}})$ is the Cameron--Martin space of $U$, equivalently the reproducing-kernel Hilbert space of its covariance kernel $\varrho$, with inner product $(\cdot,\cdot)_\mC$ and norm $\|\cdot\|_\mC$.

\begin{lemma}\label{lem:embed}
Under Assumption~\ref{ass:A} we have 
(a) $\omega_j \asymp j^{\nicefrac{2}{d}}$, and consequently $\lambda_j \asymp j^{\nicefrac{2}{d}}$; 
(b) $\sum_j \lambda_j^{-s} < \infty$ if and only if $s > \nicefrac{d}{2}$; 
and (c) $D(L^{\nicefrac{\alpha}{2}}) \hookrightarrow H^{\min(\alpha, 2)}(\mathcal{D}) \hookrightarrow C^{0,\varsigma}(\barD)$ for every $0 < \varsigma \le \min(\alpha, 2) - \nicefrac{d}{2}$ with $\varsigma < 1$. 
In particular, since $d \le 3$ and $\alpha > \nicefrac{d}{2}$, point evaluation is a bounded linear functional on $(\HH, \|\cdot\|_\mC)$ and there is a constant $C_E < \infty$ with $\|f\|_{C^{0,\varsigma}(\barD)} \le C_E \|f\|_\mC$ for all $f \in \HH$.
\end{lemma}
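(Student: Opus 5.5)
For (a) we invoke Weyl's law for the Neumann Laplacian. Under Assumption~\ref{ass:A} the domain is bounded and either convex (hence Lipschitz) or $C^{1,1}$, so the Neumann eigenvalue counting function satisfies $N(\omega) := \#\{j : \omega_j \le \omega\} \sim c_d |\mathcal{D}|\, \omega^{\nicefrac{d}{2}}$ as $\omega\to\infty$. This is classical for Lipschitz domains with the extension property, via Dirichlet--Neumann bracketing. Inverting the asymptotic gives $\omega_j \asymp j^{\nicefrac{2}{d}}$ for large $j$. Since $\lambda_j = \kappa^2 + \omega_j$ with $\kappa^2>0$, the same two-sided bound holds for $\lambda_j$. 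Only finitely many $j$ have $\omega_j$ small, including $\omega_1 = 0$, and these do not affect "for all large $j$". Part (b) then follows immediately: $\sum_j \lambda_j^{-s}$ is comparable to $\sum_j j^{-2s/d}$, which converges if and only if $2s/d > 1$.

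For (c), we first treat $\alpha \le 2$ and $\alpha > 2$ uniformly by writing $\min(\alpha,2)$. We have $D(L^{\nicefrac{\alpha}{2}}) \hookrightarrow D(L^{\nicefrac{\min(\alpha,2)}{2}})$, because $\lambda_j \ge \kappa^2 > 0$ gives $\lambda_j^{\min(\alpha,2)} \le \kappa^{2(\min(\alpha,2)-\alpha)}\lambda_j^{\alpha}$. It therefore suffices to prove $D(L^{\nicefrac{s}{2}}) \hookrightarrow H^{s}(\mathcal{D})$ for $s \in (0,2]$. At $s=0$ the spaces are $L^2$. At $s=1$, $D(L^{\nicefrac12})$ is the form domain $H^1(\mathcal{D})$ with equivalent norms, since $\|L^{\nicefrac12}f\|^2 = \kappa^2\|f\|^2 + \|\nabla f\|^2$.

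The main obstacle is the endpoint $s=2$, which is elliptic $H^2$ regularity for the Neumann problem. If $f \in D(L)$, then $g = Lf \in L^2$, and $f$ is the weak solution of $\kappa^2 f - \Delta f = g$ with homogeneous Neumann data. For convex domains, Grisvard's theorem gives $\|f\|_{H^2} \le C\|g\|_{L^2}$. For $C^{1,1}$ boundaries, the same estimate follows from the standard regularity theory; both cases are in Grisvard, Ch.~3. This is exactly where the geometric hypothesis in Assumption~\ref{ass:A} is needed.

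For intermediate $s \in (0,2)$ we interpolate. The scale $D(L^{\nicefrac{s}{2}})$ is an exact complex interpolation scale between $L^2$ and $D(L)$, since it is defined spectrally. Interpolating the bounded inclusions $L^2 \to L^2$ and $D(L)\to H^2$ then yields $D(L^{\nicefrac{s}{2}}) \hookrightarrow [L^2,H^2]_{\nicefrac{s}{2}} = H^s(\mathcal{D})$. The last identification uses the extension-operator property of Lipschitz domains.

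Finally, the Sobolev--Hölder embedding $H^{s}(\mathcal{D}) \hookrightarrow C^{0,\varsigma}(\barD)$, recalled at the start of this appendix, applies with $s = \min(\alpha,2)$. Here $s > \nicefrac{d}{2}$ because $\alpha > \nicefrac{d}{2}$ and $d\le 3 < 4$. Composing the embeddings gives $\|f\|_{C^{0,\varsigma}} \le C\|L^{\nicefrac{\alpha}{2}}f\|_{L^2} = C\tau^{-1}\|f\|_\mC$, which is the constant $C_E$. Point evaluation is then bounded by $\sup|f| \le C_E\|f\|_\mC$.
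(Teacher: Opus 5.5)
Your proof is correct and follows the paper's route. Part (a) uses Weyl's law, part (b) compares with $\sum_j j^{-2s/d}$, and part (c) reduces via $D(L^{\nicefrac{\alpha}{2}}) \hookrightarrow D(L^{\nicefrac{\min(\alpha,2)}{2}})$ and then applies the Sobolev--H\"older embedding. The only difference is that the paper cites Proposition~B.2 of \citet{xiong2022} for $D(L^{\sigma/2}) \hookrightarrow H^{\sigma}(\mathcal{D})$, $0 \le \sigma \le 2$, whereas you derive it from Neumann $H^2$-regularity plus complex interpolation of the spectral scale, which is a sound unpacking of that citation. One small citation slip: the $C^{1,1}$ case of the regularity is Grisvard's Theorem~2.2.2.5 in Chapter~2, and only the convex case, Theorem~3.2.1.3, is in Chapter~3.
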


\begin{proof}
(a) is Weyl's law for the Neumann Laplacian, which holds under Assumption~\ref{ass:A} \citep[Remark~2]{xiong2022}. 
(b) follows from (a) by comparison with $\sum_j j^{-2s/d}$.
For (c), under Assumption~\ref{ass:A}, Proposition~B.2 in \citet{xiong2022} gives $D(L^{\sigma/2}) \hookrightarrow H^{\sigma}(\mathcal{D})$ for $0 \le \sigma \le 2$. Taking $\sigma = \alpha$ when $\alpha \le 2$, and $D(L^{\nicefrac{\alpha}{2}}) \subset D(L) \hookrightarrow H^2(\mathcal{D})$ when $\alpha > 2$ (domains of powers are nested decreasingly), yields the first inclusion. The second is the Sobolev embedding
$H^{s}(\mathcal{D}) \hookrightarrow
C^{0,\varsigma}(\barD)$ for $0 < \varsigma \le s - \nicefrac{d}{2}$, $\varsigma < 1$ (\citealp[Thm.~4.12]{AdamsFournier2003}; see also \citealp[Sec.~3.1]{CoxKirchner2020}), applied with $s = \min(\alpha, 2)$.
The norm bound defines $C_E$, and boundedness of point evaluation is immediate.
\end{proof}

\begin{proof}[Proof of Proposition~\ref{prop:timeavg}]

To simplify the notation write $f := \gamma_0 X \in L^2(\mathcal{D})$, so that the drift is $-A Y_t + f$. We let $\rho = D\kappa^2$ denote the relaxation rate of Section~\ref{sec:physical}, while $\lambda_j$ are the eigenvalues of $L$. Note that $a_j := D\lambda_j \ge D\kappa^2 = \rho$
for all $j$.
Since a cylindrical Wiener process has independent standard Brownian coordinates $w_j$ in any orthonormal basis, in the eigenbasis $\{e_j\}$
the dynamics decouple into 
$$
\mathrm{d}y_j = \bigl(-a_j  y_j + f_j\bigr)\mathrm{d}t + \mathrm{d}w_j, \qquad j \ge 1,
$$
whose stationary solution is the invariant measure of the associated OU semigroup \citep[Theorems~11.17 and~11.20]{DaPratoZabczyk2014},
taken on the space $H^{-s}(\mathcal{D}) := D(L^{-s/2})$ (with $\|g\|_{H^{-s}(\mathcal{D})}^2 = \sum_j \lambda_j^{-s} g_j^2$). The $y_j$ are independent stationary OU processes, each with stationary law $N\bigl(f_j/a_j,  (2a_j)^{-1}\bigr)$ and covariance $C_j(u) = (2a_j)^{-1} e^{-a_j |u|}$. The assembled field $Y_t := \sum_j y_j(t) e_j$ is then, for every $s > \nicefrac{d}{2} - 1$, a well-defined random element of 
$H^{-s}(\mathcal{D})$, since in stationarity, 
$$
\E\|Y_t\|_{H^{-s}(\mathcal{D})}^2 = \sum_j \lambda_j^{-s}\bigl[(2a_j)^{-1} + (f_j/a_j)^2\bigr] < \infty,
$$
by Lemma~\ref{lem:embed}(b).
Its coordinates are recovered as the duality pairings $y_j(t) = \langle Y_t, e_j\rangle$, which are well defined since $e_j \in D(L^{s/2})$. 

Each coordinate average $\bar y_j = T^{-1}\int_0^T y_j(t)\,\mathrm{d}t$ is Gaussian with $\E[\bar y_j] = f_j/a_j$, so the time average
$\bar Y_T := \sum_j \bar y_j e_j$ has mean $\sum_j (f_j/a_j) e_j = A^{-1}f$. This mean lies in $L^2(\mathcal{D})$ because $\sum_j (f_j/a_j)^2 \le \rho^{-2}\|f\|_{L^2}^2 < \infty$. Further, by direct calculation,
$$
\Var(\bar y_j) = \frac{1}{T^2}\int_0^T\!\!\int_0^T C_j(t - s)\,\mathrm{d}t\,\mathrm{d}s
= \frac{1}{a_j^2 T}\bigl(1 + r_j\bigr),
\qquad
r_j := -\frac{1 - e^{-a_j T}}{a_j T}.
$$
Hence, $0 \le \Var(\bar y_j) \le (a_j^2 T)^{-1}$, so $\sum_j \Var(\bar y_j) \le (D^2 T)^{-1}\sum_j \lambda_j^{-2} < \infty$ for $d \le 3$ by Lemma~\ref{lem:embed}(b). Therefore, $\bar Y_T = \sum_j \bar y_j e_j$ converges in $L^2(\Omega; L^2(\mathcal{D}))$ and is an $L^2(\mathcal{D})$-valued
Gaussian variable with mean $A^{-1}f$.

Finally, as $\bar y_j$ are independent across $j$, the covariance operator of $\sqrt{T}(\bar Y_T - A^{-1}f)$ is diagonal in the eigenbasis $\{e_j\}$ and has corresponding eigenvalues $\{a_j^{-2}(1 + r_j)\}$. It thus equals $A^{-2}(I + R_T)$ where $R_T$ is diagonal in $\{e_j\}$ with eigenvalues $\{r_j\}$. Therefore, $\|R_T\| = \sup_j |r_j| \le (\rho T)^{-1}$, proving the stated bound.
In particular, $\bar Y_T$ is Gaussian with mean $A^{-1}f = (\nicefrac{\gamma_0}{D})L^{-1}X$ and covariance operator $T^{-1}A^{-2}(I + R_T) = \tau_T^{-2}L^{-2}(I + R_T)$, where $\tau_T = D\sqrt{T}$. The forced model \eqref{eq:forced-spde} with parameters $(\alpha, \tau, \beta) = (2, \tau_T, \beta_T)$ has mean $(\nicefrac{\beta_T}{\tau_T})L^{-1}X$ and covariance operator $\tau_T^{-2}L^{-2}$, so matching the mean gives the identification $\nicefrac{\beta_T}{\tau_T} = \nicefrac{\gamma_0}{D}$, i.e., $\beta_T = \gamma_0\sqrt{T}$, with the covariance matched up to
the factor $I + R_T$. The standardised coefficient is $\beta_{\mathrm{fc}}^{\star} = \nicefrac{\beta_T}{(\tau_T\kappa^{2})}
= \nicefrac{\gamma_0}{(D\kappa^{2})} = \nicefrac{\gamma_0}{\rho}$, which does not depend on $T$. 
The leading term $A^{-2} = D^{-2}L^{-2}$ is trace class for $d \le 3$, since $\sum_j \lambda_j^{-2} < \infty$ by Lemma~\ref{lem:embed}(b) (with $s = 2 > \nicefrac{d}{2}$). By \citet[Example~3.8.13(iii)]{Bogachev1998}, a sequence of centred Gaussian measures on a separable Hilbert space with covariance operators $K_n$ converges weakly to the centred Gaussian measure with covariance operator $K$ if and only if $K_n^{1/2} \to K^{1/2}$ in the Hilbert--Schmidt norm. Here $\Sigma_T = A^{-2}(I + R_T)$ and $(A^{-2})^{1/2} = A^{-1}$ are diagonal in $\{e_j\}$, and as
$|\sqrt{1 + r_j} - 1| \le |r_j| \le (\rho T)^{-1}$, 
$$
\bigl\|\Sigma_T^{1/2} - A^{-1}\bigr\|_{\mathrm{HS}}^2
= \sum_j a_j^{-2}\bigl(\sqrt{1 + r_j} - 1\bigr)^2
\le (\rho T)^{-2}\operatorname{tr}\bigl(A^{-2}\bigr)
\longrightarrow 0,
$$
so $\sqrt{T}(\bar Y_T - A^{-1}f) \xrightarrow{d} N(0, A^{-2})$. The limit $N(0, D^{-2}L^{-2})$ is precisely the law of the Whittle--Mat\'ern
field with $\alpha = 2$ and $\tau = D$.
\end{proof}

\subsection{Proofs for Section~\ref{sec:properties}}

\begin{proof}[Proof of Proposition~\ref{prop:rep}]

Since $\beta_{\mathrm{fc}}X$ lies in $L^2(\mathcal{D})$, the right-hand side of \eqref{eq:hybrid-spde} has the form $g + \W$ with $g = \beta_{\mathrm{fc}}X \in L^2(\mathcal{D})$. Viewed as an equation for $\tau(Y - \beta_{\mathrm{pw}}X)$, it falls under \citet[Remark~2.4]{BolinKirchnerKovacs2020}, which yields the unique $L^2(\mathcal{D})$-valued solution $\tau(Y - \beta_{\mathrm{pw}}X) = L^{-\nicefrac{\alpha}{2}}\bigl(\beta_{\mathrm{fc}}X + \W\bigr)$.
With $\mathcal{S} = \tau^{-1}L^{-\nicefrac{\alpha}{2}}$ this is the additive representation \eqref{eq:additive}, $Y = \beta_{\mathrm{pw}}X + \beta_{\mathrm{fc}}\mathcal{S}X + U$, in which $U = \mathcal{S}\W = \tau^{-1}\sum_j \lambda_j^{-\nicefrac{\alpha}{2}}\xi_j e_j$ (with $\xi_j = \W(e_j)$ i.i.d.\ $N(0,1)$) is the Whittle--Mat\'ern field of \citet[Section 3]{xiong2022}, with covariance operator $\mC = \tau^{-2}L^{-\alpha}$, trace class for $\alpha > \nicefrac{d}{2}$ by Lemma~\ref{lem:embed}(b).
As all eigenvalues $\tau^{-2}\lambda_j^{-\alpha}$ are positive its unique positive square root is $\mC^{1/2} = \tau^{-1} L^{-\nicefrac{\alpha}{2}}$, with range $D(L^{\nicefrac{\alpha}{2}})$. So $\HH = \mC^{1/2}(L^2(\mathcal{D})) = D(L^{\nicefrac{\alpha}{2}})$ and
$$
\|f\|_\mC = \|\mC^{-1/2} f\|_{L^2}
= \tau \|L^{\nicefrac{\alpha}{2}} f\|_{L^2}
= \tau\Bigl(\sum_j \lambda_j^{\alpha} f_j^2\Bigr)^{\nicefrac1{2}}.
$$

The space $\HH$ is thus the Cameron--Martin space of $U$ \citep[\S~2.4]{Bogachev1998}, which coincides with the RKHS of its covariance kernel $\varrho$ \citep{vanderVaartVanZanten2008}, and in particular $\varrho(\cdot, s) \in \HH$ and $(\varrho(\cdot, s), f)_\mC = f(s)$ for every $f \in \HH$ and $s \in \barD$.
By Lemma~\ref{lem:embed}(c), $\HH = D(L^{\nicefrac{\alpha}{2}}) \hookrightarrow C^{0,\varsigma}(\barD)$ for every $0 < \varsigma \le \min(\alpha,2) - \nicefrac{d}{2}$ with $\varsigma < 1$, and $\|f\|_{C^{0,\varsigma}(\barD)} \le C_E\|f\|_\mC$ on $\HH$. Since $\varrho(\cdot, s) \in \HH$ for every $s$, the reproducing property and this bound give $\E|U(s) - U(t)|^2 = \|\varrho(\cdot, s) - \varrho(\cdot, t)\|_\mC^2 \le C_E^2 |s - t|^{2\varsigma}$. Hence, $\varrho$ is continuous on $\barD^2$ and, by Kolmogorov's continuity theorem, $U$ admits a modification with $\varsigma'$-H\"older continuous sample paths for every $\varsigma' < \varsigma$, hence for every $\varsigma' < \min\{\min(\alpha, 2) - \nicefrac{d}{2},  1\}$. We work with this modification throughout. 
In particular $\widetilde Y_i = Y(s_i) + \epsilon_i$ is well defined, with $Y$ continuous since $\mathcal{S}X\in\HH$ and also $X\in\HH$ in the hybrid case by assumption.
Finally, since $\mathcal{S} = \tau^{-1}L^{-\nicefrac{\alpha}{2}} = \mC^{1/2}$, it is the canonical isometric isomorphism from $L^2(\mathcal{D})$ onto
$(\HH, \|\cdot\|_\mC) = \mC^{1/2}(L^2(\mathcal{D}))$ and every $f \in \HH$ equals $\mathcal{S}(\tau L^{\nicefrac{\alpha}{2}} f)$.
\end{proof}

\begin{proof}[Proof of Theorem~\ref{thm:geometry}]
Since $X \in \HH$, $\|X\|_\mC^2 = \tau^2\sum_j\lambda_j^\alpha x_j^2 < \infty$. As $L^{\nicefrac{\alpha}{2}}\mathcal{S}X = \tau^{-1}X$, 
\begin{align*}
(X, \mathcal{S}X)_\mC
   &= \tau^2(L^{\nicefrac{\alpha}{2}}X, L^{\nicefrac{\alpha}{2}}\mathcal{S}X)_{L^2}
   = \tau^2(L^{\nicefrac{\alpha}{2}}X, \tau^{-1}X)_{L^2} 
   = \tau\sum_j \lambda_j^{\nicefrac{\alpha}{2}} x_j^2 ,
\end{align*}
where the last sum is finite by the Cauchy--Schwarz inequality.
Together with $\|\mathcal{S}X\|_\mC^2 = \|X\|_{L^2}^2 = \sum_j x_j^2$ from the isometry of Proposition~\ref{prop:rep}, this is \eqref{eq:gram-spectral}.
Let $\theta(X)$ be the Cameron--Martin angle between $X$ and $\mathcal{S}X$, defined by $\cos\theta(X) = (X,\mathcal{S}X)_\mC/(\|X\|_\mC \|\mathcal{S}X\|_\mC)$, so that $\vartheta(X) = \sin^2\theta(X) = 1 - \cos^2\theta(X)$. Dividing the numerator and denominator of $\cos^2\theta(X)$ by $\|X\|_{L^2}^4$ and writing $w_j = x_j^2/\|X\|_{L^2}^2$,
\begin{align*}
\cos^2\theta(X)
&= \frac{\bigl(\tau\sum_j\lambda_j^{\nicefrac{\alpha}{2}}x_j^2\bigr)^2}
       {\tau^2\sum_j\lambda_j^{\alpha}x_j^2 \cdot \sum_j x_j^2}
= \frac{\bigl(\sum_j w_j \lambda_j^{\nicefrac{\alpha}{2}}\bigr)^2}
       {\sum_j w_j \lambda_j^{\alpha}}
= \frac{(\E_w[\lambda^{\nicefrac{\alpha}{2}}])^2}{\E_w[\lambda^{\alpha}]},\\
\vartheta(X) &= 1 - \cos^2\theta(X)
= \frac{\E_w[\lambda^{\alpha}] - (\E_w[\lambda^{\nicefrac{\alpha}{2}}])^2}{\E_w[\lambda^{\alpha}]}
= \frac{\Var_w(\lambda^{\nicefrac{\alpha}{2}})}{\E_w[\lambda^{\alpha}]} \ge 0.
\end{align*}
Moreover, $\vartheta(X) < 1$, since $\E_w[\lambda^{\nicefrac{\alpha}{2}}] \ge \lambda_1^{\nicefrac{\alpha}{2}} = \kappa^{\alpha} > 0$ and $\E_w[\lambda^{\alpha}] < \infty$ for $X \in \HH$.

To prove (i), note that $\bm{G}$ is the Gram matrix of $X, \mathcal{S}X$ and hence positive semi-definite, and singular if and only if the vectors are linearly dependent. Now, 
$
\vartheta(X) = 0 \Leftrightarrow \Var_w(\lambda^{\nicefrac{\alpha}{2}}) = 0 \Leftrightarrow \lambda_j^{\nicefrac{\alpha}{2}} \text{is $w$-a.s.\ constant},
$
which holds if and only if there is $c > 0$ with $\lambda_j = c$ for every $j$ with $x_j \neq 0$. This means that $X \in \ker(L - c)$, i.e., that $X$ is an eigenfunction of $L$. 

To prove (ii), write $a = \|X\|_\mC^2$, $b = \|\mathcal{S}X\|_\mC^2$,
$c = (X, \mathcal{S}X)_\mC$, so that $\bm{G} = \bigl(\begin{smallmatrix} a & c \\ c & b\end{smallmatrix}\bigr)$ and $c^2 = ab\cos^2\theta$. Then
$\det \bm{G} = ab - c^2 = ab (1 - \cos^2\theta) = \|X\|_\mC^2  \|X\|_{L^2}^2  \vartheta(X)$, using $b = \|X\|_{L^2}^2$. For $\vartheta(X) > 0$ the inverse is 
$\bm{G}^{-1} = 
(\det \bm{G})^{-1}\bigl(
   \begin{smallmatrix} b & -c \\ -c & a\end{smallmatrix}\bigr)$,
so
$$
(\bm{G}^{-1})_{11} = \frac{b}{ab \vartheta(X)}
= \frac{1}{\|X\|_\mC^2 \vartheta(X)},
\qquad
(\bm{G}^{-1})_{22} = \frac{a}{ab \vartheta(X)}
= \frac{1}{\|X\|_{L^2}^2 \vartheta(X)}.
$$
\end{proof}

\begin{proof}[Proof of Proposition~\ref{prop:scalefree}]
As $\beta_{\mathrm{fc}}\mathcal{S}X
= \beta_{\mathrm{fc}} \tau^{-1}L^{-\nicefrac{\alpha}{2}}X
= \bigl(\nicefrac{\beta_{\mathrm{fc}}}{(\tau\kappa^{\alpha})}\bigr)
  \kappa^{\alpha}L^{-\nicefrac{\alpha}{2}}X
= \beta_{\mathrm{fc}}^{\star}\widetilde{\mathcal{S}}X$, the mean
functions, and hence the models, coincide. For fixed
$(\kappa, \tau, \alpha)$ the map
$(\beta_{\mathrm{pw}}, \beta_{\mathrm{fc}}) \mapsto
(\beta_{\mathrm{pw}}, \beta_{\mathrm{fc}}^{\star})$ is a linear bijection of $\R^2$, so the likelihood at corresponding parameter values is identical, and maximised likelihoods, likelihood-ratio statistics, and fitted predictive distributions are invariant, as is $\vartheta(X)$, which by \eqref{eq:separation} depends only on $(X, L, \alpha)$.
To prove (ii), note that in the eigenbasis, $(\widetilde{\mathcal{S}}X)_j = \kappa^{\alpha}\lambda_j^{-\nicefrac{\alpha}{2}}x_j = (\nicefrac{\kappa^2}{\lambda_j})^{\nicefrac{\alpha}{2}}x_j = (1 + \nicefrac{\omega_j}{\kappa^2})^{-\nicefrac{\alpha}{2}}x_j$ with $\omega_j = \lambda_j - \kappa^2$, so
$$
m_j = \beta_{\mathrm{pw}}x_j + \beta_{\mathrm{fc}}^{\star}(1 + \nicefrac{\omega_j}{\kappa^2})^{-\nicefrac{\alpha}{2}}x_j
= \beta(\omega_j)  x_j.
$$
The map $\omega \mapsto (1 + \nicefrac{\omega}{\kappa^2})^{-\nicefrac{\alpha}{2}}$ is strictly decreasing from $1$ at $\omega = 0$ to $0$ as
$\omega \to \infty$, so $\beta(\omega)$ is monotone between $\beta(0) = \beta_{\mathrm{pw}} + \beta_{\mathrm{fc}}^{\star}$ and
$\beta(\infty) = \beta_{\mathrm{pw}}$ (strictly when $\beta_{\mathrm{fc}}^{\star} \neq 0$). Solving
$(1 + \nicefrac{\omega}{\kappa^2})^{-\nicefrac{\alpha}{2}} = 1/2$ gives
$1 + \nicefrac{\omega}{\kappa^2} = 2^{\nicefrac{2}{\alpha}}$, i.e.\
$\omega_{\nicefrac1{2}} = \kappa^2(2^{\nicefrac{2}{\alpha}} - 1)$.

For (iii), by Parseval,
$$
\|\widetilde{\mathcal{S}}X\|_{L^2}^2
= \sum_j (1 + \nicefrac{\omega_j}{\kappa^2})^{-\alpha} x_j^2
= \|X\|_{L^2}^2 \E_w\bigl[(1 + \nicefrac{\omega}{\kappa^2})^{-\alpha}\bigr]
= \|X\|_{L^2}^2 g_X^2 .
$$
Each weight $(1 + \nicefrac{\omega_j}{\kappa^2})^{-\alpha}$ lies in $(0, 1]$, so $g_X \in (0, 1]$, with $g_X = 1$ if and only if 
$(1 + \nicefrac{\omega_j}{\kappa^2})^{-\alpha} = 1$ for every $j$ with $x_j \neq 0$, i.e.\ iff all spectral mass of $X$ sits at $\omega = 0$, whose eigenspace is the constants since $\mathcal{D}$ is connected. 
\end{proof}

\begin{lemma}
\label{lem:equiv}
Let $P$ and $P'$ be the laws of the continuous field $Y$ on $\barD$ under two parameter values, and let $P^{(n)}, P'^{(n)}$ be the corresponding joint laws of $(\widetilde Y_1, \ldots, \widetilde Y_n)$ under Assumption~\ref{ass:design}, with $P^{(\infty)}, P'^{(\infty)}$ the laws of the full observation sequences. If $P \sim P'$ (mutual absolute continuity), then $P^{(n)} \sim P'^{(n)}$ for every $n$, including $n = \infty$, and there exists no estimator sequence
$T_n = T_n(\widetilde Y_1, \ldots, \widetilde Y_n)$ that is consistent in probability for a parameter taking different values under $P$ and $P'$.
\end{lemma}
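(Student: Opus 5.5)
The plan is to realise the observation sequence as a measurable function of the field $Y$ and an independent noise sequence. Equivalence then passes to the observations by a product-measure argument, and the impossibility of consistent estimation follows from a standard contiguity argument.

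\emph{Step 1 (transfer of equivalence).} Let $\epsilon = (\epsilon_i)_{i\ge1}$ and let $\nu$ be its law, a product of $N(0,\sigma_\epsilon^2)$ laws on $\R^{\mathbb{N}}$. By Assumption~\ref{ass:design}, $\epsilon$ is independent of $\W$, hence of $Y$, so the joint law of $(Y,\epsilon)$ on $C(\barD)\times\R^{\mathbb{N}}$ is $P\otimes\nu$.

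If the two parameter values also carry different nuggets, one would need $\nu\sim\nu'$. This fails for infinite products with different variances by Kakutani's theorem. The lemma is therefore read with $\sigma_\epsilon^2$ common to both parameter values, as in Proposition~\ref{prop:frontier}, where $\sigma_\epsilon^2$ is not among the parameters being compared.

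Under this reading, $P\sim P'$ gives $P\otimes\nu\sim P'\otimes\nu$, since the density is $\frac{dP}{dP'}(y)\cdot 1$. The map
\[
\Phi:(y,e)\mapsto \bigl(y(s_i)+e_i\bigr)_{i\ge1}
\]
is measurable, because point evaluations on $C(\barD)$ are continuous. Mutual absolute continuity is preserved under image measures: if $P'^{(\infty)}(B)=0$, then $(P'\otimes\nu)(\Phi^{-1}B)=0$, hence $(P\otimes\nu)(\Phi^{-1}B)=0$, and symmetrically. Projecting onto the first $n$ coordinates gives $P^{(n)}\sim P'^{(n)}$ for every finite $n$ by the same argument.

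\emph{Step 2 (no consistent estimator).} Suppose the parameter takes values $\theta\neq\theta'$ under the two laws, and that $T_n$ is consistent under both. Put $\delta=|\theta-\theta'|/3$ and $A_n=\{|T_n-\theta|<\delta\}$. Consistency gives $P^{(\infty)}(A_n)\to1$ and $P'^{(\infty)}(A_n)\to0$, since $A_n$ is disjoint from $\{|T_n-\theta'|<\delta\}$.

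Pass to a subsequence with $P'^{(\infty)}(A_{n_k})\le 2^{-k}$ and set $A=\limsup_k A_{n_k}$. By Borel--Cantelli, $P'^{(\infty)}(A)=0$. On the other hand, $P^{(\infty)}(A)\ge\limsup_k P^{(\infty)}(A_{n_k})=1$ by Fatou applied to complements. This contradicts $P^{(\infty)}\sim P'^{(\infty)}$.

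The only delicate point is the nugget issue in Step 1, namely making explicit that $\sigma_\epsilon^2$ is shared by the two laws. Everything else is routine measure theory.
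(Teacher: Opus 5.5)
Your proposal is correct and follows the paper's proof. The paper uses the same three steps: the product-measure density $\frac{dP}{dP'}(y)$, the pushforward under the measurable map $(y,e)\mapsto (y(s_i)+e_i)_{i\ge 1}$, and projection onto the first $n$ coordinates. The paper also uses a single noise law for both parameter values, so your shared-nugget reading is the intended one.

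Your Borel--Cantelli subsequence argument in Step 2 spells out a step the paper compresses into the remark that equivalent laws share null sets. That step is needed to turn consistency in probability into a contradiction.
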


\begin{proof}
Let $P \sim P'$ on the path space $E = C(\barD)$ and let $P^\epsilon$ denote the law of the noise sequence $(\epsilon_i)_{i \ge 1}$ on $\R^{\mathbb{N}}$. We claim $P \otimes P^\epsilon \sim P' \otimes P^\epsilon$ with density $\frac{d(P \otimes P^\epsilon)}{d(P' \otimes P^\epsilon)}(y, e) = \frac{dP}{dP'}(y)$. Indeed, for measurable $A \subset E \times \R^{\mathbb{N}}$, the Fubini--Tonelli theorem and then the substitution $dP = \frac{dP}{dP'}\,dP'$ (the density exists by the hypothesis $P \sim P'$) 
give
$$
(P \otimes P^\epsilon)(A)
= \int \Bigl(\int \mathbf{1}_A(y, e)\, dP(y)\Bigr) dP^\epsilon(e)
= \int\!\!\int \mathbf{1}_A(y, e)\, \frac{dP}{dP'}(y)\, dP'(y)\, dP^\epsilon(e),
$$
which is the integral of $\mathbf{1}_A \cdot \frac{dP}{dP'}$ against $P' \otimes P^\epsilon$; the density is positive $P' \otimes P^\epsilon$-a.s.\ because $\frac{dP}{dP'} > 0$ $P'$-a.s., so the measures are mutually absolutely continuous.

The observation sequence $(\widetilde Y_i)_{i \ge 1} = T(Y, (\epsilon_i)_i)$, with $T(y, e) = (y(s_i) + e_i)_{i \ge 1}$,  is measurable from
$E \times \R^{\mathbb{N}}$ to $\R^{\mathbb{N}}$ as each coordinate is the sum of a continuous evaluation functional and a coordinate
projection. If $\mu \ll \nu$ and $T$ is measurable, then $T_\#\mu := \mu \circ T^{-1}$ and $T_\#\nu := \nu \circ T^{-1}$  satisfy $T_\#\mu \ll T_\#\nu$. Applying this in both directions to the equivalent product measures  shows that the infinite observation laws $P^{(\infty)} = T_\#(P \otimes P^\epsilon)$ and $P'^{(\infty)}$ are equivalent, and restricting to the first $n$ coordinates gives $P^{(n)} \sim P'^{(n)}$ for every $n$.
Finally, as $P^{(\infty)} \sim P'^{(\infty)}$ have the same null sets, no estimator sequence $T_n = T_n(\widetilde Y_1, \ldots, \widetilde Y_n)$ can converge almost surely, or in probability, to different limits under $P^{(\infty)}$ and $P'^{(\infty)}$, and none is consistent for a parameter taking different values under $P$ and $P'$.
\end{proof}

Thus, equivalence of Gaussian measures of the continuous fields transfers to the observation sequences. The results that follow use the following finite-sample notation. For a continuous function $f$ on $\barD$ we write $\bm{f}_n = (f(s_1), \ldots, f(s_n))^\top \in \R^n$ for its evaluation
vector at the first $n$ observation locations, and $\bm{\Sigma}_n = \bm{K}_n + \sigma_\epsilon^2 \bm{I}_n$ for the covariance matrix of the observations $\widetilde Y_i = Y(s_i) + \epsilon_i$, where $\bm{K}_n = [\varrho(s_i, s_j)]_{i,j \le n}$.

\begin{lemma}\label{lem:qform}
Let Assumptions~\ref{ass:A} and~\ref{ass:design} hold and let $f, g \in \HH$. Then the symmetric bilinear forms $Q_n(f, g) := \bm{f}_n^\top \bm{\Sigma}_n^{-1} \bm{g}_n$ satisfy $Q_n(f, g) \to (f, g)_\mC$ as $n \to \infty$.
\end{lemma}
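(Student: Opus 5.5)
The plan is to prove the diagonal case $Q_n(f,f)\to\|f\|_\mC^2$ and then recover the bilinear statement by polarisation. Polarisation applies because $Q_n$ and $(\cdot,\cdot)_\mC$ are both symmetric bilinear forms, so $Q_n(f,g)=\tfrac14\bigl(Q_n(f+g,f+g)-Q_n(f-g,f-g)\bigr)$, and $f\pm g\in\HH$.

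\textbf{Variational representation.} The key tool is the representer identity for the quadratic form of the noisy kernel matrix:
$$
\bm{f}_n^\top(\bm{K}_n+\sigma_\epsilon^2\bm{I}_n)^{-1}\bm{f}_n
=\min_{h\in\HH}\Bigl\{\|h\|_\mC^2+\sigma_\epsilon^{-2}\sum_{i\le n}\bigl(f(s_i)-h(s_i)\bigr)^2\Bigr\}.
$$
I would prove it as follows. Point evaluation is bounded on $\HH$ (Lemma~\ref{lem:embed}) and $\HH$ is the RKHS of $\varrho$ (Proposition~\ref{prop:rep}). The minimiser therefore lies in $\spn\{\varrho(\cdot,s_i)\}_{i\le n}$. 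Writing $h=\sum_i c_i\varrho(\cdot,s_i)$ reduces the objective to the finite-dimensional quadratic $\bm c^\top\bm K_n\bm c+\sigma_\epsilon^{-2}\|\bm f_n-\bm K_n\bm c\|^2$. Minimising it by completing the square gives the stated value.

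\textbf{Upper bound and monotonicity.} Two facts follow directly from this representation. Taking $h=f$ gives $Q_n(f,f)\le\|f\|_\mC^2$ for every $n$. Adding an observation adds a nonnegative term to every objective, so $Q_n(f,f)$ is nondecreasing in $n$. Hence $Q_n(f,f)$ converges to some limit $q\le\|f\|_\mC^2$.

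\textbf{Lower bound.} Let $h_n$ be the minimiser at stage $n$. Since $\|h_n\|_\mC^2\le Q_n(f,f)\le\|f\|_\mC^2$, the sequence is bounded in $\HH$. Take a weakly convergent subsequence $h_{n_k}\rightharpoonup h$. By the reproducing property $h(s)=(\varrho(\cdot,s),h)_\mC$, weak convergence implies pointwise convergence $h_{n_k}(s)\to h(s)$ for every $s\in\barD$.

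Now fix $N$. For $n_k\ge N$ we have
$$
\sum_{i\le N}\bigl(f-h_{n_k}\bigr)(s_i)^2\le\sigma_\epsilon^2 Q_{n_k}(f,f)\le\sigma_\epsilon^2\|f\|_\mC^2 .
$$
Letting $k\to\infty$ gives $\sum_{i\le N}(f-h)(s_i)^2\le\sigma_\epsilon^2\|f\|_\mC^2$ for all $N$.

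Suppose $h\ne f$. Then $f-h$ is continuous (Lemma~\ref{lem:embed}), so $|f-h|>c>0$ on some open ball $B$ meeting $\mathcal D$. By Assumption~\ref{ass:design}, $B$ contains infinitely many design points, so the partial sums diverge, a contradiction. Hence $h=f$. Weak lower semicontinuity of the norm then gives
$$
q=\lim_k Q_{n_k}(f,f)\ge\liminf_k\|h_{n_k}\|_\mC^2\ge\|f\|_\mC^2 .
$$
Combined with $q\le\|f\|_\mC^2$, this gives $q=\|f\|_\mC^2$.

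\textbf{Main obstacle.} I expect the delicate step to be identifying the weak limit $h$ with $f$. The minimisers depend on $n$, while the design condition controls only counts of points in balls, so the two indices must be decoupled. The plan above handles this by fixing the first $N$ points before passing to the limit. This needs only pointwise convergence from weak convergence in the RKHS, not uniform convergence or Arzel\`a--Ascoli. The continuity of $f-h$ from Lemma~\ref{lem:embed} is what turns a single point where $h\ne f$ into a ball containing infinitely many design points.
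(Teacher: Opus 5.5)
Your proposal is correct and follows essentially the same route as the paper: the variational identity $Q_n(f,f)=\min_{h\in\HH}J_n(h;f)$, monotonicity in $n$ together with the upper bound from taking $h=f$, a weak-compactness argument that identifies the limit of the minimisers with $f$ through the space-filling design and the reproducing property, weak lower semicontinuity of the norm, and finally polarisation. The only differences are cosmetic. You derive the representer identity directly (projection onto $\spn\{\varrho(\cdot,s_i)\}$ and completing the square) rather than citing Zhdanov--Kalnishkan. You also skip the paper's preliminary subsequence along which $\|h_n\|_\mC^2$ converges, which is unnecessary because $q\ge\liminf_k\|h_{n_k}\|_\mC^2$ holds along any subsequence.
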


\begin{proof}
By the reproducing property of Proposition~\ref{prop:rep}, writing $k_i = \varrho(\cdot, s_i) \in \HH$, we have $(k_i, h)_\mC = h(s_i)$ for all $h \in \HH$, and $(k_i, k_j)_\mC = \varrho(s_i, s_j) = (\bm{K}_n)_{ij}$. For every $f \in C(\barD)$ and every $n$ we have
\begin{equation}\label{eq:variational-app}
Q_n(f, f) = \bm{f}_n^\top \bm{\Sigma}_n^{-1} \bm{f}_n
= \min_{h \in \HH} J_n(h; f),
\end{equation}
for $J_n(h; f) := \|h\|_\mC^2 + \frac1{\sigma_\epsilon^{2}} \sum_{i=1}^n \bigl(f(s_i) - h(s_i)\bigr)^2$.
This is the identity of \citet[Theorem~1]{ZhdanovKalnishkan2013}, applied for each $n$ with kernel $\varrho$, whose RKHS is $(\HH, \|\cdot\|_\mC)$ by Proposition~\ref{prop:rep}, data $y_i = f(s_i)$, and regularisation parameter $a = \sigma_\epsilon^2$, after dividing both sides by $\sigma_\epsilon^2$. The minimum is attained by their Proposition~2.

For a fixed $h$, $J_{n+1}(h; f) \ge J_n(h; f)$ because the data-fit sum adds a non-negative term, and taking minima preserves the inequality, so
$Q_n(f,f)$ is non-decreasing in $n$ for the fixed nested design of Assumption~\ref{ass:design}. If $f \in \HH$, the choice $h = f$ in
\eqref{eq:variational-app} gives $Q_n(f, f) \le \|f\|_\mC^2$ for every $n$. Hence $Q_n(f,f) \uparrow Q_\infty(f) \le \|f\|_\mC^2$.
To find a lower bound, let $h_n \in \HH$ denote a minimiser in \eqref{eq:variational-app}. As $J_n(h_n; f) = Q_n(f,f) \le \|f\|_\mC^2$,  
\begin{equation}\label{eq:two-bounds-app}
\|h_n\|_\mC^2 \le \|f\|_\mC^2,
\qquad
\sum_{i \le n} \bigl(f - h_n\bigr)^2(s_i) \le \sigma_\epsilon^2  \|f\|_\mC^2.
\end{equation}
Take a subsequence along which $\|h_n\|_\mC^2 \to \liminf_n \|h_n\|_\mC^2$. Being bounded in $\HH$ by \eqref{eq:two-bounds-app}, it has a further subsequence $\{h_{n_k}\}$ converging weakly to some $h^* \in \HH$, and the reproducing property gives the pointwise convergence
$h_{n_k}(s) = (\varrho(\cdot, s), h_{n_k})_\mC \to (\varrho(\cdot, s), h^*)_\mC = h^*(s)$ for every $s \in \barD$. We
claim that $h^* = f$. If not, then since $f - h^*$ is continuous, there are $\varepsilon > 0$ and an open ball $B$ with
$B \cap \mathcal{D} \neq \emptyset$ such that $|f - h^*| \ge \varepsilon$ on $B \cap \barD$. By Assumption~\ref{ass:design}, $B$ contains design points $s_{i_1}, s_{i_2}, \ldots$, and for any fixed $N$, the pointwise convergence at the finitely many sites $s_{i_1}, \ldots, s_{i_N}$
gives $(f - h_{n_k})^2(s_{i_j}) \ge \nicefrac{\varepsilon^2}{4}$ for every $j \le N$ once $k$ is large enough that also $n_k \ge i_N$,
whence $\sum_{i \le n_k} (f - h_{n_k})^2(s_i) \ge N\tfrac{\varepsilon^2}{4}$, which contradicts \eqref{eq:two-bounds-app} once
$N > 4\sigma_\epsilon^{2}\|f\|_\mC^2/\varepsilon^2$. Hence $h^* = f$, and weak lower semicontinuity of the norm yields
$\liminf_n \|h_n\|_\mC^2 = \lim_k \|h_{n_k}\|_\mC^2 \ge \|h^*\|_\mC^2 = \|f\|_\mC^2$. Therefore
$Q_\infty(f) = \lim_n J_n(h_n; f) \ge \liminf_n \|h_n\|_\mC^2 \ge \|f\|_\mC^2$. Combining this with the fact that $Q_n(f, f) \le \|f\|_\mC^2$ yields $Q_n(f, f) \to \|f\|_\mC^2$.

Finally, $Q_n$ is a symmetric bilinear form, so for $f, g \in \HH$, using the polarisation identity twice yields the final conclusion:
\begin{align*}
Q_n(f, g) &= \tfrac14\bigl[Q_n(f{+}g, f{+}g) - Q_n(f{-}g, f{-}g)\bigr] 
\rightarrow
\tfrac14\bigl[\|f{+}g\|_\mC^2 - \|f{-}g\|_\mC^2\bigr] = (f, g)_\mC.
\end{align*}
\end{proof}

\begin{lemma}\label{lem:qform-rough}
Let Assumptions~\ref{ass:A} and~\ref{ass:design} hold and let $f \in C(\barD)$ with $f \notin \HH$. Then (i) $Q_n(f, f) \to \infty$ as $n \to \infty$; and (ii) if moreover $f \in L^2(\mathcal{D}) \setminus \{0\}$ and
$g := \mathcal{S}f$, then
$$
Q_n(g,g) - \frac{Q_n(f, g)^2}{Q_n(f,f)} \longrightarrow \|f\|_{L^2}^2,
\qquad
Q_n(f,f) - \frac{Q_n(f, g)^2}{Q_n(g,g)} \longrightarrow \infty.
$$
\end{lemma}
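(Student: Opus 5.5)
The proof rests on the variational identity \eqref{eq:variational-app} from the proof of Lemma~\ref{lem:qform}. That identity only needs $f \in C(\barD)$, so it is available here. Write $h_n \in \HH$ for the minimiser of $J_n(\cdot\,;f)$.

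\emph{Part (i).} The monotonicity argument of Lemma~\ref{lem:qform} shows that $Q_n(f,f)$ is non-decreasing, so it converges to some $Q_\infty \in [0,\infty]$. Suppose $Q_\infty \le M < \infty$. Then, exactly as in \eqref{eq:two-bounds-app}, we get $\|h_n\|_\mC^2 \le M$ and $\sum_{i\le n}(f-h_n)^2(s_i) \le \sigma_\epsilon^2 M$. A subsequence converges weakly in $\HH$ to some $h^*$, and pointwise convergence follows from the reproducing property. The dense-design contradiction argument used there then gives $f = h^*$ on $\barD$, so $f \in \HH$, which contradicts the hypothesis. Hence $Q_n(f,f) \to \infty$.

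\emph{Part (ii).} Since $g = \mathcal{S}f \in \HH$, Lemma~\ref{lem:qform} together with the isometry of Proposition~\ref{prop:rep} gives $Q_n(g,g) \to \|g\|_\mC^2 = \|f\|_{L^2}^2$. Both claimed limits therefore follow once we show that $r_n := Q_n(f,g)^2/Q_n(f,f) \to 0$. Indeed, the first expression is then $Q_n(g,g) - r_n \to \|f\|_{L^2}^2$, and the second equals $Q_n(f,f)\bigl(1 - r_n/Q_n(g,g)\bigr) \to \infty$ by (i).

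To control $Q_n(f,g)$, write $h_n = \sum_{i\le n} a_i k_i$ with $\bm a = \bm\Sigma_n^{-1}\bm f_n$ and $k_i = \varrho(\cdot,s_i)$. This is the representer form of the kernel-ridge minimiser. The reproducing property then gives
\[
Q_n(f,g) = \bm a^\top \bm g_n = (h_n, g)_\mC,
\]
and hence $|Q_n(f,g)| \le \|h_n\|_\mC \|g\|_\mC$. Recall also that $\|h_n\|_\mC^2 \le Q_n(f,f)$.

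Fix a subsequence. If along a further subsequence $\|h_n\|_\mC^2 / Q_n(f,f) \to 0$, then $r_n \le \|g\|_\mC^2 \|h_n\|_\mC^2 / Q_n(f,f) \to 0$ and we are done. Otherwise $\|h_n\|_\mC^2 \ge c\,Q_n(f,f) \to \infty$ for some $c > 0$. In that case set $v_n = h_n/\|h_n\|_\mC$. It suffices to show $v_n \rightharpoonup 0$ weakly in $\HH$, because then $r_n \le (v_n, g)_\mC^2 \to 0$.

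\emph{Main obstacle: showing $v_n \rightharpoonup 0$.} Take a weakly convergent subsequence $v_n \rightharpoonup v^*$, which also converges pointwise by the reproducing property. Suppose $v^* \neq 0$. Then $|v^*| \ge 2\varepsilon$ on some ball $B \cap \barD$. Pick design points $s_{i_1},\dots,s_{i_N} \in B$. For large $n$ we have $|h_n(s_{i_j})| \ge \varepsilon \|h_n\|_\mC$, while $f$ is bounded, so $|f - h_n|(s_{i_j}) \ge \tfrac{\varepsilon}{2}\|h_n\|_\mC$ once $\|h_n\|_\mC$ is large. This yields
\[
Q_n(f,f) = J_n(h_n;f) \ge \frac{N\varepsilon^2}{4\sigma_\epsilon^2}\,\|h_n\|_\mC^2 \ge \frac{N\varepsilon^2 c}{4\sigma_\epsilon^2}\,Q_n(f,f),
\]
which is a contradiction once $N > 4\sigma_\epsilon^2/(c\varepsilon^2)$. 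Hence $v^* = 0$, so $v_n \rightharpoonup 0$. Since every subsequence has a further subsequence with $r_n \to 0$, we conclude $r_n \to 0$.

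The delicate point is the case split on $\|h_n\|_\mC^2/Q_n(f,f)$ and making the pointwise lower bound uniform over the $N$ chosen sites. Both are handled exactly as in the proof of Lemma~\ref{lem:qform}.
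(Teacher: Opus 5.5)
Your proof is correct. Part (i) is the paper's argument essentially verbatim, but part (ii) takes a different route. The paper never returns to the minimisers $h_n$. It writes the first Schur complement as $m_n = \min_a Q_n(g - af, g - af)$. This is non-decreasing in $n$, being a minimum of quadratics in $a$ each of which is non-decreasing in $n$ by \eqref{eq:variational-app}, and it is bounded above by $Q_n(g,g) \le \|f\|_{L^2}^2$. The paper gets the matching lower bound from $a_n = Q_n(f,g)/Q_n(f,f) \to 0$, which follows from Cauchy--Schwarz for $Q_n$ together with part (i), via $m_n = F_n(a_n) \ge F_{n_0}(a_n) \to Q_{n_0}(g,g)$ for each fixed $n_0$.

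You instead use the representer identity $Q_n(f,g) = (h_n, g)_\mC$ and prove $r_n \to 0$ directly by a second run of the design-density contradiction. The paper's route is shorter, needs no second compactness argument, and yields monotone convergence $m_n \uparrow \|f\|_{L^2}^2$. Yours is more informative: it effectively shows $h_n/Q_n(f,f)^{1/2} \rightharpoonup 0$ in $\HH$. That is, the normalised kernel-ridge fit of the rough covariate becomes asymptotically orthogonal to every fixed element of $\HH$, which is the mechanism behind the channel decoupling. Neither argument uses $g = \mathcal{S}f$ beyond $g \in \HH$.

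Your case split is unnecessary. Set $u_n := h_n/Q_n(f,f)^{1/2}$, so that $\|u_n\|_\mC \le 1$ and $r_n = (u_n, g)_\mC^2$. Then run your contradiction argument verbatim with $Q_n(f,f)^{1/2}$ in place of $\|h_n\|_\mC$. It only needs $Q_n(f,f) \to \infty$ to absorb $\|f\|_\infty$, and the bound $\sigma_\epsilon^{-2}\sum_{i \le n}(f - h_n)^2(s_i) \le Q_n(f,f)$. This gives $u_n \rightharpoonup 0$ in one step.
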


\begin{proof}
(i) The identity \eqref{eq:variational-app} holds for every $f \in C(\barD)$, and $Q_n(f,f)$ is non-decreasing in $n$, so $Q_n(f,f) \uparrow Q_\infty(f) \in [0, \infty]$. Suppose $Q_\infty(f) < \infty$ and let $h_n \in \HH$ denote the minimiser of $J_n(\cdot\,; f)$. As in \eqref{eq:two-bounds-app},
$\|h_n\|_\mC^2 \le Q_\infty(f)$ and $\sum_{i \le n}(f - h_n)^2(s_i) \le \sigma_\epsilon^2  Q_\infty(f)$ for every $n$. The sequence $(h_n)$ is then bounded in $\HH$ and has a weakly convergent subsequence $h_{n_k} \rightharpoonup h^* \in \HH$, and the argument in the proof of Lemma~\ref{lem:qform} (with $\sigma_\epsilon^2  Q_\infty(f)$ in place of $\sigma_\epsilon^2 \|f\|_\mC^2$ in the bound on the data-fit term)
shows that $h^* = f$. Thus $f = h^* \in \HH$, contradicting $f \notin \HH$. Hence $Q_\infty(f) = \infty$.

(ii) For $a \in \R$ let $F_n(a) := Q_n(g - a f,  g - a f) \ge 0$, a quadratic polynomial in $a$, and let $m_n := \min_a F_n(a) = Q_n(g,g) - Q_n(f,g)^2/Q_n(f,f)$ denote the Schur complement in the first limit of (ii), with minimiser $a_n = Q_n(f,g)/Q_n(f,f)$ (this is well defined for all large $n$ as $f$ is continuous and non-zero, the space-filling design eventually contains points with $f(s_i) \neq 0$, whence $Q_n(f,f) > 0$). For each fixed $a$ the function $g - af$ is continuous, so $Q_n(g-af, g-af)$ is non-decreasing in $n$ by \eqref{eq:variational-app}. Taking minima
preserves monotonicity, so $m_n \uparrow m_\infty$ with $m_\infty \le \lim_n F_n(0) = \lim_n Q_n(g,g) = \|g\|_\mC^2
= \|f\|_{L^2}^2$ by Lemma~\ref{lem:qform} and Proposition~\ref{prop:rep}. In particular $m_n \le \|f\|_{L^2}^2$ for every $n$. We now show $m_\infty \ge \|f\|_{L^2}^2$. Since $Q_n$ is a positive semidefinite bilinear form, the Cauchy--Schwarz inequality gives
$$
|a_n| = \frac{|Q_n(f,g)|}{Q_n(f,f)}
\le \Bigl(\frac{Q_n(g,g)}{Q_n(f,f)}\Bigr)^{\nicefrac1{2}}
\longrightarrow 0,
$$
as $Q_n(g,g) \to \|f\|_{L^2}^2 < \infty$ while $Q_n(f,f) \to \infty$ by part~(i). For every fixed $n_0$ and all $n \ge n_0$, monotonicity at fixed $a$ gives $m_n = F_n(a_n) \ge F_{n_0}(a_n)$, and letting $n \to \infty$, continuity of the quadratic $F_{n_0}$ together with $a_n \to 0$ yields $m_\infty \ge F_{n_0}(0) = Q_{n_0}(g,g)$. Letting $n_0 \to \infty$ gives $m_\infty \ge \|f\|_{L^2}^2$, proving the first limit.

For the second limit, write $\rho_n := Q_n(f,g)/\{Q_n(f,f)  Q_n(g,g)\}^{1/2} \in [-1, 1]$, so that $m_n = Q_n(g,g)(1 - \rho_n^2)$ and
$$
Q_n(f,f) - \frac{Q_n(f,g)^2}{Q_n(g,g)} = Q_n(f,f) (1 - \rho_n^2)
= m_n \frac{Q_n(f,f)}{Q_n(g,g)}
\longrightarrow \infty,
$$
since $m_n \to \|f\|_{L^2}^2 > 0$, $Q_n(g,g) \to \|f\|_{L^2}^2$, and $Q_n(f,f) \to \infty$ by part~(i).
\end{proof}

\begin{proof}[Proof of Proposition~\ref{prop:frontier}]

The two field laws are $N(m, \mC)$ and $N(m', \mC')$ on $L^2(\mathcal{D})$, with covariance operators $\mC = \tau^{-2}L^{-\alpha}$ and $\mC' = \tau'^{-2}L'^{-\alpha'}$ and means $m = \beta_{\mathrm{pw}}X + \beta_{\mathrm{fc}}\mathcal{S}X$ and $m' = \beta'_{\mathrm{pw}}X + \beta'_{\mathrm{fc}}\mathcal{S}'X$, where $L = \kappa^2 - \Delta$, $L' = \kappa'^2 - \Delta$, $\mathcal{S} = \tau^{-1}L^{-\nicefrac{\alpha}{2}}$, and $\mathcal{S}' = \tau'^{-1}L'^{-\nicefrac{\alpha'}{2}}$.
These are the shifted-Laplacian Whittle--Mat\'ern measures of \citet[Cor.~3.3]{BolinKirchner2023}, with fractional orders
$\nicefrac{\alpha}{2}, \nicefrac{\alpha'}{2} \in (\nicefrac{d}{4}, \infty)$ since $\alpha, \alpha' > \nicefrac{d}{2}$.
That corollary is stated for the Dirichlet Laplacian, but its proof uses only that $L$ and $L'$ share an eigenbasis together with the Weyl asymptotics $\omega_j \asymp j^{\nicefrac{2}{d}}$ of Lemma~\ref{lem:embed}(a), which both hold here so the corollary applies verbatim. Since $d \le 3$ (Assumption~\ref{ass:A}), its part~II gives $N(m, \mC) \sim N(m', \mC')$ if and only if $\alpha = \alpha'$, $\tau = \tau'$, and $m - m' \in \HH_0$, with $\HH_0 := \mC^{1/2}(L^2(\mathcal{D}))$, the laws being mutually singular otherwise.
The mean condition is automatic when  $\alpha = \alpha'$, $\tau = \tau'$ as $\HH_0 = D(L^{\nicefrac{\alpha}{2}}) = D(L'^{\nicefrac{\alpha}{2}})$ and by Proposition~\ref{prop:rep} both $\mathcal{S}X$ and $\mathcal{S}'X$ lie in $D(L^{\nicefrac{\alpha}{2}})$, as does $X$ whenever a pointwise channel is present (Assumption~\ref{ass:A}). The laws are therefore equivalent  when $(\tau, \alpha) = (\tau', \alpha')$ and singular when $\tau \neq \tau'$ or $\alpha \neq \alpha'$. 
\end{proof}

\begin{proof}[Proof of Theorem~\ref{thm:infill}]

Under the stated assumptions, $\widetilde{\bm{Y}}_n = \bm{D}_n \bbeta_0 + \bm{\eta}_n$ with
$\bm{\eta}_n = \bm{u}_n + \bm{\epsilon}_n \sim N(0, \bm{\Sigma}_n)$ and $\bm{\Sigma}_n = \bm{K}_n + \sigma_\epsilon^2 \bm{I}$. With
$(\kappa, \tau, \alpha, \sigma_\epsilon^2)$ known, twice the negative log-likelihood is, up to an additive constant,
$(\widetilde{\bm{Y}}_n - \bm{D}_n\bbeta)^\top \bm{\Sigma}_n^{-1}(\widetilde{\bm{Y}}_n - \bm{D}_n\bbeta)$. This is a quadratic function of $\bbeta$ whose gradient vanishes if and only if $\bm{M}_n \bbeta = \bm{D}_n^\top \bm{\Sigma}_n^{-1} \widetilde{\bm{Y}}_n$. Thus, whenever
$\bm{M}_n$ is non-singular the unique maximiser is $\hat\bbeta_n = \bm{M}_n^{-1} \bm{D}_n^\top \bm{\Sigma}_n^{-1} \widetilde{\bm{Y}}_n$.
Therefore $\hat\bbeta_n - \bbeta_0 = \bm{M}_n^{-1} \bm{D}_n^\top \bm{\Sigma}_n^{-1} \bm{\eta}_n$ is a linear transformation of the Gaussian vector $\bm{\eta}_n$, hence Gaussian with mean zero and covariance
\begin{equation}\label{eq:sandwich}
\bm{M}_n^{-1} \bm{D}_n^\top \bm{\Sigma}_n^{-1}\bm{\Sigma}_n\bm{\Sigma}_n^{-1} \bm{D}_n \bm{M}_n^{-1}
= \bm{M}_n^{-1} \bm{M}_n \bm{M}_n^{-1} = \bm{M}_n^{-1},
\end{equation}
proving (i). 
Further, by definition, $(\bm{M}_n)_{k\ell} = Q_n(f_k, f_\ell)$ with $(f_1, f_2) = (X, \mathcal{S}X)$. Both functions lie in $\HH$. Lemma~\ref{lem:qform} therefore gives $\bm{M}_n \to \bm{G}$ entrywise. By Theorem~\ref{thm:geometry}(i), $\vartheta(X) > 0$ implies $\bm{G} \succ 0$ and since eigenvalues are continuous functions of the entries, $\lambda_{\min}(\bm{M}_n) \to \lambda_{\min}(\bm{G}) > 0$, so $\bm{M}_n$ is
non-singular for all $n$ large enough, and $\bm{M}_n^{-1} \to \bm{G}^{-1}$ by continuity of matrix inversion at non-singular arguments. For any
$\bm{t} \in \R^2$ the characteristic function of $\hat\bbeta_n - \bbeta_0$ is $\exp(-\tfrac12 \bm{t}^\top \bm{M}_n^{-1} \bm{t}) \to \exp(-\tfrac12 \bm{t}^\top \bm{G}^{-1} \bm{t})$, so $\hat\bbeta_n - \bbeta_0 \xrightarrow{d} N(0, \bm{G}^{-1})$, proving (ii).

For the forced model $\bm{D}_n$ has a single column $(\mathcal{S}X)_n$, and $\mathcal{S}X \in \HH$ for every $X \in L^2(\mathcal{D})$ by Proposition~\ref{prop:rep}. The scalar $\bm{M}_n = Q_n(\mathcal{S}X, \mathcal{S}X) \to \|\mathcal{S}X\|_\mC^2 = \|X\|_{L^2}^2 > 0$
by Lemma~\ref{lem:qform} and Proposition~\ref{prop:rep}, and the steps above apply verbatim.
\end{proof}

\begin{proof}[Justification of Remark~\ref{rem:multivariate}]
With $p$ covariates the mean of \eqref{eq:additive} is a linear combination of the $2p$ functions $(X_1, \ldots, X_p, \mathcal{S}X_1, \ldots, \mathcal{S}X_p)$, which lie in $\HH$ by Assumption~\ref{ass:A} and Proposition~\ref{prop:rep}, so $\bm{D}_n$ becomes the $n \times 2p$ design built from these functions and $\bm{G}_p$ their Cameron--Martin Gram matrix. The argument for Theorem~\ref{thm:infill} is dimension-free. Specifically, the GLS computation \eqref{eq:sandwich} gives exact normality $\hat\bbeta_n \sim N(\bbeta_0, \bm{M}_n^{-1})$ for every $n$ at which $\bm{M}_n$ is non-singular, and Lemma~\ref{lem:qform} applies entrywise to $(\bm{M}_n)_{k\ell} = Q_n(f_k, f_\ell)$ and gives $\bm{M}_n \to \bm{G}_p$; and if $\bm{G}_p \succ 0$ then $\lambda_{\min}(\bm{M}_n) \to \lambda_{\min}(\bm{G}_p) > 0$, so $\hat\bbeta_n \xrightarrow{d} N(\bbeta_0, \bm{G}_p^{-1})$ exactly as before. Since $\bm{G}_p$ is the Gram matrix of the $2p$ functions in $(\HH, (\cdot,\cdot)_\mC)$, it is non-singular if and only if these functions are linearly independent, which is the joint identifiability of the two channels across covariates. Finally, $\vartheta_k = (\bm{R}_p^{-1})_{kk}^{-1}$ is the standard variance-inflation identity: writing $\bm{\Delta} = \operatorname{diag}(\bm{G}_p)$ and $\bm{R}_p = \bm{\Delta}^{-1/2}\bm{G}_p\bm{\Delta}^{-1/2}$, we have $(\bm{G}_p^{-1})_{kk} = (\bm{R}_p^{-1})_{kk} (\bm{G}_p)_{kk}^{-1}$, so the $k$th asymptotic variance exceeds the single-channel value $(\bm{G}_p)_{kk}^{-1}$ by  the factor $(\bm{R}_p^{-1})_{kk} \ge 1$.
\end{proof}

\begin{proof}[Proof of Proposition~\ref{prop:rough}]
(i) Since $X \notin \HH = D(L^{\nicefrac{\alpha}{2}})$, we have $\E_w(\lambda^{\alpha}) = \|X\|_{L^2}^{-2}\sum_j \lambda_j^{\alpha} x_j^2
= \infty$, while $\E_w(\lambda^{\nicefrac{\alpha}{2}}) < \infty$ by the hypothesis of part~(i). Hence, by Theorem~\ref{thm:geometry},
$\vartheta(X) = 1 - [\E_w(\lambda^{\nicefrac{\alpha}{2}})]^2/\E_w(\lambda^{\alpha}) = 1$.

(ii) By Proposition~\ref{prop:rep}, $\mathcal{S}X \in \HH$. In the joint GLS fit of the two channels with known covariance, $\hat\bbeta_n$ is, as in the proof of Theorem~\ref{thm:infill}, Gaussian and unbiased for every $n$ with covariance $\bm{M}_n^{-1}$, where $\bm{M}_n$ is the $2 \times 2$ Gram matrix of $(X, \mathcal{S}X)$ under the form $Q_n$. By the Schur-complement identity,
$(\bm{M}_n^{-1})_{22} = \{Q_n(\mathcal{S}X, \mathcal{S}X) - Q_n(X, \mathcal{S}X)^2 / Q_n(X, X)\}^{-1}$, and Lemma~\ref{lem:qform-rough}(ii) applied with $f = X$, $g = \mathcal{S}X$ gives $(\bm{M}_n^{-1})_{22} \to \|X\|_{L^2}^{-2}$. Hence $\hat\beta_{\mathrm{fc}}$ is Gaussian and unbiased with variance converging to $\|X\|_{L^2}^{-2}$, which is the  limit law of Theorem~\ref{thm:infill} with the stated limiting variance.

(iii) 
$(\bm{M}_n^{-1})_{11} = \{Q_n(X, X) - Q_n(X, \mathcal{S}X)^2 / Q_n(\mathcal{S}X, \mathcal{S}X)\}^{-1} \to 0$ by the second limit of Lemma~\ref{lem:qform-rough}(ii). Thus $\hat\beta_{\mathrm{pw}}$ is Gaussian and unbiased with variance tending to zero, so $\hat\beta_{\mathrm{pw}} \to \beta_{\mathrm{pw},0}$ in $L^2$ and hence in probability. 
\end{proof}

\begin{proof}[Proof of Theorem~\ref{thm:misspec}]

We have $\widetilde{\bm{Y}}_n = \bm{m}_{0,n} + \bm{\eta}_n$ with $\bm{\eta}_n \sim N(0, \bm{\Sigma}_n)$ and $\bm{m}_{0,n} = (m_0(s_i))_{i \le n}$, and the GLS estimator is $\hat\bbeta_n = \bm{M}_n^{-1}\bm{D}_n^\top\bm{\Sigma}_n^{-1}\widetilde{\bm{Y}}_n$ (for $n$ large enough). Substituting $\widetilde{\bm{Y}}_n$ and writing $\bm{v}_n := \bm{D}_n^\top\bm{\Sigma}_n^{-1}\bm{m}_{0,n}$, $\hat\bbeta_n = \bm{M}_n^{-1}\bm{v}_n + \bm{M}_n^{-1}\bm{D}_n^\top\bm{\Sigma}_n^{-1}\bm{\eta}_n$.
This is a deterministic vector plus a linear image of the Gaussian $\bm{\eta}_n$; hence $\hat\bbeta_n$ is Gaussian with mean $\bm{M}_n^{-1}\bm{v}_n$ and, by the same computation as in \eqref{eq:sandwich}, covariance $\bm{M}_n^{-1}$. As the columns of $\bm{D}_n$ are $\bm{X}_n$ and
$(\mathcal{S}X)_n$ and $Q_n(f, g) = \bm{f}_n^\top\bm{\Sigma}_n^{-1}\bm{g}_n$,
$$
\hat\bbeta_n \sim N\bigl( \bm{M}_n^{-1} \bm{v}_n, \bm{M}_n^{-1} \bigr),
\qquad
\bm{v}_n =
\begin{pmatrix}
   Q_n(X, m_0) \\ Q_n(\mathcal{S}X, m_0)
\end{pmatrix}.
$$
As $X, \mathcal{S}X, m_0 \in \HH$, Lemma~\ref{lem:qform} gives
$\bm{v}_n \to \bm{v} := \bigl((X, m_0)_\mC, (\mathcal{S}X, m_0)_\mC\bigr)^\top$ and $\bm{M}_n \to \bm{G}$, hence
$\bm{M}_n^{-1}\bm{v}_n \to \bm{G}^{-1}\bm{v} =: \bbeta_\infty$ and
$\bm{M}_n^{-1} \to \bm{G}^{-1}$. 
The characteristic-function argument in the proof of Theorem~\ref{thm:infill}, now with mean $\bm{M}_n^{-1}\bm{v}_n$, then yields
$\hat\bbeta_n \xrightarrow{d} N(\bbeta_\infty, \bm{G}^{-1})$.
Since $\bm{G}(a, b)^\top = \bm{v}$ are the normal equations of the orthogonal projection of $m_0$ onto $\spn\{X, \mathcal{S}X\}$ in
$(\cdot,\cdot)_\mC$, $\bbeta_\infty = \bm{G}^{-1}\bm{v}$ is the coefficient vector of that projection, and the unique minimiser of
$\|m_0 - aX - b\mathcal{S}X\|_\mC^2$ over $(a, b) \in \R^2$. If $m_0 = \beta_0 X$ (i.e.\ $\mathcal{S}_0 = \mathcal{I}$), then
$\bm{v} = \beta_0(\|X\|_\mC^2,  (\mathcal{S}X, X)_\mC)^\top$ is $\beta_0$ times the first column of $\bm{G}$ in \eqref{eq:gram}, so
$\bbeta_\infty = (\beta_0, 0)^\top$, and if $m_0 = \beta_0\mathcal{S}X$ the second column gives $\bbeta_\infty = (0, \beta_0)^\top$.
Finally, with the single regressor $\mathcal{S}X$, the previous steps collapse to
$$
\hat\beta_{\mathrm{fc},n} \xrightarrow{d}
N\Bigl(\frac{(\mathcal{S}X,  m_0)_\mC}{\|\mathcal{S}X\|_\mC^2},
   \frac{1}{\|\mathcal{S}X\|_\mC^2}\Bigr)
=
N\Bigl(\frac{(\mathcal{S}X,  m_0)_\mC}{\|X\|_{L^2}^2},
   \frac{1}{\|X\|_{L^2}^2}\Bigr),
$$
and the mean is finite by the Cauchy--Schwarz inequality.
\end{proof}

\begin{proof}[Proof of Corollary~\ref{cor:confounding}]
We have $X \in \HH$ by Assumption~\ref{ass:A}, so $m_0 = \beta_0 X + Z \in \HH$ and Theorem~\ref{thm:misspec} applies. Let 
$\bm{v}(\cdot) = ((X, \cdot)_\mC   (\mathcal{S}X, \cdot)_\mC)^\top$. Then by linearity of the inner product and becuase 
as $\bm{v}(X)$ is the first column of $\bm{G}$ in \eqref{eq:gram},
$\bbeta_\infty = \beta_0  \bm{G}^{-1}\bm{v}(X) + \bm{G}^{-1}\bm{v}(Z)
=  \beta_0  (1,0)^\top + \bm{G}^{-1}\bm{v}(Z)$.
\end{proof}

\begin{proof}[Proof of Corollary~\ref{cor:signs}]

Write $m_0 = \beta_0 \sigma_0(L)X$, so that $(m_0)_j = \beta_0 \sigma_0(\lambda_j) x_j$. Then,
$$
(\mathcal{S}X, m_0)_\mC
= \tau^2 \sum_j \lambda_j^{\alpha}
   \bigl(\tau^{-1}\lambda_j^{-\nicefrac{\alpha}{2}} x_j\bigr)
   \bigl(\beta_0 \sigma_0(\lambda_j)  x_j\bigr)
= \beta_0  \tau \sum_j \lambda_j^{\nicefrac{\alpha}{2}}
   \sigma_0(\lambda_j)  x_j^2 .
$$
The sum converges absolutely since $M = \sum_j \lambda_j^{\alpha}\sigma_0(\lambda_j)^2 x_j^2 < \infty$ as $m_0 \in \HH$, and by Cauchy--Schwarz inequality $\sum_j \lambda_j^{\nicefrac{\alpha}{2}}\sigma_0(\lambda_j)  x_j^2 \le M^{\nicefrac12}(\sum_j x_j^2)^{\nicefrac12} < \infty$.
By assumption, at least one $j$ satisfies $\sigma_0(\lambda_{j}) > 0$, $x_{j} \neq 0$, and as every term is non-negative because $\sigma_0 \ge 0$, the sum is strictly positive. Dividing by $\|X\|_{L^2}^2$,
$$
\beta_{\mathrm{fc},\infty} = 
\beta_0 \tau \frac{\sum_j \lambda_j^{\nicefrac{\alpha}{2}}\sigma_0(\lambda_j)x_j^2}{\sum_j x_j^2}
= \beta_0 \tau \E_w\bigl[\lambda^{\nicefrac{\alpha}{2}}\sigma_0(\lambda)\bigr],
$$
which is finite, non-zero, and has the sign of $\beta_0$.
\end{proof}

\begin{lemma}\label{lem:incdom-bounds}
Let $\psi = (\beta_{\mathrm{pw}}, \beta_{\mathrm{fc}}, \kappa, \tau, \sigma_\epsilon^2)$ be the parameter vector, $\alpha > \nicefrac{d}{2}$ fixed and known, and $\Psi_0$ a compact neighbourhood of the true value $\psi_0$ in the interior of the parameter set $\Psi$ of Theorem~\ref{thm:incdom}. For the increasing-domain model, the mean $\bm{\mu}_n(\psi)$ and covariance $\bm{\Sigma}_n(\psi)$ are three times continuously
differentiable in $\psi$ and under Assumption~\ref{ass:B}, uniformly in $n$ and $\psi \in \Psi_0$:
(i) $\lambda_{\min}(\bm{\Sigma}_n(\psi)) \ge \sigma_\epsilon^2$ and $\lambda_{\max}(\bm{\Sigma}_n(\psi)) \le C < \infty$;
(ii) the spectral norms of all parameter derivatives of $\bm{\Sigma}_n(\psi)$ up to third order are bounded by a constant;
(iii) all parameter derivatives of the mean up to third order are bounded entrywise: $\|\partial^m \bm{\mu}_n(\psi)\|_\infty \le C$ for $|m| \le 3$.
\end{lemma}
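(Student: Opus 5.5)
We work entirely with the stationary model on $\R^d$. There the covariance is $\bm{\Sigma}_n(\psi) = [\varrho_{\kappa,\tau}(\|s_i-s_j\|)]_{i,j\le n} + \sigma_\epsilon^2\bm{I}_n$, with $\varrho$ the Mat\'ern covariance in \eqref{eq:kernels-a2}. The mean entries are $\mu_{n,i}(\psi) = \beta_{\mathrm{pw}}X(s_i) + \beta_{\mathrm{fc}}(g_{\kappa,\tau}*X)(s_i)$, with $g$ the kernel in \eqref{eq:kernels-a2}. The strategy is to reduce every claim to Fourier-side bounds that are uniform over the compact set $\Psi_0$. Compactness gives $0<\kappa_-\le\kappa\le\kappa_+$, $0<\tau_-\le\tau\le\tau_+$ and $\sigma_\epsilon^2\ge\sigma_-^2>0$ on $\Psi_0$.

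\emph{Smoothness and derivative bounds for the kernels.} The spectral density of $\varrho$ is $f_{\kappa,\tau}(\xi)=(2\pi)^{-d}\tau^{-2}(\kappa^2+\|\xi\|^2)^{-\alpha}$. This is $C^\infty$ in $(\kappa,\tau)$ on $\Psi_0$. For any multi-index $|m|\le 3$, the derivative $\partial^m_{\kappa,\tau}f$ is a finite sum of terms of the form $c\,\tau^{-2-a}\kappa^{b}(\kappa^2+\|\xi\|^2)^{-\alpha-k}$. Each term is dominated uniformly on $\Psi_0$ by $C(\kappa_-^2+\|\xi\|^2)^{-\alpha}$, which is integrable because $\alpha>d/2$. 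Dominated convergence then gives three-times continuous differentiability of $\varrho$ in $\psi$. The same argument applied to $\widehat g=\tau^{-1}(\kappa^2+\|\xi\|^2)^{-\alpha/2}$ does not work directly, since $\widehat g$ is not integrable when $\alpha\le d$. We therefore work in physical space. The function $g$ and its parameter derivatives are Bessel-type functions, and they satisfy $|\partial^m g_{\kappa,\tau}(r)|\le C\,r^{\min(0,\alpha-d)}(1+|\log r|)\,e^{-\kappa_- r/2}$ uniformly on $\Psi_0$. This bound is integrable on $\R^d$ because $\alpha>d/2>0$. One can verify it through the subordination formula $g(r)=c\,\tau^{-1}\int_0^\infty t^{\alpha/2-1}e^{-\kappa^2 t}p_t(r)\,dt$, with $p_t$ the heat kernel, differentiating under the integral. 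It follows that $\partial^m\mu_{n,i}$ is bounded by $C\|X\|_\infty(1+\|\partial^m g\|_{L^1})$, which proves (iii) and also gives the required differentiability of the mean.

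\emph{Eigenvalue and spectral-norm bounds.} For (i), the Mat\'ern part of $\bm{\Sigma}_n$ is positive semidefinite, so $\lambda_{\min}\ge\sigma_\epsilon^2\ge\sigma_-^2$. For the upper bounds in (i) and (ii), take any symmetric matrix $\bm{B}_n=[b(s_i-s_j)]$ built from a function $b$, for instance $\varrho$ or one of its parameter derivatives. Gershgorin's theorem gives $\|\bm{B}_n\|\le\sup_i\sum_j|b(s_i-s_j)|\le\sum_{k\in\delta\mathbb{Z}^d}|b(k)|$, which is independent of $n$. So it suffices to show that $\sup_{\Psi_0}\sum_{k\in\delta\mathbb{Z}^d}|\partial^m\varrho(k)|<\infty$ for $|m|\le3$. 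The Mat\'ern function satisfies $\varrho(r)=c\,\tau^{-2}\kappa^{-2\nu}(\kappa r)^\nu K_\nu(\kappa r)$. Differentiating in $\kappa$ brings down factors $r$, using $\frac{d}{dz}[z^\nu K_\nu(z)]=-z^\nu K_{\nu-1}(z)$, and differentiating in $\tau$ is trivial. Hence $|\partial^m\varrho(r)|\le C(1+r)^{3}e^{-\kappa_- r}$ uniformly on $\Psi_0$, and this bound is summable over the lattice. The nugget enters as $\bm{I}_n$, whose derivatives are trivially bounded. This completes (ii).

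\emph{Main obstacle.} The delicate step is getting the uniform integrable bound on $\partial^m g$ near $r=0$ when $\alpha\le d$, where $g$ is singular. The logarithmic case $\alpha=d$ and differentiation in $\kappa$ (equivalently in $\nu_g$ through $\kappa^{-2\nu_g}$) both need care. I would handle this through the subordination integral. For $t\le1$ the derivatives in $\kappa$ only introduce factors $t^k$, which improve integrability. For $t\ge1$ the factor $e^{-\kappa_-^2t}$ controls everything. This yields the $L^1$ bound uniformly on $\Psi_0$, which is all that (iii) needs.
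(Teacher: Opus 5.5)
Your proof is correct and follows essentially the same route as the paper: Gershgorin row sums with uniform exponential decay of $\varrho$ and its parameter derivatives give (i)--(ii), and the subordination representation of the kernel of $\tau^{-1}L^{-\alpha/2}$ combined with $\|g*X\|_\infty \le \|g\|_{L^1}\|X\|_\infty$ gives (iii). The paper avoids your ``main obstacle'' by writing each $\partial^m g$ as a linear combination of the nonnegative subordinated kernels $g_{\alpha/2+j}$, whose $L^1$ norms equal $\kappa^{-\alpha-2j}$ exactly by Tonelli, so no pointwise singular bound near $r=0$ is needed. Separately, your decay bound $C(1+r)^3e^{-\kappa_- r}$ should carry a $\nu$-dependent polynomial power, since $(\kappa r)^\nu K_\nu(\kappa r)\sim(\kappa r)^{\nu-1/2}e^{-\kappa r}$; this does not affect lattice summability.
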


\begin{proof}

(i) The lower bound $\lambda_{\min}(\bm{\Sigma}_n) \ge \sigma_\epsilon^2$ is immediate from $\bm{\Sigma}_n = \bm{K}_n + \sigma_\epsilon^2 \bm{I}$ with $\bm{K}_n \succeq 0$. For the upper bound, since $\bm{K}_n$ is symmetric, Gershgorin's theorem gives
\begin{equation}\label{eq:lambda_bound}
   \lambda_{\max}(\bm{K}_n)
   \le \max_{i \le n} \sum_{j \le n} |\varrho(s_i - s_j)|
   \le \sum_{z \in \delta\mathbb{Z}^d} |\varrho(z)| ,
\end{equation}
where $\varrho$ now denotes the stationary Mat\'ern covariance
function
\begin{equation}\label{eq:matern_cov}
\varrho(r) = \sigma^2  \frac{2^{1-\nu}}{\Gamma(\nu)}
(\kappa\|r\|)^{\nu} K_{\nu}(\kappa\|r\|) =  \frac{1}{\tau^{2}(2\pi)^d}
     \int_{\R^d} \bigl(\kappa^2 + \|\xi\|^2\bigr)^{-\alpha}
     e^{\mathrm{i}\xi^\top r}\, d\xi,
\end{equation}
with $\nu = \alpha - \nicefrac{d}{2}$ and $\sigma^2 = \Gamma(\nu) / \bigl(\Gamma(\nu + \nicefrac{d}{2})(4\pi)^{\nicefrac{d}{2}}\kappa^{2\nu}\tau^2\bigr)$. Writing $z = \kappa\|r\|$ and $g(z) := z^\nu K_\nu(z)$, we have $\varrho(r) = \sigma^2 \tfrac{2^{1-\nu}}{\Gamma(\nu)} g(z)$, and the
map $z \mapsto g(z) e^{z/2}$ is continuous on $(0,\infty)$ with the limit $\Gamma(\nu)2^{\nu-1} < \infty$ as $z \to 0$, since
$K_\nu(z) \sim \tfrac{\Gamma(\nu)}{2}(z/2)^{-\nu}$. By the asymptotics $K_\nu(z) \sim \sqrt{\pi/(2z)} e^{-z}$ as $z \to \infty$
\citep[Eq.~9.7.2]{AbramowitzStegun1964} it behaves like $\sqrt{\pi/2} z^{\nu-1/2}e^{-z/2} \to 0$, so it is bounded by some
$M(\nu) < \infty$ and therefore $g(z) \le M(\nu) e^{-z/2}$. On  $\Psi_0$, $\sigma^2 2^{1-\nu}/\Gamma(\nu)$ and $M(\nu)$ are bounded and
$\kappa \ge \kappa_{\min} > 0$, whence $|\varrho(r)| \le C e^{-c\|r\|}$ uniformly on $\Psi_0$ with $c = \kappa_{\min}/2$. As the lattice sum
$\sum_{z \in \delta\mathbb{Z}^d} e^{-c\|z\|}$ is finite, this proves (i).

We next establish the differentiability, beginning with the covariance. The entries of $\bm{\Sigma}_n$ do not depend on $(\beta_{\mathrm{pw}}, \beta_{\mathrm{fc}})$, and the $\sigma_\epsilon^2$-derivative of $\bm{\Sigma}_n$ is $\bm{I}$, with all higher $\sigma_\epsilon^2$-derivatives zero. For the $\tau$-derivatives, $\partial_\tau \varrho = -2\tau^{-1}\varrho$ and all higher $\tau$-derivatives are also multiples of $\varrho$. By  the spectral representation \eqref{eq:matern_cov},
$$
\partial_\kappa \varrho(r) = \frac{-2\alpha\kappa}{\tau^{2}(2\pi)^d}
   \int_{\R^d} \bigl(\kappa^2 + \|\xi\|^2\bigr)^{-\alpha-1} e^{\mathrm{i}\xi^\top r}\, d\xi ,
$$
which is proportional to a Mat\'ern covariance of order $\alpha + 1$ and thus satisfies the same uniform exponential bound $|\partial_\kappa\varrho(r)| \le Ce^{-c\|r\|}$. Iterating, every mixed parameter derivative of $\varrho$ up to third order exists, is continuous in $\psi$ (the differentiated integrands are continuous in $\psi$ and dominated), and is a finite linear combination (with coefficients bounded on $\Psi_0$) of
Mat\'ern-type kernels of orders $\alpha, \ldots, \alpha+3$, all exponentially decaying uniformly on $\Psi_0$.

Turning to the mean, by \eqref{eq:additive},
$\bm{\mu}_n(\psi) = \beta_{\mathrm{pw}} [X(s_i)]_{i \le n} + \beta_{\mathrm{fc}} [(\mathcal{S}_\psi X)(s_i)]_{i \le n}$, with
$\mathcal{S}_\psi = \tau^{-1}L^{-\nicefrac{\alpha}{2}}$, is linear in $(\beta_{\mathrm{pw}}, \beta_{\mathrm{fc}})$ and depends on $(\kappa, \tau)$ only through $\mathcal{S}_\psi X$, which is the convolution $g_\psi * X$ with the kernel $g_\psi$ whose Fourier transform is the symbol
$s_\psi(\xi) = \tau^{-1}(\kappa^2 + \|\xi\|^2)^{-\nicefrac{\alpha}{2}}$.
Differentiating in $\kappa$ and $\tau$ shows that every $\partial^m s_\psi$ with $|m| \le 3$ is a linear combination
$\partial^m s_\psi(\xi) = \sum_{j=0}^{3} c_j(\kappa,\tau)\bigl(\kappa^2 + \|\xi\|^2\bigr)^{-\nicefrac{\alpha}{2} - j}$, whose coefficients $c_j$ are continuous in $(\kappa,\tau)$ and hence bounded on $\Psi_0$. Each term is the symbol of a nonnegative kernel: by the subordination identity
$$
\bigl(\kappa^2 + \|\xi\|^2\bigr)^{-\beta}
= \frac{1}{\Gamma(\beta)}\int_0^\infty t^{\beta-1} e^{-t\kappa^2} e^{-t\|\xi\|^2}\, dt,
\qquad \beta = \tfrac{\alpha}{2} + j > 0,
$$
its kernel $g_{\beta}(r) = \Gamma(\beta)^{-1}\int_0^\infty t^{\beta-1} e^{-t\kappa^2}(4\pi t)^{-\nicefrac{d}{2}} e^{-\|r\|^2/(4t)}\, dt$ is a positive mixture of Gaussian densities, so $g_\beta \ge 0$ and, by Tonelli's theorem, $\|g_\beta\|_{L^1} =  \kappa^{-2\beta}$, the symbol evaluated at the origin. As $g_\beta$ is pointwise decreasing in $\kappa$, the kernels $\partial^m g_\psi$  are dominated on $\Psi_0$ by the fixed integrable function $C\sum_{j\le3} g_{\nicefrac{\alpha}{2}+j}(\cdot\,; \kappa_{\min})$, so the differentiations may be taken under the subordination and
convolution integrals. Thus, $\partial^m(\mathcal{S}_\psi X) = (\partial^m g_\psi) * X$ exists and is continuous in $\psi$ for $|m| \le 3$. Together with the linearity in $(\beta_{\mathrm{pw}}, \beta_{\mathrm{fc}})$, this proves that $\bm{\mu}_n$ and $\bm{\Sigma}_n$ are three times continuously differentiable.

Now (ii) follows by applying \eqref{eq:lambda_bound} to the derivatives: every parameter derivative of $\varrho$ up to third order decays exponentially uniformly on $\Psi_0$, and the $\sigma_\epsilon^2$-derivative of $\bm{\Sigma}_n$ is $\bm{I}$, so the spectral norms of all parameter derivatives of $\bm{\Sigma}_n(\psi)$ up to third order are bounded uniformly in $n$ and on $\Psi_0$.
For (iii), every parameter derivative of the mean up to third order is $[X(s_i)]_i$ or $[((\partial^m\mathcal{S}_\psi) X)(s_i)]_i$
carrying at most one factor $\beta_{\mathrm{fc}}$. As $|\beta_{\mathrm{fc}}| \le C$ on $\Psi_0$ and $\|X\|_\infty \le C$ by Assumption~\ref{ass:B}, we have on $\Psi_0$ 
\begin{align*}
\|(\partial^m\mathcal{S}_\psi) X\|_\infty
   &= \|(\partial^m g_\psi) * X\|_\infty
   \le \|\partial^m g_\psi\|_{L^1(\R^d)} \|X\|_\infty \\
    & \le C \sum_{j=0}^{3} |c_j(\kappa,\tau)|\kappa^{-\alpha-2j}
   \le C
\end{align*}
by Young's inequality using $\kappa \ge \kappa_{\min} > 0$ and boundedness of the coefficients.
\end{proof}

The maximum likelihood estimator of Theorem~\ref{thm:incdom} is studied through the Gaussian log-likelihood and its derivatives. For
$\widetilde{\bm{Y}}_n \sim N(\bm{\mu}_n(\psi), \bm{\Sigma}_n(\psi))$, the log-likelihood is
$$
\ell_n(\psi) = -\tfrac{n}{2}\log(2\pi) - \tfrac12 \log\det\bm{\Sigma}_n(\psi) - \tfrac12\bigl(\widetilde{\bm{Y}}_n - \bm{\mu}_n(\psi)\bigr)^\top \bm{\Sigma}_n(\psi)^{-1} \bigl(\widetilde{\bm{Y}}_n - \bm{\mu}_n(\psi)\bigr),
$$
with score and Fisher information given by standard Gaussian identities \citep[see, e.g.,][]{MardiaMarshall1984}:
\begin{align}
\partial_k \ell_n
   &= (\partial_k\bm{\mu}_n)^\top \bm{\Sigma}_n^{-1} \bm{r}_n
   + \tfrac12 \bm{r}_n^\top \bm{\Sigma}_n^{-1}(\partial_k\bm{\Sigma}_n)
     \bm{\Sigma}_n^{-1} \bm{r}_n
   - \tfrac12 \operatorname{tr}\bigl(\bm{\Sigma}_n^{-1}
     \partial_k\bm{\Sigma}_n\bigr),\label{eq:score}\\
(\bm{I}_n)_{k\ell}
   &= (\partial_k \bm{\mu}_n)^\top \bm{\Sigma}_n^{-1} (\partial_\ell \bm{\mu}_n) + \tfrac12 \operatorname{tr}\bigl(
       \bm{\Sigma}_n^{-1} (\partial_k \bm{\Sigma}_n)
       \bm{\Sigma}_n^{-1} (\partial_\ell \bm{\Sigma}_n) \bigr). \label{eq:fisher-app}
\end{align}
Here $\bm{r}_n := \widetilde{\bm{Y}}_n - \bm{\mu}_n(\psi)$ and $\bm{\mu}_n(\psi) = \beta_{\mathrm{pw}} \bm{X}_n + \beta_{\mathrm{fc}}(\mathcal{S}_\psi X)_n$ depends on $(\beta_{\mathrm{pw}}, \beta_{\mathrm{fc}}, \kappa, \tau)$ and $\bm{\Sigma}_n(\psi)$ on $(\kappa, \tau, \sigma_\epsilon^2)$, so all blocks of \eqref{eq:fisher-app} are non-zero except the $(\bbeta, \sigma_\epsilon^2)$ block. The two summands give the split $\bm{I}_n = \bm{M}_n + \bm{C}_n$ into the mean-information Gram matrix $\bm{M}_n$ and the covariance-information $\bm{C}_n$. We now identify the limit of $n^{-1}\bm{I}_n$ and establish its positive-definiteness.
To define the limit, let $c(z) = \varrho(\delta z) + \sigma_\epsilon^2\mathbf{1}\{z=0\}$, $z\in\mathbb{Z}^d$, be the covariance function of the observations on the lattice $\delta\mathbb{Z}^d$, where $\varrho$ is the stationary Mat\'ern covariance, with spectral density (symbol) $f_\psi(\xi) := \sum_{z\in\mathbb{Z}^d} c(z) e^{-\mathrm{i}\delta z^\top\xi}$, $\xi \in [-\nicefrac{\pi}{\delta}, \nicefrac{\pi}{\delta})^d$ and corresponding covariance operator  $\bm{T}$. Thus, $\bm{T}$ is the convolution by $c$ on $\ell^2(\mathbb{Z}^d)$, $(\bm{T}v)_z = \sum_{z'} c(z-z') v_{z'}$, which is self-adjoint and bounded, with $\|\bm{T}\| \le \sum_z |c(z)| < \infty$ by Schur's test and since $\varrho$ decays exponentially. We first show that $\bm{T}^{-1}$ is the convolution by an exponentially decaying precision kernel $w$ whose symbol is $f_\psi^{-1}$.

\begin{lemma}\label{lem:wdecay}
Let Assumption~\ref{ass:B} hold. Then $f_\psi \ge \sigma_\epsilon^2$, $\bm{T}$ is invertible, and $\bm{T}^{-1}$ is the convolution by a kernel $w$, i.e., $(\bm{T}^{-1})_{zz'} = w(z - z')$ and $\sum_{z'\in\mathbb{Z}^d} c(z-z') w(z') = \mathbf{1}\{z=0\}$, which satisfies $\sum_{z\in\mathbb{Z}^d} w(z) e^{-\mathrm{i}\delta z^\top\xi} = f_\psi(\xi)^{-1}$ and $|w(z)| \le C  e^{-c_2\|z\|}$ for constants $C, c_2 > 0$, uniformly on $\Psi_0$. Moreover,
for every finite $\Lambda \subset \mathbb{Z}^d$ the principal submatrix $\bm{T}_\Lambda := (c(z - z'))_{z, z' \in \Lambda}$ satisfies $\lambda_{\min}(\bm{T}_\Lambda) \ge \sigma_\epsilon^2$ and $|(\bm{T}_\Lambda^{-1})_{zz'}| \le \tfrac{2}{\sigma_\epsilon^2} e^{-c_2\|z - z'\|}$, uniformly in $\Lambda$ and on $\Psi_0$.
\end{lemma}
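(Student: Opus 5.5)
The plan has three parts: control the symbol $f_\psi$, invert $\bm{T}$ in Fourier space, and get decay of $w$ by analytic continuation.

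\textbf{Lower bound and inverse.} Since $\varrho$ is a Mat\'ern covariance, its lattice restriction $z\mapsto\varrho(\delta z)$ is positive definite. Its aliased spectral density $\sum_{k} \hat\varrho(\xi+2\pi k/\delta)\ge 0$ is obtained by Poisson summation from the positive density $(\kappa^2+\|\xi\|^2)^{-\alpha}$. Adding the nugget gives $f_\psi\ge\sigma_\epsilon^2$. The sum defining $f_\psi$ converges absolutely because of the uniform exponential bound $|\varrho(r)|\le Ce^{-c\|r\|}$ from the proof of Lemma~\ref{lem:incdom-bounds}. Consequently $\bm{T}$ is unitarily equivalent, via the Fourier series map $\ell^2(\mathbb{Z}^d)\to L^2([-\pi/\delta,\pi/\delta)^d)$, to multiplication by $f_\psi$. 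It is therefore invertible, with inverse equal to multiplication by $f_\psi^{-1}\le\sigma_\epsilon^{-2}$. The inverse is convolution by the Fourier coefficients $w(z)$ of $f_\psi^{-1}$, which gives the identities $c*w=\mathbf 1\{z=0\}$ and $\hat w=f_\psi^{-1}$.

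\textbf{Exponential decay of $w$.} This is the main step, and I would do it by analytic continuation. Because $|c(z)|\le Ce^{-c\delta\|z\|}$, the series $f_\psi(\xi+\mathrm{i}\eta)=\sum_z c(z)e^{-\mathrm{i}\delta z^\top(\xi+\mathrm{i}\eta)}$ converges uniformly for $\|\eta\|\le\eta_0<c$. It defines a holomorphic, periodic extension to a complex strip. The difference from the real value is bounded by $\sum_z|c(z)|(e^{\delta\|z\|\|\eta\|}-1)$, which tends to $0$ as $\eta\to0$ uniformly on $\Psi_0$. Choosing $\eta_0$ small therefore gives $|f_\psi(\xi+\mathrm{i}\eta)|\ge\sigma_\epsilon^2/2$ on the strip, uniformly on $\Psi_0$. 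Then $f_\psi^{-1}$ is holomorphic there and bounded by $2/\sigma_\epsilon^2$. Shifting the contour of the integral $w(z)=(\delta/2\pi)^d\int f_\psi^{-1}(\xi)e^{\mathrm{i}\delta z^\top\xi}d\xi$ by $\mathrm{i}\eta$ with $\eta=\eta_0 z/\|z\|$ is legitimate by periodicity and Cauchy's theorem. The shifted integral yields $|w(z)|\le (2/\sigma_\epsilon^2)e^{-\delta\eta_0\|z\|}$, so we may take $c_2=\delta\eta_0$. Uniformity on $\Psi_0$ needs only uniform constants in the decay of $\varrho$, which Lemma~\ref{lem:incdom-bounds} supplies.

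\textbf{Finite sections.} The bound $\lambda_{\min}(\bm{T}_\Lambda)\ge\sigma_\epsilon^2$ is immediate, since $\bm{T}_\Lambda=\bm{K}_\Lambda+\sigma_\epsilon^2\bm{I}$ with $\bm{K}_\Lambda\succeq0$. The decay of $\bm{T}_\Lambda^{-1}$ cannot be read off from $w$, because finite sections are not Toeplitz-invertible in this simple way. I would instead use a Combes--Thomas conjugation. For a weight $\phi(z)=\eta^\top z$ with $\|\eta\|\le\eta_1$, set $D=\mathrm{diag}(e^{\phi(z)})_{z\in\Lambda}$. The conjugated matrix $D\bm{T}_\Lambda D^{-1}$ has entries $c(z-z')e^{\eta^\top(z-z')}$. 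By Schur's test, $\|D\bm{T}_\Lambda D^{-1}-\bm{T}_\Lambda\|\le\sum_z|c(z)|(e^{\eta_1\|z\|}-1)\le\sigma_\epsilon^2/2$ for $\eta_1$ small, uniformly in $\Lambda$ and on $\Psi_0$. A Neumann series then shows that $D\bm{T}_\Lambda D^{-1}$ is invertible with inverse norm at most $2/\sigma_\epsilon^2$. Hence $|(\bm{T}_\Lambda^{-1})_{zz'}|e^{\eta^\top(z-z')}\le 2/\sigma_\epsilon^2$. Choosing $\eta=\eta_1(z-z')/\|z-z'\|$ gives the stated bound with $c_2=\eta_1$. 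Taking the smaller of the two constants gives a single $c_2$ valid for both claims. The same argument applied with $\Lambda=\mathbb{Z}^d$ would also reprove the decay of $w$, so the contour-shift step could be replaced by this one if preferred.
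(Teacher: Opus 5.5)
Your proof is correct. The finite-section step (Combes--Thomas conjugation, Schur's test, Neumann series around $\bm{T}_\Lambda$) is exactly the paper's argument, but the other two steps take a different route.

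For $f_\psi \ge \sigma_\epsilon^2$, the paper does not compute the symbol. It shows $(\bm{T}v,v)\ge\sigma_\epsilon^2\|v\|^2$ from positive definiteness of $\varrho$, and then reads off $f_\psi\ge\sigma_\epsilon^2$ because the range of the continuous symbol is the spectrum of $\bm{T}$. Your Poisson-summation identity is more explicit: it exhibits $f_\psi-\sigma_\epsilon^2$ as an aliased, strictly positive Mat\'ern density. It does, however, need the pointwise Poisson formula. That formula holds here because $\hat\varrho$ decays like $\|\xi\|^{-2\alpha}$ with $2\alpha>d$. Alternatively, Herglotz's theorem for the absolutely summable positive-definite sequence $\varrho(\delta z)$ gives the same conclusion directly.

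For the decay of $w$, the paper never uses analyticity. It first proves the uniform finite-section bound. It then transfers that bound to $w$ by a weak-compactness argument showing $\bm{T}_{\Lambda_L}^{-1}\to\bm{T}^{-1}$ entrywise. In this way a single tool yields both claims with one constant $c_2$.

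Your contour shift of the holomorphically continued symbol is the classical Paley--Wiener route. It is self-contained and avoids that limiting argument. The cost is that you must justify holomorphy of $f_\psi^{-1}$ on a uniform complex strip and carry out the multivariable contour shift coordinatewise using periodicity. It also produces a possibly different rate $\delta\eta_0$, which is why you need the final minimum.

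Choosing $\eta_0$ (and $\eta_1$) uniformly on $\Psi_0$ also uses $\inf_{\Psi_0}\sigma_\epsilon^2>0$, which holds by compactness. Your closing remark is also right. Running the conjugation directly on $\bm{T}$ over $\mathbb{Z}^d$, with $D$ acting on finitely supported vectors, gives the decay of $w$ without needing either the contour shift or the paper's limit argument.
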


\begin{proof}
By the proof of Lemma~\ref{lem:incdom-bounds}(i), $|c(z)| \le C_0 e^{-c_1\|z\|}$ uniformly on $\Psi_0$. For every finitely supported $v$,
$$
(\bm{T}v, v) = \sigma_\epsilon^2\|v\|^2 + \sum_{z, z'} \bar{v}_z  v_{z'}
\varrho\bigl(\delta(z - z')\bigr) \ge \sigma_\epsilon^2\|v\|^2,
$$
since $\varrho$ is positive definite, so by density $\bm{T} \succeq \sigma_\epsilon^2 I$ and $\bm{T}$ is invertible with $\|\bm{T}^{-1}\| \le \sigma_\epsilon^{-2}$. Under the unitary Fourier map $v \mapsto \sum_z v_z e^{-\mathrm{i}\delta z^\top\xi}$ from $\ell^2(\mathbb{Z}^d)$ onto
$L^2([-\nicefrac{\pi}{\delta}, \nicefrac{\pi}{\delta})^d)$, the operator $\bm{T}$ becomes multiplication by $f_\psi$. Since $f_\psi$
is continuous and its range is the spectrum of $\bm{T}$, this gives $f_\psi \ge \sigma_\epsilon^2$, and $\bm{T}^{-1}$ becomes
multiplication by $f_\psi^{-1}$, i.e., $\bm{T}^{-1}$ is the convolution by the kernel $w$ with symbol $\sum_z w(z) e^{-\mathrm{i}\delta z^\top\xi} = f_\psi(\xi)^{-1}$. 
Applying $\bm{T}\bm{T}^{-1} = I$ to the coordinate vector at the origin gives $\sum_{z'\in\mathbb{Z}^d} c(z-z') w(z') = \mathbf{1}\{z = 0\}$.

For the finite-section bound, let $\Lambda \subset \mathbb{Z}^d$. Then the corresponding submatrix $\bm{T}_\Lambda$ of $\bm{T}$ satisfies 
$\lambda_{\min}(\bm{T}_\Lambda) \ge \sigma_\epsilon^2$ as it is a compression of $\bm{T}$. For $\theta \in \R^d$ let
$\bm{D}_\theta := \operatorname{diag}(e^{\theta^\top z})$, so that $\bm{D}_\theta\bm{T}_\Lambda\bm{D}_\theta^{-1} - \bm{T}_\Lambda$
has entries $c(z - z')\bigl(e^{\theta^\top(z-z')} - 1\bigr)$. By the bound $|e^{x} - 1| \le |x|e^{|x|}$ and Schur's test,
$\|\bm{D}_\theta\bm{T}_\Lambda\bm{D}_\theta^{-1} - \bm{T}_\Lambda\| \le C_0\|\theta\| \sum_z \|z\| e^{-c_1\|z\|/2} =: C'\|\theta\|$ for $\|\theta\| \le \nicefrac{c_1}{2}$. 
Thus, for $\|\theta\| \le c_2 := \min\{\nicefrac{c_1}{2}, \sigma_\epsilon^2/(2C')\}$ a Neumann series around $\bm{T}_\Lambda$
shows that $\bm{D}_\theta\bm{T}_\Lambda\bm{D}_\theta^{-1}$ is invertible with inverse norm at most $\nicefrac{2}{\sigma_\epsilon^2}$. Taking $\theta = c_2 (z - z')/\|z - z'\|$ in $(\bm{T}_\Lambda^{-1})_{zz'} = e^{-\theta^\top(z - z')} \bigl(\bm{D}_\theta\bm{T}_\Lambda\bm{D}_\theta^{-1}\bigr)^{-1}_{zz'}$ yields $|(\bm{T}_\Lambda^{-1})_{zz'}| \le \tfrac{2}{\sigma_\epsilon^2} e^{-c_2\|z - z'\|}$, uniformly in $\Lambda$ and on $\Psi_0$ (cf.\ \citealp{CombesThomas1973}, where this conjugation argument originates).
Finally, the inverse of a finite section is not the corresponding section of $\bm{T}^{-1}$, so we pass to the limit: for
$\Lambda_L := [-L, L]^d \cap \mathbb{Z}^d$, $\bm{T}_{\Lambda_L}^{-1} \to \bm{T}^{-1}$ entrywise as $L \to \infty$. Indeed, for fixed $z'$, the vectors
$u_L := \bm{T}_{\Lambda_L}^{-1}e_{z'}$ satisfy $\|u_L\| \le \sigma_\epsilon^{-2}$, and any weak subsequential limit $u$ satisfies, for every fixed $z$, $(\bm{T}u)_z = \langle u, \bm{T}e_z\rangle = \lim_L \langle u_L, \bm{T}e_z\rangle = \lim_L (\bm{T}u_L)_z = \mathbf{1}\{z = z'\}$, since
$(\bm{T}u_L)_z = (\bm{T}_{\Lambda_L}u_L)_z = \mathbf{1}\{z = z'\}$ once $\Lambda_L$ contains $z$ and the origin. Hence $u = \bm{T}^{-1}e_{z'}$, and $w(z - z') = (\bm{T}^{-1})_{zz'} = \lim_L (\bm{T}_{\Lambda_L}^{-1})_{zz'}$ inherits the bound $|w(z)| \le \tfrac{2}{\sigma_\epsilon^2}  e^{-c_2\|z\|}$.
\end{proof}

\begin{lemma}\label{lem:info-structure}
Define $\bm{I}(\psi) = \bm{M}(\psi) + \bm{C}(\psi)$ on $\Psi_0$ by
\begin{equation}
\label{eq:info-limit}
\bm{C}(\psi)_{k\ell} = \tfrac12\bigl(\bm{T}^{-1}\partial_k c,
     \bm{T}^{-1}\partial_\ell c\bigr)_{\ell^2(\mathbb{Z}^d)},
   \qquad
\bm{M}(\psi)_{k\ell} = \bigl\langle \partial_k\bm{\mu}, w * \partial_\ell\bm{\mu}\bigr\rangle,
\end{equation}
where $*$ is convolution over $\delta\mathbb{Z}^d$ and $\langle u, v\rangle := \lim_{n\rightarrow\infty} \langle u, v\rangle_n$, with $\langle u, v\rangle_n := n^{-1}\sum_{s_i\in\mathcal{D}_n} u(s_i) v(s_i)$, is the per-site average along the design. Then
$\bm{I}(\psi) = \bm{M}(\psi) + \bm{C}(\psi)$ is continuous and positive-definite on $\Psi_0$.
\end{lemma}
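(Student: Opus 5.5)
Two things have to be shown: that $\bm{M}(\psi)$ and $\bm{C}(\psi)$ are well defined and continuous on $\Psi_0$, and that $\bm{I}(\psi)=\bm{M}(\psi)+\bm{C}(\psi)$ is positive definite there. For the first part I would use the exponential decay from Lemma~\ref{lem:incdom-bounds} and Lemma~\ref{lem:wdecay}. The derivative kernels $\partial_k c$ decay like $e^{-c\|z\|}$ uniformly on $\Psi_0$, and so does $w$. Hence $\bm{T}^{-1}\partial_k c = w*\partial_k c\in\ell^2(\mathbb{Z}^d)$, and $\bm{C}$ is a finite inner product. Continuity of $\bm{C}$ then follows by dominated convergence, since the kernels are continuous in $\psi$ with a uniform exponential majorant.

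For $\bm{M}$ I would first check that the limit $\langle u,v\rangle$ exists. The functions $\partial_k\bm{\mu}$ are $X$ or $(\partial^m g_\psi)*X$, where the $\partial^m g_\psi$ are integrable kernels (Lemma~\ref{lem:incdom-bounds}). Writing $\langle \partial_k\bm\mu, w*\partial_\ell\bm\mu\rangle_n$ as a quadruple sum and using Assumption~\ref{ass:B}, the empirical cross-products converge to $R_X$. Edge effects are $O(a_n^{-1})$ because $w$ is summable. The limit can be written spectrally as
\[
\bm{M}_{k\ell}=\int \widehat{\partial_k\mu}(\xi)\,\overline{\widehat{\partial_\ell\mu}(\xi)}\,f_\psi^{\mathrm{per}}(\xi)^{-1}\,dW_X(\xi).
\]
Here $\widehat{\partial_k\mu}$ is the symbol of the linear filter applied to $X$: it equals $1$ for $\beta_{\mathrm{pw}}$, $s_\psi$ for $\beta_{\mathrm{fc}}$, and $\beta_{\mathrm{fc}}\partial_\kappa s_\psi$, $\beta_{\mathrm{fc}}\partial_\tau s_\psi$ for $\kappa$ and $\tau$. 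The weight $f^{\mathrm{per}}_\psi$ is the lattice symbol evaluated at the aliased frequency. Continuity of $\bm{M}$ follows, since all integrands are bounded and continuous in $\psi$.

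For positive definiteness, take $a\in\R^5$ with $a^\top\bm{I}a=0$. Both pieces are positive semidefinite, so $a^\top\bm{C}a=0$ and $a^\top\bm{M}a=0$. From $\bm{C}$ we get $\sum_k a_k\partial_k c\equiv 0$ on the lattice, i.e. $\sum_k a_k\partial_k f_\psi\equiv 0$. Now $\partial_{\sigma_\epsilon^2}f_\psi=1$, and the $\tau$- and $\kappa$-derivatives are aliased sums of $(\kappa^2+\|\xi\|^2)^{-\alpha}$ and $(\kappa^2+\|\xi\|^2)^{-\alpha-1}$. These three functions are linearly independent: the constant has no decay, and the other two have different decay rates in the aliased sum, so one can compare growth near the aliasing points or the behaviour as a Fourier series. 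This gives $a_\kappa=a_\tau=a_{\sigma}=0$. With those components gone, $a^\top\bm{M}a=0$ reduces to
\[
\int |a_{\mathrm{pw}}+a_{\mathrm{fc}}s_\psi(\xi)|^2 f^{-1}\,dW_X=0.
\]
Since $f^{-1}>0$, this forces $a_{\mathrm{pw}}+a_{\mathrm{fc}}\tau^{-1}(\kappa^2+\|\xi\|^2)^{-\alpha/2}=0$ on the support of $W_X$. Because $W_X$ is nonzero and not concentrated on one sphere (Assumption~\ref{ass:B}, i.e. $\vartheta_\infty(X)>0$), the support contains two distinct moduli, and that gives $a_{\mathrm{pw}}=a_{\mathrm{fc}}=0$.

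I expect the main obstacle to be the aliasing in the spectral representation of $\bm{M}$. The covariate lives on $\R^d$ but is sampled on $\delta\mathbb{Z}^d+u$, so the lattice weight $w$ sees folded frequencies. The $u$-independence of $R_X$ in Assumption~\ref{ass:B} is what lets me average over shifts and obtain the continuous-frequency integral above; I would try that route first. If the folding makes the integral identity messy, the fallback is to argue positive definiteness directly. The $\bm{M}$ form dominates $\sigma_\epsilon^{-2}$-free bounds via $f_\psi\le C$, which gives $a^\top\bm{M}a\ge C^{-1}\langle h,h\rangle$ with $h=a_{\mathrm{pw}}X+a_{\mathrm{fc}}\mathcal{S}X$. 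The problem then reduces to showing $\langle h,h\rangle>0$ from the non-concentration of $W_X$, which is exactly the content of Remark~\ref{rem:vartheta-infty}.
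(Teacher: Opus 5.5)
Your proposal is correct and takes essentially the same route as the paper. Both treat $\bm{C}$ and $\bm{M}$ as Gram matrices and give $\bm{M}$ the spectral form $\int \widehat{\partial_k\bm{\mu}}\,\overline{\widehat{\partial_\ell\bm{\mu}}}\,\tilde f_\psi^{-1}\,dW_X$, then use $\bm{C}$ to remove the covariance coordinates of a null vector and the non-concentration of $W_X$ to remove the regression coordinates. The one step you leave loose, the linear independence of $1$, $\partial_\tau f_\psi$ and $\partial_\kappa f_\psi$, is best done by your ``Fourier series'' option. The paper does exactly this at lags $z\neq 0$, where the constant drops out, and shows that $\partial_\kappa\varrho(r)/\varrho(r) = -rK_{\nu+1}(\kappa r)/K_\nu(\kappa r) \le -r$ is unbounded and hence not constant. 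Comparing ``decay rates'' of the aliased sums does not work as an argument, because these are periodic functions on the torus and do not decay.
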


\begin{proof}
We first note that $\bm{C}$ is well defined and establish that the per-site averages defining $\bm{M}(\psi)$ exist. By Lemma~\ref{lem:wdecay}, $\bm{T}^{-1}\partial_k c = w * \partial_k c$, which lies in $\ell^1(\mathbb{Z}^d) \subset \ell^2(\mathbb{Z}^d)$ as $w$ and $\partial_k c$ (Lemma~\ref{lem:incdom-bounds}(ii)) decay exponentially; thus $\bm{C}(\psi)$ is one half the Gram matrix of the sequences $w * \partial_k c$ in $\ell^2(\mathbb{Z}^d)$.
For $\bm{M}$, each $\partial_k\bm{\mu}$ samples a continuum convolution $g_k * X$ with $g_k$ an exponentially decaying kernel (Lemma~\ref{lem:incdom-bounds}(iii)) whose transform $\widehat{\partial_k\bm{\mu}} := \hat g_k$ equals $1$, $s_\psi$, $\beta_{\mathrm{fc}} \partial_\kappa s_\psi$ and $\beta_{\mathrm{fc}} \partial_\tau s_\psi$ for $k = \beta_{\mathrm{pw}}, \beta_{\mathrm{fc}}, \kappa, \tau$, as a
function of $\xi \in \R^d$. Since $w$ is summable (Lemma~\ref{lem:wdecay}) and the lag averages $\langle \partial_k\bm{\mu}, \partial_\ell\bm{\mu}(\cdot - \delta z)\rangle_n$ are bounded by $\|\partial_k\bm{\mu}\|_\infty \|\partial_\ell\bm{\mu}\|_\infty$, uniformly in $n$, $z$ and $\Psi_0$
(Lemma~\ref{lem:incdom-bounds}(iii)), dominated convergence over the lags shows that, as soon as each fixed-lag average $\langle \partial_k\bm{\mu}, \partial_\ell\bm{\mu}(\cdot - \delta z)\rangle$ exists, $\langle \partial_k\bm{\mu},  w * \partial_\ell\bm{\mu}\rangle = \sum_z w(z) \langle \partial_k\bm{\mu}, \partial_\ell\bm{\mu}(\cdot - \delta z)\rangle$ exists as well. It therefore suffices to compute the fixed-lag averages. For $z \in \mathbb{Z}^d$,
$$
\langle \partial_k\bm{\mu}, \partial_\ell\bm{\mu}(\cdot - \delta z)\rangle_n
= \int\limits_{\R^d}\!\int\limits_{\R^d} g_k(u) g_\ell(v)
\frac1n \!\!\sum_{s_i\in\mathcal{D}_n} X(s_i - u) X(s_i - u - \delta z + u - v) du\, dv,
$$
where for fixed $(u,v)$ the inner average is a shifted design average $\hat r_n(u', -\delta z + u - v)$ of Assumption~\ref{ass:B}, with
$u' \equiv -u \pmod{\delta\mathbb{Z}^d}$, up to a boundary correction on a fraction $O((\|u\| + \|v\| + \delta\|z\|)/a_n)$ of sites. Further, it is bounded by $\|X\|_\infty^2$ and converges to $R_X(-\delta z + u - v)$, so dominated convergence over the integrable kernels gives $\langle \partial_k\bm{\mu}, \partial_\ell\bm{\mu}(\cdot - \delta z)\rangle
= \int\!\!\int g_k(u) g_\ell(v) R_X(-\delta z + u - v)\,du\,dv$.
Collecting the Fourier transforms, using
$\sum_z w(z)  e^{-\mathrm{i}\delta z^\top\xi} = \tilde f_\psi^{-1}(\xi)$, where $\tilde f_\psi^{-1}$ is the $(\nicefrac{2\pi}{\delta})$-periodic extension of $f_\psi^{-1}$ to $\R^d$, and $R_X(h) = \int e^{\mathrm{i}\xi^\top h}\, dW_X(\xi)$,  yields
$$
\bm{M}(\psi)_{k\ell} = \int_{\R^d}
\widehat{\partial_k\bm{\mu}} \overline{\widehat{\partial_\ell\bm{\mu}}}
\tilde f_\psi^{-1}\,dW_X.
$$

Now, both $\bm{C}$ and $\bm{M}$ are Gram matrices and therefore positive semidefinite, and the mean rows of $\bm{C}$ vanish, since
$\partial_k c = 0$ for $k \in \{\beta_{\mathrm{pw}}, \beta_{\mathrm{fc}}\}$. If $v$ is a null vector of the covariance block of $\bm{C}$,
the Gram structure gives $\sum_k v_k  (w * \partial_k c) = 0$, and convolving with $c$ gives, by Lemma~\ref{lem:wdecay},
$\sum_k v_k  \partial_k c = 0$. At lags $z \neq 0$, where $\partial_{\sigma_\epsilon^2} c = 0$ and $\partial_\tau c = -\tfrac{2}{\tau}\varrho(\delta z)$, this reads $v_\kappa  \partial_\kappa\varrho(\delta z) = \tfrac{2 v_\tau}{\tau}  \varrho(\delta z)$ for all $z \neq 0$. By \eqref{eq:matern_cov} and the expression for $\partial_\kappa\varrho$, $\partial_\kappa\varrho(r)/\varrho(r) = -r K_{\nu+1}(\kappa r)/K_\nu(\kappa r) \le -r$ for $r > 0$, since $K_{\nu+1} \ge K_\nu$ for $\nu > 0$ \citep[Eq.~9.6.24]{AbramowitzStegun1964}. This ratio is thus unbounded, and not constant on the lattice, and $v_\kappa = 0$. Then $\varrho > 0$ forces $v_\tau = 0$, and the equation at $z = 0$ gives $v_{\sigma_\epsilon^2} = 0$. Thus $\bm{C}(\psi)$ is positive semidefinite with null space exactly the mean coordinates.
For $\bm{M}(\psi)$, its $(\beta_{\mathrm{pw}}, \beta_{\mathrm{fc}})$ sub-block is the Gram matrix of the symbols $1$ and $s_\psi$ in $L^2(\tilde f_\psi^{-1}\,dW_X)$. Since $s_\psi$ is strictly decreasing in $\|\xi\|$ and $\tilde f_\psi^{-1} > 0$, this sub-block is singular only if $W_X$ is zero or concentrated on a single modulus $\|\xi\|$, excluded by Assumption~\ref{ass:B}, hence it is positive definite. 

As $\bm{M}(\psi)$ and $\bm{C}(\psi)$ are positive semidefinite, $\bm{v}^\top\bm{I}(\psi)\bm{v} = 0$ forces $\bm{v}^\top\bm{C}(\psi)\bm{v} = \bm{v}^\top\bm{M}(\psi)\bm{v} = 0$. The first pins the covariance coordinates of $\bm{v}$ to zero, and the surviving $(\beta_{\mathrm{pw}}, \beta_{\mathrm{fc}})$ part then lies in the kernel of the positive-definite $(1, s_\psi)$ Gram block, so $\bm{v} = \bm{0}$ and $\bm{I}(\psi) \succ 0$. Finally, by
Lemma~\ref{lem:incdom-bounds} the symbols $s_\psi$, $f_\psi^{-1}$ and their $\psi$-derivatives are bounded and equicontinuous in $\psi$
uniformly in the frequency variable (and $W_X$ is a finite measure), so \eqref{eq:info-limit} is Lipschitz in $\psi$. Thus $\bm{I}(\psi)$ is continuous and $\psi\mapsto\lambda_{\min}(\bm{I}(\psi))$ attains a positive minimum $c_0 > 0$ on $\Psi_0$.
\end{proof}

\begin{lemma}\label{lem:Sigmainv}
Under Assumption~\ref{ass:B}, constants $c_3, c_4 > 0$ exist such that, uniformly in $n$ and $\psi \in \Psi_0$,
\begin{align}
|(\bm{\Sigma}_n^{-1})_{ij}| &\le (2/\sigma_\epsilon^2)
e^{-c_3\|s_i - s_j\|}\label{eq:Sinvbound}\\
(\bm{\Sigma}_n^{-1})_{ij} &= w\bigl((s_i - s_j)/\delta\bigr)
   + O\bigl(e^{-c_4(\|s_i - s_j\|
   + \mathrm{dist}(\{s_i, s_j\},\partial\mathcal{D}_n))}\bigr) \label{eq:finite-section}
\end{align}
Further, summed over all pairs $i, j \le n$, the error term in \eqref{eq:finite-section} contributes
$O\bigl(\sum_i e^{-c_4 \mathrm{dist}(s_i,\partial\mathcal{D}_n)}\bigr)
= O(a_n^{d-1}) = o(n)$.
\end{lemma}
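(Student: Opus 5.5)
The plan has three steps: identify $\bm{\Sigma}_n$ as a finite section of the lattice operator $\bm{T}$, read off the decay bound from Lemma~\ref{lem:wdecay}, and compare $\bm{\Sigma}_n^{-1}$ with the bi-infinite inverse kernel $w$ through a resolvent identity.

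\textbf{Step 1 (finite section and decay).} Under Assumption~\ref{ass:B} the design is $\delta\mathbb{Z}^d \cap \mathcal{D}_n$. Write $\Lambda_n := \{s_i/\delta\} \subset \mathbb{Z}^d$. Then
\[
(\bm{\Sigma}_n)_{ij} = \varrho(s_i - s_j) + \sigma_\epsilon^2\mathbf{1}\{i=j\} = c\bigl((s_i-s_j)/\delta\bigr),
\]
so $\bm{\Sigma}_n = \bm{T}_{\Lambda_n}$ exactly. The bound \eqref{eq:Sinvbound} then follows directly from the finite-section estimate of Lemma~\ref{lem:wdecay}, taking $c_3 = c_2/\delta$. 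This estimate is uniform in $\Lambda$ and on $\Psi_0$.

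\textbf{Step 2 (comparison with $w$).} Let $P$ be the coordinate projection onto $\Lambda := \Lambda_n$ and $Q = I - P$. Apply the identity $\bm{T}\bm{T}^{-1} = I$ to the rows and columns indexed by $\Lambda$. This gives
\[
\bm{T}_\Lambda (P\bm{T}^{-1}P) + (P\bm{T}Q)(Q\bm{T}^{-1}P) = I_\Lambda,
\]
and multiplying on the left by $\bm{T}_\Lambda^{-1}$ yields
\[
\bm{T}_\Lambda^{-1} = P\bm{T}^{-1}P + \bm{T}_\Lambda^{-1}(P\bm{T}Q)(Q\bm{T}^{-1}P).
\]
The first term has entries $w(z - z')$. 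For the error term at $(z, z')$, expand it as
\[
\sum_{y \in \Lambda}\sum_{x \notin \Lambda} (\bm{T}_\Lambda^{-1})_{zy}\, c(y-x)\, w(x - z').
\]
Each of the three factors decays exponentially, by Step~1, by Lemma~\ref{lem:incdom-bounds}(i), and by Lemma~\ref{lem:wdecay} respectively. I would bound the double sum by $C e^{-c'(\|z-x\| + \|x-z'\|)}$ summed over $x \notin \Lambda$, losing a polynomial factor that is absorbed by reducing the rate. Because $x$ lies outside the box, the triangle inequality gives
\[
\|z - x\| + \|x - z'\| \ge \max\bigl(\|z - z'\|,\ \mathrm{dist}(z,\Lambda^c) + \mathrm{dist}(z',\Lambda^c)\bigr),
\]
which is at least half of $\|z - z'\| + \mathrm{dist}(\{z, z'\}, \Lambda^c)$. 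Summing the remaining tail over $x$ then gives \eqref{eq:finite-section} with some $c_4 > 0$. The constants are uniform on $\Psi_0$ because every input bound is.

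\textbf{Step 3 (summed error).} Sum the error bound over $j$ at fixed $i$. The factor $e^{-c_4\|s_i - s_j\|}$ is summable over the lattice, so the contribution per row is $O(e^{-c_4\,\mathrm{dist}(s_i,\partial\mathcal{D}_n)})$. Then sum over $i$ by layers: the number of sites at distance roughly $k\delta$ from $\partial\mathcal{D}_n = \partial[0,a_n]^d$ is $O(a_n^{d-1}/\delta^{d-1})$. This gives a total of $O(a_n^{d-1}\sum_k e^{-c_4 k\delta}) = O(a_n^{d-1}) = o(n)$, since $n \asymp (a_n/\delta)^d$.

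The main obstacle is Step 2, specifically making the exponent capture both $\|s_i - s_j\|$ and the boundary distance simultaneously. The route above handles this by keeping the exponential factors of all three kernels and applying the triangle inequality through the exterior point $x$, accepting a smaller constant $c_4$ in exchange.
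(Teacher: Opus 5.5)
Your proposal is correct and follows the paper's proof essentially step for step. You use the same identification $\bm{\Sigma}_n = \bm{T}_{\Lambda_n}$ with $c_3 = c_2/\delta$ from Lemma~\ref{lem:wdecay}, and the same identity $\bm{\Sigma}_n^{-1} - \bm{P}_n\bm{T}^{-1}\bm{P}_n = \bm{\Sigma}_n^{-1}\bm{P}_n\bm{T}\bm{Q}_n\bm{T}^{-1}\bm{P}_n$, controlled via chains through an exterior site using the triangle inequality and a split of the decay rate. Your boundary-layer count is also the paper's, and your Step~3 makes explicit the reduction from the double sum over $(i,j)$ to $\sum_i e^{-c_4\,\mathrm{dist}(s_i,\partial\mathcal{D}_n)}$, which the paper leaves implicit.
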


\begin{proof}
The kernels satisfy $|c(z)| \le Ce^{-c_1\|z\|}$ (by the proof of Lemma~\ref{lem:incdom-bounds}(i)) and $|w(z)| \le Ce^{-c_2\|z\|}$ (Lemma~\ref{lem:wdecay}), uniformly on $\Psi_0$. Indexed by the sites $s_i = \delta z_i$, $\bm{\Sigma}_n$ is the principal submatrix of $\bm{T}$ on the design sites, $\bm{\Sigma}_n = \bm{P}_n\bm{T}\bm{P}_n$, where $\bm{P}_n$ is the coordinate projection onto $\mathcal{D}_n$, and $(\bm{T}^{-1})_{ij} = w((s_i - s_j)/\delta)$. Hence \eqref{eq:Sinvbound} is the finite-section bound of Lemma~\ref{lem:wdecay} written in physical distance, $\|s_i - s_j\| = \delta\|z_i - z_j\|$, with $c_3 = \nicefrac{c_2}{\delta}$. For \eqref{eq:finite-section}, with $\bm{Q}_n = \bm{I} - \bm{P}_n$ we have that $\bm{\Sigma}_n^{-1} - \bm{P}_n\bm{T}^{-1}\bm{P}_n = \bm{\Sigma}_n^{-1} \bm{P}_n\bm{T}\bm{Q}_n\bm{T}^{-1}\bm{P}_n$. Because $(\bm{P}_n\bm{T}^{-1}\bm{P}_n)_{ij} = w\bigl((s_i - s_j)/\delta\bigr)$, we just have to bound the right-hand side to obtain \eqref{eq:finite-section}. The $(i,j)$ entry of the right-hand side is
$$
\sum_{k: s_k \in \mathcal{D}_n}\sum_{l: s_l \notin \mathcal{D}_n}
   (\bm{\Sigma}_n^{-1})_{ik} T_{kl} (\bm{T}^{-1})_{lj},
$$
a sum over chains $s_i \to s_k \to s_l \to s_j$ whose middle site $s_l$ lies outside $\mathcal{D}_n$, because of the factor $\bm{Q}_n$. Each factor decays exponentially in the length of its hop (by \eqref{eq:Sinvbound} for the first, and by the decay of $c$ and of $w$ for the other two) so each term is bounded by $C''e^{-cL}$, where $L := \|s_i - s_k\| + \|s_k - s_l\| + \|s_l - s_j\|$ is the total length of the chain and $c > 0$ is the smallest of the three rates. The triangle inequality gives $L \ge \|s_i - s_j\|$.
Moreover, since the chain starts and ends in $\mathcal{D}_n$ but visits $s_l \notin \mathcal{D}_n$, its outbound and return legs satisfy $\|s_i - s_k\| + \|s_k - s_l\| \ge \|s_i - s_l\| \ge \mathrm{dist}(s_i, \partial\mathcal{D}_n)$ and $\|s_l - s_j\| \ge \mathrm{dist}(s_j, \partial\mathcal{D}_n)$, so also $L \ge \mathrm{dist}(s_i, \partial\mathcal{D}_n) + \mathrm{dist}(s_j, \partial\mathcal{D}_n)$. Averaging the two lower bounds gives 
$L \ge L_{\min} := \tfrac12\bigl(\|s_i - s_j\| + \mathrm{dist}(\{s_i, s_j\}, \partial\mathcal{D}_n)\bigr)$.
Finally, split each rate as $\varepsilon + (c - \varepsilon)$ with $0 < \varepsilon < c$: since $\sup_{i,j,n}\sum_{k,l} e^{-\varepsilon L} < \infty$ (a double convolution of summable kernels), the double sum is bounded by a constant multiple of $e^{-(c-\varepsilon)L_{\min}}$. This proves \eqref{eq:finite-section}, with the constant $c_4$ there given by $(c - \varepsilon)/2$.
The final statement follows as a boundary layer of thickness $t$ contains $O(t  a_n^{d-1}/\delta^d)$ sites, so $\sum_i e^{-c_4 \mathrm{dist}(s_i,\partial\mathcal{D}_n)} = O(a_n^{d-1})$, while $n \asymp (a_n/\delta)^d$.
\end{proof}

\begin{lemma}\label{lem:info-conv}
Under Assumption~\ref{ass:B}, $n^{-1}\bm{I}_n(\psi) \to \bm{I}(\psi)$ locally uniformly on a neighbourhood of $\psi_0$.
\end{lemma}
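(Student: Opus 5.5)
The plan is to handle the two summands of \eqref{eq:fisher-app} separately, writing $n^{-1}\bm{I}_n = n^{-1}\bm{M}_n + n^{-1}\bm{C}_n$, and to show that each converges to its counterpart in \eqref{eq:info-limit}. Local uniformity will come from bounds that are uniform on $\Psi_0$ together with equicontinuity in $\psi$.

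For the covariance part, $\bm{\Sigma}_n$ and $\partial_k\bm{\Sigma}_n$ are principal sections $\bm{P}_n\bm{T}\bm{P}_n$ and $\bm{P}_n(\partial_k\bm{T})\bm{P}_n$ of the lattice convolution operators, where $\partial_k\bm{T}$ is convolution by $\partial_k c$. I will write
\[
\tfrac{1}{2n}\operatorname{tr}\bigl(\bm{\Sigma}_n^{-1}\partial_k\bm{\Sigma}_n\bm{\Sigma}_n^{-1}\partial_\ell\bm{\Sigma}_n\bigr)
\]
as a sum over closed chains $s_i\to s_j\to s_k\to s_l\to s_i$. I then replace each $(\bm{\Sigma}_n^{-1})_{ij}$ by $w((s_i-s_j)/\delta)$ using \eqref{eq:finite-section}. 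Each replacement costs an error that is exponentially small away from $\partial\mathcal{D}_n$. Because every factor decays exponentially by \eqref{eq:Sinvbound} and Lemma~\ref{lem:incdom-bounds}(ii), the total error is $O(a_n^{d-1})=o(n)$, exactly as in the last statement of Lemma~\ref{lem:Sigmainv}.

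After the replacement, every interior site $s_i$ contributes the same bi-infinite lattice sum, except for chains that leave $\mathcal{D}_n$; those chains again contribute only at the boundary layer. This common per-site value is
\[
\tfrac12\bigl(w*\partial_kc*w*\partial_\ell c\bigr)(0)=\tfrac12\bigl(\bm{T}^{-1}\partial_kc,\bm{T}^{-1}\partial_\ell c\bigr)_{\ell^2},
\]
where the equality uses the symmetry of $w$ and $c$. Averaging over the $n$ sites gives $\bm{C}(\psi)_{k\ell}$.

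For the mean part, I expand
\[
n^{-1}(\partial_k\bm{\mu}_n)^\top\bm{\Sigma}_n^{-1}\partial_\ell\bm{\mu}_n = n^{-1}\sum_{i,j}\partial_k\mu(s_i)(\bm{\Sigma}_n^{-1})_{ij}\partial_\ell\mu(s_j).
\]
By Lemma~\ref{lem:incdom-bounds}(iii) the mean derivatives are bounded entrywise by a constant. Combined with \eqref{eq:finite-section}, this lets me replace $\bm{\Sigma}_n^{-1}$ by $w$ at cost $o(1)$. What remains is $\sum_z w(z)\langle\partial_k\bm{\mu},\partial_\ell\bm{\mu}(\cdot-\delta z)\rangle_n$, restricted to lags with $s_i-\delta z\in\mathcal{D}_n$. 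The proof of Lemma~\ref{lem:info-structure} already shows that the fixed-lag averages converge, and the restriction at each lag affects only a boundary fraction of sites. The bounds $|w(z)|\le Ce^{-c_2\|z\|}$ and $|\langle\cdot,\cdot\rangle_n|\le C$ then allow dominated convergence over $z$, giving the limit $\bm{M}(\psi)_{k\ell}$.

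Local uniformity is the main obstacle, since Assumption~\ref{ass:B} gives pointwise convergence of $\hat r_n$ only. I would first note that all $o(1)$ error terms above are bounded uniformly on $\Psi_0$, because the constants in Lemmas~\ref{lem:incdom-bounds}, \ref{lem:wdecay} and~\ref{lem:Sigmainv} are uniform there. Next I would show that $n^{-1}\bm{I}_n(\psi)$ is equi-Lipschitz in $\psi$ on $\Psi_0$. This follows by differentiating \eqref{eq:fisher-app} once more and using the third-derivative bounds of Lemma~\ref{lem:incdom-bounds}(ii)--(iii), $\|\bm{\Sigma}_n^{-1}\|\le\sigma_\epsilon^{-2}$, and $|\operatorname{tr}(\bm{A})|\le n\|\bm{A}\|$. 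Finally, pointwise convergence of an equicontinuous family on the compact set $\Psi_0$ to the continuous limit $\bm{I}(\psi)$ (Lemma~\ref{lem:info-structure}) upgrades to uniform convergence by an Arzel\`a--Ascoli argument applied to a finite $\varepsilon$-net.
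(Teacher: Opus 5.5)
Your proposal is correct. For the pointwise limits it follows the paper's proof almost step for step. The paper also replaces $\bm{\Sigma}_n^{-1}$ by the section $\bm{W}_n=\bm{P}_n\bm{T}^{-1}\bm{P}_n$, at an error of total absolute size $o(n)$ taken from Lemma~\ref{lem:Sigmainv}. It then converts products of sections into sections of products at boundary cost, which is your ``chains that leave $\mathcal{D}_n$'' step, and reads off the common diagonal value $(w*\partial_kc*w*\partial_\ell c)(0)$ by stationarity. The mean block is handled the same way, by reindexing over lags; the paper truncates at $\|z\|\le R$ where you use dominated convergence over $z$, which is equivalent.

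Where you genuinely differ is in how you get local uniformity. The paper checks that every error term is bounded uniformly on $\Psi_0$. It then argues that the fixed-lag averages converge uniformly, because the $\psi$-dependent kernels $g_k$ are dominated by a $\psi$-free envelope integrated against the $\psi$-free error $\hat r_n-R_X$. You instead show that $\psi\mapsto n^{-1}\bm{I}_n(\psi)$ is equi-Lipschitz and then upgrade pointwise convergence to uniform convergence by compactness. The Lipschitz bound comes from differentiating \eqref{eq:fisher-app} once more. This needs only second-order derivative bounds from Lemma~\ref{lem:incdom-bounds}, together with $\|\bm{\Sigma}_n^{-1}\|\le\sigma_\epsilon^{-2}$ and $|\operatorname{tr}\bm{A}|\le n\|\bm{A}\|$. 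Take $\Psi_0$ to be a closed ball so the mean value theorem applies.

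Your route is cleaner in one respect: it needs convergence only at each fixed $\psi$, so it never has to argue uniformity of the design averages. That also makes your preliminary remark about uniform error bounds unnecessary. It gives Lipschitz continuity of the limit $\bm{I}(\psi)$ for free, which the paper proves separately in Lemma~\ref{lem:info-structure}. The paper's route avoids the extra differentiation and gives explicit $\psi$-uniform control of each error term.
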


\begin{proof}
The trace terms of \eqref{eq:fisher-app} depend only on $(\kappa, \tau, \sigma_\epsilon^2)$, and $\partial_k\bm{\Sigma}_n$ is the finite section of the lattice convolution by $\partial_k c$, which decays exponentially by Lemma~\ref{lem:incdom-bounds}(ii). 
Write $\bm{\Sigma}_n^{-1} = \bm{W}_n + \bm{E}_n$, where $\bm{W}_n := \bm{P}_n\bm{T}^{-1}\bm{P}_n$ is the finite section of the convolution
by $w$ ($\bm{P}_n$ as in the proof of Lemma~\ref{lem:Sigmainv}) and, by \eqref{eq:finite-section} and the final statement of Lemma~\ref{lem:Sigmainv}, the error matrix satisfies $\sum_{i,j \le n}|(\bm{E}_n)_{ij}| = o(n)$. Expanding the trace multilinearly in the two factors $\bm{\Sigma}_n^{-1}$, every term containing $\bm{E}_n$ is bounded by a constant multiple of $\sum_{i,j}|(\bm{E}_n)_{ij}|$, since the remaining factors have row and column sums bounded uniformly in $n$. Therefore,
$$
\operatorname{tr}\bigl(\bm{\Sigma}_n^{-1}(\partial_k\bm{\Sigma}_n)
\bm{\Sigma}_n^{-1}(\partial_\ell\bm{\Sigma}_n)\bigr)
= \operatorname{tr}\bigl(\bm{W}_n(\partial_k\bm{\Sigma}_n)
\bm{W}_n(\partial_\ell\bm{\Sigma}_n)\bigr) + o(n).
$$
All four factors on the right are finite sections of lattice convolutions with exponentially decaying kernels.  Sections do not multiply exactly, but the discrepancy is of boundary order by the same argument as in the proof of \eqref{eq:finite-section}. Specifically, by the identity $(\bm{P}_n\bm{A}\bm{P}_n)(\bm{P}_n\bm{B}\bm{P}_n) = \bm{P}_n\bm{A}\bm{B}\bm{P}_n - \bm{P}_n\bm{A}\bm{Q}_n\bm{B}\bm{P}_n$, each replacement of a product of sections by the section of the product produces a correction whose entries are sums over chains passing through a site outside $\mathcal{D}_n$.  Its diagonal entries are $O(e^{-c \mathrm{dist}(s_i, \partial\mathcal{D}_n)})$, and their sum over the $n$ diagonal sites is $O(a_n^{d-1}) = o(n)$. Applying this three times, once at each junction of the four factors in the trace, $\operatorname{tr}\bigl(\bm{W}_n(\partial_k\bm{\Sigma}_n)
\bm{W}_n(\partial_\ell\bm{\Sigma}_n)\bigr) = \operatorname{tr}\bigl(\bm{P}_n \bm{T}^{-1}(\partial_k\bm{T})\bm{T}^{-1}(\partial_\ell\bm{T}) \bm{P}_n\bigr) + o(n)$, where $\partial_k\bm{T}$ denotes the convolution by $\partial_k c$ on $\ell^2(\mathbb{Z}^d)$. By stationarity, each of the $n$ diagonal entries of $\bm{T}^{-1}(\partial_k\bm{T})\bm{T}^{-1}(\partial_\ell\bm{T})$ equals its entry at $z = z' = 0$, which is
$(w * \partial_k c * w * \partial_\ell c)(0) = (\bm{T}^{-1}\partial_k c, \bm{T}^{-1}\partial_\ell c)_{\ell^2(\mathbb{Z}^d)}$, as $c$ and its parameter derivatives, and hence also $w$, are even kernels.
Dividing by $2n$ and letting $n \to \infty$ therefore yields, uniformly on $\Psi_0$,
$$
\tfrac{1}{2n}\operatorname{tr}\bigl(\bm{\Sigma}_n^{-1} (\partial_k\bm{\Sigma}_n)\bm{\Sigma}_n^{-1} (\partial_\ell\bm{\Sigma}_n)\bigr)
\longrightarrow
\tfrac12\bigl(\bm{T}^{-1}\partial_k c, \bm{T}^{-1}\partial_\ell c\bigr)_{\ell^2(\mathbb{Z}^d)} = \bm{C}(\psi)_{k\ell}.
$$
This result is similar to \citet[Proposition~3.2]{Bachoc2014}, who establishes the analogous almost-sure trace convergence on (possibly perturbed) regular grids in a zero-mean, nugget-free setting.

For the mean block, the entries are
$$
n^{-1}(\partial_k\bm{\mu}_n)^\top\bm{\Sigma}_n^{-1}(\partial_\ell\bm{\mu}_n) = n^{-1}\sum_{i,j \le n} \partial_k\bm{\mu}(s_i) (\bm{\Sigma}_n^{-1})_{ij} \partial_\ell\bm{\mu}(s_j), 
$$
with the regressors $\partial_k\bm{\mu} = g_k * X$ of the proof of Lemma~\ref{lem:info-structure}, all bounded uniformly by Lemma~\ref{lem:incdom-bounds}(iii).
We now split $\bm{\Sigma}_n^{-1} = \bm{W}_n + \bm{E}_n$ as in the covariance block. The contribution of $\bm{E}_n$ is bounded by $\|\partial_k\bm{\mu}\|_\infty \|\partial_\ell\bm{\mu}\|_\infty n^{-1}\sum_{i,j}|(\bm{E}_n)_{ij}| = o(1)$, by the final statement of Lemma~\ref{lem:Sigmainv}. In the contribution of $\bm{W}_n$, whose entries are $w((s_i - s_j)/\delta)$, reindex the sum over $j$ by the lag $z = (s_i - s_j)/\delta$:
$$
n^{-1}(\partial_k\bm{\mu}_n)^\top\bm{W}_n(\partial_\ell\bm{\mu}_n)
= \sum_{z\in\mathbb{Z}^d} w(z) n^{-1}\!\!\sum_{i: s_i, s_i - \delta z \in \mathcal{D}_n} \partial_k\bm{\mu}(s_i)\partial_\ell\bm{\mu}(s_i - \delta z).
$$
Now, the restriction $s_i - \delta z \in \mathcal{D}_n$ may be dropped since for each lag it affects a fraction $O(\delta\|z\|/a_n)$ of the
sites, and $\sum_z |w(z)| \|z\| < \infty$ makes the total cost $O(1/a_n)$. Furthermore, the lag sum may be truncated at $\|z\| \le R$, since the lag averages are uniformly bounded and $w$ decays exponentially, so the tail is $O(\sum_{\|z\| > R}|w(z)|) = O(e^{-cR})$, uniformly in $n$. Combining these steps yields
$$
\frac1n(\partial_k\bm{\mu}_n)^\top\bm{\Sigma}_n^{-1}(\partial_\ell\bm{\mu}_n) = \sum_{\|z\|\le R} w(z) \Bigl[ \frac1n\!\!\sum_{s_i\in\mathcal{D}_n}
\partial_k\bm{\mu}(s_i) \partial_\ell\bm{\mu}(s_i - \delta z)\Bigr]
+ O(e^{-cR}) + o(1),
$$
uniformly on $\Psi_0$. By the computation in the proof of Lemma~\ref{lem:info-structure}, the average in the bracket converges to
$\langle\partial_k\bm{\mu}, \partial_\ell\bm{\mu}(\cdot - \delta z)\rangle$. Hence, for each fixed $R$, letting $n \to \infty$ leaves
$\sum_{\|z\|\le R} w(z) \langle\partial_k\bm{\mu}, \partial_\ell\bm{\mu}(\cdot - \delta z)\rangle + O(e^{-cR})$.
Then letting $R \to \infty$ recovers the full lag series $\sum_{z} w(z)\langle\partial_k\bm{\mu}, \partial_\ell\bm{\mu}(\cdot - \delta z)\rangle
= \langle\partial_k\bm{\mu},  w * \partial_\ell\bm{\mu}\rangle$. This is absolutely convergent, and equal to $\bm{M}(\psi)_{k\ell}$ of
\eqref{eq:info-limit}, by the lag-sum identity established in the proof of Lemma~\ref{lem:info-structure}, so $n^{-1}(\partial_k\bm{\mu}_n)^\top\bm{\Sigma}_n^{-1} (\partial_\ell\bm{\mu}_n) \to \bm{M}(\psi)_{k\ell}$.

Finally, every error term above was bounded uniformly on $\Psi_0$, and the convergence of the bracketed lag averages is uniform on $\Psi_0$ as well as at each lag the error is dominated by the integral of the $\psi$-uniform envelope of the kernels $g_k$ (Lemma~\ref{lem:incdom-bounds}(iii)) against the $\psi$-independent error of the shifted design averages, which vanishes by the dominated convergence argument in the proof of Lemma~\ref{lem:info-structure}. Hence $\sup_{\psi\in\Psi_0}\|n^{-1}\bm{I}_n(\psi) - \bm{I}(\psi)\| \to 0$. In particular, with $c_0 > 0$ from Lemma~\ref{lem:info-structure}, $n^{-1}\bm{I}_n(\psi) \succeq \tfrac12
c_0\bm{I}$ on $\Psi_0$ for all $n$ large.
\end{proof}

\begin{remark}\label{rem:ass-examples}\label{rem:ergodic-X}
Assumption~\ref{ass:B} admits a rich class of covariates, such as quasi-periodic functions $X(s) = m + \sum_j a_j\cos(\omega_j^\top s + \varphi_j)$ whose frequencies are non-resonant with the sampling lattice and include at least two distinct $\|\omega_j\|$. 
The condition on $X$ is stated deterministically for transparency, but extend to the random case. If $X$ is a realisation of a bounded, mean-square
continuous, strictly stationary random field on $\R^d$, independent of $(\W, \epsilon)$, whose shift action is ergodic along $\delta\mathbb{Z}^d$ (e.g.\ any mixing field), and whose spectral measure is not concentrated on a single modulus, then for each fixed $(u, h)$ a multiparameter pointwise ergodic theorem \citep[Theorem~6.2.8]{Krengel1985} applied to the field $s \mapsto X(s+u)X(s+u+h)$ gives $\hat r_n(u, h) \to \E[X(0)X(h)] = R_X(h)$ almost surely. By Fubini's theorem the convergence holds almost surely for almost every $(u, h)$, which is all that the dominated-convergence step in the proof of Lemma~\ref{lem:info-structure} uses. Lemmas~\ref{lem:info-structure}--\ref{lem:info-conv} therefore apply to almost every realisation, with the same limit $\bm{I}(\psi)$. We omit the routine details. 
\end{remark}

\begin{proof}[Proof of Theorem~\ref{thm:incdom}]
We start by showing consistency. Define the normalised contrast $\mathcal{K}_n(\psi) := n^{-1}\E_{\psi_0}[\ell_n(\psi_0) - \ell_n(\psi)] \ge 0$. Throughout this step, the bounds of Lemma~\ref{lem:incdom-bounds} are used with $\Psi$ in place of $\Psi_0$, which is legitimate by compactness of $\Psi$. Under $\psi_0$, write $\widetilde{\bm{Y}}_n = \bm{\mu}_n(\psi_0) + \bm{r}_n$ with $\bm{r}_n \sim N(\bm{0}, \bm{\Sigma}_n(\psi_0))$, so that
$$
n^{-1}\ell_n(\psi) = -\tfrac12\log(2\pi) - \tfrac{1}{2n}\log\det\bm{\Sigma}_n(\psi) - \tfrac{1}{2n}\bigl(\bm{r}_n + \bm{\Delta}_n(\psi)\bigr)^\top \bm{\Sigma}_n(\psi)^{-1} \bigl(\bm{r}_n + \bm{\Delta}_n(\psi)\bigr),
$$
where $\bm{\Delta}_n(\psi) := \bm{\mu}_n(\psi_0) - \bm{\mu}_n(\psi)$, and 
$$
\Var\bigl(n^{-1}\ell_n(\psi)\bigr) = \frac1{4n^2}\Var(\bm{r}_n^\top\bm{\Sigma}_n(\psi)^{-1}\bm{r}_n) + \frac1{n^2}\Var(\bm{r}_n^\top\bm{\Sigma}_n(\psi)^{-1}\bm{\Delta}_n).
$$
With $\bm{A} := \bm{\Sigma}_n(\psi)^{-1}$, $\Var(\bm{r}_n^\top\bm{A}\bm{r}_n) = 2\operatorname{tr}\bigl((\bm{A}\bm{\Sigma}_n(\psi_0))^2\bigr)
\le 2n\|\bm{A}\|^2\|\bm{\Sigma}_n(\psi_0)\|^2$, and  $\Var(\bm{\Delta}_n^\top\bm{A}\bm{r}_n) \le \|\bm{A}\|^2\|\bm{\Sigma}_n(\psi_0)\| \|\bm{\Delta}_n\|^2$ with $\|\bm{\Delta}_n\|^2 \le Cn$. By the eigenvalue bounds of Lemma~\ref{lem:incdom-bounds}(i) and the entrywise mean bounds of
Lemma~\ref{lem:incdom-bounds}(iii), both variances are $O(n)$, so $\Var\bigl(n^{-1}\ell_n(\psi)\bigr) = O(n^{-1})$, uniformly on $\Psi$. 
Similarly, by \eqref{eq:score} evaluated at a general $\psi$ with $\bm{r}_n$ replaced by $\widetilde{\bm{r}}_n := \bm{r}_n + \bm{\Delta}_n(\psi)$ and all matrices evaluated at $\psi$,
$$
\partial_k  \tfrac1n\ell_n(\psi)
= \tfrac1n(\partial_k\bm{\mu}_n)^\top\bm{\Sigma}_n^{-1} \widetilde{\bm{r}}_n + \tfrac{1}{2n} \widetilde{\bm{r}}_n^\top\bm{\Sigma}_n^{-1} (\partial_k\bm{\Sigma}_n)\bm{\Sigma}_n^{-1}\widetilde{\bm{r}}_n - \tfrac{1}{2n}\operatorname{tr}\bigl(\bm{\Sigma}_n^{-1} \partial_k\bm{\Sigma}_n\bigr).
$$
By Lemma~\ref{lem:incdom-bounds}, $\|\partial_k\bm{\mu}_n\| \le C\sqrt{n}$, $\|\bm{\Sigma}_n^{-1}\| \le \sigma_\epsilon^{-2}$,
$\|\partial_k\bm{\Sigma}_n\| \le C$ and $\|\bm{\Delta}_n(\psi)\|^2 \le Cn$, uniformly on $\Psi$. Hence, the three terms are bounded in absolute value by $Cn^{-1/2}\|\widetilde{\bm{r}}_n\|$, $Cn^{-1}\|\widetilde{\bm{r}}_n\|^2$,  and $\tfrac12\|\bm{\Sigma}_n^{-1}\partial_k\bm{\Sigma}_n\| \le C$,
using $n^{-1}|\operatorname{tr}(\bm{M})| \le \|\bm{M}\|$ for the trace term. Since $n^{-1/2}\|\widetilde{\bm{r}}_n\| \le \tfrac12\bigl(1 + n^{-1}\|\widetilde{\bm{r}}_n\|^2\bigr)$ and $\|\widetilde{\bm{r}}_n\|^2 \le 2\|\bm{r}_n\|^2 + 2\|\bm{\Delta}_n(\psi)\|^2 \le 2\|\bm{r}_n\|^2 + Cn$, all three bounds are dominated by a multiple of $1 + n^{-1}\|\bm{r}_n\|^2$, so
$$
\sup_{\psi\in\Psi} \bigl\|\nabla_\psi  n^{-1}\ell_n(\psi)\bigr\|
\le C\bigl(1 + n^{-1}\|\bm{r}_n\|^2\bigr) =: L_n,
$$
with $\E L_n \le C'$ since $\E\|\bm{r}_n\|^2 = \operatorname{tr}\bm{\Sigma}_n(\psi_0) \le Cn$. 
The bounds of Lemma~\ref{lem:incdom-bounds}, and hence the gradient bound above, hold equally on the convex hull of $\Psi$ (again by compactness), so the mean value theorem gives, for all $\psi, \psi' \in \Psi$, $|n^{-1}\ell_n(\psi) - n^{-1}\ell_n(\psi')| \le L_n \|\psi - \psi'\|$. 
This is the stochastic Lipschitz condition (Assumption~3A) of \citet{Newey1991}, with $B_n = L_n$ and $h(t) = t$, where $L_n = O_p(1)$ since, by Markov's inequality, $\sup_n P(L_n > M) \le \nicefrac{C'}{M}$. Taking expectations in the same inequality shows that the functions $\psi \mapsto \E_{\psi_0} n^{-1}\ell_n(\psi)$ are Lipschitz with the $n$-free constant $C'$, hence equicontinuous, and $\Var\bigl(n^{-1}\ell_n(\psi)\bigr) = O(n^{-1})$ gives $n^{-1}\ell_n(\psi) - \E_{\psi_0} n^{-1}\ell_n(\psi) \xrightarrow{p} 0$ for each fixed $\psi$. All conditions of \citet[Corollary~2.2]{Newey1991} are therefore met on the compact $\Psi$, and its conclusion is the uniform law of large numbers $\sup_{\psi\in\Psi}|n^{-1}\ell_n(\psi)
- \E_{\psi_0}  n^{-1}\ell_n(\psi)| \xrightarrow{p} 0$.

Since $\hat\psi_n$ maximises $\ell_n$, we have $n^{-1}\ell_n(\hat\psi_n) \ge n^{-1}\ell_n(\psi_0)$. Replacing both sides by their expectations, at the cost of twice the uniform error, gives $\mathcal{K}_n(\hat\psi_n) \le o_p(1)$. The estimator can therefore only be located where the contrast is asymptotically negligible. To conclude the consistency $\hat\psi_n \xrightarrow{p} \psi_0$, it thus suffices to show that $\liminf_n \mathcal{K}_n(\psi) > 0$ for every fixed $\psi \neq \psi_0$, together with the same property locally uniformly outside any ball around $\psi_0$, so that a non-vanishing contrast excludes every fixed distance from $\psi_0$. Decompose $\mathcal{K}_n(\psi)$ into a mean part $\tfrac1{2n} \bm{\Delta}_n(\psi)^\top \bm{\Sigma}_n(\psi)^{-1} \bm{\Delta}_n(\psi)$ plus a covariance part  
\begin{align*}
\tfrac1{2n}\Bigl[\log\frac{\det\bm{\Sigma}_n(\psi)}{\det\bm{\Sigma}_n(\psi_0)} + \operatorname{tr}\bigl(\bm{\Sigma}_n(\psi)^{-1}\bm{\Sigma}_n(\psi_0)\bigr) - n\Bigr].
\end{align*}
Both parts are non-negative as the mean part is a nonnegative quadratic form, and the covariance part equals $\tfrac{1}{2n}\sum_{j \le n}(\lambda_j - \log\lambda_j - 1) \ge 0$, where $\lambda_j$ are the eigenvalues of $\bm{\Sigma}_n(\psi)^{-1/2}\bm{\Sigma}_n(\psi_0) \bm{\Sigma}_n(\psi)^{-1/2}$. It therefore suffices to bound one of the two parts away from zero, and which part is used may depend on $\psi$.

If the covariance parameters of $\psi$ differ from those of $\psi_0$, the covariance part is bounded away from zero. Indeed, by the eigenvalue bounds of Lemma~\ref{lem:incdom-bounds}(i), the $\lambda_j$ above lie in $(0, b]$ with $b := \nicefrac{C}{\sigma_\epsilon^2} \ge 1$ uniformly in $n$ and $\psi \in \Psi$, and a second-order Taylor expansion of $x \mapsto x - \log x - 1$ around $x = 1$ gives $\lambda_j - \log\lambda_j - 1 \ge (\lambda_j - 1)^2/(2b^2)$.
Summing over $j$, the covariance part is at least
$$
\frac{1}{4nb^2} \bigl\|\bm{\Sigma}_n(\psi)^{-\nicefrac12}
\bigl(\bm{\Sigma}_n(\psi_0) - \bm{\Sigma}_n(\psi)\bigr)
\bm{\Sigma}_n(\psi)^{-\nicefrac12}\bigr\|_F^2
\ge \frac{c}{n} \bigl\|\bm{\Sigma}_n(\psi_0) - \bm{\Sigma}_n(\psi)\bigr\|_F^2,
$$
where the second inequality pulls the outer factors out through $\|\bm{\Sigma}_n(\psi)^{\nicefrac12}\|^2 = \|\bm{\Sigma}_n(\psi)\| \le C$, so that $c > 0$ is uniform in $n$ and $\psi \in \Psi$. Writing $\Delta c := c(\cdot\,;\psi_0) - c(\cdot\,;\psi)$, the squared Frobenius norm is $\sum_{i,j \le n} \Delta c(z_i - z_j)^2$, and for each fixed lag $z$ the number of site pairs with $z_i - z_j = z$ is $n(1 - O(\delta\|z\|/a_n))$, so
$$
\liminf_{n\to\infty} \tfrac1n\bigl\|\bm{\Sigma}_n(\psi_0)
- \bm{\Sigma}_n(\psi)\bigr\|_F^2
\ge \sum_{z\in\mathbb{Z}^d} \Delta c(z)^2.
$$
This is strictly positive unless $c(\cdot\,;\psi) = c(\cdot\,;\psi_0)$ on all of $\mathbb{Z}^d$, that is, unless $\varrho_\psi(\delta z) =
\varrho_{\psi_0}(\delta z)$ for all $z \neq 0$ and $\varrho_\psi(0) + \sigma_\epsilon^2 = \varrho_{\psi_0}(0) + \sigma_{e,0}^2$ at $z = 0$. By the Bessel asymptotics used in the proof of Lemma~\ref{lem:incdom-bounds}, $\varrho_\psi(r) = C(\kappa,\tau) r^{\nu - 1/2} e^{-\kappa r}(1 + o(1))$ as $r \to \infty$, so the decay rate along the lattice identifies $\kappa$, the prefactor then identifies $\tau$, and the value at $z = 0$ identifies $\sigma_\epsilon^2$. Hence $\liminf_n$ of the covariance part is strictly positive whenever the covariance parameters differ; cf.\ \citet[Proposition~3.1]{Bachoc2014} for a spectral-domain analogue.

If instead the covariance parameters agree, then $\mathcal{S}_\psi = \mathcal{S}_{\psi_0}$, $\bm{\Sigma}_n(\psi) = \bm{\Sigma}_n(\psi_0)$, and
$\bm{\Delta}_n = (\beta_{\mathrm{pw},0} - \beta_{\mathrm{pw}})\bm{X}_n + (\beta_{\mathrm{fc},0} - \beta_{\mathrm{fc}})(\mathcal{S}_{\psi_0}X)_n$. Note that for $k, \ell \in \{\beta_{\mathrm{pw}}, \beta_{\mathrm{fc}}\}$, the trace term vanishes in \eqref{eq:fisher-app} since $\partial_\beta\bm{\Sigma}_n = 0$. Therefore, since $\bm{X}_n = \partial_{\beta_{\mathrm{pw}}}\bm{\mu}_n$ and $(\mathcal{S}_{\psi_0}X)_n = \partial_{\beta_{\mathrm{fc}}} \bm{\mu}_n$ at $\psi_0$, the mean part equals $\tfrac12\bm{b}^\top\bm{G}_n\bm{b}$, where $\bm{b} := (\beta_{\mathrm{pw},0} - \beta_{\mathrm{pw}}, \beta_{\mathrm{fc},0} - \beta_{\mathrm{fc}})^\top$ and $\bm{G}_n$ is the $(\beta_{\mathrm{pw}}, \beta_{\mathrm{fc}})$ block of
$n^{-1}\bm{I}_n(\psi_0)$. By Lemma~\ref{lem:info-conv}, $\bm{G}_n$ converges to the corresponding block of $\bm{I}(\psi_0)$, which is
positive definite by Lemma~\ref{lem:info-structure}; hence $\liminf_n \tfrac12\bm{b}^\top\bm{G}_n\bm{b} \ge \lambda_{\min} \|\bm{b}\|^2 > 0$ for some $\lambda_{\min} > 0$ unless $\bbeta = \bbeta_0$.

Finally, $\mathcal{K}_n$ is Lipschitz on $\Psi$ with a constant independent of $n$ as its gradient obeys the same bounds as that of
$\E_{\psi_0} n^{-1}\ell_n$. The pointwise bound therefore also implies that $\liminf_n \inf_{\psi \in \Psi \setminus B(\psi_0,\epsilon)}
\mathcal{K}_n(\psi) > 0$ for every $\epsilon > 0$: otherwise some sequence $\psi_{n_j}$ in the compact set $\Psi \setminus B(\psi_0,\epsilon)$ would satisfy $\mathcal{K}_{n_j}(\psi_{n_j}) \to 0$ and, along a subsequence, $\psi_{n_j} \to \psi^*$ with $\psi^* \neq \psi_0$. Equi-Lipschitz continuity then transfers the vanishing contrast to $\psi^*$, contradicting $\liminf_n \mathcal{K}_n(\psi^*) > 0$. Combined with $\mathcal{K}_n(\hat\psi_n) \le o_p(1)$, this forces $P\bigl(\hat\psi_n \notin B(\psi_0,\epsilon)\bigr) \to 0$ for every
$\epsilon > 0$. 

We now prove the asymptotic normality.
Write $\bm{J}_n(\psi) = -\nabla^2\ell_n(\psi)$ for the observed information, and set $\psi_t := \psi_0 + t(\hat\psi_n - \psi_0)$
and $\bar{\bm{J}}_n := \int_0^1 \bm{J}_n(\psi_t)\, dt$. Since $\psi_0$ is interior to $\Psi$ and $\hat\psi_n \xrightarrow{p}
\psi_0$, the event that $\hat\psi_n$ (and the whole segment $\{\psi_t : t \in [0,1]\}$) lies in a fixed ball $B(\psi_0, r) \subset \operatorname{int}\Psi$ has probability tending to one. We argue on this event, which does not affect distributional limits. On it the maximiser is interior, so $\nabla\ell_n(\hat\psi_n) = 0$, and the fundamental theorem of calculus applied to the $C^1$ map $t \mapsto \nabla\ell_n(\psi_t)$ (first-order Taylor formula with integral remainder, the smoothness follows from Lemma~\ref{lem:incdom-bounds}) gives 
$$
0 = \nabla\ell_n(\psi_0) - \bar{\bm{J}}_n (\hat\psi_n - \psi_0),
\qquad
\sqrt n (\hat\psi_n - \psi_0) = \bigl(n^{-1}\bar{\bm{J}}_n\bigr)^{-1}
n^{-1/2}\nabla\ell_n(\psi_0).
$$
It remains to identify the limits of $\nabla\ell_n(\psi_0)$ and $\bar{\bm{J}}_n$.

For $\bar{\bm{J}}_n$, we have $\E_{\psi_0}\bm{J}_n(\psi_0) = \bm{I}_n(\psi_0)$,  the fluctuation $n^{-1}(\bm{J}_n(\psi_0) - \bm{I}_n(\psi_0))$ has variance $O(n^{-1})$ by Lemma~\ref{lem:incdom-bounds}, and $n^{-1}\bm{I}_n(\psi_0) \to \bm{I}(\psi_0)$ by Lemma~\ref{lem:info-conv}. Hence, $n^{-1}\bm{J}_n(\psi_0) \xrightarrow{p}\bm{I}(\psi_0)$. Further, differentiating \eqref{eq:score} twice, the entries of $\nabla^3\ell_n(\psi)$ are again linear and quadratic forms in $\bm{r}_n$ whose coefficient matrices have operator norms bounded uniformly on $B(\psi_0, r)$, since Lemma~\ref{lem:incdom-bounds} bounds three derivatives of $\bm{\mu}_n$ and $\bm{\Sigma}_n$. Thus, as in the bound for $L_n$, $\sup_{B(\psi_0, r)} n^{-1}\|\nabla^3\ell_n\| \le L_n'$ with $L_n' = O_p(1)$, and the mean value theorem on the (convex) ball gives $\|n^{-1}\bar{\bm{J}}_n - n^{-1}\bm{J}_n(\psi_0)\| \le L_n' \|\hat\psi_n - \psi_0\| \xrightarrow{p} 0$. Therefore $n^{-1}\bar{\bm{J}}_n \xrightarrow{p}\bm{I}(\psi_0)\succ 0$. In
particular $\bar{\bm{J}}_n$ is invertible with probability tending to one, justifying the inversion above.

By \eqref{eq:score} evaluated at $\psi_0$, each coordinate of the score $\nabla\ell_n(\psi_0)$ is a linear form plus a centred quadratic form in the Gaussian vector $\bm{r}_n \sim N(0, \bm{\Sigma}_n(\psi_0))$. We show that $n^{-1/2}\nabla\ell_n(\psi_0) \xrightarrow{d} N(0, \bm{I}(\psi_0))$ by the method of cumulants. Related normality arguments for increasing-domain Gaussian-likelihood scores are given by \citet{MardiaMarshall1984} and \citet[Proposition~3.2]{Bachoc2014}. By the Cram\'er--Wold device it suffices to fix $\bm{\lambda} \neq \bm{0}$ and show $n^{-1/2}S_n \xrightarrow{d} N(0, \bm{\lambda}^\top\bm{I}(\psi_0)\bm{\lambda})$ for $S_n := \bm{\lambda}^\top\nabla\ell_n(\psi_0)$, which takes the form
$S_n = \bm{a}^\top\bm{r}_n + \tfrac12\bigl(\bm{r}_n^\top\bm{B}\bm{r}_n - \operatorname{tr}(\bm{B}\bm{\Sigma}_n)\bigr)$, 
with $\bm{a} := \bm{\Sigma}_n^{-1}\sum_k \lambda_k \partial_k\bm{\mu}_n$, $\bm{B} := \bm{\Sigma}_n^{-1}\Bigl(\sum_k \lambda_k \partial_k\bm{\Sigma}_n\Bigr)\bm{\Sigma}_n^{-1}$ and $\bm{\Sigma}_n = \bm{\Sigma}_n(\psi_0)$. Note that $\operatorname{tr}(\bm{B}\bm{\Sigma}_n)
= \E \bm{r}_n^\top\bm{B}\bm{r}_n$, so the quadratic form is centred. Set $\tilde{\bm{a}} := \bm{\Sigma}_n^{1/2}\bm{a}$ and
$\tilde{\bm{B}} := \bm{\Sigma}_n^{1/2}\bm{B}\bm{\Sigma}_n^{1/2}$, which is symmetric. A Gaussian completion of squares gives, for
$|t| < \|\tilde{\bm{B}}\|^{-1}$, the cumulant generating function
$$
\log \E  e^{tS_n} = -\tfrac12\log\det(\bm{I} - t\tilde{\bm{B}})
+ \tfrac{t^2}{2} \tilde{\bm{a}}^\top(\bm{I} - t\tilde{\bm{B}})^{-1}\tilde{\bm{a}} - \tfrac{t}{2}\operatorname{tr}\tilde{\bm{B}},
$$
and expanding $-\tfrac12\log\det(\bm{I} - t\tilde{\bm{B}}) = \tfrac12\sum_{p \ge 1} t^p \operatorname{tr}(\tilde{\bm{B}}^p)/p$
and the resolvent in powers of $t$ yields $\operatorname{cum}_1(S_n) = 0$ and $\operatorname{cum}_p(S_n) = \tfrac{(p-1)!}{2}\operatorname{tr}(\tilde{\bm{B}}^p) + \tfrac{p!}{2} \tilde{\bm{a}}^\top\tilde{\bm{B}}^{p-2} \tilde{\bm{a}}$ for $p \ge 2$. For $p = 2$, this equals $\bm{a}^\top\bm{\Sigma}_n\bm{a} + \tfrac12\operatorname{tr}\bigl((\bm{B}\bm{\Sigma}_n)^2\bigr) = \bm{\lambda}^\top\bm{I}_n(\psi_0)\bm{\lambda}$, by \eqref{eq:fisher-app}, and $\operatorname{cum}_2(n^{-1/2}S_n) = n^{-1}\bm{\lambda}^\top\bm{I}_n(\psi_0)\bm{\lambda} \to \bm{\lambda}^\top\bm{I}(\psi_0)\bm{\lambda}$ by Lemma~\ref{lem:info-conv}. For $p \ge 3$, let $d_1, \ldots, d_n$ be the eigenvalues of $\tilde{\bm{B}}$. By Lemma~\ref{lem:incdom-bounds}(i)--(ii), $\max_i |d_i| = \|\tilde{\bm{B}}\| \le \|\bm{\Sigma}_n\| \|\bm{B}\| \le K$ with $K$ independent of $n$, while $\sum_i d_i^2 = \operatorname{tr}\bigl((\bm{B}\bm{\Sigma}_n)^2\bigr) \le 2\bm{\lambda}^\top\bm{I}_n(\psi_0)\bm{\lambda} \le Cn$ and $\|\tilde{\bm{a}}\|^2 = \bm{a}^\top\bm{\Sigma}_n\bm{a} \le Cn$. Hence
$$
\bigl|\operatorname{tr}(\tilde{\bm{B}}^p)\bigr| = \Bigl|\sum_i d_i^p\Bigr|
\le K^{p-2}\sum_i d_i^2 \le C K^{p-2} n,
$$
and $\bigl|\tilde{\bm{a}}^\top\tilde{\bm{B}}^{p-2}\tilde{\bm{a}}\bigr| \le K^{p-2} \|\tilde{\bm{a}}\|^2 \le C K^{p-2} n$. Thus, for every $p \ge 3$, $\operatorname{cum}_p(n^{-1/2}S_n) = n^{-p/2}\operatorname{cum}_p(S_n) = O(n^{1-p/2}) \to 0$. All cumulants of $n^{-1/2}S_n$ therefore converge to those of $N(0, \bm{\lambda}^\top\bm{I}(\psi_0)\bm{\lambda})$. Since moments are polynomials in the cumulants and the normal distribution is determined by its moments, the Fr\'echet--Shohat theorem gives $n^{-1/2}S_n \xrightarrow{d} N(0, \bm{\lambda}^\top\bm{I}(\psi_0)\bm{\lambda})$ and finally, Slutsky's theorem applied to $\nabla\ell_n(\psi_0)$ and $\bar{\bm{J}}_n$ yields $\sqrt n (\hat\psi_n - \psi_0) \xrightarrow{d} \bm{I}(\psi_0)^{-1} N(0, \bm{I}(\psi_0)) = N(0, \bm{I}(\psi_0)^{-1})$.
\end{proof}

\subsection{Proofs for Section~\ref{sec:inference}}
We need the following basic finite element result.

\begin{proposition}\label{prop:fem-error}
Let Assumption~\ref{ass:A} hold with $\alpha = 2$, let $\mathcal{D}$ be polygonal, and let $\{V_h\}$, $\mathcal{S}_h$ and $\varrho_h$ be as in Section~\ref{sec:fem-error}. There is a constant $C$, depending only on $\mathcal{D}$, $\kappa$, $\tau$, and the mesh, such that for every $X \in L^2(\mathcal{D})$ and all sufficiently small $h$, 
\begin{enumerate}[label=(\roman*),leftmargin=*,itemsep=1pt]
\item $\|\mathcal{S}X - \mathcal{S}_h X\|_{L^2(\mathcal{D})} \le C h^2  \|X\|_{L^2(\mathcal{D})}$;
\item $\|\mathcal{S}X - \mathcal{S}_h X\|_{L^\infty(\mathcal{D})} \le C h^{2 - d/2}  \|X\|_{L^2(\mathcal{D})}$;
\item $\sup_{s, t \in \barD} |\varrho(s,t) - \varrho_h(s,t)| \le C h^{2 - d/2}$.
\end{enumerate}
\end{proposition}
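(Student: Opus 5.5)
The plan is to treat $\mathcal{S}X - \mathcal{S}_hX = \tau^{-1}(L^{-1} - L_h^{-1}P_h)X$ as a standard Galerkin error for the Neumann problem $Lu = f$ with $f = \tau^{-1}X \in L^2(\mathcal{D})$. Under Assumption~\ref{ass:A} (convex polygonal domain), elliptic regularity gives $u = L^{-1}f \in H^2(\mathcal{D})$ with $\|u\|_{H^2} \le C\|f\|_{L^2}$; this is Lemma~\ref{lem:embed}(c) with $\alpha = 2$. The discrete solution $u_h = L_h^{-1}P_h f$ is exactly the Galerkin projection of $u$ onto $V_h$, because $(L_h u_h, v) = (P_h f, v) = (f, v)$ for all $v \in V_h$. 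Céa's lemma in the energy norm $a(v,v) = \kappa^2\|v\|^2 + \|\nabla v\|^2$, combined with the interpolation estimate, then gives $\|u - u_h\|_{H^1} \le C h \|u\|_{H^2}$. The Aubin--Nitsche duality argument, which again uses $H^2$-regularity of the adjoint problem (the same operator, since $L$ is self-adjoint), upgrades this to $\|u - u_h\|_{L^2} \le C h^2 \|f\|_{L^2}$. This proves (i).

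For (ii), split $u - u_h = (u - I_h u) + (I_h u - u_h)$, where $I_h$ is the nodal interpolant; this is well defined since $H^2 \hookrightarrow C(\barD)$ for $d \le 3$. The interpolation error in $L^\infty$ is bounded by $C h^{2 - d/2}\|u\|_{H^2}$, either from the standard local estimate on each element followed by scaling, or from Bramble--Hilbert with the embedding $H^2 \hookrightarrow L^\infty$ on the reference element. The second term lies in $V_h$, so the inverse inequality on the quasi-uniform mesh gives $\|I_h u - u_h\|_{L^\infty} \le C h^{-d/2}\|I_h u - u_h\|_{L^2}$. By (i) and the $L^2$ interpolation bound, $\|I_h u - u_h\|_{L^2} \le C h^2\|f\|_{L^2}$. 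Combining the two pieces yields $C h^{2-d/2}\|X\|_{L^2}$.

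For (iii), use the factorisations of the covariance operators, $\mC = \mathcal{S}\mathcal{S}^*$ and $\mC_h = \mathcal{S}_h\mathcal{S}_h^*$, where $\mathcal{S}_h^* = \mathcal{S}_h$ as an operator on $L^2$ because $L_h^{-1}P_h$ is self-adjoint. This gives the kernel representation
\[
\varrho(s,t) = (\mathcal{S}^*\delta_s, \mathcal{S}^*\delta_t)_{L^2} = (g_s, g_t)_{L^2}, \qquad \varrho_h(s,t) = (g^h_s, g^h_t)_{L^2},
\]
where $g_s = \mathcal{S}\delta_s$ is the Green's function of $\tau^{-1}L^{-1}$ and $g^h_s$ is its discrete counterpart. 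Because $\delta_s \notin L^2$, I would not work with Green's functions directly. Instead I would write
\[
\varrho - \varrho_h = (\mathcal{S} - \mathcal{S}_h)\mathcal{S} + \mathcal{S}_h(\mathcal{S} - \mathcal{S}_h)
\]
as operators, and evaluate the kernel at $(s,t)$ as $\bigl((\mathcal{S} - \mathcal{S}_h)g_t\bigr)(s)$ plus a similar term. For $d \le 3$ we have $g_t \in L^2$ with $\sup_t \|g_t\|_{L^2}^2 = \sup_t \varrho(t,t) < \infty$, since $\alpha = 2 > d/2$. Applying (ii) with $X = g_t$ bounds the first term by $C h^{2-d/2}$, uniformly in $t$.

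The second term equals $\bigl(\mathcal{S}_h(\mathcal{S} - \mathcal{S}_h)\bigr)$ evaluated at the kernel level, that is $\bigl((\mathcal{S} - \mathcal{S}_h)g^h_s\bigr)(t)$ by symmetry. It is therefore handled the same way once I show $\sup_s \|g^h_s\|_{L^2} \le C$ uniformly in $h$, i.e.\ $\sup_s \varrho_h(s,s) \le C$. This follows from the triangle inequality: $\varrho_h(s,s)^{1/2} = \|g^h_s\|_{L^2} \le \|g_s\|_{L^2} + \|g_s - g^h_s\|_{L^2}$. The last term can be bounded through the discrete stability estimate $\|L_h^{-1}P_h\|_{L^1 \to L^2} \le C$, which follows from the discrete Sobolev inequality $\|v\|_{L^\infty} \le C\|v\|_{H^{2}_h}$ by duality, or alternatively by an absorption argument.

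I expect this uniform control of the discrete Green's function to be the main obstacle. It is where $d = 3$ is delicate, and I would first try the duality bound $\|L_h^{-1}P_h\delta_s\|_{L^2} = \sup_{\|v\|_{L^2}=1}|(L_h^{-1}P_h v)(s)|$. The right-hand side is bounded by (ii) plus $\|L^{-1}v\|_{L^\infty} \le C$, and the bound is uniform for $h$ small.
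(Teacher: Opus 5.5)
Your proposal is correct and follows the paper's proof essentially step for step. Part (i) goes through Galerkin orthogonality, $H^2$-regularity and Aubin--Nitsche; part (ii) through the interpolant split with the inverse inequality; and in (iii) your splitting $\mathcal{S}^2-\mathcal{S}_h^2=(\mathcal{S}-\mathcal{S}_h)\mathcal{S}+\mathcal{S}_h(\mathcal{S}-\mathcal{S}_h)$ has kernels exactly $(g_s-g^h_s,g_t)$ and $(g^h_s,g_t-g^h_t)$, which are the paper's two terms. The duality bound you finally settle on for $\sup_s\|g^h_s\|_{L^2}$, writing $(L_h^{-1}P_hv)(s)=(L^{-1}v)(s)-\tau(\mathcal{S}v-\mathcal{S}_hv)(s)$ and invoking (ii), is precisely the paper's argument. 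So the discrete-Sobolev and $L^1\to L^2$ detour is unnecessary, and $d=3$ poses no extra difficulty because (ii) already gives a factor $h^{1/2}$.
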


\begin{proof}
Throughout this proof, $a(v, w) := \kappa^2 (v,w)_{L^2} + (\nabla v, \nabla w)_{L^2}$ denotes the bilinear form of $L$ on $H^1(\mathcal{D})$, which is bounded, symmetric, and coercive since $\kappa > 0$. We let $C$ denote a generic constant with the stated dependencies and  repeatedly use that for quasi-uniform, shape-regular families and $d \le 3$ \citep[see][Theorem~4.4.20]{BrennerScott2008}, the nodal interpolant $I_h : H^2(\mathcal{D}) \to V_h$ satisfies
\begin{equation}\label{eq:interp}
\|u - I_h u\|_{L^2} + h \|u - I_h u\|_{H^1} \le C_I  h^2 |u|_{H^2},
\quad
\|u - I_h u\|_{L^\infty} \le C_I  h^{2 - \nicefrac{d}{2}} |u|_{H^2},
\end{equation}
Further, by Assumption~\ref{ass:A} the Neumann problem is $H^2$-regular, i.e., for $g \in L^2(\mathcal{D})$, $u = L^{-1}g$ satisfies
$\|u\|_{H^2} \le C_R \|g\|_{L^2}$ \citep[Theorems~2.2.2.5 and~3.2.1.3]{Grisvard1985}.

Fix $X \in L^2(\mathcal{D})$, let $u = L^{-1}X$ and $u_h = L_h^{-1}P_h X$, so that $\mathcal{S}X - \mathcal{S}_h X = \tau^{-1}(u - u_h)$, and write $e = u - u_h$. Since $(P_h X, v_h)_{L^2} = (X, v_h)_{L^2}$ for all $v_h \in V_h$ by the definition of $P_h$, $u_h$ is the conforming Galerkin approximation in $V_h$ of the variational solution $u$ of $a(u, v) = (X, v)_{L^2}$, $v \in H^1(\mathcal{D})$. Part (i) is therefore the classical $L^2$ error estimate for an $H^2$-regular second-order problem: since $L$ is self-adjoint the dual problem coincides with the primal one, and \citet[Thm.~5.7.6]{BrennerScott2008} together with \eqref{eq:interp} gives $\|e\|_{L^2} \le C h \|e\|_{H^1} \le C h^2 |u|_{H^2} \le C h^2 \|X\|_{L^2}$. Multiplying by $\tau^{-1}$ gives (i). 

For (ii), split $e = (u - I_h u) + (I_h u - u_h)$. The first term is bounded by \eqref{eq:interp}: $\|u - I_h u\|_{L^\infty} \le C h^{2 - \nicefrac{d}{2}} |u|_{H^2} \le C h^{2 - \nicefrac{d}{2}}\|X\|_{L^2}$. The second term is in $V_h$, so by the inverse estimate $\|v_h\|_{L^\infty} \le C_{\mathrm{inv}} h^{-\nicefrac{d}{2}} \|v_h\|_{L^2}$, $v_h \in V_h$ \citep[Theorem~4.5.11]{BrennerScott2008}, the triangle inequality, \eqref{eq:interp}, and (i),
\begin{align*}
\|I_h u - u_h\|_{L^\infty} &\le C_{\mathrm{inv}} h^{-\frac{d}{2}}
\bigl( \|I_h u - u\|_{L^2} + \|u - u_h\|_{L^2} \bigr) 
\le C h^{-\frac{d}{2}} \cdot h^2 \|X\|_{L^2}.
\end{align*}
Adding the two bounds and multiplying by $\tau^{-1}$ gives (ii).

To prove (iii) note that for $s \in \barD$,  $w \mapsto (L^{-1}w)(s)$ is a bounded linear functional on $L^2(\mathcal{D})$, since
$\|L^{-1}w\|_{L^\infty} \le C_S \|L^{-1}w\|_{H^2} \le C_S C_R \|w\|_{L^2}$ by the Sobolev embedding and $H^2$-regularity. A  $g_s \in L^2(\mathcal{D})$ exists with
\begin{equation}\label{eq:green-cont}
(g_s, w)_{L^2} = (L^{-1}w)(s) \quad \text{for all } w \in L^2(\mathcal{D}),
\qquad
\|g_s\|_{L^2} \le C_S C_R, 
\end{equation}
by Riesz representation theorem. In the eigenbasis, $(g_s, e_j) = (L^{-1}e_j)(s) = \lambda_j^{-1}e_j(s)$, so $g_s = \sum_j \lambda_j^{-1} e_j(s) e_j$.
Thus, with $U = \tau^{-1}L^{-1}\W = \tau^{-1}\sum_j \lambda_j^{-1}\xi_j e_j$,
$$
\E\bigl[U(s)  U(t)\bigr] = \tau^{-2} \sum_j \lambda_j^{-2} e_j(s) e_j(t)
= \tau^{-2} (g_s, g_t)_{L^2},
$$
that is, $\varrho(s,t) = \tau^{-2}(g_s, g_t)_{L^2}$. Discretely, the functional $w \mapsto (L_h^{-1}P_h w)(s) = (L^{-1}w)(s) - \tau (\mathcal{S}w - \mathcal{S}_h w)(s)$ is also bounded on $L^2(\mathcal{D})$, by \eqref{eq:green-cont} and (ii), and it vanishes for $w \perp V_h$, since then $P_h w = 0$. Its Riesz representer $g^h_s$ therefore lies in $V_h$, and expanding $U_h = \tau^{-1}L_h^{-1}P_h\W$ in any orthonormal basis of $V_h$
gives, exactly as in the continuous case, $\varrho_h(s,t) = \tau^{-2}(g^h_s, g^h_t)_{L^2}$. Moreover, by construction, $g_s - g^h_s$ is the representer of $w \mapsto \tau (\mathcal{S}w - \mathcal{S}_h w)(s)$, so (ii) gives, uniformly in $s \in \barD$,
$$
\|g_s - g^h_s\|_{L^2} = \sup_{\|w\|_{L^2} = 1} \tau \bigl|(\mathcal{S}w - \mathcal{S}_h w)(s)\bigr|
\le \tau C h^{2 - \nicefrac{d}{2}}.
$$
Combining the previous steps, the Cauchy--Schwarz inequality, and $\|g^h_s\|_{L^2} \le \|g_s\|_{L^2} + \|g_s - g^h_s\|_{L^2} \le C$ for $h$ small,
\begin{align*}
|\varrho(s,t) - \varrho_h(s,t)| &\le \tau^{-2}\Bigl(
   \bigl|\bigl(g_s - g^h_s,  g_t\bigr)\bigr|
+ \bigl|\bigl(g^h_s,  g_t - g^h_t\bigr)\bigr| \Bigr) \\
&\le \tau^{-2}\bigl( C h^{2-\nicefrac{d}{2}}  \|g_t\|_{L^2}
   + C \|g_t - g^h_t\|_{L^2} \bigr) \le C h^{2 - \nicefrac{d}{2}},
\end{align*}
uniformly over $s, t \in \barD$.
\end{proof}

\begin{proof}[Proof of Theorem~\ref{thm:fem-estimator}]
The rates $\varepsilon_h = O(h^{2 - \nicefrac{d}{2}})$ and $\delta_h = O(h^{2 - \nicefrac{d}{2}})$ are parts~(ii) and~(iii) of Proposition~\ref{prop:fem-error}, the first with a constant proportional to $\|X\|_{L^2(\mathcal{D})}$. It remains to prove (i) and (ii).
Write $\bm{D} = \bm{D}_n$ and $\widetilde{\bm{D}} = \bm{D}_{n,h}$ for the $n \times 2$ design matrices with columns $(X(s_i))_i$ and, respectively,
$(\mathcal{S}X(s_i))_i$ and $((\mathcal{S}_hX)(s_i))_i$. Further write $\bm{\Sigma} = \bm{\Sigma}_n = \bm{K}_n + \sigma_\epsilon^2 \bm{I}$ and
$\widetilde{\bm{\Sigma}} = \bm{\Sigma}_{n,h} = \bm{K}_{n,h} + \sigma_\epsilon^2 \bm{I}$, and let $\bm{M} = \bm{D}^\top \bm{\Sigma}^{-1} \bm{D}$ and $\widetilde{\bm{M}} = \widetilde{\bm{D}}^\top \widetilde{\bm{\Sigma}}^{-1}\widetilde{\bm{D}}$. Throughout, $\|\cdot\|$ denotes the spectral norm and $C$ a constant depending only on $\sigma_\epsilon^2$, $\|X\|_{L^\infty}$, $\|\mathcal{S}X\|_{L^\infty}$, $\|\bbeta_0\|$, and $\bm{G}$ (both supremum norms are finite because $X, \mathcal{S}X \in \HH \hookrightarrow C(\barD)$).
Since $\bm{K}_n, \bm{K}_{n,h} \succeq 0$, we have $\|\bm{\Sigma}^{-1}\| \le \sigma_\epsilon^{-2}$ and $\|\widetilde{\bm{\Sigma}}^{-1}\| \le \sigma_\epsilon^{-2}$. The difference $\bm{D} - \widetilde{\bm{D}}$ has first column zero and second column with entries $(\mathcal{S}X - \mathcal{S}_hX)(s_i)$, so $\|\bm{D} - \widetilde{\bm{D}}\| \le \|\bm{D} - \widetilde{\bm{D}}\|_F \le \sqrt{n}  \varepsilon_h$; and $\bm{\Sigma} - \widetilde{\bm{\Sigma}} = \bm{K}_n - \bm{K}_{n,h}$ is symmetric with entries bounded by $\delta_h$, so $\|\bm{\Sigma} - \widetilde{\bm{\Sigma}}\| \le \max_i \sum_{j \le n} |(\bm{K}_n - \bm{K}_{n,h})_{ij}| \le n \delta_h$.
Moreover $\|\bm{D}\| \le \|\bm{D}\|_F \le \sqrt{2n} \max(\|X\|_\infty, \|\mathcal{S}X\|_\infty) \le C\sqrt n$, and hence also $\|\widetilde{\bm{D}}\| \le C\sqrt n$, since $\varepsilon_h \le n\varepsilon_h \le \eta_{n,h} \le \eta_0 \le 1$ for the choice of $\eta_0$ below. Finally, by the resolvent identity and the above bounds, $\|\bm{\Sigma}^{-1} - \widetilde{\bm{\Sigma}}^{-1}\| = \|\bm{\Sigma}^{-1}(\widetilde{\bm{\Sigma}} - \bm{\Sigma})\widetilde{\bm{\Sigma}}^{-1}\| \le \sigma_\epsilon^{-4}  n  \delta_h$.
Now, decompose
$$
\widetilde{\bm{M}} - \bm{M}
= (\widetilde{\bm{D}} - \bm{D})^\top \widetilde{\bm{\Sigma}}^{-1} \widetilde{\bm{D}}
+ \bm{D}^\top \bigl(\widetilde{\bm{\Sigma}}^{-1} - \bm{\Sigma}^{-1}\bigr)
\widetilde{\bm{D}} + \bm{D}^\top \bm{\Sigma}^{-1} (\widetilde{\bm{D}} - \bm{D}),
$$
and bound the three terms by $\sqrt n \varepsilon_h \cdot \sigma_\epsilon^{-2} \cdot C\sqrt n = C n \varepsilon_h$, $C\sqrt n \cdot \sigma_\epsilon^{-4} n \delta_h \cdot C \sqrt n = C n^2 \delta_h$, and $C n \varepsilon_h$ respectively, using the above bounds. Hence
\begin{equation}\label{eq:DeltaM}
\|\widetilde{\bm{M}} - \bm{M}\| \le C \eta_{n,h},
\qquad \eta_{n,h} = n\varepsilon_h + n^2\delta_h.
\end{equation}
By Theorem~\ref{thm:infill}, $\bm{M} \to \bm{G} \succ 0$, so there are $n_0$ and $c_0 > 0$ with $\lambda_{\min}(\bm{M}) \ge c_0$ for $n \ge n_0$. Choosing $\eta_0 = \min\{1,  c_0/(2C)\}$ in \eqref{eq:DeltaM} guarantees $\lambda_{\min}(\widetilde{\bm{M}}) \ge c_0/2$, hence
$\widetilde{\bm{M}}$ is invertible with $\|\widetilde{\bm{M}}^{-1}\| \le 2/c_0$, and writing $\bm{M}^{-1} - \widetilde{\bm{M}}^{-1} = \bm{M}^{-1}(\bm{M} - \widetilde{\bm{M}})\widetilde{\bm{M}}^{-1}$ gives  $\|\widetilde{\bm{M}}^{-1} - \bm{M}^{-1}\| \le \|\bm{M}^{-1}\| \|\widetilde{\bm{M}}^{-1}\| \|\widetilde{\bm{M}} - \bm{M}\| \le C  \eta_{n,h}$.

The data satisfy $\widetilde{\bm{Y}}_n = \bm{D}\bbeta_0 + \bm{\eta}_n$ with $\bm{\eta}_n \sim N(0, \bm{\Sigma})$, and
$\hat\bbeta_{n,h} = \widetilde{\bm{M}}^{-1}\widetilde{\bm{D}}^\top \widetilde{\bm{\Sigma}}^{-1}\widetilde{\bm{Y}}_n$ is a fixed linear map applied to a Gaussian vector, hence Gaussian. Its mean is
$$
\widetilde{\bm{M}}^{-1}\widetilde{\bm{D}}^\top\widetilde{\bm{\Sigma}}^{-1} \bm{D} \bbeta_0
= \bbeta_0 + \widetilde{\bm{M}}^{-1}\widetilde{\bm{D}}^\top
\widetilde{\bm{\Sigma}}^{-1}\bigl(\bm{D} - \widetilde{\bm{D}}\bigr)
\bbeta_0 = \bbeta_0 + \bm{b}_{n,h},
$$
using $\widetilde{\bm{M}}^{-1}\widetilde{\bm{D}}^\top\widetilde{\bm{\Sigma}}^{-1} \widetilde{\bm{D}} = I$. Splitting
$\widetilde{\bm{\Sigma}}^{-1} = \widetilde{\bm{\Sigma}}^{-1/2}\cdot
\widetilde{\bm{\Sigma}}^{-1/2}$ and using
$\|\widetilde{\bm{\Sigma}}^{-1/2}\widetilde{\bm{D}}\|
= \|\widetilde{\bm{M}}\|^{1/2} \le C$, $\|\widetilde{\bm{\Sigma}}^{-1/2}\|
\le \sigma_\epsilon^{-1}$, and
$\|\bm{D} - \widetilde{\bm{D}}\| \le \sqrt n \varepsilon_h$,
$$
\|\bm{b}_{n,h}\| \le \|\widetilde{\bm{M}}^{-1}\|\cdot
\|\widetilde{\bm{\Sigma}}^{-1/2}\widetilde{\bm{D}}\|\cdot
\|\widetilde{\bm{\Sigma}}^{-1/2}\|\cdot
\|\bm{D} - \widetilde{\bm{D}}\|\cdot\|\bbeta_0\|
\le C \sqrt n \varepsilon_h .
$$
In particular, $n \varepsilon_h \le \eta_{n,h} \le \eta_0$ gives $\|\bm{b}_{n,h}\| \le C \eta_0/\sqrt{n}$, the second bias bound of part~(i).
The covariance is
$$
\bm{V}_{n,h} = \widetilde{\bm{M}}^{-1} \widetilde{\bm{D}}^\top \widetilde{\bm{\Sigma}}^{-1} \bm{\Sigma} \widetilde{\bm{\Sigma}}^{-1}\widetilde{\bm{D}} \widetilde{\bm{M}}^{-1}
= \widetilde{\bm{M}}^{-1} + \widetilde{\bm{M}}^{-1} \widetilde{\bm{D}}^\top \widetilde{\bm{\Sigma}}^{-1} \bigl(\bm{\Sigma} - \widetilde{\bm{\Sigma}}\bigr) \widetilde{\bm{\Sigma}}^{-1}\widetilde{\bm{D}} \widetilde{\bm{M}}^{-1}.
$$
Using $\|\widetilde{\bm{\Sigma}}^{-1}\widetilde{\bm{D}}\| \le \tfrac1{\sigma_\epsilon}\|\widetilde{\bm{\Sigma}}^{-1/2}\widetilde{\bm{D}}\| = \tfrac1{\sigma_\epsilon}\|\widetilde{\bm{M}}\|^{1/2} \le C$, the second term is bounded by $\|\widetilde{\bm{M}}^{-1}\|^2 \|\widetilde{\bm{\Sigma}}^{-1}\widetilde{\bm{D}}\|^2 \|\bm{\Sigma} - \widetilde{\bm{\Sigma}}\| \le C n \delta_h \le C \eta_{n,h}$. Combining with the bound on
$\|\widetilde{\bm{M}}^{-1} - \bm{M}^{-1}\|$ gives $\|\bm{V}_{n,h} - \bm{M}_n^{-1}\| \le C\eta_{n,h}$, completing (i).

For (ii), if $\eta_{n,h_n} \to 0$ then $\|\bm{b}_{n,h_n}\| \le C \eta_{n,h_n}/\sqrt n \to 0$ and $V_{n,h_n} \to \bm{G}^{-1}$ by (i) and $\bm{M}_n^{-1} \to \bm{G}^{-1}$ (Theorem~\ref{thm:infill}). Gaussianity from (i) and convergence of mean and covariance give $\hat\bbeta_{n,h_n} \xrightarrow{d} N(\bbeta_0, \bm{G}^{-1})$ via characteristic functions. For the coverage claim, the Wald pivot built from $\widetilde{\bm{M}}$ is
$\widetilde{\bm{M}}^{1/2}(\hat\bbeta_{n,h} - \bbeta_0) \sim N(\widetilde{\bm{M}}^{1/2} \bm{b}_{n,h}, \widetilde{\bm{M}}^{1/2} \bm{V}_{n,h} \widetilde{\bm{M}}^{1/2})$, whose mean tends to zero and whose covariance tends to $\bm{G}^{1/2} \bm{G}^{-1} \bm{G}^{1/2} = I$. Hence,  coverage converges to the nominal level.

For the forced model, $\bm{D}$ and $\widetilde{\bm{D}}$ have the single columns $(\mathcal{S}X(s_i))_i$ and $((\mathcal{S}_hX)(s_i))_i$, the
scalar $\bm{M} \to \|X\|_{L^2}^2 > 0$ replaces $\bm{G}$ (Theorem~\ref{thm:infill}), and only $\|\mathcal{S}X\|_{L^\infty}$ enters the constants, which is finite for any $X \in L^2(\mathcal{D})$ since $\mathcal{S}X \in H^2(\mathcal{D}) \hookrightarrow C(\barD)$. The arguments above thus apply verbatim.
\end{proof}

\begin{remark}\label{rem:fem-fractional}
The rate in Proposition~\ref{prop:fem-error}(iii) could likely be obtained through the methods of \citet{CoxKirchner2020}, who prove
sup-norm covariance convergence at rates up to $h^{2\alpha - d}$, up to an arbitrarily small polynomial loss, for $\alpha > d/2$ and Dirichlet boundary conditions. Since Theorem~\ref{thm:fem-estimator} enters the discretisation only through $\varepsilon_h$ and $\delta_h$, its proof applies unchanged, and with $\delta_h = O(h^{4-d})$ the mesh requirement stated after Theorem~\ref{thm:fem-estimator} sharpens from $o(n^{-4/(4-d)})$ to $h_n = o(n^{-2/(4-d)})$.
\end{remark}

\subsection{Proofs for Appendix~\ref{app:colored}}

\begin{proof}[Proof of Proposition~\ref{prop:colored}]
We modify the proof of Proposition~\ref{prop:timeavg}, keeping its notation. In the eigenbasis $\{e_j\}$ of $L$, the coordinates of
$L^{-\zeta/2} \mathrm{d}W_t$ are independent scaled Brownian increments $\lambda_j^{-\nicefrac{\zeta}{2}}\,\mathrm{d}w_j$, so the modes are independent OU processes
$$
\mathrm{d}y_j = \bigl(-a_j y_j + f_j\bigr)\,\mathrm{d}t + \sqrt{q_j}\,\mathrm{d}w_j,
$$
where $q_j := \lambda_j^{-\zeta}$. The mean dynamics do not involve the noise, so $\E[\bar y_j] = f_j/a_j$ and (i) follows as in the proof of
Proposition~\ref{prop:timeavg}. If $y^{(1)}$ denotes the stationary OU process with rate $a$ and unit noise intensity, then by linearity of the OU map in the driving noise, the centred stationary solution with intensity $q$ has the same law as $\sqrt{q}$ times the centred $y^{(1)}$. Thus, every second-moment formula of the white-noise case holds with an additional factor $q_j$, while means are unchanged.

The stationary variance of mode $j$ is $q_j/(2a_j) = (2D)^{-1}\lambda_j^{-(1+\zeta)}$, so the stationary field has covariance operator $(2D)^{-1}L^{-(1+\zeta)}$. By Lemma~\ref{lem:embed}(b), this operator is trace class (and the field is $L^2(\mathcal{D})$-valued) if and only if $1 + \zeta > \nicefrac{d}{2}$, i.e.\ $\zeta > \nicefrac{d}{2} - 1$, in which case it is the Whittle--Mat\'ern field with $\alpha = 1+\zeta$ and $\tau^2 = 2D$ by the proof of Proposition~\ref{prop:rep}, which applies to any exponent exceeding $\nicefrac{d}{2}$. This proves (ii). 

By the scaling identity and the exact variance computation of the proof of Proposition~\ref{prop:timeavg},
$$
\Var(\bar y_j) = \frac{q_j}{a_j^2 T} \bigl(1 + r_j\bigr)
= \frac{\lambda_j^{-(2+\zeta)}}{D^2 T} \bigl(1 + r_j\bigr),
\qquad |r_j| \le (a_j T)^{-1} \le (\rho T)^{-1},
$$
so the covariance operator of $\sqrt{T}(\bar Y_T - A^{-1}f)$ is $D^{-2}L^{-(2+\zeta)}(I + R_T)$ with $\|R_T\| \le (\rho T)^{-1}$. Since $2 + \zeta \ge 2 > \nicefrac{d}{2}$ for $d \le 3$, the operator $L^{-(2+\zeta)}$ is trace class, and the Hilbert--Schmidt and weak-convergence arguments of the proof of Proposition~\ref{prop:timeavg} apply, giving the asymptotic Whittle--Mat\'ern law with $\alpha = 2 + \zeta$ and
$\tau = D$.

Finally, the standardisation computation in the proof of Proposition~\ref{prop:timeavg} involves only the mean and is unchanged: $\beta_{\mathrm{fc}}^{\star} = \gamma_0/\rho$. By the proof of Proposition~\ref{prop:rep}, the Cameron--Martin space of the limit field in (iii) is $D(L^{(2+\zeta)/2})$, and for $f = \gamma_0 X$,
\begin{equation*}
A^{-1}f \in D\bigl(L^{\frac{2+\zeta}{2}}\bigr)
\Longleftrightarrow \sum_j \lambda_j^{2+\zeta} \bigl(\lambda_j^{-1} \frac{f_j}{D}\bigr)^2 \!< \!\infty
\Longleftrightarrow \sum_j \lambda_j^{\zeta} x_j^2 < \infty
\Longleftrightarrow X \in D\bigl(L^{\frac{\zeta}{2}}\bigr).
\end{equation*}
\end{proof}

\begin{proof}[Proof of Corollary~\ref{cor:cascade}]

We induct on $m$, with all covariance operators identified up to level-dependent positive multiplicative constants, and with the induction hypothesis that the level-$m$ balance is driven by temporally white noise with spatial covariance proportional to $L^{-(m-1)}$, i.e., the setting of Proposition~\ref{prop:colored} with $\zeta = m - 1$. The hypothesis holds for $m = 1$, whose driving noise is white ($L^0 = I$). The driving noise is well defined for every $m$, since $L^{-(m-1)}$ is a bounded positive operator on $L^2(\mathcal{D})$, and the construction uses only the covariance operator of the level-$(m-1)$ equilibrium, which exists even when that field itself is function-valued only for $m - 1 > \nicefrac{d}{2}$. Part (ii) gives instantaneous covariance proportional to $L^{-(1 + (m-1))} = L^{-m}$, which by Lemma~\ref{lem:embed}(b) is trace class so the field is $L^2(\mathcal{D})$-valued and Whittle--Mat\'ern with $\alpha = m$ precisely when $m > \nicefrac{d}{2}$. As this is proportional to the level-$m$ equilibrium covariance, it is also the driving covariance of the level-$(m + 1)$ balance, which closes the induction. Part (iii) gives time-average covariance proportional to $L^{-(2 + (m-1))} = L^{-(m+1)}$, with the same operator-norm error bound and weak-convergence conclusion.
\end{proof}

\end{appendices}

\bibliographystyle{abbrvnat}
\bibliography{bibliography}

\end{document}